\newtheorem{theorem}{Theorem}
\newtheorem{lemma}{Lemma}
\theoremstyle{definition}
\newtheorem{conditionA}{Condition}
\newtheorem{conditionB}{Condition}
\newtheorem{conditionC}{Condition}
\newtheorem{remark}{Remark}
\title{Estimating the Efficiency Gain of Covariate-Adjusted Analyses in Future Clinical Trials Using External Data}
\author[*]{Xiudi Li}
\author[*]{Sijia Li}
\author[$\dagger$]{Alex Luedtke}
\affil[*]{Department of Biostatistics, University of Washington}
\affil[$\dagger$]{Department of Statistics, University of Washington}
\date{}
\begin{document}
\maketitle

\begin{abstract}
We present a general framework for using existing data to estimate the efficiency gain from using a covariate-adjusted estimator of a marginal treatment effect in a future randomized trial. We describe conditions under which it is possible to define a mapping from the distribution that generated the existing external data to the relative efficiency of a covariate-adjusted estimator compared to an unadjusted estimator. Under conditions, these relative efficiencies approximate the ratio of sample size needed to achieve a desired power. We consider two situations where the outcome is either fully or partially observed and several treatment effect estimands that are of particular interest in most trials. For each such estimand, we develop a semiparametrically efficient estimator of the relative efficiency that allows for the application of flexible statistical learning tools to estimate the nuisance functions and an analytic form of a corresponding Wald-type confidence interval.
We also propose a double bootstrap scheme for constructing confidence intervals. We demonstrate the performance of the proposed methods through simulation studies and apply these methods to data to estimate the relative efficiency of using covariate adjustment in Covid-19 therapeutic trials.
\end{abstract}

\section{Introduction}
The aim of most clinical trials is to estimate a marginal treatment effect that contrasts outcomes in a treatment group to those in a control group. In addition to the treatment assignment and outcome, data on prognostic baseline covariates are often available. In the case of continuous outcomes, the U.S. Food and Drug Administration \citep{FDA2019} recommends adjusting for these baseline covariates through ANCOVA or linear regression models. However, such covariate-adjusted analyses are often underutilized in practice, especially with ordinal or time-to-event data \citep{austin2010substantial}. 

Compared with unadjusted analyses, analyses that adjust for baseline covariates have several benefits. First, adjusted analyses can lead to consistent estimators of the treatment effect under weaker assumptions. One such example arises when right-censoring is present in a trial with a time-to-event outcome. Adjusted analyses often give consistent estimates provided that the censoring and survival times are independent given treatment and covariates \citep{moore2009increasing}. This condition is more plausible in many trial settings than is the requirement made in unadjusted analyses that the censoring and survival times are independent given treatment alone. Second, adjusting for covariates that are predictive of the outcome can improve precision, and thus a smaller sample size can be required to achieve a desired power. Such precision gain is generally expected when the outcome is fully observed, and also applies in certain cases where the outcome is only partially observed --- some exceptions occur, for example, when the covariates are highly predictive of the censoring time but are only weakly predictive of the survival time. Despite these potential benefits, covariate adjustment is underutilized in analyzing clinical trial data. This is partly because, at the trial planning stage, there is typically little prior knowledge about the amount of precision gain that should be expected to result from using covariate adjustment.

To address this problem, many previous works have aimed to estimate this precision gain using external datasets. In particular, some works have demonstrated the potential precision gain of covariate adjustment in clinical trial settings by comparing the standard errors of adjusted and unadjusted estimators on existing clinical trial datasets \cite[e.g.,][]{steingrimsson2017improving,wang2019analysis}. When the data-generating mechanism that gave rise to one of these existing datasets is reflective of the corresponding mechanism that is anticipated in an upcoming trial, these point estimates may yield a reasonable estimate of the precision gain anticipated in these future trials. It is worth noting, however, that the sampling variability in the existing trial dataset induces uncertainty in this point estimate. Other works have used an existing trial dataset to design a simulation study that can be used to estimate the precision gain \citep[e.g.,][]{kahan2014risks,colantuoni2015leveraging,steingrimsson2017improving}. However, even when these simulation studies involve many repetitions, so that the Monte Carlo error is negligible, there is still uncertainty associated with these precision gain estimates that arises due to sampling variability in the existing trial dataset. In many cases, there may not be data available from a clinical trial that is reflective of the upcoming trial. An alternative approach, which does not require access to such data but can leverage it when it is available, is to conduct a simulation based on an external dataset that may be reflective of the covariate and outcome distributions that will be seen in the control arm of the upcoming trial \citep{benkeser2020improving}. This data may, for example, be derived from a pilot study or an observational study. Treatment arm data can then be simulated under a sharp null of no effect or as a user-defined shift of the conditional distribution of the outcome given covariates in the pilot study. As for the earlier simulation approach, a point estimate of the precision gain can be easily obtained, but there is still uncertainty in this point estimate that arises from the sampling variability in the dataset upon which the simulation is based. 

It can be challenging to be confident that a favorable estimated precision gain is not due to random noise, especially when the external dataset is small. Consequently, some clinical trialists may be cautious when making decisions about using covariate adjustment in future clinical trials based on a point estimate alone, even if the external dataset upon which it is based is known to be reflective of the data that will be seen in the trial. 
In other statistical estimation problems, the lack of interpretability of point estimates is often addressed by reporting a confidence interval alongside each point estimate. However, to the best of our knowledge, the problem of making statistical inferences about the precision gains of covariate-adjusted estimators has not been formally investigated. In this work, we aim to fill this critical knowledge gap. When doing so, we focus on the most general case described above, namely that data from an external study that is reflective of the covariate and control arm outcome distributions are available. Special cases of this setting include the case where data are available from a previous trial and the control arm data are used for the external study, and also the case where covariate and outcome data are available from an observational study.

We primarily consider treatment effect estimands that can be written as contrasts of the distributions of the outcomes within each treatment arm. Most commonly, investigators perform an unadjusted analysis that uses the empirical distribution to estimate these two arm-specific distributions. One approach to covariate adjustment involves instead estimating these distributions with possibly-misspecified working models. Specifically, this involves fitting a working parametric model within each arm that conditions on covariates, and then marginalizing over the arm-pooled empirical distribution of the covariates \citep{moore2011robust,benkeser2020improving}. 
In many cases, this approach can result in consistent and asymptotically normal estimators of the marginal effect of interest, even if the working model is misspecified.
Nevertheless, these approaches are typically inefficient when the model is not correct, in the sense that they fail to achieve the asymptotic efficiency bound within a model that only imposes that treatment is randomized \citep{bickel1993efficient}. 
In contrast, many covariate-adjusted estimators have been proposed recently that achieve the efficiency bound under appropriate regularity conditions (see, for example, \citep{vermeulen2015increasing,diaz2016enhanced} for ordinal outcomes; and \citep{moore2009increasing,stitelman2012general,diaz2019improved} for survival outcomes). These approaches usually involve estimating nuisance parameters such as the treatment mechanism and the outcome regression functions. 
While being more efficient, these estimators are often more difficult for practitioners to understand because they cannot typically be framed as corresponding to a commonly used estimator within a parametric working model. In this paper, we consider both of the above-described strategies for covariate adjustment, which we refer to as working-model-based approaches and fully adjusted approaches, respectively.

Our main contributions are as follows:
\begin{enumerate}
    \item we provide a framework for using external data to identify the efficiency gain in terms of percentage reduction in sample size needed to achieve a desired power from using covariate-adjusted rather than unadjusted estimation methods on future clinical trial data;
    \item we introduce efficient estimators of this quantity that allow for the incorporation of flexible statistical learning tools to estimate the needed nuisance functions;
    \item we present statistical inference procedures to accompany the proposed estimators, namely a Wald-type procedure that requires knowledge of their influence functions but is widely applicable and a bootstrap procedure that applies only to working-model-based estimators but is easy to implement; and
    \item we evaluate the performance of the proposed methods in a simulation study and an application to a dataset of Covid-19 patients hospitalized at the University of Washington Medical Center.
\end{enumerate}

This paper is organized as follows. In Section \ref{sec:background}, we provide background on efficient estimation in semiparametric models and describe the relevance of the relative efficiency and local alternatives to clinical trial settings. In Section \ref{sec:ordinal}, we introduce the framework to identify and estimate the efficiency gain when the outcome is fully observed. We also propose efficient estimators and develop analytical and bootstrap inference procedures for estimands that are of frequent interest in the cases of continuous and ordinal outcomes. In Section \ref{sec:partiallyobserved}, we study the case where the outcome is partially observed and consider time-to-event outcomes with right-censoring as an example. In Section \ref{sec:experiment}, we demonstrate the performance of the proposed methods through simulation experiments and an analysis of a real dataset. Section \ref{sec:discussion} concludes with a discussion.

\section{Review of efficiency theory and its relevance to clinical trial settings}\label{sec:background}

\subsection{Pathwise differentiability and regular and asymptotically linear estimators}

The theory of efficient estimation in nonparametric and semiparametric models was described in \citet{pfanzagl1990estimation} and \citet{bickel1993efficient}. Here we give a brief review of the relevant concepts. Let $X$ denote a generic data unit with distribution $P$ and let $L_0^2(P) := \{f: E_P[f(X)]=0, \textnormal{var}_P[f(X)]<\infty\}$. 
Let $\mathcal{M}$ denote a statistical model, that is, a collection of distributions of $X$. 
We suppose that $\mathcal{M}$ contains $P$. Let $\mathscr{M}(P)$ denote the collection of all one-dimensional submodels $\{P_\epsilon : \epsilon\in\mathbb{R}\}\subseteq\mathcal{M}$ that are quadratic mean differentiable \citep{van2000asymptotic} at $\epsilon=0$ and are such that $P_{\epsilon=0}=P$. Let $S_{\mathcal{M}}(P)$ denote the collection of all functions $s : \mathcal{X}\rightarrow\mathbb{R}$ for which $s$ is a score function at $\epsilon=0$ for some submodel contained in $\mathscr{M}(P)$, and let $T_\mathcal{M}(P)$ denote the $L_0^2(P)$-closure of the linear span of $S_{\mathcal{M}}(P)$. The subspace $T_{\mathcal{M}}(P)$ of $L_0^2(P)$ is referred to as the tangent space. A parameter $\psi : \mathcal{M}\rightarrow\mathbb{R}$ is called pathwise differentiable at $P$ in $\mathcal{M}$ if there exists a function $D \in L_0^2(P)$ such that, for all submodels $\{P_\epsilon : \epsilon\in\mathbb{R}\}\in \mathscr{M}(P)$, it holds that $\frac{\partial}{\partial \epsilon}\psi(P_\epsilon)|_{\epsilon=0} = E_P[ D(X)s(X)]$, where $s$ is the score function of $\{P_\epsilon : \epsilon\in\mathbb{R}\}$ at $\epsilon=0$. Any such function $D$ is called a gradient of $\psi$ with respect to $\mathcal{M}$ at $P$. The canonical gradient $D^*$ is the gradient that lies in the tangent space $T_{\mathcal{M}}(P)$ --- it can be shown that this gradient is unique. We note that the $\mathcal{X}\rightarrow\mathbb{R}$ functions $D$ and $D^*$ both depend on $P$.

We refer to an estimator $\hat{\psi}$ of $\psi$ as a random variable that is a function of an independent and identically distributed (iid) sample $\boldsymbol{X} := \{X_1,\ldots,X_n\}$ drawn from some distribution. An estimator $\hat{\psi}$ is called regular if there exists a real-valued probability distribution $\mathcal{L}$ such that, for all submodels $\{P_{\epsilon} : \epsilon\in\mathbb{R}\}$ in $\mathscr{M}(P)$ and all $c\in\mathbb{R}$,
\begin{equation}\label{regularity}
    \sqrt{n}\left\{\hat\psi - \psi\left(P_{cn^{-1/2}}\right)\right\} \xrightarrow[]{P_{cn^{-1/2}}} \mathcal{L}.
\end{equation}
Importantly, note that, if an estimator is regular, then the distribution $\mathcal{L}$ above does not depend on the choice of submodel in $\mathscr{M}(P)$. 

An estimator $\hat\psi$ of $\psi(P)$ is called asymptotically linear if there exists some function $\xi_P\in L_0^2(P)$ such that
\begin{equation*}
    \hat\psi - \psi(P) = \frac{1}{n}\sum_{i=1}^{n}\xi_P(X_i) + o_P(n^{-1/2}).
\end{equation*}
The function $\xi_P$ is referred to as the influence function of $\hat\psi$. Asymptotically linear estimators are consistent and asymptotically normal, in the sense that
\begin{equation}\label{normality}
    \sqrt{n}\{\hat\psi - \psi(P)\} \xrightarrow[]{d} N(0,\sigma^2_P),
\end{equation}
where $\sigma^2_P$ is the variance of $\xi_P(X)$ when $X \sim P$. If $\hat\psi$ is asymptotically linear and $\psi$ is pathwise differentiable, then $\hat\psi$ is regular if, and only if, $\xi_P$ is a gradient of $\psi$ with respect to $\mathcal{M}$ at $P$. Among the collection of gradients, the canonical gradient $D^*$ has the smallest variance, and is also called the efficient influence function (EIF). This variance characterizes the efficiency bound of estimating $\psi$ given the model $\mathcal{M}$ with a regular and asymptotically linear (RAL) estimator. An estimator is called efficient if it is RAL and its influence function is the same as the EIF.

Suppose that we have available an initial estimator $\hat P$ of the distribution. A plug-in estimator is defined as $\psi(\hat P)$. However, such estimators may not be $\sqrt{n}$-consistent due to the potential bias in the initial estimators. One way to construct a RAL estimator with influence function $D$ is through one-step estimation \citep{ibragimov1981statistical,bickel1982adaptive,pfanzagl1985contributions}, which corrects for this bias by using $\hat\psi = \psi(\hat P)+\mathbb{P}_n D(\hat P)$ where $\mathbb{P}_n(\cdot)$ is the empirical mean. Estimating equations \citep{van2003unified,tsiatis2007semiparametric} and targeted minimum loss-based estimation \citep{van2006targeted} are alternative approaches.
These techniques are often used to construct efficient covariate-adjusted estimators of a treatment effect. Later, we will also use them to estimate the relative efficiency of two estimators based on external data.

\subsection{Local alternatives, relative efficiency, and their relevance to clinical trial settings}\label{sec:localAlts}

In the context that we consider in this paper, the treatment effect measure that will be estimated with the future clinical trial data will often correspond to an evaluation $\psi(P)$ of a pathwise differentiable parameter $\psi$. In many cases, a primary objective of the forthcoming trial will be to test the null hypothesis that this quantity is equal to zero against a one-sided alternative, for example, that this quantity is positive. Suppose that a level $\alpha$ Wald test is performed, which corresponds to evaluating whether zero is smaller than $\hat\psi - n^{-1/2}z_{1-\alpha}\hat\sigma_P$ based on a RAL estimator $\hat\psi$, where $z_{1-\alpha}$ is the $(1-\alpha)$-quantile of a standard normal distribution and $\hat\sigma_P$ is a consistent estimator of $\sigma_P$, as defined in \eqref{normality}. Fix an arbitrary $c\not=0$. As $\psi$ is pathwise differentiable, for any $\{P_{\epsilon} : \epsilon\in\mathbb{R}\}$ in $\mathscr{M}(P)$ with score function $s$ at $\epsilon=0$, it holds that $\psi(P_{cn^{-1/2}}) = \psi(P) + cn^{-1/2}\mu_{P,s} + o(n^{-1/2})$, where $\mu_{P,s}:=E_P[D^*(X)s(X)]$. If $P$ is such that the null that $\psi(P)=0$ holds, then this shows that $n^{1/2}\psi(P_{cn^{-1/2}})\rightarrow c\mu_{P,s}$ as $n\rightarrow\infty$. If $s$ is such that $\mu_{P,s} > 0$, then we call $\{P_{cn^{-1/2}}\}_{n=1}^\infty$ a sequence of local alternatives --- this name is natural given that $\psi(P_{cn^{-1/2}})>0$ for all $n$ large enough (and so the alternative holds for all $n$ large enough), while also $\psi(P_{cn^{-1/2}})\rightarrow 0$ as $n\rightarrow\infty$ (and so these alternatives are local to the null hypothesis). Because $\hat\psi$ is RAL, combining \eqref{regularity} and \eqref{normality} with Slutsky's theorem implies that
\begin{align}\label{eq:localNormal}
    \sqrt{n}\hat\psi \xrightarrow[]{P_{cn^{-1/2}}} N\left(c\mu_{P,s},\sigma^2_P\right),
\end{align}
where $\sigma_P^2$ is as defined below \eqref{normality}. Let $\beta$ denote the power for rejecting the null that $\psi(P)=0$ under sampling from $P_{cn^{-1/2}}$ --- that is, let
\begin{align}
    \beta:= \lim_{n\rightarrow\infty} P_{cn^{-1/2}}\left\{0 < \hat\psi - n^{-1/2}z_{1-\alpha}\hat\sigma_P\right\}. \label{eq:beta}
\end{align}
Letting $\Phi(\cdot)$ denote the cumulative distribution function (CDF) of the standard normal distribution, \eqref{eq:localNormal} implies that $\beta = 1-\Phi(z_{1-\alpha}-c\mu_{P,s}/\sigma_P)$ which lies in $(\alpha,1)$ when the shift in the mean of the limit normal distribution $c\mu_{P,s}$ is positive. This gives us a way to quantify the power of the test in a range of settings where the effect size is small. Hence, effect sizes scaling as $n^{-1/2}$ are interesting in general testing problems, given that it is exactly at these effect sizes that the problem is neither asymptotically trivial (power converging to $1$) or impossible (power converging to $\alpha$). Nevertheless, in many statistical problems, there may be no \textit{a priori} reason to believe that the effect size will be of the order $n^{-1/2}$.

The setup is quite different in randomized trials. Indeed, these local alternatives are natural to think about in these settings because, under sampling from such a sequence of alternatives, the asymptotic power takes some intermediate value between $\alpha$ and $1$, which reflects the fact that the sample size in most trials is specified so that a test of the null will have a chosen power $\beta^*\in(\alpha,1)$. To be more concrete, suppose that $\{\psi^{(k)}\}_{k=1}^\infty$ is a decreasing sequence of effect sizes that satisfy the alternative hypothesis, that is, that are such that $\psi^{(k)}\downarrow 0$ as $k\rightarrow\infty$. We suppose that these effect sizes arise from some sequence of distributions $\{P^{(k)}\}_{k=1}^\infty$ that belong to some submodel $\mathcal{M}_1:= \{P_\epsilon : \epsilon\in\mathbb{R}\}\in\mathscr{M}(P)$, so that $\psi(P^{(k)})=\psi^{(k)}$ for each $k$. Our objective is to establish an expression for the sequence of sample sizes $\{n^{(k)}\}_{k=1}^\infty$ so that, as $k\rightarrow\infty$, the power for rejecting the null hypothesis converges to $\beta^*$ when $n^{(k)}$ iid observations are drawn from $P^{(k)}$.  Let $s$ denote the score of $\epsilon$ at $0$ in the submodel $\mathcal{M}_1$, and suppose that $\mu_{P,s}\not=0$. To derive the sequence $\{n^{(k)}\}_{k=1}^\infty$, it will be helpful to first find a $c$ such that, when $n$ iid observations are drawn from $P_{cn^{-1/2}}$, the power converges to the desired $\beta^*$ as $n\rightarrow\infty$. Because $\{P^{(k)}\}_{k=1}^\infty\subseteq \mathcal{M}_1$, it will also be possible to find an $n^{(k)}$ such that $P^{(k)}\approx P_{c/\sqrt{n^{(k)}}}$. As $\{n^{(k)}\}_{k=1}^\infty$ is a subsequence of $\{n\}_{n=1}^\infty$, it will then be reasonable to expect that, when a sequence of tests is conducted based on $n^{(k)}$ iid observations sampled from each $P^{(k)}$, the power for rejecting the null hypothesis will converge to $\beta^*$ as $k\rightarrow\infty$.

We now find the expressions for $c$ and $n^{(k)}$ that were described in the last paragraph. Recalling \eqref{eq:beta} and the alternative expression for the power given below that display, we see that, when $c= \sigma_P(z_{1-\alpha} - z_{1-\beta^*})/\mu_{P,s}$, it holds that
\begin{align*}
\lim_{n\rightarrow\infty} P_{cn^{-1/2}}\left\{0 < \hat\psi - n^{-1/2}z_{1-\alpha}\hat\sigma_P\right\}&= 1-\Phi(z_{1-\alpha}-c\mu_{P,s}/\sigma_P) = \beta^*.
\end{align*}
To find the expression for $n^{(k)}$, we note that, as $\psi$ is pathwise differentiable, $\psi(P_{cn^{-1/2}})=cn^{-1/2} \mu_{P,s} + o(n^{-1/2})$ when $n$ is large --- here, the little-oh term describes behavior as $n\rightarrow\infty$. Hence, $P^{(k)}\approx P_{c/\sqrt{n^{(k)}}}$, where $n^{(k)}=\lceil(c\mu_{P,s}/\psi^{(k)})^2\rceil = \lceil \sigma_P^2(z_{1-\alpha} - z_{1-\beta^*})^2/(\psi^{(k)})^2 \rceil$. Thus, to achieve asymptotic power $\beta^*$ when sampling $n^{(k)}$ iid observations from $P^{(k)}$, $n^{(k)}$ must scale proportionally with $\sigma_P^2$. These calculations also provide a means to compare the sample sizes needed to achieve the same power based on two different RAL estimators. In particular, suppose that a second RAL estimator is available and its asymptotic variance is equal to $\tilde{\sigma}_P^2\le \sigma_P^2$. In this case, the proportional reduction in sample size needed to achieve power $\beta^*$ when using this estimator rather than the estimator with variance $\sigma_P^2$ is approximately equal to $1-\tilde\sigma^2_P/\sigma^2_P$ when $k$ is large. In fact, $\tilde{\sigma}_P^2/\sigma_P^2$ is often referred to as the relative efficiency of the RAL estimator with variance $\tilde{\sigma}_P^2$ versus the RAL estimator with variance $\sigma_P^2$, and so this proportional reduction is exactly equal to one minus the relative efficiency of these two estimators.

\section{When the outcome is fully observed}\label{sec:ordinal}

\subsection{Framework to identify the relative efficiency}\label{subsec:setup}

We now propose a general framework to identify the relative efficiency of covariate-adjusted estimators.

We start by defining notation that we will use to describe the data that will arise in the future clinical trial. Let $A$ denote the binary treatment, $W$ denote the $d$-dimensional covariate vector and $Y$ denote the outcome. We will use superscript $t$ to denote random variables in a future clinical trial. Let $P_1$ be the conditional distribution of $Y^t|A=1$ and $P_0$ be the conditional distribution of $Y^t|A=0$. The treatment effect is often a functional $f$ of these distributions, i.e., $\psi = f(P_1,P_0)$ for some $f$. An example is the average treatment effect for continuous outcome, where $f(P_1,P_0) = E[Y^t|A=1]- E[Y^t|A=0]$. The observation unit in a trial is $X^t = (Y^t,A,W^t)$. Randomization implies that $A$ and $W^t$ are independent, and thus that the distribution of $X^t$, denoted by $\nu$, is determined by the marginal distribution of $A$, the conditional distribution of $Y^t$ given $(A=0,W^t=w)$, the conditional distribution of $Y^t$ given $(A=1,W^t=w)$, and the marginal distribution of $W^t$ --- we denote these distributions by $\Pi$, $P_0^t$, $P_1^t$, and $P_W$, respectively. Let $\mathcal{M}_X$ consist of all distributions for which $A$ and $W^t$ are independent. In the randomized trial settings that we consider, $\nu\in\mathcal{M}_X$. The adjusted analysis uses $X^t$, while the unadjusted analysis ignores the covariate $W^t$. Let $\hat\psi_u, \hat\psi_a$ and $\hat\psi_{m}$ be a specified unadjusted estimator, fully adjusted estimator and working-model-based adjusted estimator of $\psi$, respectively. Further suppose that these estimators are regular and asymptotically linear with influence functions $D_u$, $D_a$ and $D_{m}$, respectively. We note that these influence functions all depend on the underlying distribution $\nu$, but we will omit this dependence when it is clear from the context. 

We assume the following regularity condition holds throughout, which guarantees that the treatment effects of interest can be estimated using strategies typically employed in randomized trial settings. Although some of the conditions can in principle be relaxed, they cover most realistic clinical trial settings. 
\begin{conditionA}\label{cond2bounded}
    Treatment is independent of covariates $(A \perp W^t)$, the treatment probability $\Pi(A=1)$ falls in $(0,1)$, and $Y$ and $W^t$ both have bounded support under sampling from $\nu$. 
\end{conditionA}
Let $\mathcal{W} \subseteq \mathbb{R}^d$ and $\mathcal{Y} \subseteq \mathbb{R}$ be bounded and convex sets that contain the support of $W^t$ and $Y$, respectively.

Our objective is to quantify the relative precision of the specified adjusted and unadjusted estimators. To do this, we will consider the relative efficiency of these two estimators, defined as the ratio between the asymptotic variances of the adjusted estimator and the unadjusted estimator under a sharp null distribution. Though we will focus on the sharp null when introducing these relative efficiencies, these quantities also correspond to the relative efficiencies under local alternatives (see Section \ref{sec:localAlts}). Consequently, our apparent restriction to the sharp null setting will in fact not be restrictive at all. Indeed, from this sharp null setting, we can typically approximate the reduction in sample size needed to achieve a desired power at all local alternatives that are consistent with the design alternative used to size the trial (ibid.).

We now define these relative efficiencies. Consider a trial where the sharp null holds, that is, the treatment has no effect and $P_1^t = P_0^t$. In this case, let $P$ denote the joint distribution of $(Y^t,W^t)$, determined by the pair $(P_1^t,P_W)$. Under Condition~\ref{cond2bounded}, $\nu$ is equal to the product measure $P\Pi$ in this sharp null setting. The relative efficiency of the fully adjusted estimator compared to that of the unadjusted estimator is defined as
\begin{equation}\label{refully}
    \phi_{a}(P) := \frac{E_{P\Pi}[D_a(P\Pi)(X^t)^2]}{E_{P\Pi}[D_u(P\Pi)(X^t)^2]}=\frac{E_{\nu}[D_a(\nu)(X^t)^2]}{E_{\nu}[D_u(\nu)(X^t)^2]},
\end{equation}
whereas the analogous quantity for the working-model-based estimator is defined as
\begin{equation}\label{reworking}
    \phi_{m}(P) := \frac{E_{P\Pi}[D_{m}(P\Pi)(X^t)^2]}{E_{P\Pi}[D_u(P\Pi)(X^t)^2]} =\frac{E_{\nu}[D_{m}(\nu)(X^t)^2]}{E_{\nu}[D_u(\nu)(X^t)^2]}.
\end{equation}
Although in general $\nu$ depends on both $\Pi$ and $P$, we define relative efficiencies as functions of $P$ only. As we will show, in many cases that are of interest in practice, the relative efficiency does not depend on $\Pi$. Even in cases where it does depend on $\Pi$, the investigator in a trial would have control over the treatment distribution $\Pi$ in the trial setting, and the only unknown component would still be $P$.

The local alternatives we consider allow for a variety of perturbations to the underlying distribution.
The direction of these perturbations is described by their scores, which belong to the tangent space $T_{\mathcal{M}_X}(\nu)$. This tangent space decomposes into three subspaces, corresponding to the marginal distribution of $W^t$, the distribution of treatment $A$, and the conditional distribution of $Y^t|A,W^t$. The score in a smooth submodel can lie in one or more subspaces, which means that the local alternative can perturb one or more of the four components of $\nu$, namely $\Pi$, $P_0^t$, $P_1^t$, and $P_W$. For an example of how these perturbations may impact $\nu$, consider the special case of the average treatment effect $\psi = E[Y^t|A=1]-E[Y^t|A=0]$ that was introduced at the beginning of this section. We note that $\psi = E_{P_W}[c(W^t)]$, where $c(w):= E[Y^t|A=1,W^t=w] - E[Y^t|A=0,W^t=w]$ is the conditional average treatment effect function. Here, $\psi$ will be zero when this function is zero for all values of the covariates. Now, for any $L^2(P_W)$ integrable function $f$, there exists a sequence of distributions $\{\nu_n\}_{n=1}^\infty$ along a smooth submodel whose score perturbs the conditional distribution of $Y^t|A,W^t$ in such a way that the conditional average treatment effect function of $\nu_n$ is equal to $n^{-1/2}f(w)$. This sequence of distributions will constitute a local alternative whenever $E_{P_W}[f(W^t)]>0$. There are also local alternatives that perturb the covariate distribution. For example, consider the case where the conditional average treatment effect $c(w)$ is not everywhere zero but is such that $E_{P_W}[c(W^t)]=0$. In this case, there are local alternatives that perturb the marginal distribution $P_W$ but do not modify the conditional average treatment effect.

We now describe conditions that we use to identify the relative efficiency $\phi_a(P)$ and $\phi_m(P)$ using the external data that are available at the trial planning stage. 
When doing this, we assume that $P \in \mathcal{M}$, where $\mathcal{M}$ is a locally nonparametric model of all distributions of $(Y^t,W^t)$, that is, a model where the tangent space at $P$ is $L^2_0(P)$. 
Let $X = (Y,W)$ be the data unit in the external dataset, which we assume consists of $n$ iid draws from some distribution. The identifiability condition that we consider imposes that the external data should accurately reflect the distribution of covariate and outcomes in future trials where treatment has no effect.
\begin{conditionA}\label{cond1identifiability}
    A random variate $X = (Y,W)$ from the external dataset has distribution $P$.
\end{conditionA}
Under this condition, the relative efficiencies in \eqref{refully} and \eqref{reworking} can be estimated based on the external data.

\subsection{Estimating the relative efficiency}\label{subsec:ordinalestimation}

We now consider estimating relative efficiency for certain treatment effect estimands that are of particular interest in many clinical trials. We focus primarily on continuous and ordinal outcomes in this section. For the examples we consider, the asymptotic variances of the adjusted and unadjusted estimators factorize into a product of two terms, one that depends on $\Pi$ only and another that depends on $P$ only. Moreover, the term that depends only on $\Pi$ is the same for both the adjusted and unadjusted estimators, and the relative efficiency is a function of $P$ only. In particular, \eqref{refully} and \eqref{reworking} now take the following forms,
\begin{equation}\label{master}
    \phi_{a}(P) = \sigma_a^2(P)/\sigma_u^2(P), \ \phi_{m}(P) = \sigma_m^2(P)/\sigma_u^2(P).
\end{equation}
The exact forms of $\sigma_a, \sigma_m$ and $\sigma_u$ depend on the treatment effect estimand and the specified estimators but do not depend on $\Pi$. Some specific examples are presented in the remainder of this section.

\subsubsection{Continuous outcomes and average treatment effect}
To illustrate the idea, we start with a simple example where the outcome is continuous and we are interested in the average treatment effect, defined as $\psi = E[Y^t|A=1]- E[Y^t|A=0]$. Let $n^t$ denote the sample size of the future trial dataset.

The unadjusted estimator that we consider corresponds to the difference between the arm-specific means, namely $$\hat\psi_u = \sum_{i=1}^{n^t}A_iY_i^t/\sum_{i=1}^{n^t}A_i - \sum_{i=1}^{n^t}(1-A_i)Y_i^t/\sum_{i=1}^{n^t}(1-A_i).$$
For the fully adjusted estimator, we consider the augmented inverse probability weighted (AIPW) estimator, namely
\begin{equation}\label{AIPW}
    \hat\psi_a = \frac{1}{n^t}\sum_{i=1}^{n^t}\left[\frac{A_i\{Y_i^t - \hat r_1(W_i^t)\}}{\hat\pi(W_i^t)} -\frac{(1-A_i)\{Y^t_i - \hat r_0(W_i^t)\}}{1-\hat\pi(W_i^t)}  + \hat r_1(W_i^t)-\hat r_0(W_i^t)\right],
\end{equation}
where $\hat r_a(w)$ is an estimator of the conditional mean function $r_a(w) := E[Y^t|A=a,W^t=w]$ and $\hat\pi$ is an estimator of the treatment mechanism $\pi(w):= P(A=1|W^t=w)$. In randomized trials, the treatment mechanism can be estimated with $\hat\pi$, the empirical marginal of $A$, which is $\sqrt{n}$-consistent, and the AIPW estimator is efficient provided that $\hat r_a$ is consistent and satisfies appropriate conditions. The estimator will be consistent and asymptotically normal even if $\hat r_a$ is inconsistent but has an appropriately defined limit.

For the working-model-based adjusted estimator, we consider linear models, which are commonly used by practitioners for continuous outcomes \citep{FDA2019}. Specifically, we fit an arm-specific linear model for the outcome regression, which assumes that
$$E[Y^t|A=a, W^t=w] = \alpha_a + \beta_a^\top w,$$ and denote by $\hat\alpha_a, \hat\beta_a$ the fitted coefficients. To estimate the average treatment effect, we marginalize the fitted values over all covariates and take the difference between treatment arms,
$$\hat\psi_m = \hat\alpha_1 -\hat\alpha_0 + (\hat\beta_1 - \hat\beta_0)^\top \sum_{i=1}^{n^t}W_i^t/n^t.$$ 
We note again that the consistency and asymptotic normality of this estimator does not rely on the arm-specific linear models being correct.

The following lemma gives the forms of the relevant variances in the definition of relative efficiency in this setting.
\begin{lemma}\label{reATE}
Let $(Y,W) \sim P$. Suppose that the appropriate regularity conditions hold such that the AIPW estimator $\hat\phi_a$ is efficient. Then, for the above $\hat\psi_u, \hat\psi_a$ and $\hat\psi_m$, we have that $\sigma_u^2(P) = \textnormal{var}_P(Y), \sigma_a^2(P) = {E}_P[\textnormal{var}_P(Y|W)]$ and $\sigma^2_m = {E}_P\left[(Y-\alpha^*-W^\top \beta^*)^2\right]$, where $(\alpha^*,\beta^*)$ is the minimizer of ${E}_P[(Y-\alpha-W^\top \beta)^2]$ over $(\alpha,\beta)\in \mathbb{R}\times \mathbb{R}^d$. 
\end{lemma}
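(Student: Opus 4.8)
The plan is to derive the influence functions $D_u$, $D_a$, and $D_m$ under the sharp null distribution $\nu = P\Pi$, compute their second moments (which equal their variances, since each is mean-zero), and exhibit the common factorization in \eqref{master}. Write $p := \Pi(A=1)$ and $h(A) := A/p - (1-A)/(1-p)$. Because $A \perp (Y,W)$ under $\nu$ by Condition~\ref{cond2bounded}, and $E_\nu[h(A)^2] = 1/p + 1/(1-p) = \{p(1-p)\}^{-1}$, any influence function of the product form $h(A)\,g(Y,W)$ satisfies $E_\nu[\{h(A)g(Y,W)\}^2] = \{p(1-p)\}^{-1}E_P[g(Y,W)^2]$, so the factor depending only on $\Pi$ separates out and cancels in the ratios \eqref{refully}--\eqref{reworking}. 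The whole argument thus reduces to showing that each of the three influence functions takes this product form and identifying the corresponding $g$.

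First I would dispatch the two easier estimators. For the unadjusted estimator, the standard linearization of a difference of arm-specific sample means gives $D_u = h(A)(Y-\mu)$ with $\mu := E_P[Y]$, so $g^2$ integrates to $\textnormal{var}_P(Y)$. For the fully adjusted estimator, the stated regularity conditions make $\hat\psi_a$ efficient, so $D_a$ is the canonical gradient of the average treatment effect in the randomization model; evaluating the usual AIPW/EIF expression at the sharp null, where $r_1 = r_0 = r$ with $r(w):=E_P[Y\mid W=w]$, $\pi(\cdot)\equiv p$, and $\psi=0$, collapses it to $D_a = h(A)(Y - r(W))$, whence $E_P[(Y-r(W))^2] = E_P[\textnormal{var}_P(Y\mid W)]$ yields $\sigma_a^2(P)$.

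The main work is the working-model-based estimator. Writing $U := (1, W^\top)^\top$, $\theta_a := (\alpha_a,\beta_a^\top)^\top$, $M := E_P[UU^\top]$, and $\mu_U := E_P[U]$, the arm-specific least-squares fits solve $E[\ind(A=a)U(Y-U^\top\theta_a)]=0$; under randomization these reduce to $E_P[U(Y-U^\top\theta_a)]=0$, so both arms share the limit $\theta^* = M^{-1}E_P[UY]$ at the sharp null and $\hat\psi_m = (\hat\theta_1-\hat\theta_0)^\top\bar U$ targets $0$. Standard M-estimation theory gives $\hat\theta_a - \theta^* = n^{-1}\sum_i \pi_a^{-1}M^{-1}\ind(A_i=a)U_i\varepsilon_i + o_P(n^{-1/2})$ with residual $\varepsilon := Y - U^\top\theta^* = Y - \alpha^* - W^\top\beta^*$ and $\pi_1=p$, $\pi_0=1-p$. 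Because $\theta_1^*=\theta_0^*$ at the sharp null, the $(\bar U - \mu_U)$ contribution to the product $(\hat\theta_1-\hat\theta_0)^\top\bar U$ drops out of the linearization, leaving $D_m = \mu_U^\top(\mathrm{IF}_{\theta_1}-\mathrm{IF}_{\theta_0}) = (\mu_U^\top M^{-1}U)\,\varepsilon\,h(A)$.

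The crucial simplification, and the step I expect to be the main obstacle to state cleanly, is that the marginalization weight $\mu_U^\top M^{-1}U$ is identically $1$. This holds because the first coordinate of $U$ is the constant $1$: letting $e_1$ be the first standard basis vector, $M e_1 = E_P[U(U^\top e_1)] = E_P[U] = \mu_U$, so $M^{-1}\mu_U = e_1$ and hence $\mu_U^\top M^{-1}U = e_1^\top U = 1$ (this uses invertibility of $M$, i.e.\ nonsingular $\textnormal{var}_P(W)$, which I would add as a regularity condition). Consequently $D_m = h(A)\varepsilon$, and $E_P[\varepsilon^2] = E_P[(Y-\alpha^*-W^\top\beta^*)^2]$ gives $\sigma_m^2(P)$. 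Substituting the three product-form influence functions into \eqref{refully}--\eqref{reworking} cancels the common $\{p(1-p)\}^{-1}$ and produces \eqref{master} with exactly the claimed variances.
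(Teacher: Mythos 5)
Your proof is correct, and for $\hat\psi_u$ and $\hat\psi_a$ it coincides with the paper's argument (linearization of the difference of arm-specific means; evaluation of the AIPW efficient influence function at the sharp null, where $r_1=r_0=r$). For the working-model-based estimator, which is where the real work lies, you take a genuinely different route. The paper argues semiparametrically: for each arm-specific mean $E[Y^t\mid A=a]$ it posits a candidate influence function built from the working-model limits $(\alpha_a^*,\beta_a^*)$, proves it is a gradient by showing its difference from the canonical gradient is orthogonal to each component of the tangent space $L_{0,A}^2(\nu)\oplus L_{0,W^t}^2(\nu)\oplus L_{0,Y^t|A,W^t}^2(\nu)$, and then observes that $\hat\alpha_a+\hat\beta_a^\top\bar{W}$ is exactly the one-step estimator associated with that gradient (the OLS first-order conditions force the empirical mean of the estimated gradient to vanish), finishing with a Donsker/remainder analysis; the constant terms $\alpha_a^*+\beta_a^{*\top}w$ then cancel between arms in the difference, giving the product form under the sharp null. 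You instead run classical M-estimation: the sandwich expansion of $\hat\theta_a$, the delta-method composition with $\bar U$ in which the $(\bar U-\mu_U)$ term is negligible precisely because $\theta_1^*=\theta_0^*$ under the sharp null, and the algebraic identity $\mu_U^\top M^{-1}U\equiv 1$ (via $Me_1=\mu_U$ and symmetry of $M$), which is the step that makes the marginalization weight collapse — an identity the paper never needs because of how its candidate gradient is organized. Your route is shorter and more elementary, avoids semiparametric machinery for this step, and correctly flags the extra condition (invertibility of $M$, equivalently uniqueness of the least-squares minimizer, which the lemma statement implicitly assumes). What the paper's route buys in exchange is the gradient property, and hence regularity, of $\hat\psi_m$: the framework defines $\phi_m$ in terms of regular asymptotically linear estimators, since regularity is what licenses the local-alternative/sample-size interpretation, whereas your argument as stated delivers asymptotic linearity only. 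This is a cosmetic rather than substantive difference, because the influence function you obtain is identical to the paper's $D_m$, which the paper verifies is a gradient.
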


We now estimate these variances using external data $\{(Y_i,W_i),i=1,\ldots,n\}$. Specifically, we use the sample variance for the unadjusted variance, $\hat\sigma^2_u = \sum_{i=1}^{n}(Y_i - \bar{Y})^2/n$, where $\bar{Y}$ is the overall mean of the outcome. Let $\hat r(w)$ be an estimator of $r(w) := E_P[Y|W=w]$, then we estimate the adjusted variance by $\hat\sigma^2_a = \sum_{i=1}^{n}\{Y_i -\hat r(W_i)\}^2/n$. Finally, a natural plug-in estimator of $\sigma_m^2$ is
$\hat\sigma_m^2 = \sum_{i=1}^{n}(Y_i-\hat\alpha-W_i^\top \hat\beta )^2/n$, where $(\hat\alpha,\hat\beta)$ are the coefficients in the linear regression of $Y$ on $W$. We estimate the relative efficiencies by $\hat\phi_a = \hat\sigma^2_a/\hat\sigma^2_u$ and $\hat\phi_m = \hat\sigma^2_m/\hat\sigma^2_u$.

For a generic $\mathcal{W}\rightarrow\mathbb{R}$ function $f$, define its (squared) $L^2(P_W)$ norm as $\|f\|_{L^2(P_W)}^2 := \int f(w)^2dP_W(w)$. The following theorem establishes the asymptotic linearity of $\hat\phi_a$ and $\hat\phi_m$ under appropriate conditions.
\begin{theorem}\label{ATE}
Suppose that Conditions~\ref{cond2bounded} and \ref{cond1identifiability} hold. Suppose, in addition, that the random function $\hat{r} : \mathcal{W}\rightarrow\mathcal{Y}$ is such that $\|\hat r - r\|_{L^2(P_W)} = o_P(n^{-1/4})$ and belongs to some fixed $P$-Donsker class $\mathcal{F}$ of functions with probability tending to one. Then, $\hat\phi_a$ is an efficient estimator of $\phi_a$ and $\hat\phi_m$ is an efficient estimator of $\phi_m$.
\end{theorem}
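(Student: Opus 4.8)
The plan is to exploit the fact that $\mathcal{M}$ is locally nonparametric: in such a model the canonical gradient is the unique gradient, so by the equivalence recalled in Section~\ref{sec:background} it suffices to show that each of $\hat\phi_a$ and $\hat\phi_m$ is asymptotically linear with an influence function that is a gradient of the corresponding parameter. Regularity is then automatic, and the influence function necessarily coincides with the efficient influence function, giving efficiency. Since $\hat\phi_a=\hat\sigma_a^2/\hat\sigma_u^2$ and $\hat\phi_m=\hat\sigma_m^2/\hat\sigma_u^2$ are smooth ratio transformations of the three variance estimators, I would first analyze $\hat\sigma_u^2$, $\hat\sigma_a^2$, and $\hat\sigma_m^2$ separately and then combine them by the delta method, using that the ratio map is differentiable at $(\sigma_a^2,\sigma_u^2)$ because $\sigma_u^2=\mathrm{var}_P(Y)>0$ (a mild nondegeneracy taken for granted alongside Condition~\ref{cond2bounded}).

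Next I would identify the canonical gradients by differentiating along quadratic-mean-differentiable paths $\{P_\epsilon\}$ with score $s$. For $\sigma_u^2$ this is the textbook computation giving $(Y-\mu_Y)^2-\sigma_u^2$, where $\mu_Y=E_P[Y]$. For $\sigma_a^2(P)=E_P[\{Y-r(W)\}^2]=E_P[Y^2]-E_P[r(W)^2]$ the key simplification is that, although perturbing $P$ also perturbs $r$, the contribution of $\dot r(w)=E_P[\{Y-r(W)\}s\mid W=w]$ cancels because $r$ is the conditional mean; a direct calculation yields $\frac{d}{d\epsilon}\sigma_a^2(P_\epsilon)|_{\epsilon=0}=E_P[\{Y-r(W)\}^2 s]$, so the gradient is $\{Y-r(W)\}^2-\sigma_a^2$. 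For $\sigma_m^2(P)=\min_{\alpha,\beta}E_P[(Y-\alpha-W^\top\beta)^2]$, the envelope theorem (the first-order conditions defining $(\alpha^*,\beta^*)$ make the derivative in $(\alpha,\beta)$ vanish) gives the gradient $(Y-\alpha^*-W^\top\beta^*)^2-\sigma_m^2$. Applying the chain rule to the ratio, and using $\sigma_a^2-\phi_a\sigma_u^2=0$ and $\sigma_m^2-\phi_m\sigma_u^2=0$ to simplify, yields the efficient influence functions
\[
D^*_{\phi_a}=\frac{1}{\sigma_u^2}\Big[\{Y-r(W)\}^2-\phi_a\{Y-\mu_Y\}^2\Big],\qquad D^*_{\phi_m}=\frac{1}{\sigma_u^2}\Big[(Y-\alpha^*-W^\top\beta^*)^2-\phi_m\{Y-\mu_Y\}^2\Big].
\]

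The analytic heart of the argument, and where I expect the main obstacle to lie, is the asymptotic linearity of $\hat\sigma_a^2=\mathbb{P}_n[\{Y-\hat r(W)\}^2]$, since $\hat r$ is data-dependent and infinite-dimensional. Writing $h_f(y,w)=\{y-f(w)\}^2$ and using the operator notation $Pf=E_P[f]$, $\mathbb{P}_n f=n^{-1}\sum_i f(X_i)$, I would use the identity
\[
\hat\sigma_a^2-\sigma_a^2=(\mathbb{P}_n-P)h_r+P(h_{\hat r}-h_r)+(\mathbb{P}_n-P)(h_{\hat r}-h_r).
\]
The first term is exactly the empirical mean of the claimed influence function. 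For the second term, expanding the square and conditioning on $W$ kills the cross term because $E_P[Y-r(W)\mid W]=0$, leaving $P(h_{\hat r}-h_r)=\|\hat r-r\|_{L^2(P_W)}^2=o_P(n^{-1/2})$ by the assumed $o_P(n^{-1/4})$ rate; it is precisely this orthogonality that reduces the bias to second order and explains why a single $n^{-1/4}$ nuisance rate suffices. For the third (empirical process) term I would invoke the Donsker hypothesis: since $\hat r$ lies in a fixed $P$-Donsker class $\mathcal{F}$ with probability tending to one and $Y,\hat r(W)$ take values in the bounded set $\mathcal{Y}$, the class $\{h_f-h_r:f\in\mathcal{F}\}$ is $P$-Donsker, and $\|h_{\hat r}-h_r\|_{L^2(P)}\to 0$ follows from $\|\hat r-r\|_{L^2(P_W)}\to 0$ together with boundedness, so asymptotic equicontinuity of the empirical process gives $(\mathbb{P}_n-P)(h_{\hat r}-h_r)=o_P(n^{-1/2})$. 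Hence $\hat\sigma_a^2$ is asymptotically linear with influence function $\{Y-r(W)\}^2-\sigma_a^2$.

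The remaining two estimators are more standard. $\hat\sigma_u^2$ is the ordinary sample variance, with the classical influence function $(Y-\mu_Y)^2-\sigma_u^2$. For $\hat\sigma_m^2$, the same three-term decomposition applies with the finite-dimensional nuisance $(\hat\alpha,\hat\beta)$: the bias term vanishes to first order by the population normal equations, and the empirical process term is controlled because the relevant function class is finite-dimensional and hence Donsker. Collecting the three influence functions and applying the delta method to the ratios reproduces exactly the $D^*_{\phi_a}$ and $D^*_{\phi_m}$ computed above. Since these influence functions are the unique gradients in the nonparametric model $\mathcal{M}$, the cited equivalence shows that $\hat\phi_a$ and $\hat\phi_m$ are regular and that their influence functions equal the efficient influence functions; therefore both estimators are efficient.
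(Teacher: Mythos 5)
Your proposal is correct and takes essentially the same route as the paper's proof: the same three-term decomposition for each variance estimator, with the bias term killed by conditional-mean orthogonality (for $\hat\sigma_a^2$) or the population normal equations (for $\hat\sigma_m^2$), the empirical-process term handled by Donsker preservation plus asymptotic equicontinuity (Lemma 19.24 of van der Vaart), the delta method for the ratios, and efficiency deduced from asymptotic linearity with a gradient influence function in a locally nonparametric model. The only differences are cosmetic: you invoke the envelope theorem where the paper runs the estimating-equation Taylor expansion (via Lemma~\ref{gradientEE}, with the zero-mean derivative terms expressing the same cancellation), and your uncentered decomposition $(\mathbb{P}_n-P)(h_{\hat r}-h_r)$ sidesteps the separate consistency argument for $\hat\sigma_a^2$ that the paper's one-step framing requires.
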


\subsubsection{Ordinal outcomes}
Now suppose that the outcome is ordinal and takes value in $\{1,2,\ldots,K\}$. Dichotomous outcomes correspond to the special case where $K=2$. Let $F_a(k) := P(Y^t \leq k|A=a)$ denote the treatment-specific CDF. The treatment effect estimands we consider can all be written as $\psi = g\left(\{F_0(k),F_1(k)\}_{k=1}^{K-1}\right)$ for some real-valued function $g$. 

The proportional odds model \citep{mccullagh1980regression} is a commonly used parametric model for ordinal outcomes. Here we use a treatment-specific proportional odds model as our working parametric model. For $k \in \{1,\ldots,K-1\}$, the model assumes that $P(Y^t \leq k| A=a, W^t=w) =\theta_{\alpha_a,\beta_a}(k,w)$, where
\begin{equation*}
    \text{logit }\theta_{\alpha_a,\beta_a}(k,w) = \alpha_a(k) + \beta_a^\top  w.
\end{equation*}
The above reduces to a logistic regression when the outcome is dichotomous. Let $(\hat\alpha_a,\hat\beta_a)$ be the coefficients, fitted by minimizing the following empirical risk function: 
\begin{multline}\label{propoddsloss}
    L_{n,a}(\alpha,\beta) = -\sum_{k=1}^{K-1}\sum_{i=1}^{n^t} I\{A_i = a\}\Big[I\{Y_i^t\leq k\}\log\left\{\theta_{\alpha,\beta}(k,W_i^t)\right\} \\
    + I\{Y_i^t > k\}\log\left\{1-\theta_{\alpha,\beta}(k,W_i^t)\right\}\Big],
\end{multline}
where $I\{\cdot\}$ is the indicator function. In the special case that $\sum_{i=1}^{n^t}I\{A_i=a,Y_i^t\leq k\}=0$ for some $k$, we let $\hat\alpha_a(k) = -\infty$. Similarly, in the case that $\sum_{i=1}^{n^t}I\{A_i=a,Y_i^t> k\}=0$ for some $k$, we let $\hat\alpha_a(k) = \infty$. For such cases we use the conventions that $\text{logit}^{-1}(-\infty)=0$, $\text{logit}^{-1}(\infty)=1$, and $0\log(0)=0$. 
The treatment-arm-specific CDFs are estimated by $\hat F_a(k) = \sum_{i=1}^{n^t}\theta_{\hat\alpha_a,\hat\beta_a}(k,W_i^t)/n^t$. In addition, we define $(\alpha_a^*,\beta_a^*)$ as the minimizer of $E[L_{n,a}(\alpha,\beta)]$ over $\mathbb{R}^{K-1} \times \mathbb{R}^d$.

We first establish the RAL property of $\hat F_a(k)$, which holds even when the proportional odds model is misspecified. Let $\theta_a(k,w) := P(Y^t \leq k|A=a,W^t=w)$ be the true outcome regression, and let $\theta^*_a(k,w) = \theta_{\alpha^*_a,\beta^*_a}(k,w)$ be the best model approximation to the true outcome regression according to the population analogue of the risk in \eqref{propoddsloss}. 
Note that $\theta^*_a(k,w)$ can be different from $\theta_a(k,w)$ in the presence of misspecification. 

\begin{lemma}\label{ordinalworking}
Suppose that Condition~\ref{cond2bounded} holds and that $(\hat\alpha_a,\hat\beta_a)$ is estimated by minimizing \eqref{propoddsloss}, then $\hat F_a(k)$ is an asymptotically linear estimator of $F_a(k)$, for $k \in \{1,\ldots,K-1\}$. Its influence function is given by
\begin{equation*}
    \textnormal{IF}_{F_a(k)}(y^t,\tilde{a},w^t) = \frac{I\{\tilde{a}=a\}}{\Pi(A=a)}\left[I\{y^t \leq k\}-\theta^*_a(k,w^t)\right]+\theta^*_a(k,w^t)-F_a(k).
\end{equation*}
\end{lemma}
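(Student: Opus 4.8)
The plan is to treat $\hat F_a(k) = \frac{1}{n^t}\sum_{i}\theta_{\hat\alpha_a,\hat\beta_a}(k,W_i^t)$ as a smooth functional of two estimated objects: the fitted coefficients $(\hat\alpha_a,\hat\beta_a)$, which solve a convex $M$-estimation problem using the arm-$a$ observations, and the pooled empirical distribution $P_{n,W}$ of the covariate, over which the fitted model is marginalized. Asymptotic linearity will follow by (i) establishing a first-order expansion for $(\hat\alpha_a,\hat\beta_a)$ around $(\alpha_a^*,\beta_a^*)$ via standard $Z$-estimation theory, and (ii) propagating this expansion through the marginalization map with a delta-method argument, while separately handling the empirical-process fluctuation of $P_{n,W}$. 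To set up (i), I would first record the stationarity conditions obtained by differentiating \eqref{propoddsloss}: writing $\theta = \expit$, these read $\sum_i I\{A_i=a\}\{I\{Y_i^t\le k\}-\theta_{\hat\alpha_a,\hat\beta_a}(k,W_i^t)\}=0$ for each $k$ together with $\sum_{k}\sum_i I\{A_i=a\}\{I\{Y_i^t\le k\}-\theta_{\hat\alpha_a,\hat\beta_a}(k,W_i^t)\}W_i^t=0$, so that the limiting per-observation estimating function $U_a(y,\tilde a,w)$ has $k$-th component $I\{\tilde a=a\}\{I\{y\le k\}-\theta_a^*(k,w)\}$ and $\beta$-block $I\{\tilde a = a\}\sum_k\{I\{y\le k\}-\theta_a^*(k,w)\}w$.

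For step (i), the loss is convex in $(\alpha,\beta)$ and, under Condition~\ref{cond2bounded}, the population risk is finite and (for $n^t$ large enough, away from the degenerate configurations that force $\hat\alpha_a(k)=\pm\infty$) strictly minimized at the finite point $(\alpha_a^*,\beta_a^*)$; convexity then yields consistency, and the usual $Z$-estimator expansion gives $\hat\theta_a-\theta_a^* = -M_a^{-1}\frac{1}{n^t}\sum_i U_a(X_i^t)+o_P((n^t)^{-1/2})$, where $\theta_a=(\alpha_a,\beta_a)$ and $M_a = E[\partial U_a/\partial\theta|_{\theta_a^*}]$. A direct computation using $A\perp W^t$ shows $M_a = -\Pi(A=a)H_a$, where $H_a$ is the information-type matrix with diagonal $\alpha$-block $\mathrm{diag}(E[d_a(j,W^t)])$, off-diagonal blocks $E[d_a(k,W^t)W^t]$, and $\beta$-block $E[\sum_k d_a(k,W^t)W^tW^{t\top}]$, writing $d_a(k,w):=\theta_a^*(k,w)\{1-\theta_a^*(k,w)\}$; invertibility (positive definiteness) of $H_a$ is where a nondegeneracy assumption on $W^t$ enters.

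Next I would decompose $\hat F_a(k)-F_a(k)$ into a coefficient-variation term $\int\{\theta_{\hat\alpha_a,\hat\beta_a}(k,w)-\theta_a^*(k,w)\}dP_{n,W}(w)$, an empirical-process term $\int\theta_a^*(k,w)\,d(P_{n,W}-P_W)(w)$, and a bias term $\int\theta_a^*(k,w)\,dP_W(w)-F_a(k)$. The crucial observation is that the bias term vanishes: the population intercept equation $E[I\{A=a\}\{I\{Y^t\le k\}-\theta_a^*(k,W^t)\}]=0$ combined with $A\perp W^t$ gives exactly $\int\theta_a^*(k,w)dP_W(w)=F_a(k)$, so the per-category intercepts of the working model guarantee that the marginalized fitted CDF is unbiased for $F_a(k)$ even under misspecification. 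The empirical-process term then contributes the influence-function piece $\theta_a^*(k,w^t)-F_a(k)$, and a first-order Taylor expansion of $\theta_{\alpha,\beta}$ turns the coefficient-variation term into $g_a(k)^\top(\hat\theta_a-\theta_a^*)+o_P((n^t)^{-1/2})$, where $g_a(k):=\int\nabla_\theta\theta_a^*(k,w)\,dP_W(w)$.

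The final and most instructive step is algebraic: substituting the expansion of $\hat\theta_a-\theta_a^*$ shows the coefficient-variation piece equals $-g_a(k)^\top M_a^{-1}U_a(X^t)$, and I claim this collapses to a single residual. Indeed, reading off the columns of $H_a$ one verifies the identity $g_a(k)=H_a e_k$, where $e_k$ is the $k$-th standard basis vector; hence $g_a(k)^\top H_a^{-1}=e_k^\top$ and $-g_a(k)^\top M_a^{-1}=\Pi(A=a)^{-1}e_k^\top$, so the piece reduces to $\Pi(A=a)^{-1}$ times the $k$-th component of $U_a(X^t)$, namely $\frac{I\{\tilde a=a\}}{\Pi(A=a)}\{I\{y^t\le k\}-\theta_a^*(k,w^t)\}$. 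Adding the two surviving contributions yields the stated influence function. I expect the main obstacle to be the rigorous justification of step (i) rather than this algebra — specifically, controlling consistency and the remainder/equicontinuity terms for the possibly-misspecified and boundary-valued proportional-odds fit (the $\hat\alpha_a(k)=\pm\infty$ cases, which occur with vanishing probability under Condition~\ref{cond2bounded}), and verifying invertibility of $H_a$; the identity $g_a(k)=H_a e_k$, which is really what makes the influence function so simple, is the clean structural payoff once that analysis is in place.
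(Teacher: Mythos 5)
Your proposal is correct: the decomposition into coefficient-variation, empirical-process, and bias terms is valid, the claim that the bias term vanishes follows exactly as you say from the population intercept equation plus $A \perp W^t$, and the matrix identity $g_a(k)=H_a e_k$ (which holds because $\partial \theta_{\alpha,\beta}(k,w)/\partial \alpha(j)$ is supported on $j=k$, so the $k$-th column of the population Hessian is precisely $\int \nabla_\theta \theta^*_a(k,w)\,dP_W(w)$, and $H_a$ is symmetric) does collapse the sandwich $-g_a(k)^\top M_a^{-1}$ to $\Pi(A=a)^{-1}e_k^\top$, yielding the stated influence function. However, your route is genuinely different from the paper's. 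The paper never expands the coefficient estimator at all: it first verifies that the claimed $\textnormal{IF}_{F_a(k)}$ is a gradient of $F_a(k)$ in the randomized-treatment model, by checking that its difference from the canonical gradient, $\{\theta^*_a(k,W^t)-\theta_a(k,W^t)\}\left[1-I\{A=a\}/\pi_a\right]$, is orthogonal to each component of the tangent space; it then observes that $\hat F_a(k)$ \emph{is} the one-step estimator built from this gradient, because the empirical correction $\mathbb{P}_n \textnormal{IF}_{F_a(k)}(\hat\nu)$ vanishes identically — the inverse-probability-weighted part by the very first-order condition for $\hat\alpha_a(k)$ that you recorded, and the augmentation part by the definition of $\hat F_a(k)$ — after which asymptotic linearity follows from a Donsker argument plus an explicit remainder $\frac{\pi_a-\hat\pi_a}{\hat\pi_a}P_\nu\{\theta^*_a(k,W^t)-\theta_{\hat\alpha_a,\hat\beta_a}(k,W^t)\}$ that is $O_P(n^{-1/2})\cdot o_P(1)$. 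The practical difference is in what must be assumed and controlled: the paper's argument needs only \emph{consistency} of $(\hat\alpha_a,\hat\beta_a)$ (no rate, no invertibility of $H_a$, and the Hessian never appears), whereas your argument requires the full $\sqrt{n^t}$ Z-estimator expansion and hence positive definiteness of $H_a$ and a more delicate remainder analysis — exactly the obstacles you flag yourself. What your approach buys in exchange is an explicit structural explanation of why the influence function is Hessian-free (the identity $g_a(k)=H_ae_k$), which in the paper's proof is hidden inside the algebraic cancellation of the one-step correction; note also that the paper's gradient-orthogonality step additionally delivers regularity of $\hat F_a(k)$, which is reused in later results, while your argument establishes asymptotic linearity only.
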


Following \cite{benkeser2020improving}, we focus on three treatment effect estimands that are often of interest.
\\
\\
\noindent\emph{Difference in mean (DIM)} is
defined as $\psi = E\left[u(Y^t)|A=1\right] - E\left[u(Y^t)|A=0\right]$ for a pre-specified monotone transformation $u(\cdot)$. This reduces to the average treatment effect when $u(\cdot)$ is the identity function. The unadjusted estimator is the difference between the arm-specific sample means, $$\hat\psi_u = \sum_{i=1}^{n^t}u(Y_i^t)A_i\Big/\sum_{i=1}^{n^t}A_i-\sum_{i=1}^{n^t}u(Y_i^t)(1-A_i)\Big/\sum_{i=1}^{n^t}(1-A_i).$$ Instead of using sample means, the adjusted estimator based on proportional odds model computes means with respect to the estimated CDFs $\hat F_a$.
\begin{equation*}
    \hat\psi_{m} = \sum_{k=1}^{K-1}\{u(k)-u(k+1)\}\{\hat F_1(k)-\hat F_0(k)\}.
\end{equation*}
Finally, we define an AIPW estimator similarly to \eqref{AIPW}, but with $Y^t$ replaced by $u(Y^t)$. We denote this estimator as $\hat\psi_a$. As in the previous section, this estimator achieves the semiparametric efficiency bound when the treatment mechanism is estimated with the marginal proportion of treatment and the outcome regression is consistently estimated. For the above three estimators, the variances in \eqref{master} are given in the following lemma.

\begin{lemma}\label{reDIM}
Let $(Y,W) \sim P$. Suppose that the appropriate regularity conditions hold such that the AIPW estimator $\hat\phi_a$ is efficient. Then, for the above $\hat\psi_u, \hat\psi_a$ and $\hat\psi_m$, we have that $\sigma_u^2(P) = \textnormal{var}_P[u(Y)]$, $\sigma_a^2(P) = {E}_P\left[\textnormal{var}_P(u(Y)|W)\right]$, and
\begin{equation*}
    \sigma^2_{m}(P) = {E}_P\left[\left(\sum_{k=1}^{K-1}\{u(k)-u(k+1)\}\left[I\{Y\leq k\}-\theta^*(k,W)\right]\right)^2\right],
\end{equation*}
where $\theta^*(k,w) = \theta_{\alpha^*,\beta^*}(k,w)$ and $(\alpha^*, \beta^*)$ maximizes the following objective:
\begin{equation}\label{propoddsEE}
    E_P\left[\sum_{k=1}^{K-1}I\{Y\leq k\}\log\left\{\theta_{\alpha,\beta}(k,W)\right\} + I\{Y > k\}\log\left\{1-\theta_{\alpha,\beta}(k,W)\right\}\right].
\end{equation}
\end{lemma}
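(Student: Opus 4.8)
The plan is to compute, for each of the three estimators, the second moment of its influence function under the sharp-null product measure $\nu=P\Pi$, and to verify that each such moment factors as a single $\Pi$-dependent constant times the claimed $P$-dependent quantity. That common constant then cancels in the ratios \eqref{refully}--\eqref{reworking}, leaving exactly \eqref{master} with the stated $\sigma_u^2$, $\sigma_a^2$, and $\sigma_m^2$. The structural fact driving every computation is that at the sharp null all arm-specific objects collapse onto their common value, and that randomization ($A\perp W^t$) together with the sharp null (which forces $Y^t\perp A\mid W^t$ since $P_1^t=P_0^t=P$) gives $A\perp(Y^t,W^t)$. As a result, each influence function evaluated at $\nu$ will take the form $g(A)\,h(Y^t,W^t)$, where $g(A):=A/\pi-(1-A)/(1-\pi)$ and $\pi:=\Pi(A=1)$, so that $E_\nu[D^2]=E[g(A)^2]\,E[h(Y^t,W^t)^2]$. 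A one-line calculation gives $E[g(A)^2]=\{\pi(1-\pi)\}^{-1}$, the shared $\Pi$-dependent factor, so it remains only to identify the residual $h$ for each estimator and compute $E[h^2]$.

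For $\hat\psi_u$ and $\hat\psi_a$ I would note that these estimators have exactly the structure analyzed in Lemma~\ref{reATE}, with $u(Y)$ in place of $Y$, so the arguments transfer verbatim. The difference-in-means influence function reduces at the null to $h_u=u(Y^t)-\mu$ with $\mu:=E_P[u(Y)]$, giving $\sigma_u^2=E[h_u^2]=\textnormal{var}_P[u(Y)]$. The AIPW efficient influence function for $E[u(Y^t)\mid A=1]-E[u(Y^t)\mid A=0]$ reduces at the null (where $r_1=r_0=r$ and $\psi=0$, so its baseline correction $r_1(W^t)-r_0(W^t)-\psi$ vanishes) to $h_a=u(Y^t)-r(W^t)$ with $r(w):=E_P[u(Y)\mid W=w]$; conditioning on $W^t$ then yields $\sigma_a^2=E[\textnormal{var}(u(Y^t)\mid W^t)]=E_P[\textnormal{var}_P(u(Y)\mid W)]$.

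The genuinely new content is $\sigma_m^2$. Here I would first record the linear representation $\psi=\sum_{k=1}^{K-1}\{u(k)-u(k+1)\}\{F_1(k)-F_0(k)\}$, which follows from summation by parts applied to $E[u(Y^t)\mid A=a]=\sum_j u(j)\{F_a(j)-F_a(j-1)\}$; by linearity $D_m=\sum_k\{u(k)-u(k+1)\}\{\textnormal{IF}_{F_1(k)}-\textnormal{IF}_{F_0(k)}\}$, where each $\textnormal{IF}_{F_a(k)}$ is supplied by Lemma~\ref{ordinalworking}. The next step is to check that at the sharp null the two arm-specific proportional-odds projections coincide: the population risk whose minimizer defines $(\alpha_a^*,\beta_a^*)$ equals $\Pi(A=a)$ times the common objective \eqref{propoddsEE} (using $A\perp(Y^t,W^t)$ and $P_1^t=P_0^t=P$), so both arms share the single minimizer $(\alpha^*,\beta^*)$, hence $\theta_1^*=\theta_0^*=\theta^*$ and $F_1=F_0$. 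Consequently the baseline terms $\theta_a^*(k,w^t)-F_a(k)$ cancel across arms, leaving $\textnormal{IF}_{F_1(k)}-\textnormal{IF}_{F_0(k)}=g(A)\{I\{Y^t\le k\}-\theta^*(k,W^t)\}$, so that $h_m=\sum_k\{u(k)-u(k+1)\}\{I\{Y^t\le k\}-\theta^*(k,W^t)\}$ and $\sigma_m^2=E[h_m^2]$ is precisely the claimed expression.

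The two moment computations and the summation-by-parts identity are routine; the step demanding the most care is the working-model estimator. The hard part will be to transport the misspecified-model influence function of Lemma~\ref{ordinalworking} correctly through the sharp-null specialization --- namely, verifying that the two arm-specific proportional-odds projections genuinely coincide and agree with the $\theta^*$ defined through \eqref{propoddsEE}, and that the non-residual baseline terms drop out so that only $I\{Y^t\le k\}-\theta^*(k,W^t)$ survives. Once the factorization $E_\nu[D^2]=\{\pi(1-\pi)\}^{-1}E[h^2]$ is in hand for all three estimators, the remainder is bookkeeping.
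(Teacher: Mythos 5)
Your proposal is correct and follows essentially the same route as the paper's proof: it reduces $\sigma_u^2$ and $\sigma_a^2$ to the general-$u$ computation already carried out in the proof of Lemma~\ref{reATE}, and obtains $\sigma_m^2$ by combining the influence functions from Lemma~\ref{ordinalworking} with the sharp-null factorization $D = g(A)\,h(Y^t,W^t)$ and the independence of $A$ and $(Y^t,W^t)$, so that the common factor $\{\pi_1\pi_0\}^{-1}$ cancels in the ratio. The one point where you go beyond the paper is in explicitly verifying that the two arm-specific proportional-odds projections collapse to the single $(\alpha^*,\beta^*)$ defined by \eqref{propoddsEE} under the sharp null (via $E[I\{A=a\}f(Y^t,W^t)]=\pi_a E_P[f(Y,W)]$), a step the paper invokes implicitly when it states that $D_m$ "simplifies" at the null.
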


\noindent We now propose estimators of these quantities for settings where external data are available. Let $\hat r(w)$ be an estimator of $r(w) := E_P[u(Y)|W=w]$, the conditional mean of $u(Y)$, and let $\bar{u}_n$ be the sample mean of $u(Y)$. We estimate the unconditional and conditional variances by
\begin{equation*}
    \hat{\sigma}_u^2 = \frac{1}{n}\sum_{i=1}^{n}\{u(Y_i)-\bar{u}_n\}^2, \ \hat{\sigma}_a^2 = \frac{1}{n}\sum_{i=1}^{n}\{u(Y_i)-\hat r(W_i)\}^2.
\end{equation*}
To estimate $\sigma_m^2$, we fit the proportional odds model by maximizing the empirical counterpart of \eqref{propoddsEE}, and let $(\hat\alpha,\hat\beta)$ denote the fitted coefficients. 
We then construct a plug-in estimator
\begin{equation*}
    \hat\sigma_{m}^2 = \frac{1}{n}\sum_{i=1}^{n}\left(\sum_{k=1}^{K-1}\{u(k)-u(k+1)\}\left[I\{Y_i\leq k\}-\theta_{\hat\alpha,\hat\beta}(k,W_i)\right]\right)^2.
\end{equation*}
Finally, we estimate the relative efficiency by $\hat\phi_a = \hat\sigma_a^2/\hat\sigma_u^2$ and $\hat\phi_{m} = \hat\sigma_{m}^2/\hat\sigma_u^2$. 

In the upcoming theorem, we let $u(\mathcal{Y})$ denote the convex hull of $\{u(y) : y\in\mathcal{Y}\}$.
\begin{theorem}\label{DIM}
Suppose that Conditions~\ref{cond2bounded} and \ref{cond1identifiability} hold. Suppose, in addition, that the random function $\hat{r} : \mathcal{W}\rightarrow u(\mathcal{Y})$ is such that $\|\hat r - r\|_{L^2(P_W)} = o_P(n^{-1/4})$ and belongs to some fixed $P$-Donsker class $\mathcal{F}$ of functions with probability tending to one. Then $\hat\phi_a$ is an efficient estimator of $\phi_a$. Moreover, $\hat\phi_m$ is an efficient estimator of $\phi_m$.
\end{theorem}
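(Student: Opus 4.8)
The plan is to reduce the theorem to three separate asymptotic-linearity statements, one for each variance estimator, and then to combine them with the delta method, using the nonparametric structure of $\mathcal{M}$ to upgrade asymptotic linearity to efficiency. Concretely, since $\phi_a = \sigma_a^2/\sigma_u^2$ and $\phi_m = \sigma_m^2/\sigma_u^2$ with $\sigma_u^2=\textnormal{var}_P[u(Y)]$ bounded away from zero (nondegeneracy of the outcome, so that the ratios are well defined), it suffices to show that $\hat\sigma_u^2$, $\hat\sigma_a^2$, and $\hat\sigma_m^2$ are each asymptotically linear with influence functions equal to the canonical gradients of the corresponding functionals of $P$. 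Because $\mathcal{M}$ is locally nonparametric (its tangent space at $P$ is all of $L^2_0(P)$), each such functional has a unique gradient; hence any asymptotically linear estimator whose influence function is a gradient is regular, its influence function equals the canonical gradient, and it is therefore efficient. The bulk of the work is thus in the three expansions and in tracking the effect of the estimated nuisances.

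For $\hat\sigma_u^2$ and $\hat\sigma_a^2$ the argument is essentially that of Theorem~\ref{ATE} with $Y$ replaced by the fixed bounded transformation $u(Y)$. The estimator $\hat\sigma_u^2$ is a sample variance and is directly asymptotically linear with influence function $\{u(Y)-E_P[u(Y)]\}^2-\sigma_u^2$. For $\hat\sigma_a^2$, I would expand $\{u(Y_i)-\hat r(W_i)\}^2$ about $r$, writing $\hat\sigma_a^2-\sigma_a^2$ as the sum of the empirical mean of $\{u(Y)-r(W)\}^2-\sigma_a^2$, a cross term $\tfrac{2}{n}\sum_i\{u(Y_i)-r(W_i)\}\{r(W_i)-\hat r(W_i)\}$, and a quadratic remainder. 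The population part of the cross term vanishes exactly by the conditional-mean-zero identity $E_P[u(Y)-r(W)\mid W]=0$, and its empirical-process part is $o_P(n^{-1/2})$ by the asymptotic equicontinuity afforded by the $P$-Donsker assumption on $\mathcal{F}$ together with $\|\hat r-r\|_{L^2(P_W)}=o_P(1)$ (using preservation of the Donsker property under products with the bounded $u(Y)-r(W)$); the remainder equals $\|\hat r-r\|_{L^2(P_W)}^2+o_P(n^{-1/2})=o_P(n^{-1/2})$ by the rate condition $\|\hat r-r\|_{L^2(P_W)}=o_P(n^{-1/4})$. This yields influence function $\{u(Y)-r(W)\}^2-\sigma_a^2$ for $\hat\sigma_a^2$.

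The genuinely new piece, and the step I expect to be the main obstacle, is $\hat\sigma_m^2 = \mathbb{P}_n\,\Delta(\cdot;\hat\alpha,\hat\beta)^2$, where $\Delta(y,w;\alpha,\beta):=\sum_{k=1}^{K-1}\{u(k)-u(k+1)\}[I\{y\le k\}-\theta_{\alpha,\beta}(k,w)]$ and $(\hat\alpha,\hat\beta)$ maximizes the empirical counterpart of \eqref{propoddsEE}. Here the working proportional odds model is generally misspecified, so $\theta^*\neq\theta$ and, unlike for $\hat\sigma_a^2$, the functional $\sigma_m^2(P)=E_P[\Delta(\cdot;\alpha^*,\beta^*)^2]$ is not insensitive to the plug-in: its derivative in $(\alpha,\beta)$ at $(\alpha^*,\beta^*)$ does not vanish, since $E_P[\Delta(\cdot;\alpha^*,\beta^*)\,g(W)]$ involves $\theta-\theta^*$. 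The plan is to (i) establish $\sqrt n$-consistency and asymptotic linearity of the finite-dimensional M-estimator $(\hat\alpha,\hat\beta)$ by standard Z-estimation theory, noting that differentiating \eqref{propoddsloss} collapses the score to the clean estimating functions $I\{Y\le k\}-\theta^*(k,W)$ and $\sum_k\{I\{Y\le k\}-\theta^*(k,W)\}W$, whose population expectations vanish at $(\alpha^*,\beta^*)$; and (ii) Taylor-expand $\hat\sigma_m^2$ in $(\alpha,\beta)$ about $(\alpha^*,\beta^*)$, combining the direct term $\Delta(\cdot;\alpha^*,\beta^*)^2-\sigma_m^2$ with the contribution $\langle\nabla_{(\alpha,\beta)}E_P[\Delta^2]|_{*},\,\textnormal{IF}_{(\hat\alpha,\hat\beta)}\rangle$ induced by estimating the coefficients. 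The delicate bookkeeping is exactly this second contribution, which must be carried explicitly (in contrast to the adjusted case, where the analogous term is annihilated by Neyman orthogonality), along with the regularity needed for the M-estimation expansion, namely a well-separated maximizer of \eqref{propoddsEE} with nonsingular Hessian, which holds under Condition~\ref{cond2bounded} provided each outcome level and the covariates carry nondegenerate mass; I would also dispatch the boundary conventions $\hat\alpha(k)=\pm\infty$, which occur with probability tending to zero.

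Finally, I would assemble the ratios by the delta method: with $\sigma_u^2>0$, the estimator $\hat\phi_a=\hat\sigma_a^2/\hat\sigma_u^2$ is asymptotically linear with influence function $\sigma_u^{-2}\,\textnormal{IF}_{\sigma_a^2}-\sigma_a^2\sigma_u^{-4}\,\textnormal{IF}_{\sigma_u^2}$, and analogously for $\hat\phi_m$; the boundedness of $Y$, $W$, and the residuals (guaranteed by Condition~\ref{cond2bounded} and by $\hat r$ taking values in $u(\mathcal{Y})$) supplies the finite moments and uniform control used throughout. Since these influence functions are gradients of the respective pathwise-differentiable functionals and $\mathcal{M}$ is nonparametric, they coincide with the canonical gradients, so $\hat\phi_a$ and $\hat\phi_m$ are efficient. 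I would close by verifying the gradient property directly, i.e., that the derived influence functions reproduce the pathwise derivatives of $\phi_a$ and $\phi_m$ along arbitrary smooth submodels, which is the formal content of the efficiency claim in this model.
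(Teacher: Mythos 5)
Your proposal follows the paper's proof essentially step for step: the same three asymptotic-linearity arguments (the sample variance for $\hat\sigma_u^2$; an orthogonality-plus-Donsker-plus-rate expansion for $\hat\sigma_a^2$, exactly as in Theorem~\ref{ATE}; and a Z-estimation expansion for $\hat\sigma_m^2$ in which the non-vanishing $(\alpha,\beta)$-derivative of $E_P[\Delta^2]$ is carried through via the influence function of $(\hat\alpha,\hat\beta)$), followed by the delta method and uniqueness of gradients in the locally nonparametric model to conclude efficiency. The paper likewise collapses the score of \eqref{propoddsloss} to the estimating functions $I\{Y\le k\}-\theta_{\alpha,\beta}(k,W)$ and their $W$-weighted sum, so your plan and the paper's argument coincide in all substantive respects.
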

Exact forms of the influence functions of $\hat\phi_a$ and $\hat\phi_m$ are given in Appendix~\ref{ordinalproof}.
\\
\\
\noindent \emph{The Mann-Whitney estimand (MW)} is defined as $\psi = P(Y_1^t>\tilde Y_0^t) + P(Y_1^t = \tilde Y_0^t)/2$, for two independent variables $Y_1^t \sim P_1$ and $\tilde Y_0^t \sim P_0$.
It is the probability that a randomly chosen individual's outcome under treatment is larger than another randomly chosen individual's outcome under control plus one half times the probability that the two outcomes are equal. Define $h(x,y) = I\{x>y\} + I\{x=y\}/2$. Then the Mann-Whitney parameter can be alternatively written as
$\psi = \int\int h(x,y)dP_1(x)dP_0(y).$ This alternative definition suggests the following unadjusted estimator
$$\hat\psi_u = \sum_{i=1}^{n^t}\sum_{j=1}^{n^t}A_i(1-A_j)h(Y_i^t,Y_j^t)\Big/\left\{\left(\sum_{i=1}^{n^t}A_i\right)\left(n^t-\sum_{j=1}^{n^t}A_j\right)\right\},$$
and the following working-model-based estimator
$$\hat\psi_{m} = \int\int h(x,y)d\hat{P}_1(x)d\hat{P}_0(y),$$ where $\hat{P}_a$ is the distribution with CDF $\hat F_a(k)$. In addition, let $\hat\psi_a$ be the covariate-adjusted estimator in \citet{vermeulen2015increasing} for the MW parameter, which is efficient under appropriate regularity conditions. 

\begin{lemma}\label{reMW}
Let $(Y,W) \sim P$, and define $\eta_P(k) = P(Y<k) + P(Y=k)/2$ and $p_k = P(Y=k)$. Suppose that the appropriate regularity conditions hold such that $\hat\phi_a$ is an efficient estimator of the MW parameter. Then, for the above $\hat\psi_u, \hat\psi_a$ and $\hat\psi_m$, we have that
$\sigma_u^2 = \textnormal{var}_P[\eta_P(Y)] = (1-\sum_{k=1}^{K}p_k^3)/12$; $\sigma_a^2 = E_P\left[\textnormal{var}_P(\eta_P(Y)|W)\right]$; and
\begin{equation*}
    \sigma^2_{m}(P) =  E_P\left[\left(\sum_{k=1}^{K-1}\{\eta_P(k)-\eta_P(k+1)\}\left[I\{Y\leq k\}-\theta^*(k,W)\right]\right)^2\right].
\end{equation*}
\end{lemma}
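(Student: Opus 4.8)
The plan is to exploit a common structure shared by all three influence functions once they are evaluated at the product null distribution $\nu = P\Pi$. Writing $\pi := \Pi(A=1)$ and $F(k) := P(Y\le k)$, I expect each of $D_u$, $D_a$, $D_m$ at the null to factorize as
\[
D(X^t) = \left[\frac{I\{A=1\}}{\pi} - \frac{I\{A=0\}}{1-\pi}\right]\Delta(Y,W),
\]
for a mean-zero residual $\Delta\in L^2_0(P)$ that differs across the three estimators. Because randomization together with the sharp null forces $A \perp (Y,W)$, the two factors are independent; since the bracketed term has mean zero and second moment $\{\pi(1-\pi)\}^{-1}$, each asymptotic variance equals $\{\pi(1-\pi)\}^{-1}E_P[\Delta^2]$. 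The factor $\{\pi(1-\pi)\}^{-1}$ is common to all three estimators, so it cancels in the ratios \eqref{master} and leaves the $\Pi$-free quantities $\sigma_u^2(P)=E_P[\Delta_u^2]$, $\sigma_a^2(P)=E_P[\Delta_a^2]$, $\sigma_m^2(P)=E_P[\Delta_m^2]$. The proof then reduces to identifying the three residuals and simplifying their second moments.

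For the unadjusted estimator I would treat $\hat\psi_u$ as a two-sample U-statistic with kernel $h$ and read off its influence function from the Hájek projection, namely $\frac{I\{A=1\}}{\pi}\{h_1(Y)-\psi\}+\frac{I\{A=0\}}{1-\pi}\{h_0(Y)-\psi\}$, where $h_1(x)=E_{\tilde Y_0\sim P_0}[h(x,\tilde Y_0)]$ and $h_0(y)=E_{Y_1\sim P_1}[h(Y_1,y)]$. Under the sharp null $P_0=P_1=P$ these collapse to $h_1(x)=\eta_P(x)$ and $h_0(y)=1-\eta_P(y)$, and a symmetry argument gives $\psi=E_P[\eta_P(Y)]=1/2$; hence $\Delta_u=\eta_P(Y)-1/2$ and $\sigma_u^2(P)=\textnormal{var}_P[\eta_P(Y)]$. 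To obtain the closed form, I would write $\eta_P(k)=\{F(k)+F(k-1)\}/2$, so that $\textnormal{var}_P[\eta_P(Y)]=\frac14\sum_k p_k F(k)F(k-1)$. The telescoping identity $1=\sum_k\{F(k)^3-F(k-1)^3\}=3\sum_k p_k F(k)F(k-1)+\sum_k p_k^3$ (using $F(k)^2+F(k)F(k-1)+F(k-1)^2=3F(k)F(k-1)+p_k^2$) then yields $\textnormal{var}_P[\eta_P(Y)]=(1-\sum_k p_k^3)/12$.

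The main work is the fully adjusted case, where I would derive the efficient influence function of the Mann-Whitney parameter in the randomized model $\mathcal{M}_X$ (this can also be read off from \citet{vermeulen2015increasing}). Differentiating $\psi(\nu)=\int\int h(x,y)dP_1(x)dP_0(y)$ along a one-dimensional submodel and decomposing the score into its $Y|A,W$, $W$, and $A$ components, I expect the gradient to consist of the two arm-specific projections $\frac{I\{A=1\}}{\pi}\{h_1(Y)-\bar h_1(W)\}$ and $\frac{I\{A=0\}}{1-\pi}\{h_0(Y)-\bar h_0(W)\}$, with $\bar h_a(w):=E[h_a(Y^t)|A=a,W^t=w]$, plus a covariate projection $\bar h_1(W)+\bar h_0(W)-2\psi$, and to carry no $A$-component because $\psi$ does not depend on $\pi$. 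Verifying that each summand lies in the corresponding piece of $T_{\mathcal{M}_X}(\nu)$ confirms this is the canonical gradient. Evaluating under the null, where $\bar h_1(w)=m(w):=E_P[\eta_P(Y)|W=w]$ and $\bar h_0(w)=1-m(w)$, the covariate term cancels and the two arm terms combine into the advertised factorization with $\Delta_a=\eta_P(Y)-m(W)$, so that $\sigma_a^2(P)=E_P[(\eta_P(Y)-m(W))^2]=E_P[\textnormal{var}_P(\eta_P(Y)|W)]$. Getting the bookkeeping of these projection terms right, and checking that they collapse so cleanly at the null, is the step I expect to be most delicate.

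For the working-model estimator, since $\hat\psi_m=g(\{\hat F_0(k),\hat F_1(k)\}_k)$ is a smooth (quadratic) function of the arm-specific CDFs, I would obtain its influence function by the delta method, combining the partial derivatives of $g$ with the influence functions of $\hat F_a(k)$ supplied by Lemma~\ref{ordinalworking}. A short computation gives $\partial\psi/\partial F_1(k)=-\partial\psi/\partial F_0(k)=\eta_P(k)-\eta_P(k+1)$ at the null, and since $\theta_0^*=\theta_1^*=\theta^*$ there, the contributions from the centering terms $\theta^*(k,W)-F(k)$ cancel between the two arms. What remains is exactly the factorization above with $\Delta_m=\sum_{k=1}^{K-1}\{\eta_P(k)-\eta_P(k+1)\}[I\{Y\le k\}-\theta^*(k,W)]$. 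Finally, differentiating the population proportional-odds objective \eqref{propoddsEE} in each intercept $\alpha(k)$ shows that $E_P[I\{Y\le k\}-\theta^*(k,W)]=0$ for every $k$, so $E_P[\Delta_m]=0$ and the factorization argument of the first paragraph applies verbatim, giving the stated $\sigma_m^2(P)$.
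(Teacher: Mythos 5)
Your proposal is correct and follows essentially the same route as the paper: identify the influence functions of $\hat\psi_u$, $\hat\psi_a$, $\hat\psi_m$ (the paper establishes the first via a V-statistic/Donsker argument rather than your H\'{a}jek-projection sketch, cites \citet{mao2018causal} for the efficient influence function you propose to derive directly, and obtains $D_m$ from Lemma~\ref{ordinalworking} by the delta method exactly as you do), evaluate them under the sharp null where they factor as $\bigl(\tfrac{a^t}{\pi_1}-\tfrac{1-a^t}{\pi_0}\bigr)\Delta(y^t,w^t)$, and use independence of $A$ and $(Y^t,W^t)$ to get variances $E_P[\Delta^2]/(\pi_1\pi_0)$ so the $\Pi$-dependence cancels in the ratio. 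Your telescoping computation of $\textnormal{var}_P[\eta_P(Y)]=(1-\sum_{k=1}^K p_k^3)/12$ is a correct derivation of a fact the paper merely asserts, and your cancellation of the centering terms $\theta^*(k,W)-F(k)$ between arms matches the paper's simplification of $D_m$ at the null.
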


We now propose estimators for these quantities. Unlike in the case of the DIM estimand, $\eta_P(\cdot)$ depends on the unknown marginal distribution of $Y$ and needs to be estimated from the external data. A simple estimator is based on the empirical distribution, $\hat p_k = \sum_{i=1}^{n}I\{Y_i = k\}/n$ and $\hat \eta(k) = \sum_{j=1}^{k}\hat p_j - \hat p_k/2$. The unadjusted variance can be estimated via the plug-in estimator $\hat\sigma_u^2 = (1-\sum_{k=1}^{K}\hat p_k^3)/12$. Let $\hat r(w)$ be an estimator of the conditional mean $r(w) := E_P[\eta_P(Y)|W=w]$, and we estimate the adjusted variance by $\hat\sigma_a^2 = \sum_{i=1}^{n}\{\hat \eta(Y_i)-\hat r(W_i)\}^2/n$. Finally, a natural plug-in estimator for $\hat\sigma_m^2$ is given by 
\begin{align*}
    \hat\sigma^2_{m} &= \sum_{i=1}^{n}\left(\sum_{k=1}^{K-1}\left\{[\hat\eta(k)-\hat\eta(k+1)]\left[I\{Y_i\leq k\}-\theta_{\hat\alpha,\hat\beta}(k,W_i)\right]\right\}\right)^2/n,
\end{align*}
where $(\hat\alpha,\hat\beta)$ is again the fitted coefficients from the proportional odds model, by maximizing the sample counterpart of \eqref{propoddsEE}. The relative efficiency can be estimated as $\hat\phi_a= \hat\sigma_a^2/\hat\sigma_u^2$ and $\hat\phi_m= \hat\sigma_m^2/\hat\sigma_u^2$.

The next theorem establishes the asymptotic properties of these estimators.

\begin{theorem}\label{MW}
Suppose that Conditions~\ref{cond2bounded} and \ref{cond1identifiability} hold. Suppose, in addition, that the random function $\hat{r} : \mathcal{W}\rightarrow \mathbb{R}$ is such that $\|\hat r - r\|_{L^2(P_W)} = o_P(n^{-1/4})$ and belongs to some fixed $P$-Donsker class $\mathcal{F}$ of functions with probability tending to one. Then, $\hat\phi_a$ is an efficient estimator of $\phi_a$ and $\hat\phi_m$ is an efficient estimator of $\phi_m$.
\end{theorem}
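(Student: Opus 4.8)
The plan is to establish that each of the three variance estimators $\hat\sigma_u^2$, $\hat\sigma_a^2$, and $\hat\sigma_m^2$ is asymptotically linear with influence function equal to the canonical gradient of the corresponding variance functional over the locally nonparametric model $\mathcal{M}$, and then to combine these expansions through the delta method for a ratio. Writing $\hat\phi_a=\hat\sigma_a^2/\hat\sigma_u^2$, once I have $\hat\sigma_a^2-\sigma_a^2=\mathbb{P}_n\xi_a+o_P(n^{-1/2})$ and $\hat\sigma_u^2-\sigma_u^2=\mathbb{P}_n\xi_u+o_P(n^{-1/2})$, the ratio is asymptotically linear with influence function $\sigma_u^{-2}\xi_a-\sigma_a^2\sigma_u^{-4}\xi_u$, and analogously for $\hat\phi_m$. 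Because $\mathcal{M}$ is locally nonparametric, the gradient of each target is unique, so verifying that the resulting influence function is a valid gradient identifies it with the efficient influence function and yields efficiency. I would first record the easy pieces: since $\hat p_k$ has influence function $I\{Y=k\}-p_k$ and $\hat\eta(k)=\sum_{j\le k}\hat p_j-\hat p_k/2$ is a fixed linear map of these, $\hat\eta(k)$ is asymptotically linear with influence function $\mathrm{IF}_{\eta(k)}(y)=I\{y<k\}-P(Y<k)+\tfrac12(I\{y=k\}-p_k)$, and $\hat\sigma_u^2=(1-\sum_k\hat p_k^3)/12$ follows by the delta method applied to the smooth map $\{p_k\}\mapsto(1-\sum_k p_k^3)/12$.

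The substantive work is the expansion of $\hat\sigma_a^2=\mathbb{P}_n[\{\hat\eta(Y)-\hat r(W)\}^2]$, which carries two plug-in errors. For the estimated regression $\hat r$ I would freeze $\hat\eta=\eta_P$ and expand the square, using that $r(w)=E_P[\eta_P(Y)\mid W=w]$ is the conditional mean: the population cross term $P[\{\eta_P(Y)-r(W)\}\{\hat r(W)-r(W)\}]$ then vanishes by Neyman orthogonality, the remaining empirical-process cross term is $o_P(n^{-1/2})$ by the $P$-Donsker assumption together with $\|\hat r-r\|_{L^2(P_W)}=o_P(1)$, and the quadratic remainder $\mathbb{P}_n[\{\hat r-r\}^2]=\|\hat r-r\|_{L^2(P_W)}^2+o_P(n^{-1/2})=o_P(n^{-1/2})$ by the $o_P(n^{-1/4})$ rate. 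To handle $\hat\eta$ I would write $\hat\eta(Y)-\eta_P(Y)=\sum_k\{\hat\eta(k)-\eta_P(k)\}I\{Y=k\}$, so the quadratic-in-$\hat\eta$ term is $O_P(n^{-1})$ while the cross term linearizes to $2\sum_k b_k\{\hat\eta(k)-\eta_P(k)\}$ with $b_k:=E_P[\{\eta_P(Y)-r(W)\}I\{Y=k\}]$; substituting the influence function of $\hat\eta(k)$ then contributes the extra additive piece $2\sum_k b_k\,\mathrm{IF}_{\eta(k)}$ to $\xi_a$, on top of the oracle term $\{\eta_P(Y)-r(W)\}^2-\sigma_a^2$.

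The expansion of $\hat\sigma_m^2$ is parallel but simpler in its nuisance, since the relevant conditions on $\hat r$ are in fact only invoked for $\hat\phi_a$: the proportional odds fit $(\hat\alpha,\hat\beta)$ is a finite-dimensional M-estimator that is $\sqrt n$-consistent and asymptotically linear toward $(\alpha^*,\beta^*)$ by standard $Z$-estimation theory, so no Donsker condition is needed there. I would linearize the plug-in in both $(\hat\alpha,\hat\beta)$ and $\hat\eta$ and collect the three resulting first-order terms. Finally, I would verify that the assembled influence functions of $\hat\phi_a$ and $\hat\phi_m$ coincide with the respective efficient influence functions; because the tangent space at $P$ is all of $L_0^2(P)$, this reduces to checking the pathwise-derivative identity, i.e. differentiating $\phi_a(P)=\sigma_a^2(P)/\sigma_u^2(P)$ and $\phi_m(P)$ along arbitrary smooth submodels and matching, with bookkeeping that mirrors the proof of Theorem~\ref{DIM}.

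The main obstacle, and the feature distinguishing this result from the DIM case of Theorem~\ref{DIM}, is that $\eta_P$ is itself an unknown functional of the marginal law of $Y$ and must be estimated by $\hat\eta$ from the same sample; unlike the fixed transformation $u(\cdot)$ in the DIM estimand, plugging in $\hat\eta$ injects additional first-order contributions into the influence functions of all three variance estimators. The crux is to show that these extra contributions are exactly the terms produced by the pathwise derivative of the variance functionals in Lemma~\ref{reMW} once the dependence of $\eta_P$ on $P$ is accounted for, so that the estimators remain efficient rather than merely consistent. Care is also needed at boundary cells where some $\hat p_k=0$ forces $\hat\alpha(k)=\pm\infty$, which I would control using the boundedness imposed in Condition~\ref{cond2bounded} and the finiteness of $K$.
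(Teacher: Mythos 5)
Your proposal is correct and takes essentially the same approach as the paper: establish asymptotic linearity of $\hat\sigma_u^2$ and $\hat\sigma_m^2$ via finite-dimensional estimating-equation arguments, of $\hat\sigma_a^2$ via the oracle term plus the $\hat r$-error (killed by conditional-mean orthogonality, the Donsker condition, and the $o_P(n^{-1/4})$ rate) plus the extra first-order contribution from plugging in $\hat\eta$, and then conclude by the delta method and uniqueness of the gradient in the locally nonparametric model. Your cell-wise linearization $2\sum_k b_k\{\hat\eta(k)-\eta_P(k)\}$ of the $\hat\eta$ contribution coincides exactly with the paper's kernel representation $2(\mathbb{P}_n-P)\int\{\eta_P(\tilde y)-r(\tilde w)\}h(\cdot,\tilde y)\,dP(\tilde y,\tilde w)$ arising from its lemma on the EIF of the mean conditional variance, so the resulting influence functions, and hence the efficiency conclusion, are identical.
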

The influence functions of $\hat\psi_a$ and $\hat\psi_{m}$ are given in Appendix~\ref{ordinalproof}.
\\
\\
\noindent \emph{The log odds ratio (LOR)} is defined as $\psi = \sum_{k=1}^{K-1} \{\text{logit }F_1(k)-\text{logit }F_0(k)\}/(K-1)$, which is an average of the cumulative log odds ratios \citep{diaz2016enhanced}. In general, one can also consider a weighted average. Employing this definition for the LOR ensures that the LOR is well-defined even in settings where a proportional odds assumption fails.

The unadjusted estimator is given by
$$\hat\psi_u = \frac{1}{K-1}\sum_{k=1}^{K-1} \left\{\text{logit }\tilde F_1(k)-\text{logit }\tilde F_0(k)\right\},\textnormal{ where }   \tilde F_a(k) = \frac{\sum_{i=1}^{n^t}I\{Y_i^t \leq k,A_i=a\}}{\sum_{i=1}^{n^t}I\{A_i=a\}}.$$
The working-model-based adjusted estimator $\hat\psi_{m}$ replaces $\tilde F_a(k)$ with the proportional odds model-based estimator $\hat F_a(k)$ that was defined earlier. Finally, let $\hat\psi_a$ be the covariate-adjusted estimator proposed in \citet{diaz2016enhanced}, which achieves the semiparametric efficiency bound under regularity conditions. The following lemma gives the forms of the relevant variances.

\begin{lemma}\label{reLOR}
Let $(Y,W) \sim P$, $F(k) := P(Y\leq k)$, and
\begin{equation*}
    \zeta(Y) := \frac{1}{K-1}\sum_{k=1}^{K-1} \frac{I\{Y\leq k\}}{F(k)\{1-F(k)\}}.
\end{equation*}
Suppose that the appropriate regularity conditions hold such that $\hat\phi_a$ is an efficient estimator of the LOR. Then, for the above $\hat\psi_u$, $\hat\psi_a$ and $\hat\psi_m$, we have that
\begin{align*}
    \sigma_u^2(P) &= \textnormal{var}_P[\zeta(Y)], \ \sigma_a^2(P) = E_P\left[\textnormal{var}_P\left(\zeta(Y)|W\right)\right],  \\
    \sigma_{m}^2(P) &= E_P\left[\left(\zeta(Y) - \frac{1}{K-1}\sum_{k=1}^{K-1}\frac{\theta^*(k,W)}{F(k)\{1-F(k)\}}\right)^2\right].
\end{align*}
\end{lemma}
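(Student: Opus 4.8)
The plan is to read the three variances off the factorized second moments $E_\nu[D(\nu)(X^t)^2]$ that define the relative efficiencies in \eqref{refully}--\eqref{reworking}, by first identifying the influence functions $D_u$, $D_a$, $D_m$ of the three estimators under the sharp null $\nu=P\Pi$. All three estimate $\psi = g(\{F_0(k),F_1(k)\}_{k=1}^{K-1})$ with $g$ the averaged logit-contrast map, so $\partial g/\partial F_1(k) = \frac{1}{(K-1)F_1(k)\{1-F_1(k)\}}$ and $\partial g/\partial F_0(k) = \frac{-1}{(K-1)F_0(k)\{1-F_0(k)\}}$. Since $K$ is finite, $\{F_a(k)\}$ is a finite-dimensional vector and I would invoke the ordinary (multivariate) delta method: each $\hat\psi$ is asymptotically linear with influence function $\sum_{a,k}\frac{\partial g}{\partial F_a(k)}\,\textnormal{IF}_{\hat F_a(k)}$. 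For $\hat\psi_u$ the relevant CDF-level influence function is that of the arm-specific empirical CDF, $\frac{I\{\tilde a=a\}}{\Pi(A=a)}\{I\{y\leq k\}-F_a(k)\}$; for $\hat\psi_m$ it is the one supplied by Lemma~\ref{ordinalworking}, $\frac{I\{\tilde a=a\}}{\Pi(A=a)}\{I\{y\leq k\}-\theta_a^*(k,w)\}+\theta_a^*(k,w)-F_a(k)$; and for the efficient $\hat\psi_a$ it is the canonical gradient of $F_a(k)$ in $\mathcal{M}_X$, namely $\frac{I\{\tilde a=a\}}{\Pi(A=a)}\{I\{y\leq k\}-\theta_a(k,w)\}+\theta_a(k,w)-F_a(k)$.

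Next I would specialize to the sharp null, where $F_1(k)=F_0(k)=F(k)$, $\theta_1=\theta_0=\theta$, and $\theta_1^*=\theta_0^*=\theta^*$. Forming the $a=1$ minus $a=0$ contrast then cancels the marginalization terms $\theta_a(k,w)-F_a(k)$ and $\theta_a^*(k,w)-F_a(k)$, while the common weight $1/[F(k)\{1-F(k)\}]$ collapses the sum over $k$ into $\zeta$. Concretely, each influence function factors as $D=B(A)\,\rho(Y,W)$ with $B(A):=\frac{I\{A=1\}}{\Pi(A=1)}-\frac{I\{A=0\}}{\Pi(A=0)}$ and $\rho_u(Y,W)=\zeta(Y)-E_P[\zeta(Y)]$, $\rho_a(Y,W)=\zeta(Y)-E_P[\zeta(Y)\mid W]$, $\rho_m(Y,W)=\zeta(Y)-\frac{1}{K-1}\sum_{k=1}^{K-1}\theta^*(k,W)/[F(k)\{1-F(k)\}]$. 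Here I would use the identity $E_P[\zeta(Y)\mid W=w]=\frac{1}{K-1}\sum_{k=1}^{K-1}\theta(k,w)/[F(k)\{1-F(k)\}]$, which holds because $\theta(k,w)=E_P[I\{Y\leq k\}\mid W=w]$, to recognize $\rho_a$ as a conditional centering of $\zeta$.

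Finally, because $A\perp(Y,W)$ under $\nu=P\Pi$ and $E[B(A)]=0$, the second moment factorizes as $E_\nu[D^2]=E[B(A)^2]\,E_P[\rho(Y,W)^2]$, where $E[B(A)^2]=1/[\Pi(A=1)\Pi(A=0)]$ is common to all three estimators. This shared $\Pi$-dependent factor cancels in the relative efficiencies of \eqref{master}, so the $\sigma^2(P)$ terms are exactly $E_P[\rho^2]$: $\sigma_u^2(P)=\textnormal{var}_P[\zeta(Y)]$, $\sigma_a^2(P)=E_P[\textnormal{var}_P(\zeta(Y)\mid W)]$ (using $E_P[\{\zeta(Y)-E_P[\zeta(Y)\mid W]\}^2]=E_P[\textnormal{var}_P(\zeta(Y)\mid W)]$), and $\sigma_m^2(P)=E_P[\rho_m(Y,W)^2]$, which is the asserted form.

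I expect the main obstacle to be the delta-method step for the logit map rather than the subsequent bookkeeping. I must ensure $g$ is differentiable at the true CDF vector and that the gradient weights $1/[F(k)\{1-F(k)\}]$ are bounded, which requires $F(k)\in(0,1)$ for each $k\leq K-1$ so that the LOR is well-defined; the bounded-support assumption in Condition~\ref{cond2bounded} together with the stated regularity conditions should guarantee this. A secondary point is to justify that, for the efficient $\hat\psi_a$, the delta-method transform of the CDF-level canonical gradients is itself the canonical gradient of $\psi$ in $\mathcal{M}_X$, so that $D_a$ is the efficient influence function; this follows from the chain rule for pathwise derivatives, since the tangent space is unchanged and the pathwise derivative of $g$ composes linearly with that of $\{F_a(k)\}$, but it should be stated explicitly to conclude that $\sigma_a^2(P)$ has the claimed minimal form.
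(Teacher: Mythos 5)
Your proposal is correct and follows essentially the same route as the paper's proof: delta method applied to the arm-specific CDF influence functions (with Lemma~\ref{ordinalworking} supplying those for $\hat\psi_m$), specialization to the sharp null where $F_1=F_0=F$, $\theta_1=\theta_0$, $\theta_1^*=\theta_0^*$ cancels the marginalization terms, and the factorization $E_\nu[D^2]=E[B(A)^2]\,E_P[\rho^2]$ under $A\perp(Y^t,W^t)$ so that the common $1/(\pi_1\pi_0)$ drops out of the ratio. The only cosmetic difference is that the paper obtains the efficient influence function by projecting $D_u$ onto the tangent space, whereas you assemble it from the per-arm canonical gradients of $F_a(k)$ via the chain rule; both yield the same $D^*$ and the same $\sigma_a^2$.
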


Let $\hat\theta(k,w)$ be an estimator of the true conditional distribution function $\theta(k,w) := P(Y\leq k |W=w)$ in the setting where external data are available. We estimate the relative efficiencies by $\hat\phi_a = \hat\sigma^2_a/\hat\sigma^2_u$ and $\hat\phi_m = \hat\sigma^2_m/\hat\sigma^2_u$, with the variance estimators all taking the following form with certain choice of the estimator $\hat\theta_c$:
\begin{equation*}
      \frac{1}{n}\sum_{i=1}^{n}\left[\sum_{k=1}^{K-1}\frac{I\{Y_i\leq k\}-\hat \theta_c(k,W_i)}{\tilde F(k)\{1-\tilde F(k)\}}\right]^2,
\end{equation*}
where $\tilde F(k) := \sum_{i=1}^{n}I\{Y_i \leq k\}/n$. Specifically, for $\hat\sigma^2_a$, $\hat\theta_c(k,w)$ is replaced with $\hat\theta(k,w)$; for $\hat\sigma_u^2$, we use $\tilde F(k)$; and, for $\hat\sigma^2_m$, we take $\hat\theta_c(k,w) =  \theta_{\hat\alpha,\hat\beta}(k,w)$.

\begin{theorem}\label{LOR}
Suppose that Conditions~\ref{cond2bounded} and \ref{cond1identifiability} hold and that there exists a constant $\delta>0$ such that $\delta<F(k)<1-\delta$ for all $k \in \{1,\ldots,K-1\}$. Suppose, in addition, that, for all $k \in \{1,\ldots,K-1\}$, the random function $\hat{\theta}(k,\cdot) : \mathcal{W}\rightarrow \mathbb{R}$ is such that $\|\hat\theta(k,\cdot) - \theta(k,\cdot)\|_{L^2(P_W)} = o_P(n^{-1/4})$ and belongs to some fixed $P$-Donsker class $\mathcal{F}$ of functions with probability tending to one.
Then, $\hat\phi_a$ is an efficient estimator of $\phi_a$. Moreover, $\hat\phi_m$ is an efficient estimator of $\phi_m$.
\end{theorem}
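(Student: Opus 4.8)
The plan is to follow the template of Theorems~\ref{ATE} and \ref{DIM}: reduce the two relative efficiencies to ratios of variance functionals, establish that each variance estimator is asymptotically linear with influence function equal to the efficient influence function (EIF) of the corresponding variance parameter, and then pass to the ratios by the delta method. Because $\mathcal{M}$ is locally nonparametric, every gradient of a pathwise differentiable parameter coincides with its canonical gradient, so an asymptotically linear estimator whose influence function is a valid gradient is automatically efficient; hence it suffices to (i) verify asymptotic linearity of $\hat\sigma_a^2$, $\hat\sigma_u^2$, and $\hat\sigma_m^2$, and (ii) confirm that the limiting influence functions are the EIFs. The hypothesis $\delta<F(k)<1-\delta$ keeps each weight $1/[F(k)\{1-F(k)\}]$ finite and the map $x\mapsto 1/\{x(1-x)\}$ smooth with bounded derivatives near each $F(k)$, and it keeps $\sigma_u^2$ bounded away from zero so that $x/y$ is continuously differentiable at $(\sigma_a^2,\sigma_u^2)$ and $(\sigma_m^2,\sigma_u^2)$.

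First I would treat a generic variance estimator of the common form $\hat V(\hat\theta_c)=\mathbb{P}_n\left[\sum_{k=1}^{K-1}\hat w_k\{I\{Y\leq k\}-\hat\theta_c(k,W)\}\right]^2$ with $\hat w_k=1/[\tilde F(k)\{1-\tilde F(k)\}]$, and expand its error into three additive sources. Writing $S=\sum_k w_k\{I\{Y\leq k\}-\theta_c(k,W)\}$ for the population analogue, the leading term $\mathbb{P}_n S^2$ contributes the influence function $S^2-V$. The weight-estimation error is handled by a first-order Taylor expansion of $x\mapsto 1/\{x(1-x)\}$ about each $F(k)$, licensed by $\delta<F(k)<1-\delta$; since $\tilde F(k)-F(k)=\mathbb{P}_n\{I\{Y\leq k\}-F(k)\}$ is itself asymptotically linear, this source contributes a clean $\sqrt{n}$ influence-function term. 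The conditional-function error enters through the cross term $2\mathbb{P}_n\left[S\sum_k w_k\{\theta_c(k,W)-\hat\theta_c(k,W)\}\right]$: its population part vanishes by Neyman orthogonality, since $E_P[I\{Y\leq k\}-\theta(k,W)\mid W]=0$ forces $E_P[S\,g(W)]=0$ for every function $g$ of $W$, while its empirical-process part is $o_P(n^{-1/2})$ under the $P$-Donsker hypothesis together with consistency of $\hat\theta_c$, and the leftover quadratic term is $O_P(\|\hat\theta(k,\cdot)-\theta(k,\cdot)\|_{L^2(P_W)}^2)=o_P(n^{-1/2})$ by the $o_P(n^{-1/4})$ rate. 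The interaction between weight error and conditional-function error is $O_P(n^{-1/2})\cdot o_P(n^{-1/4})=o_P(n^{-1/2})$, hence negligible.

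Next I would specialize to the three estimators. For $\hat\sigma_a^2$ I take $\theta_c=\theta$, the true conditional CDF, so the orthogonality and the rate/Donsker hypotheses on $\hat\theta$ are exactly what the conditional-function source requires. For $\hat\sigma_u^2$ I take $\hat\theta_c(k,\cdot)=\tilde F(k)$, a constant in $W$; there is then no conditional-function estimation, the estimator is a smooth functional of the empirical CDF of $Y$, and asymptotic linearity follows from the functional delta method. For $\hat\sigma_m^2$ I take $\hat\theta_c=\theta_{\hat\alpha,\hat\beta}$; here every ingredient is parametric or empirical, so no Donsker condition is needed, but $(\hat\alpha,\hat\beta)$ is a $\sqrt{n}$-consistent $Z$-estimator converging to the maximizer $(\alpha^*,\beta^*)$ of \eqref{propoddsEE}, and, crucially, the cross term does not vanish under misspecification because $E_P[I\{Y\leq k\}-\theta^*(k,W)\mid W]=\theta(k,W)-\theta^*(k,W)\neq 0$. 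Consequently the estimation of $(\hat\alpha,\hat\beta)$ contributes an additional influence-function term obtained from the proportional-odds estimating equations, exactly as in Lemma~\ref{ordinalworking} and the working-model part of Theorem~\ref{DIM}; this is routine parametric $M$-estimation combined with the delta method.

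The main obstacle, and the feature absent from the DIM case, is the estimation of the weights $1/[F(k)\{1-F(k)\}]$, which couples the marginal law of $Y$ into every summand of all three variance functionals. The bookkeeping required is to carry the weight-estimation contribution through each term and to verify that it does not disturb the Neyman orthogonality that licenses the $o_P(n^{-1/4})$ rate on $\hat\theta$ in the adjusted case; the hypothesis $\delta<F(k)<1-\delta$ is precisely what makes the weight map a smooth, bounded functional of $\tilde F$, so that its contribution is an ordinary $\sqrt{n}$ influence-function piece and its interaction with the $\hat\theta$ error stays second order. Once the three variance estimators are shown to be jointly asymptotically linear, the delta method applied to $x/y$ yields asymptotic linearity of $\hat\phi_a$ and $\hat\phi_m$, and efficiency is preserved through this smooth transformation. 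The final step is to confirm, by pathwise differentiation of each variance functional along arbitrary submodels, that the influence function assembled above is the canonical gradient; since the tangent space is all of $L_0^2(P)$, matching this gradient certifies that $\hat\phi_a$ and $\hat\phi_m$ are efficient.
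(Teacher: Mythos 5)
Your proposal is correct and follows essentially the same route as the paper's proof: establish asymptotic linearity of $\hat\sigma_u^2$, $\hat\sigma_m^2$, and $\hat\sigma_a^2$ — using Taylor/$Z$-estimation arguments for the finite-dimensional nuisances $\tilde F(k)$ and $(\hat\alpha,\hat\beta)$, and orthogonality plus the Donsker and $o_P(n^{-1/4})$ rate hypotheses to make the $\hat\theta$ remainder second order — then pass to the ratios by the delta method and invoke uniqueness of gradients in the locally nonparametric model for efficiency. The only difference is organizational: the paper treats $\hat\sigma_u^2$ and $\hat\sigma_m^2$ via explicit estimating equations ($U_{LOR,u}$ and $U_{LOR}$) and analyzes $\sigma_a^2$ by expanding it into mean conditional covariances $\sigma_{kl}$ handled pairwise with Lemma~\ref{gradientcondcov}, whereas you work with the squared sum directly; these decompositions are algebraically equivalent.
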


For all of the aforementioned treatment effect estimands, $\phi_a \in (0,1]$, since the fully adjusted estimator achieves the semiparametric efficiency bound. However, $\phi_{m}$ might be larger than 1 if the proportional odds model is far from the truth. 

Wald-type intervals are a standard approach for constructing confidence intervals when an asymptotically linear estimator $\hat\phi$ of $\phi$ is available. Specifically, a $(1-\alpha)$-CI is given by $\hat\phi \pm n^{-1/2}z_{1-\alpha/2}\mathbb{P}_n\hat\tau^2$, where $z_{1-\alpha/2}$ is the $(1-\alpha/2)$-quantile of a standard normal distribution and $\hat\tau$ is the influence function of $\hat\phi$ except that we replace unknown quantities with consistent estimates. However, there are certain cases where, even though such consistent estimates are used, the Wald-type confidence interval will not provide asymptotically valid coverage. A key time when this challenge arises occurs when the limiting distribution of a RAL estimator is degenerate because the influence function is almost everywhere zero, which will often occur in our setting when the relative efficiency is one. One such example arises in estimating the relative efficiency of the fully adjusted estimator to the unadjusted estimator for the ATE or DIM estimands. In this case, the influence function of $\hat\phi_a$ is almost everywhere zero when $E[u(Y)|W=w] = E[u(Y)]$ for almost all $w$.
While a Wald-type interval will typically achieve asymptotically valid coverage outside of these degenerate cases, there is no way to know in advance whether or not $P$ is such that degeneracy will occur. 
To overcome this challenge, we propose an alternative approach that yields a confidence set that achieves the desired coverage regardless of whether degeneracy occurs. Most importantly, the resulting confidence sets are valid regardless of whether or not the true relative efficiency is, in fact, one.

The proposed confidence set is constructed as follows. Suppose that we have available a valid level $\alpha$ test of the null hypothesis $H_0: \phi_a = 1$ ($\phi_m=1$) --- one such test based on sample splitting is given in Appendix~\ref{unionset}. Denote the Wald confidence interval by $I_{wald}$. We first test this null hypothesis. If we do not reject it, we take the $1-\alpha$ confidence set to be $I_{wald} \cup \{1\}$. If, instead, the null hypothesis is rejected, we take the confidence set to be $I_{wald}$. At first glance, it seems that the proposed approach may fail to achieve valid coverage given that it uses the same data twice --- once to test the null hypothesis and a second time to form the Wald-type interval. Nevertheless, as we show in Appendix~\ref{unionset}, the confidence set resulting from this procedure in fact has at least $1-\alpha$ asymptotic probability of covering the truth. It may happen that $I_{wald}$ and $\{1\}$ are disjoint. In this case, a disconnected confidence set $I_{wald}\cup \{1\}$ can be reported or, if this is considered undesirable, the convex hull of this confidence set can be taken to form a confidence interval.

\subsection{Bootstrap procedure for working-model-based estimators}\label{subsec:bootstrap}
The inferential procedures described in the preceding subsection are based on closed-form expressions for the relative efficiency parameter in several problems and knowledge of the corresponding efficient influence functions. On the one hand, now that these expressions have been calculated, the estimators that we have presented can be used in any problems in which the relative efficiency takes this form. On the other hand, if a new effect estimand or working parametric model is of interest in a future setting, then new analytical calculations will need to be conducted to derive the closed-form expression for the relative efficiency parameter and develop corresponding estimators and confidence intervals. Here, we propose an automated double bootstrap procedure that avoids the need to perform these potentially-tedious analytic calculations. When doing so, we focus on the case where the goal is to infer about the relative efficiency of a new working-model-based adjusted estimator, that is, we focus on $\phi_m$. The reason for this choice is discussed at the end of this subsection.

Before describing this procedure, we first investigate the applicability of a more traditional, one-layer bootstrap procedure. Suppose that the relative efficiency parameter $\Phi_m: \mathcal{M} \rightarrow \mathbb{R}^{+}$ is sufficiently smooth so that a plug-in estimator of $\Phi_m(P)$ based on the empirical distribution is asymptotically linear \citep[see, e.g., Theorem~20.8 in][]{van2000asymptotic}. 
In this case, we can construct a plug-in estimator based on the empirical distribution $\mathbb{P}_n$ of a sample of iid observations. We denote this plug-in estimator by $\Phi_m(\mathbb{P}_n)$ and note that all the estimators proposed so far for $\phi_m$ correspond to plug-in estimators of this form. In a traditional setting where the bootstrap would be applied, a closed-form expression for the functional $\Phi_m$ would be available and so would be the plug-in estimator $\Phi_m(\mathbb{P}_n)$, and the goal would be to derive a corresponding confidence interval. In particular, let the $n$ entries of $\boldsymbol{X}=(X_k)_{k=1}^n$ correspond to an iid sample of external data, where $X_k=(Y_k,W_k)$. We sample from $\boldsymbol{X}$ with replacement $B_1$ times, to form the bootstrap resamples $\boldsymbol{X}^*_i$ of size $n$ for $i=1,\ldots,B_1$. Letting $\mathbb{P}_{n,i}^*$ denote the empirical distribution of the observations in $\boldsymbol{X}_i^*$, we could then use the empirical standard deviation of $\Phi_m(\mathbb{P}_{n,i}^*)$, $i=1,\ldots,B_1$, as the standard error estimate used to construct a confidence interval centered around $\Phi_m(\mathbb{P}_n)$. Though this traditional bootstrap approach is useful in that it avoids explicitly computing the influence function of $\Phi_m(\mathbb{P}_n)$, it does not fully avoid the aforementioned analytic calculations. Indeed, in many cases, deriving the plug-in estimator will itself require deriving the explicit form of the relative efficiency parameter, which in turn relies on computing inefficient and efficient gradients of the treatment effect estimand in the model $\mathcal{M}_X$. Computing these gradients requires specialized calculations that are unfamiliar to many practitioners.

To avoid this challenge, we approximate the plug-in estimator with an alternative estimator $\tilde\phi$ that can be obtained in a fully automated fashion. Specifically, we propose to use an additional layer of resampling to approximate $\phi_m$. Evaluating the resulting estimation strategy only requires having access to the external data and the treatment effect estimator that will be used to analyze data from the future clinical trial. 

Again let $\{\boldsymbol{X}_i^*\}_{i=1}^{B_1}$ be the first layer resamples, and in addition define $\boldsymbol{X}_0^* := \boldsymbol{X}$. For each $i\ge 0$, we then let $\tilde{\boldsymbol{X}}_{ij}$, $j=1,\ldots,B_2$, denote an iid sample of size $N$ from the product measure $\mathbb{P}_{n,i}^*\Pi$, where $\Pi$ is the known distribution of treatment. 
To simulate $\tilde{\boldsymbol{X}}_{ij}$, we first draw an iid sample of size $N$ from the empirical distribution $\mathbb{P}_{n,i}^*$ and then append a random draw of the treatment vector, which is a tuple consisting of $N$ iid draws from a Bernoulli($\pi$) distribution.

For each $\boldsymbol{X}_i^*$, we will construct an estimator $\tilde\phi(\boldsymbol{X}_i^*)$ using the collection of second-layer resamples. Specifically, for each $(i,j)$, we compute the adjusted and unadjusted estimators based on the sample $\tilde{\boldsymbol{X}}_{ij}$, which we denote as $\hat\psi_m^{ij}$ and $\hat\psi_u^{ij}$, respectively. Define $\Bar\psi_m^i = \sum_{j=1}^{B_2} \hat\psi_m^{ij}/B_2$ and $\Bar\psi_u^i = \sum_{j=1}^{B_2} \hat\psi_u^{ij}/B_2$. A stochastic approximation of the parameter evaluation $\Phi_m(\mathbb{P}_{n,i}^*)$ is then given by
\begin{equation}
    \tilde\phi_{n}(\boldsymbol{X}_i^*) = \sum_{j=1}^{B_2}(\hat\psi_m^{ij} - \Bar\psi_m^i)^2 \Big/ \sum_{j=1}^{B_2}(\hat\psi_u^{ij} - \Bar\psi_u^i)^2.
\end{equation}
We note that $\tilde\phi_n(\boldsymbol{X}_i^*)$ depends on $\{\tilde{\boldsymbol{X}}_{ij}\}_{j=1}^{B_2}$, and therefore also on $B_2$ and $N$ --- we omit these dependencies in the notation. A Wald-type bootstrap confidence interval can be constructed by using the empirical standard deviation of $\tilde\phi_n(\boldsymbol{X}_i^*)$ over $i=1,\ldots,B_1$ as the standard error and $\tilde\phi_n(\boldsymbol{X}) := \tilde\phi_n(\boldsymbol{X}^*_0)$ as the center. This double bootstrap procedure is summarized in Algorithm \ref{algbootstrap} in Appendix~\ref{app:pseudocode}.

We now provide some intuition behind why the above-described double bootstrap procedure is expected to work. We then provide a theorem that formalizes these arguments. First, we observe that the double bootstrap procedure is analogous to a traditional single-layer bootstrap, except that we replace the plug-in estimator with an estimator $\tilde\phi_n$, which itself is defined through an additional layer of bootstrap. Intuitively, if $\tilde\phi_n(\boldsymbol{X}_i^*)$ is close enough to the plug-in estimator $\Phi_m(\mathbb{P}_{n,i}^*)$ on all $\boldsymbol{X}_i^*$, we would expect that using the stochastic approximation instead of the plug-in makes little difference and the procedure works similarly as the traditional bootstrap works. We now give heuristic arguments showing that $\tilde\phi_n(\boldsymbol{X}_i^*)$ and $\Phi_m(\mathbb{P}_{n,i}^*)$ should indeed be close, in the sense that 
\begin{align*}
    \tilde{\phi}_n(\boldsymbol{X}_i^*)-\Phi_m(\mathbb{P}_{n,i}^*)=o_{\mathbb{P}_{n,i}^*}(n^{-1/2}).
\end{align*}
To see this, for an arbitrary $j\in\{1,\ldots,B_2\}$, consider a general asymptotically linear estimator $\hat\psi$ of the treatment effect that satisfies
\begin{equation}\label{ALexpansion}
    \hat\psi(\tilde{\boldsymbol{X}}_{ij})-\psi(\mathbb{P}_{n,i}^*\Pi)=\frac{1}{N}\sum_{l=1}^{N}D(\mathbb{P}_{n,i}^*\Pi)(\tilde{\boldsymbol{X}}_{ij}^{l})+ \text{Rem}_i,
\end{equation}
where $\tilde{\boldsymbol{X}}_{ij}^{l}$ is the $l$-th observation in $\tilde{\boldsymbol{X}}_{ij}$. In our upcoming theorem, we will assume that $\text{Rem}$ is negligible in an appropriate sense. 
Suppose that we take sufficiently many samples from $\mathbb{P}_{n,i}^*\Pi$ --- that is, that $B_2$ is sufficiently large --- so that the Monte-Carlo error from the second bootstrap layer is negligible. We can then accurately approximate the sampling distribution of $\sqrt{N}\{\hat\psi(\tilde{\boldsymbol{X}}_{ij})-\psi(\mathbb{P}_{n,i}^*\Pi)\}$ under $\mathbb{P}_{n,i}^*$ by the empirical distribution of $\{\hat\psi(\tilde{\boldsymbol{X}}_{ij})\}_{j=1}^{B_2}$. Applying these arguments at $\hat\psi_u$ and $\hat\psi_m$ suggests that $\tilde\phi_n(\boldsymbol{X}_i^*)$ accurately approximates $\tilde\sigma_{m,i}^2/\tilde\sigma_{u,i}^2$, where $\tilde \sigma^2_{u,i} = \textnormal{var}_{\mathbb{P}_{n,i}^*}[\sqrt{N}\hat\psi_u(\tilde{\boldsymbol{X}}_{ij})]$ and $\tilde \sigma^2_{m,i} = \textnormal{var}_{\mathbb{P}_{n,i}^*}[\sqrt{N}\hat\psi_m(\tilde{\boldsymbol{X}}_{ij})]$ are the variances of the sampling distributions where $\tilde{\boldsymbol{X}}_{ij}$ is an iid sample from $\mathbb{P}_{n,i}^*\Pi$. In addition, provided that $N \gg n$ so that the remainders in the above linear expansion \eqref{ALexpansion} are sufficiently small when $\hat{\psi}$ is equal to $\hat{\psi}_u$ and $\hat{\psi}_m$, the ratio between these variances $\tilde\sigma_{m,i}^2/\tilde\sigma_{u,i}^2$ is approximately $E_{\mathbb{P}_{n,i}^*}[D_m^2(\mathbb{P}_{n,i}^*\Pi)]/E_{\mathbb{P}_{n,i}^*}[D_u^2(\mathbb{P}_{n,i}^*\Pi)] = \Phi_m(\mathbb{P}_{n,i}^*)$.  As a result, we expect $\tilde\phi_n(\boldsymbol{X}_i^*)$ to be reasonably close to the plug-in estimator $\Phi_m(\mathbb{P}_{n,i}^*)$.

The upcoming theorem formalizes the heuristic argument given in the previous paragraph. Before giving this result, we define a key differentiability concept that is useful for establishing theoretical guarantees for bootstrap procedures. Let $\mathbb{D}$ denote the space of c\`{a}dl\`{a}g $\mathbb{R}^{d+1}\rightarrow\mathbb{R}$ functions equipped with the uniform norm. Let $\rho$ be the operator that takes as input a CDF on $\mathbb{R}^{d+1}$ and outputs the corresponding distribution on $\mathbb{R}^{d+1}$. Also let $\mathbb{D}_{\mathcal{M}}:=\{\rho^{-1}(P) : P\in\mathcal{M}\}$, where $\rho^{-1}(P)$ denotes the CDF of $P$. In what follows, we will call a parameter $\phi : \mathcal{M}\rightarrow\mathbb{R}$ Hadamard differentiable if the composition $\phi\circ \rho : \mathbb{D}_{\mathcal{M}}\rightarrow\mathbb{R}$, defined on the subset $\mathbb{D}_{\mathcal{M}}$ of the normed space $\mathbb{D}$, is Hadamard differentiable in the sense defined in Chapter~20.2 of \cite{van2000asymptotic}.

We will assume that the following conditions hold:
\begin{conditionB}\label{Hadamardvar}
    Both $\sigma^2_u(\cdot)$ and $\sigma^2_m(\cdot)$ are Hadamard differentiable;
\end{conditionB}
\begin{conditionB}\label{Eremaindervar}
    There exists a $\gamma\in (1/2,\infty)$ such that the remainder $\textnormal{Rem}_1$ in Eq.~\ref{ALexpansion} is such that $E[\textnormal{var}_{\mathbb{P}_{n,1}^*}(N^{\gamma}\textnormal{Rem}_1)]$ is uniformly bounded in $n$, where the expectation is over the draw of the bootstrap sample $\mathbb{P}_{n,1}^*$ and $X_1,X_2,\ldots$;
\end{conditionB}
\begin{conditionB}\label{B2}
    $B_2$ grows with $n$ in such a way that $n^{1/2}\{\tilde\sigma^2_{m,1}/\tilde\sigma^2_{u,1} - \tilde\phi_n(\boldsymbol{X}_1^*)\}\overset{p}{\rightarrow} 0$ given $(X_1,X_2,\ldots)=(x_1,x_2,\ldots)$ for almost every $(x_1,x_2,\ldots)$;
\end{conditionB}
\begin{conditionB}\label{Nbiggern}
    $N\gg n^{1/(2\gamma-1)}$ in the sense that $n^{1/(2\gamma-1)}/N \rightarrow 0$ as $n\rightarrow\infty$.
\end{conditionB}

We are now ready to state the theorem. 

\begin{theorem}\label{bootstrap}
Under Conditions \ref{cond2bounded}-\ref{cond1identifiability} and Conditions \ref{Hadamardvar}-\ref{Nbiggern}, we have that $\sqrt{n}\{\tilde \phi_n(\boldsymbol{X}^*_1) - \Phi_m(\mathbb{P}_n)\}$ converges in distribution to $\Phi_m^{\prime}(\mathbb{G})$, given $X_1,X_2, \ldots$, in probability, where $\Phi_m^{\prime}$ is the G{\^a}teaux derivative of the functional $\Phi_m$ and $\mathbb{G}$ is a mean-zero Gaussian process with covariance $\textnormal{cov}(\mathbb{G}f_1,\mathbb{G}f_2) = P(f_1f_2)-Pf_1Pf_2$.
\end{theorem}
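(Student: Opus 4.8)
The plan is to reduce the claim to the standard functional delta method for the bootstrap applied to the plug-in estimator $\Phi_m(\mathbb{P}_n)$, after showing that the stochastic second-layer approximation $\tilde\phi_n$ differs from the plug-in by a conditionally negligible amount. I would begin from the decomposition
\[
\sqrt{n}\{\tilde\phi_n(\boldsymbol{X}^*_1) - \Phi_m(\mathbb{P}_n)\} = \sqrt{n}\{\Phi_m(\mathbb{P}_{n,1}^*) - \Phi_m(\mathbb{P}_n)\} + \sqrt{n}\{\tilde\phi_n(\boldsymbol{X}^*_1) - \Phi_m(\mathbb{P}_{n,1}^*)\},
\]
establish that the first term converges conditionally in distribution to $\Phi_m'(\mathbb{G})$, show the second is $o_P(1)$ given the data, and conclude by a conditional Slutsky argument.

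For the first term I would use that $\rho^{-1}(\mathbb{P}_n)$ is the empirical CDF on $\mathbb{R}^{d+1}$, so the class of lower-left orthant indicators is $P$-Donsker and $\sqrt{n}\{\rho^{-1}(\mathbb{P}_n) - \rho^{-1}(P)\}$ converges weakly in $\mathbb{D}$ to the $P$-Brownian bridge $\mathbb{G}$ with covariance $\textnormal{cov}(\mathbb{G}f_1,\mathbb{G}f_2) = P(f_1f_2) - Pf_1Pf_2$, with the nonparametric bootstrap empirical process converging conditionally to the same limit in probability. Condition~\ref{Hadamardvar} supplies Hadamard differentiability of $\sigma_u^2\circ\rho$ and $\sigma_m^2\circ\rho$; since Condition~\ref{cond2bounded} forces $\sigma_u^2(P)>0$ (nondegeneracy of the unadjusted variance), the quotient rule for Hadamard derivatives makes $\Phi_m\circ\rho = (\sigma_m^2/\sigma_u^2)\circ\rho$ Hadamard differentiable, with derivative $\Phi_m'$ agreeing with the G{\^a}teaux derivative named in the statement. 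The delta method for the bootstrap (Theorem~23.9 of \cite{van2000asymptotic}) then yields that $\sqrt{n}\{\Phi_m(\mathbb{P}_{n,1}^*) - \Phi_m(\mathbb{P}_n)\}$ converges conditionally in distribution to $\Phi_m'(\mathbb{G})$, in probability over the data.

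The second term is the crux and where I expect the real work to lie. I would split it as
\[
\tilde\phi_n(\boldsymbol{X}^*_1) - \Phi_m(\mathbb{P}_{n,1}^*) = \bigl[\tilde\phi_n(\boldsymbol{X}^*_1) - \tilde\sigma^2_{m,1}/\tilde\sigma^2_{u,1}\bigr] + \bigl[\tilde\sigma^2_{m,1}/\tilde\sigma^2_{u,1} - \Phi_m(\mathbb{P}_{n,1}^*)\bigr].
\]
The first bracket is the Monte Carlo error of the inner bootstrap and is $o_P(n^{-1/2})$ conditionally by Condition~\ref{B2}. For the second bracket I would expand $\hat\psi_u$ and $\hat\psi_m$ as in \eqref{ALexpansion} and compute the variance of $\sqrt{N}\{\hat\psi(\tilde{\boldsymbol{X}}_{1j}) - \psi(\mathbb{P}_{n,1}^*\Pi)\}$ under $\mathbb{P}_{n,1}^*\Pi$: because the influence function is mean zero, its empirical-average term contributes exactly $E_{\mathbb{P}_{n,1}^*}[D^2]$, so $\tilde\sigma^2$ differs from $E_{\mathbb{P}_{n,1}^*}[D^2]$ only through $\textnormal{var}_{\mathbb{P}_{n,1}^*}(\sqrt{N}\,\textnormal{Rem}_1)$ and a cross-covariance term. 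Condition~\ref{Eremaindervar} with Markov's inequality gives $\textnormal{var}_{\mathbb{P}_{n,1}^*}(\sqrt{N}\,\textnormal{Rem}_1) = O_P(N^{1-2\gamma})$, and Cauchy--Schwarz bounds the cross term by $O_P(N^{1/2-\gamma})$, so $|\tilde\sigma^2 - E_{\mathbb{P}_{n,1}^*}[D^2]| = O_P(N^{1/2-\gamma})$ for each of $\hat\psi_u$ and $\hat\psi_m$. Taking the ratio, using that $\sigma_u^2$ is bounded away from zero and the identity $E_{\mathbb{P}_{n,1}^*}[D_m^2]/E_{\mathbb{P}_{n,1}^*}[D_u^2] = \Phi_m(\mathbb{P}_{n,1}^*)$, the second bracket is $O_P(N^{1/2-\gamma})$, which is $o_P(n^{-1/2})$ precisely because Condition~\ref{Nbiggern} imposes $N\gg n^{1/(2\gamma-1)}$. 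Hence $\sqrt{n}$ times the whole second term is $o_P(1)$ conditionally.

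The main obstacle, and the place requiring the most care, is maintaining a single coherent mode of convergence throughout. Every bound above must hold given $X_1,X_2,\ldots$ in probability, compatibly with the conditional delta-method limit for the first term; because $\mathbb{P}_{n,1}^*$ is itself random given the data, the remainder and cross-term controls from Conditions~\ref{Eremaindervar}--\ref{Nbiggern} must be shown to survive the passage to the conditional law of the first-layer resample, rather than merely holding marginally. Once the negligibility of the second term is established in this conditional-in-probability sense, a conditional Slutsky argument combines it with the conditional weak limit of the first term to give convergence in distribution to $\Phi_m'(\mathbb{G})$, given $X_1,X_2,\ldots$, in probability, as claimed.
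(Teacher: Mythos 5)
Your proposal is correct and takes essentially the same route as the paper's proof: both split the target into the plug-in bootstrap fluctuation, handled by the delta method for the bootstrap (Theorems 23.7 and 23.9 of \citet{van2000asymptotic}), plus the second-layer stochastic approximation error, which is shown to be conditionally negligible at the $\sqrt{n}$-scale via the expansion \eqref{ALexpansion} together with Conditions \ref{Eremaindervar}--\ref{Nbiggern} (Cauchy--Schwarz for the cross term, then Markov's inequality to pass the bound to the conditional law) and Condition \ref{B2} for the Monte Carlo error, before concluding with Slutsky's theorem. The only organizational difference is that you differentiate the ratio functional $\Phi_m$ directly via the quotient rule, whereas the paper first establishes conditional convergence of $\tilde\sigma_u^2$ and $\tilde\sigma_m^2$ separately and applies the delta method to the ratio at the very end; one small inaccuracy is your claim that Condition \ref{cond2bounded} forces $\sigma_u^2(P)>0$ (bounded support does not preclude a degenerate outcome), but this positivity is an implicit standing assumption needed for $\Phi_m$ to be well defined in the paper's argument as well.
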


The proof is a modification of the proof of Theorem 23.9 in \citet{van2000asymptotic}, and is given in Appendix~\ref{ordinalproof}. To approximate the limiting distribution $\Phi_m^{\prime}(\mathbb{G})$ given in the above theorem, Algorithm~\ref{algbootstrap} uses the empirical distribution of $\sqrt{n}\{\tilde \phi_n(\boldsymbol{X}^*_i) - \tilde \phi_n(\boldsymbol{X})\}$ across the $B_1$ bootstrap replicates $\boldsymbol{X}_1^*,\ldots,\boldsymbol{X}_{B_1}^*$.

We now discuss the conditions of Theorem~\ref{bootstrap}. Condition~\ref{Hadamardvar} ensures the Hadamard differentiability of the relative efficiency parameter $\Phi_m(\cdot)$, which is used in most standard sets of sufficient conditions for the validity of bootstrap methods. We can establish these Hadamard differentiability conditions by noting that the variance is essentially the mean of a function indexed by nuisance parameters, which themselves are transformations of some population means. We use the Mann-Whitney estimand in the ordinal outcome case as an example. The variance of the adjusted estimator takes the form
\begin{equation*}
    E_P\left[\left(\sum_{k=1}^{K-1}[b_P(k)\{I\{Y\leq k\}-\theta^*_{\alpha_P,\beta_P}(k,W)\}]\right)^2\right].
\end{equation*}
Here $b, \alpha, \beta$ are nuisance parameters, which are defined, either explicitly or implicitly, with a set of population means. The mean functional is Hadamard differentiable, for example, when the support is bounded. One can then apply the chain rule of Hadamard differentiability as in \cite{hirose2016differentiability}. 

Condition~\ref{Eremaindervar} ensures that the remainder term in the asymptotic linear expansion is sufficiently small. For the examples we have considered, it is possible to show that we can take $\gamma = 1$ under mild conditions. Conditions~\ref{B2} and \ref{Nbiggern} require that the user selects sufficiently large values for $B_2$ and $N$. Condition~\ref{B2} places a restriction on the Monte Carlo approximation $\tilde{\phi}_n(\boldsymbol{X}_1^*)$ of $\tilde{\sigma}_{m,1}^2/\tilde{\sigma}_{u,1}^2$. In most cases, this condition will hold provided the number of second-layer bootstrap samples goes to infinity faster than does $n$, that is, so that $n/B_2\rightarrow 0$. Condition~\ref{Nbiggern} places a restriction on the sample size $N$ of each second-layer bootstrap sample. When $\gamma=1$, this condition requires that these samples be of a larger order than the original sample size $n$. Taken together, Conditions~\ref{B2} and \ref{Nbiggern} impose that sufficient computing power must be available to compute the estimator $\hat{\psi}$ approximately $B_1 B_2$ times on samples of size $N$ --- in contrast, the analytic method in the previous section only required fitting the estimator $\hat\phi_m$ (and estimating its standard error) once on a sample of size $n$.

We conclude by noting that we can define a double bootstrap procedure analogous to Algorithm~\ref{algbootstrap} for the estimation of the relative efficiency of a fully adjusted estimator $\phi_a$. However, our arguments cannot generally be used to establish the validity of double bootstrap confidence intervals for $\phi_a$. The problem arises because the asymptotic variance of the fully adjusted estimator often involves a regression function of the outcome against the covariate. Because the statistical model is nonparametric up to knowledge of the treatment probability, this dependence will often make it so that the parameter $\sigma_a^2(\cdot)$ is not Hadamard differentiable, and so the theoretical guarantee presented above for our double bootstrap procedure may not apply. It is therefore an open question as to whether the double bootstrap will yield valid confidence intervals for the relative efficiency of fully adjusted estimators.

\section{When the outcome is partially observed}\label{sec:partiallyobserved}
We now consider settings where the outcome in the trial is only partially observed. For this purpose, we use the notion of coarsening-at-random \citep{gill1997coarsening, heitjan1991ignorability}. 
Let $Z^t = (T^t,A,W^t)$ be the full data unit in the trial, $C^t$ be a coarsening variable, and $X^t = G_a(Z^t,C^t)$ be the observation unit in the trial where $G_a(\cdot,\cdot)$ is some many-to-one function. We further assume that, under $G_a$, the covariate $W^t$ is fully observed. The adjusted analysis estimates the treatment effect $\psi$ based on $X^t$. We can write $X^t$ as $(\tilde X^t, W^t)$, where $\tilde X^t$ represents the components in $X^t$ that are not covariates and $W^t$ is the covariate vector. Define a function $c(\cdot)$ such that $c(X^t) := \tilde X^t$, and write $G_u$ to denote the composition $c\circ G_a$. The unadjusted analysis ignores the covariate information, which is equivalent to working with the observation unit $X^t_u = G_u(Z^t,C^t)$ rather than with $X^t$. The relative efficiency, defined in terms of the variances of the unadjusted and adjusted estimators, is interesting only when both analyses give consistent estimators. Thus, we will assume that both conditional distributions $G_a(Z^t,C^t)|Z^t$ and $G_u(Z^t,C^t)|Z^t$ satisfy the coarsening-at-random assumption, so that both the unadjusted and adjusted analyses are asymptotically unbiased for the treatment effect.

Let $\nu$ denote the distribution of $X^t$. We again define the relative efficiencies by focusing on trials under the sharp null, that is, the conditional distributions $T^t|A=0, W^t$ and $T^t|A=1,W^t$ are the same. We let $G(\cdot|a,w)$ denote the conditional distribution of $C^t$ given that $(A,W^t)=(a,w)$. Under the sharp null, $\nu$ is fully characterized by the treatment distribution $\Pi$, the conditional distribution of $C^t$ characterized by $G$, and the joint distribution of $(T^t,W^t)$ denoted by $P$ --- when we wish to emphasize this dependence, we write $\nu_{\Pi,G,P}$. 
We define the relative efficiencies as
\begin{equation}
    \Phi_{a,\Pi,G}(P) = \frac{E_{\nu_{\Pi,G,P}}[D_a(\nu_{\Pi,G,P})(X^t)^2]}{E_{\nu_{\Pi,G,P}}[D_u(\nu_{\Pi,G,P})(X_u^t)^2]}, \ \Phi_{m,\Pi,G}(P) = \frac{E_{\nu_{\Pi,G,P}}[D_{m}(\nu_{\Pi,G,P})(X^t)^2]}{E_{\nu_{\Pi,G,P}}[D_u(\nu_{\Pi,G,P})(X^t_u)^2]},\label{eq:relEffsCAR}
\end{equation}
where $D_u$, $D_a$ and $D_m$ are the influence functions of the unadjusted, fully adjusted and working-model-based adjusted estimators, respectively. We will often suppress the dependence of these relative efficiencies on $\Pi$ and $G$ in the notation by writing $\Phi_{a}(P)$ and $\Phi_{m}(P)$. 

We aim to identify and estimate these relative efficiencies from external data available at the trial planning stage. Like the future trial data, the external data can be subject to coarsening. Let $C$ be the coarsening variable, and $\Gamma(\cdot,\cdot)$ be a many-to-one function. The full data unit in the external dataset is $Z = (T,W)$, and the observed data unit is $X = \Gamma(Z,C)$. Let $Q$ be the distribution of $X$, induced by the joint distribution of $(Z,C)$ and the many-to-one function $\Gamma$. To identify the relative efficiencies from the observed external data $X$, we assume the following condition holds throughout this section. This condition is similar to Condition~\ref{cond1identifiability} and assumes in addition that coarsening-at-random holds in the external data. 
\begin{conditionA}\label{cond1prime}
    A full data unit in the external data $Z = (T,W)$ has distribution $P$, and the conditional distribution $\Gamma(Z,C)\,|\,Z$ satisfies the coarsening-at-random assumption.
\end{conditionA}
Under this condition, it is possible to identify the relative efficiencies in \eqref{eq:relEffsCAR} as parameters of the distribution of the observed external data, and also to show that, under reasonable conditions, these parameters will be smooth enough so that it should be possible to develop regular and asymptotically linear estimators based on the external data \citep[Theorem~1.3 in][]{van2003unified}.

The external data might be obtained from various settings including observational studies, some of which are distinct from randomized clinical trials. Consequently, the reasons for coarsening can be much different from those in the future trial. For example, for time-to-event data, administrative censoring may account for a large proportion of right censoring in clinical trials, but a lesser proportion for observational data. Thus, it is often not plausible to assume that we can identify $G$ from the external data. To overcome this issue, we define the relative efficiencies for a particular $G$, and the user can choose a coarsening mechanism that is expected to reflect the setting of a future trial.

In Appendix~\ref{sec:survival}, we use the identifiability result stemming from Condition~\ref{cond1prime} to develop estimators and confidence intervals for the relative efficiency in settings where there are time-to-event outcomes with right censoring. In this case, $T^t$ is the time to some event of interest and $C^t$ is the censoring time in the trial. The full data unit $Z^t$ is $(T^t,A,W^t)$, and the observation unit $X^t$ is $(Y^t,\Delta^t,A,W^t)$, where $Y^t = \min\{T^t,C^t\}$ and $\Delta^t = I\{T^t\leq C^t\}$. 
The mapping that gives rise to this observation unit is given by $G_a(z^t,c^t)=(\min\{t^t,c^t\},I\{t^t\le c^t\},a,w^t)$. The validity of the unadjusted analysis relies on the condition that $T^t\perp C^t |A$, while the validity of the adjusted analysis relies on the condition that $T^t \perp C^t |(W^t, A)$. It is worth noting that, although the condition for the validity of the adjusted analysis can be more plausible in many settings, neither of these conditions implies the other --- this is a consequence of the fact that conditional independence does not imply marginal independence and marginal independence does not imply conditional independence. The external data consist of $X=(Y,\Delta,W)$ where $Y=\min\{T,C\}$ and $\Delta = I\{T\leq C\}$. Here $C$ is the censoring time in the external dataset. Letting $\Gamma(z,c)=(\min\{t,c\},I\{t\le c\},w)$, we see that the observed external data $X$ is equal to $\Gamma(Z,C)$. We consider three estimands of treatment effect, which are all functionals of the treatment-arm-specific survival function $S_a(t) := P(T^t>t|A=a)$. In particular, we develop estimators and confidence intervals for the risk difference (RD), the relative risk (RR), and the restricted mean survival time (RMST) --- see Appendix~\ref{sec:survival} for details.

\section{Experiments}\label{sec:experiment}
\subsection{Simulations}
For the ordinal outcome case, we generate data based on a CDC report describing the age distribution and probabilities of various outcomes within age groups for hospitalized Covid-19 patients \citep{cdc2020severe}, which are also presented in Table \ref{CDC}. The ordinal outcome is assigned the value 1, 2, or 3 for ``death", ``ICU and survived", or ``no ICU and survived", respectively. Age category is the only covariate we adjust for.

\begin{table}
 \caption{\label{CDC}Age distribution and probability of outcomes within age groups, among hospitalized Covid-19 patients \citep{cdc2020severe}. ``ICU" represents ICU admission.}
    \centering
    \begin{tabular}{ccccc}
    \hline
        age & P(age) & P(death $|$ age) & P(ICU $\&$ survived $|$ age) & P(no ICU $\&$ survived $|$ age)  \\
        \hline
        0-19 & 0.01 & 0.00 & 0.00 & 1.00\\
        20-44 & 0.09 & 0.01 & 0.18 & 0.81\\
        45-54 & 0.12 & 0.03 & 0.32 & 0.65\\
        55-64 & 0.13 & 0.08 & 0.31 & 0.61\\
        65-74 & 0.18 & 0.11 & 0.37 & 0.52\\
        75-84 & 0.22 & 0.17 & 0.47 & 0.36\\
        $\geq$ 85 & 0.25 & 0.37 & 0.35 & 0.28\\
        \hline
    \end{tabular}
\end{table}

We consider both the fully adjusted and working-model-based estimators. The relative efficiency of fully adjusted estimators is estimated with the analytical approach, while for the working-model-based estimators, we use both the analytical and the bootstrap approaches. We consider three estimands of the treatment effect: difference in mean, Mann-Whitney, and average log odds ratio. 

As the covariate is ordinal as well, the nuisance conditional mean functions are estimated by sample averages within each age group. In the analytical approach, we build Wald-type confidence intervals on the logit scale first and transform them. For the bootstrap, we take the number of bootstrap resamples in the two layers to be 100 and 500. Though these resample sizes are small compared to those used in typical applications of the bootstrap, we use them to reduce the computational cost in this Monte Carlo simulation. We do 1,000 replications for the analytical approach and 200 for the bootstrap.

The simulation results for sample size 1,000 are presented in Table \ref{simordinal}. We observe that despite the small number of resamples, the bootstrap procedure gives approximately $95\%$ coverage, but that this estimator has larger variance than does the analytical estimator across all settings considered. We expect the performance to improve as the number of resamples increases. The coverage of the analytical approach is close to the nominal level. Additional results for sample sizes 200 and 500 are given in Appendix~\ref{additionalsim}. We note that, as the true relative efficiency is strictly less than 1, the confidence sets constructed using the two-step approach detailed in Section~\ref{subsec:ordinalestimation} have the same coverage.

\begin{table}
\caption{\label{simordinal}Simulation results for ordinal outcome. We consider relative efficiency of fully adjusted and working-model-based estimators for DIM, MW and LOR. In the bootstrap approach, we take $B_1 = 100$ and $B_2 = 500$. Results are based on 1000 replications for analytic approach, 200 for bootstrap. ``F" stands for the fully adjusted estimator, and ``W" stands for the working-model-based estimator.}
    \centering
    \begin{tabular}{rrrrrrrr}
    \hline
         & truth & method & bias & MSE & $\%$RMSE & coverage & CI width \\
        \hline
        DIM (F) & 0.837 & analytic & 0.000 & 0.000 & 0.025 & 0.957 & 0.084 \\
        DIM (W) & 0.840 & analytic & -0.004 & 0.000 & 0.025 & 0.943 & 0.082 \\
        && bootstrap & 0.000 & 0.002 & 0.047 & 0.940 & 0.154 \\
        \hline
        MW (F) & 0.842 & analytic & 0.006 & 0.000 & 0.026 & 0.949 & 0.084 \\
        MW (W) & 0.845 & analytic & 0.002 & 0.000 & 0.025 & 0.957 & 0.083 \\
        && bootstrap & 0.001 & 0.002 & 0.048 & 0.935 & 0.160 \\
        \hline
        LOR (F) & 0.838 & analytic & 0.003 & 0.000 & 0.026 & 0.954 & 0.085 \\
        LOR (W) & 0.842 & analytic & -0.000 & 0.000 & 0.024 & 0.958 & 0.081 \\
        && bootstrap & 0.000 & 0.001 & 0.045 & 0.945 & 0.147 \\
        \hline
    \end{tabular}
\end{table}

For survival outcomes, we only consider the relative efficiency of the fully adjusted estimators. We generate a univariate covariate $W \sim \text{Uniform}(0,1)$, and the survival time follows an exponential distribution $Y|W \sim \text{Exp}\{(1+9W)/10\}$. The censoring time in the external data $C$ is generated from an $\text{Exp}(0.1)$ distribution independent of $W$. The user-specified censoring mechanism in the trial is taken to be the same, that is, $\text{Exp}(0.1)$. We again consider three estimands: risk difference (RD), relative risk (RR), and restricted mean survival time (RMST). The relative efficiency for RD and RR are the same under the null. For RMST, we discretize time with a 0.2 interval to reduce computation time and also mimic a setting where there are fixed follow-up times. 

With continuous time, the nuisance functions are estimated using a sequence of Cox proportional hazard models with polynomials of the covariate. 
We select the best model based on BIC. For discrete time, we use a proportional odds model instead, which slightly outperforms the Cox model in the simulations. Results for sample size 1,000 are presented in Table \ref{simsurvival}. The coverage of the confidence intervals is close to the nominal level across all settings. The uncertainty in the estimates becomes larger as time ($t$) increases, due to the reduced size of the risk set.

The R scripts for all the simulation experiments are available as supplementary files.

\begin{table}
   
    \caption{\label{simsurvival}Simulation results for survival outcome. We consider relative efficiency of fully adjusted estimators for RD at time 1, 2, and 3 (the relative efficiency is the same for RR) and RMST at time 3. Results are based on 1000 replications.}
    \centering
    \begin{tabular}{rrrrrrr}
    \hline
     & truth & bias & MSE & $\%$RMSE & coverage & mean width \\
    \hline
    RD ($t=1$) & 0.903 & 0.000 & 0.000 & 0.020 & 0.949 & 0.071\\
    RD ($t=2$) & 0.847 & 0.000 & 0.001 & 0.027 & 0.954 & 0.091 \\
    RD ($t=3$) & 0.819 & 0.002 & 0.001 & 0.034 & 0.941 & 0.106 \\
    RMST ($t=3$) & 0.820 & -0.002 & 0.001 & 0.028 & 0.952 & 0.091\\
    \hline
    \end{tabular}
\end{table}

\subsection{Application to Covid-19 data}

We apply the proposed methods to assess the efficiency gain of covariate-adjustment using Covid-19 data. The data contains information on 345 non-pregnant patients ($\ge$ 18 years old) admitted to University of Washington Medical Center through 6/15/2020. Among these patients, 40 were admitted twice and 3 were admitted three times. The following demographic and clinical features were measured at baseline: gender, age at admission, race (White, Asian, Black or African American, American Indian or Alaska Native and Native Hawaiian or other Pacific Islander), body mass index (kg/m$^2$), type I diabetes (yes/no), type II diabetes (yes/no), cardiovascular disease (CVD) (yes/no), hypertension (HTN) (yes/no), chronic kidney disease (yes/no), whether are on cholesterol medications (yes/no) and whether are on HTN medications (yes/no). Since only 4 patients have type I diabetes, we combine type I and type II diabetes as one single baseline feature and therefore have 10 baseline covariates in total. We discretize age into 7 groups ( $< $30, 30-40, 40-50, 50-60, 60-70, 70-80, $>$ 80). This is an observational dataset and there is no treatment information. The minimum of the censoring time and the times to each of the following events were measured: discharge, intubation, ventilation, and death. Time of hospital admission was treated as time zero.

\textit{Ordinal Outcome}. We use the following mutually exclusive ordinal outcome based on the severity of a patient's Covid-19 status: (1) censor or discharge, (2) intubation or ventilation, and (3) death.  Among 40 patients who had been admitted twice, only 14 patients had different outcomes between the two visits (9 patients were classified as 2 during first admission and as 1 in the second admission while 5 patients transited from 1 to 2). Among 3 patients who had been admitted three times, only 1 patient had different ordinal outcomes between 3 visits that he was classified a 1, 2, and 1 respectively. For all patients who had been admitted more than once, there was no death. To deal with duplicated observations for these patients, we only include the observations with a more severe outcome. As a result of the above classification, there are 207 (60\%) censor/discharges, 59 (17\%) intubation or ventilation, and 79 (23\%) deaths. We consider three estimands of treatment effects: difference in mean (DIM), Mann-Whitney (MW), and average log odds ratio (LOR). To estimate the nuisance functions for fully adjusted estimators, we fit a series of polynomial regressions from order 1 to 5 and then select the optimal model based on BIC score for DIM and MW. For LOR, these nuisance functions are estimated by fitting proportional odds models with polynomials of order 1 to 5 and selecting the best model based on BIC. We present the relative efficiency of covariate-adjusted estimators that adjust for all the covariates in Table~\ref{ordinalall}. The estimated efficiency gain is about 7\% for the fully adjusted estimator, whereas for the working-model-based estimators we do not see evidence of a significant efficiency gain. In contrast, adjusting for a single baseline covariate gives an estimated efficiency gain ranging from 1\% to 5\%, and the difference between using fully adjusted and working-model-based estimators is negligible when only adjusting for one covariate. We leave the details to Appendix \ref{additionalsim}.

\textit{Survival Outcome}. We choose the time point of interest to be $t = 350$ hours, where the overall survival is around 70\%, and assess the relative efficiency for survival outcomes. We consider three estimands of treatment effects: risk difference (RD), relative risk (RR), and restricted mean survival time (RMST). We use elastic net \citep{Friedman2010regularization} for variable selection, where the tuning penalty parameter is selected via 5-fold cross validation. In particular, we select those variables with nonzero coefficients. To estimate the nuisance functions, we then fit a sequence of Cox proportional hazards models with polynomials of orders 1 to 7 of the selected variables and select the model with the smallest BIC score. The results are shown in Table~\ref{survival}. Adjusting for a single baseline covariate gives an efficiency gain ranges from 1\% to 9\% in estimating RD or RR, with age being the most prognostic factor. A similar trend is observed for RMST. Using elastic net, we select the following 4 baseline factors: age, CVD, chronic kidney disease, and cholesterol medications. Adjusting for these four factors gives an 11\% efficiency gain in estimating RD or RR and RMST. 

\begin{table}
     \caption{\label{ordinalall} Relative efficiency (95\% CI) of fully adjusted and working-model-based estimators that adjust for all baseline covariates for estimating DIM, MW and LOR in the Covid-19 dataset. ``F" stands for the fully adjusted estimator, and ``W" stands for the working-model-based estimator.}
    \centering
    \begin{tabular}{lrr}
        \hline
          & F & W \\
        \hline
        DIM & 0.93 (0.88, 0.97) & 1.02 (0.95, 1.10) \\
        MW & 0.94 (0.92, 0.97) &  1.05 (0.98,  1.14) \\
        LOR & 0.93 (0.89,  0.98) & 1.01 (0.94, 1.08) \\
        \hline
        \end{tabular}
\end{table}

\begin{table}
    \caption{\label{survival}Relative efficiency (95\% CI) for estimating RD, RR and RMST in time-to-event setting in the Covid-19 dataset. Note under the null, relative efficiency of RD and RR are the same and therefore only the one for RD is presented. Selected variables include age, CVD, chronic kidney disease and cholesterol medications. }
    \centering
    \begin{tabular}{lcc}
        \hline
          & RD & RMST \\
        \hline
        age & 0.91 (0.79, 1.06) & 0.92 (0.87, 0.97)\\
        gender & 1.00 (0.88, 1.14) & 1.00 (1.00, 1.00)\\
        race & 1.00 (0.88, 1.14) & 1.00 (0.99, 1.00)\\
        CVD & 0.98 (0.85, 1.13) & 0.99 (0.97, 1.01)\\
        HTN & 1.00 (0.88, 1.14) & 1.00 (1.00, 1.01)\\
        diabetes & 1.00 (0.88, 1.14) & 1.00 (0.99, 1.00)\\
        kidney disease& 0.96 (0.84, 1.10) & 0.97 (0.93, 1.00)\\
        cholesterol meds & 0.98 (0.86, 1.12) & 0.98 (0.96, 0.99)\\
        HTN meds & 1.00 (0.88, 1.14) & 1.00 (1.00, 1.00)\\
        BMI & 0.99 (0.87, 1.14) &0.99 (0.97, 1.00)\\
        selected & 0.89 (0.76, 1.04) & 0.89 (0.84, 0.95)\\
        \hline
        \end{tabular}
\end{table}

\section{Discussion}\label{sec:discussion}
In this paper, we presented a framework to use external data to infer about the relative efficiency of covariate-adjusted analyses in a future clinical trial. We also exhibited the applicability of our framework for a variety of treatment effect estimands of particular interest. For each of these estimands, we introduced a consistent and asymptotically normal estimator of the relative efficiency and provided an analytic means to develop Wald-type confidence intervals. We also introduced a double bootstrap scheme that enables confidence interval construction in certain problems even when an analytic form for the standard error is not available.

When the outcome is only partially observed, standard unadjusted and adjusted analyses typically provide consistent estimators of the treatment effect under different assumptions on the coarsening mechanism. In our view, the choice between adjusted and unadjusted estimator should first and foremost be based on the plausibility of these assumptions. In settings where both sets of assumptions are plausible, the relative efficiency of the two estimators represents a natural criterion upon which to make this choice. Interestingly, unlike for fully adjusted estimators in uncoarsened settings, it is possible that the unadjusted estimator will, in fact, be more efficient than the adjusted estimator when both estimators are consistent. As a specific example, in the survival setting, our results in Theorem~\ref{corollaryreRD} show that the asymptotic variance of the adjusted estimator is smaller than that of the unadjusted estimator if the covariates are only predictive of the survival time, but is larger if the covariates are only predictive of the censoring time.

The relative efficiency we considered is based on a sharp null setting where the treatment has no effect. As a consequence, we do not need to specify the full distribution of $Y^t|A,W^t$ expected in the trial. Moreover, if the treatment effect estimator is regular, which is the case for all those that we considered, then the relative efficiency at this sharp null also serves as an accurate approximation to the relative efficiency under a variety of local alternatives. Though accurate in such settings, we expect that this approximation may be poor when the treatment is extremely beneficial in some subgroups while being quite harmful in some others. While a subgroup analysis might be able to detect this after the trial is completed, it is not generally possible to know \textit{a priori} whether this kind of subgroup effect exists. An alternative approach would involve specifying a particular alternative distribution that the investigator is interested in. In this case, the relative efficiency under that alternative can be derived and estimated. 
Our framework for estimating relative efficiencies based on external data can be easily modified for this setting. 

Observational settings and clinical trials can be quite different in terms of coarsening, and thus we define relative efficiency for a user-specified coarsening mechanism that approximates that of the future trial. This also extends to the case where the covariate distribution is different between the external data and the future clinical trial due to, for example, trial eligibility criteria. In such cases, a particular covariate distribution for the future trial can be imposed when defining the relative efficiency, and the external data can then be used to estimate the distribution of the outcome conditional on covariates.

\appendix

\section*{Appendix}

This appendix is organized as follows. In Appendix~\ref{sec:survival}, we develop estimators and confidence intervals for the relative efficiency in settings where there are time-to-event outcomes with right censoring. In Appendix~\ref{ordinalproof}, we prove lemmas and theorems in Section~\ref{sec:ordinal} on continuous and ordinal outcomes. In Appendix~\ref{survivalproof}, we prove lemmas and theorems in Appendix~\ref{sec:survival} on time-to-event outcomes with right censoring. In Appendix~\ref{additionalsim}, we show some additional experiment results. In Appendix~\ref{unionset}, we develop a two-step procedure with sample splitting to construct confidence intervals, and show that it achieves nominal coverage. In Appendix~\ref{app:pseudocode}, we give the pseudocode for the double bootstrap scheme presented in Section~\ref{sec:ordinal}.

\section{Estimation of relative efficiencies for time-to-event outcome with right censoring}\label{sec:survival}

We consider three estimands of treatment effect, which are all functionals of the treatment-arm-specific survival function $S_a(t) := P(T^t>t|A=a)$. For the unadjusted analysis, we consider plug-in estimators based on the treatment-arm-specific Kaplan-Meier estimator \citep{kaplan1958nonparametric}, which we denote as $\tilde S_a(t)$. Such plug-in estimators are consistent and asymptotically linear provided that $T^t \perp C^t |A$ \citep[see, for example,][]{diaz2019improved}.

In contrast, the consistency of covariate-adjusted estimators often relies on the assumption that $T^t \perp C^t |(W^t, A)$. In fact, many recently proposed adjusted estimators are based on the efficient influence function of the treatment effect estimand in a model where the only assumption is that $T^t \perp C^t |(W^t, A)$  \citep[e.g.,][]{moore2009increasing,stitelman2012general,diaz2019improved}. Under regularity conditions, these estimators achieve the semiparametric efficiency bound in this model. Constructing these estimators often requires estimation of nuisance functions such as the conditional hazard function $h_a(t,w) = P(T^t=t|T^t \geq t, A=a, W^t=w)$, the conditional survival function $S_a(t,w) = P(T^t > t|A=a,W^t=w)$, the conditional distribution of censoring time $G_a(t,w) := P(C^t \geq t|A=a,W^t=w)$ or the treatment mechanism $\pi(w)=P(A=1|W^t=w)$. We call these estimators ``fully adjusted".

As discussed in Section~\ref{sec:partiallyobserved}, the efficiency of an adjusted estimator relative to that of an unadjusted estimator is relevant only when both estimators are consistent --- as noted earlier, a sufficient condition for this to hold is that the observed data arises from a distribution in the intersection model consisting of all distributions of $(Z^t,C^t)$ for which $T^t \perp C^t |(W^t, A)$ and $T^t \perp C^t |A$. Notably, there is not generally any guarantee that a fully adjusted estimator will be efficient relative to the observed data model consisting of the distributions of $G_a(Z^t,C^t)$ generated by sampling $(Z^t,C^t)$ from a distribution in this intersection model. Stated more plainly, if it is known in advance that both the adjusted and unadjusted survival function estimators are consistent, then, in certain cases, there may exist a more efficient estimator of this survival function.

Unlike the cases of continuous or ordinal outcomes that we considered in Section~\ref{sec:ordinal}, we are not aware of a parametric working model for the conditional distribution of $T^t|A, W^t$ that yields a RAL estimator of $S_0$ and $S_1$ when marginalized over the distribution of the covariate $W^t$. Nevertheless, it is possible to define adjusted estimators based on working models in this setting. To see this, note that many of the aforementioned fully adjusted estimators do have the doubly robust property: they are consistent if either $(S_0,S_1)$ or $(G,\pi)$ is correctly specified, and are efficient if both are correctly specified. This allows us to use potentially misspecified parametric working models to estimate $(S_0,S_1)$ as long as we estimate the distribution of censoring time using a correctly specified semiparametric or nonparametric model --- this is the case, for example, if we estimate the censoring distribution via a correctly specified arm-specific Kaplan-Meier estimator. Such estimators are rarely used in practice. We, therefore, focus on computing the relative efficiency of fully adjusted estimators, which see more use, as compared to that of unadjusted estimators.

\subsection{Estimation of relative efficiency}\label{subsec:survestimation}

As in previous works \citep{moore2009increasing,stitelman2012general,diaz2019improved}, we assume that survival and censoring time are discrete, and take values in $\{t_1,t_2,\ldots,t_K\}$. We let $t_0 =0$ be the baseline time. We expect similar derivations can be done for continuous time, and in the simulation studies we empirically validate the performance of our proposed methods when time is measured on a continuous scale.

The first two estimands we consider focus on survival functions at a specific time point. The \emph{risk difference (RD)} is defined as $S_0(t_k) - S_1(t_k)$ for a time $t_k$ of interest. The \emph{relative risk (RR)} is defined as $\{1-S_1(t_k)\}/\{1-S_0(t_k)\}$ for a time $t_k$ of interest. 
We consider the unadjusted estimator $\tilde S_0(t_k) - \tilde S_1(t_k)$ for RD and $\{1-\tilde S_1(t_k)\}/\{1-\tilde S_0(t_k)\}$ for RR, where $\tilde S_a$ is the Kaplan-Meier estimator within each treatment group. Let $\hat S_a$ denote the efficient adjusted estimator proposed in \citet{moore2009application}. 
For each of the two estimands under consideration, we refer to the estimator that replaces $\tilde S_a$ in the unadjusted estimator with $\hat S_a$ as the fully adjusted estimator.

Recall that $\hat S_a$ is a consistent estimator of $S_a$ when $T^t \perp C^t |(A,W^t)$. Under additional regularity conditions given in Theorem~1 in \citet{moore2009increasing}, for each $a\in \{0,1\}$ and $k \in \{1,\ldots,K\}$, $\hat S_a(t_k)$ is an asymptotically linear estimator of $S_a(t_k)$ with influence function
\begin{align}
    (y^t,\delta^t,\tilde{a},w^t) &\mapsto \sum_{j=1}^{k} -\frac{I\{\tilde{a}=a\}S_a(t_k,w^t)}{\pi_aS_a(t_j,w^t)G_a(t_j,w^t)}\left[\delta^t I\{y^t =t_j\}-I\{y^t \geq t_j\}h_a(t_j,w^t)\right] \nonumber \\
    &\quad +S_a(t_k,w^t)-S_a(t_k). \label{eq:SHatIF}
\end{align}
Moreoever, for each $a\in \{0,1\}$ and $k \in \{1,\ldots,K\}$, $\tilde S_a(t_k)$ is a RAL estimator of $S_a(t_k)$ when $T^t \perp C^t|A$ with influence function \citep[see, e.g.,][]{diaz2019improved}
\begin{align}
    (y^t,\delta^t,\tilde{a},w^t) &\mapsto  \sum_{j=1}^{k}-\frac{I\{\tilde{a}=a\}S_a(t_k)}{S_a(t_j)G_a(t_j)\pi_a}\left[\delta^t I\{y^t = t_j\}-h_a(t_j)I\{y^t \geq t_j\}\right]. \label{eq:STildeIF}
\end{align}
Here, $h_a(t)$ is the hazard corresponding to $S_a$ at time $t$ and $G_a(t) := P(C^t\geq t|A=a)$. The influence functions of the fully adjusted and unadjusted estimators of the treatment effect estimand, which we denote as $D_a$ and $D_u$, respectively, can then be derived via the delta method.

As in Section~\ref{sec:partiallyobserved}, we define the relative efficiency as the ratio between the variances of $D_a$ and $D_u$ under the sharp null. In such cases, the distribution of the observed data in the trial is characterized by the marginal distribution of $A$, denoted by $\Pi$, the joint distribution of $(T^t,W^t)$, denoted by $P$, and the conditional distribution of $C^t$ given $(A,W^t)$. In particular, this implies that $S_1(t,w) = S_0(t,w) = S(t,w)$ for all $(t,w)$, where $S(t,w) := P(T^t>t|W^t=w)$ is the conditional survival function under $P$, and also that $S_1(t) = S_0(t) = S(t)$ for all $t$, where $S(t) := P(T^t>t)$ is the marginal survival function under $P$. To simplify the presentation, we suppose additionally that $C^t\perp A|W^t=w$, and write $G(t,w) := P(C^t \geq t | W^t=w)$ and $G(t):= P(C^t \geq t)$. For given $G$ and $\Pi$, the relative efficiency parameter is a functional of $P$.

Before presenting the form of the relative efficiency, we introduce some additional needed notation. For $(T,W) \sim P$, let $h(t,w) := P(T=t|T\geq t, W=w)$ and $h(t):= P(T=t|T\geq t)$ be the conditional and marginal hazard functions under $P$, respectively. We define the following quantities, which will be useful throughout this section:
\begin{align}\label{definef}
    s_j^{kl} &= \frac{S(t_k)S(t_l)\{S(t_{j-1})-S(t_j)\}}{S(t_j)S(t_{j-1})},\textnormal{ and} \nonumber \\
    f_j^{kl}(w) &= \frac{S(t_k,w)S(t_l,w)\{S(t_{j-1},w)-S(t_j,w)\}}{S(t_j,w)S(t_{j-1},w)}.
\end{align}
Interestingly, in the null case that we consider, the relative efficiencies are the same for the RD and RR estimands.

\begin{lemma}\label{reRD}
Suppose that Conditions~\ref{cond2bounded} and \ref{cond1prime} hold and, in addition, that $S(t_k)<1$. Suppose that $\hat S_a$ and $\tilde{S}_a$ are asymptotically linear with influence functions given in \eqref{eq:SHatIF} and \eqref{eq:STildeIF}, respectively. For both the RR and RD estimand, the relative efficiency of the fully adjusted estimator as compared to the unadjusted estimator is given by $\Phi_a(P) = \sigma_a^2(P)/\sigma_u^2(P)$, where
\begin{align}
    \sigma^2_a(P) = E_P\left[\sum_{j=1}^{k}f_j^{kk}(W)/G(t_j,W)\right], \ \ \sigma^2_u(P) = \sum_{j=1}^{k}s_j^{kk}/G(t_j).
\end{align}
\end{lemma}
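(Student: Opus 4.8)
The plan is to reduce both estimands to a single influence-function contrast, establish a discrete-time martingale orthogonality, and then read off the two variances as diagonal sums.

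First I would obtain $D_a$ and $D_u$ by applying the delta method to the arm-specific influence functions in \eqref{eq:SHatIF} and \eqref{eq:STildeIF}. For the RD estimand $S_0(t_k)-S_1(t_k)$ the gradient with respect to $(S_0(t_k),S_1(t_k))$ is $(1,-1)$, so $D_a=\mathrm{IF}_{\hat S_0(t_k)}-\mathrm{IF}_{\hat S_1(t_k)}$ and likewise for $D_u$. For the RR estimand the gradient at the sharp null, where $S_0(t_k)=S_1(t_k)=S(t_k)$, equals $\{1-S(t_k)\}^{-1}(1,-1)$, which is well defined precisely because $S(t_k)<1$; hence the RR influence functions are the RD influence functions times the common scalar $\{1-S(t_k)\}^{-1}$, which cancels in the ratio $\sigma_a^2/\sigma_u^2$. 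This is exactly why the two estimands share a relative efficiency, and thereafter I would treat only the RD contrast. Specializing to the sharp null with $C^t\perp A\mid W^t$, I would substitute $S_a=S$, $h_a=h$, $G_a=G$, observe that the conditional-mean terms $S(t_k,W^t)-S(t_k)$ cancel in the difference, and write $D_a=-B\sum_{j=1}^k c_j(W^t)M_j$, where $B=I\{A=0\}/\pi_0-I\{A=1\}/\pi_1$ with $\pi_a:=\Pi(A=a)$, $c_j(w)=S(t_k,w)/\{S(t_j,w)G(t_j,w)\}$, and $M_j=\delta^t I\{Y^t=t_j\}-I\{Y^t\ge t_j\}h(t_j,W^t)$. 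The unadjusted $D_u$ has the identical form with $c_j,M_j$ replaced by the marginal analogues $\bar c_j=S(t_k)/\{S(t_j)G(t_j)\}$ and $\tilde M_j=\delta^t I\{Y^t=t_j\}-I\{Y^t\ge t_j\}h(t_j)$.

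Next I would establish the martingale structure, which is the engine of the computation and also the main obstacle. Using $T\perp C\mid W$ (Condition~\ref{cond1prime}) together with $h(t_j,w)=\{S(t_{j-1},w)-S(t_j,w)\}/S(t_{j-1},w)$, a short calculation gives $E_P[M_j\mid W]=0$; the cross terms vanish by the standard argument that, for $j<j'$, $M_j$ is measurable with respect to the history through $t_{j'-1}$ while $M_{j'}$ has conditional mean zero given that history and $W$, so the tower property kills $E_P[M_jM_{j'}\mid W]$ for $j\ne j'$. The subtlety is that the \emph{adjusted} increments are orthogonal only after conditioning on $W$, whereas the \emph{unadjusted} increments $\tilde M_j$ are orthogonal and have the stated second moments only at the marginal level: since $E_P[\delta^t I\{Y^t=t_j\}]=\{S(t_{j-1})-S(t_j)\}G(t_j)$ and $h(t_j)$ is the marginal hazard, this requires the marginal independence $T\perp C$ rather than merely $T\perp C\mid W$. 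This is the regime in which the relative efficiency is meaningful (the intersection model) and in which the stated form of $D_u$ is a genuine influence function; the two variances must therefore be handled on their own filtrations, not reconciled by a single conditioning on $W$, because the marginal hazard $h(t_j)$ is not the $W$-average of $h(t_j,W)$.

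Then I would compute the diagonal second moments. Writing $p_j=S(t_{j-1},W)-S(t_j,W)$, the identity $S(t_{j-1},W)-p_j=S(t_j,W)$ collapses $E_P[M_j^2\mid W]=G(t_j,W)\{p_j-p_j^2/S(t_{j-1},W)\}$ to
\begin{equation*}
E_P[M_j^2\mid W]=G(t_j,W)\,\frac{S(t_j,W)\{S(t_{j-1},W)-S(t_j,W)\}}{S(t_{j-1},W)},
\end{equation*}
so that $c_j(W)^2\,E_P[M_j^2\mid W]=f_j^{kk}(W)/G(t_j,W)$ by the definition in \eqref{definef}, with the marginal computation giving $\bar c_j^2\,E_P[\tilde M_j^2]=s_j^{kk}/G(t_j)$. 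Finally I would assemble the variances: randomization gives $A\perp W^t$ and the sharp null with $C^t\perp A\mid W^t$ gives $(Y^t,\delta^t)\perp A\mid W^t$, so $B$ is independent of $(Y^t,\delta^t,W^t)$ with $E[B^2]=\pi_0^{-1}+\pi_1^{-1}$. Conditioning on $W^t$ and discarding the off-diagonal terms by orthogonality yields $E_\nu[D_a^2]=(\pi_0^{-1}+\pi_1^{-1})\,E_P[\sum_{j=1}^k f_j^{kk}(W)/G(t_j,W)]$, and the marginal version yields $E_\nu[D_u^2]=(\pi_0^{-1}+\pi_1^{-1})\sum_{j=1}^k s_j^{kk}/G(t_j)$. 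The common factor $\pi_0^{-1}+\pi_1^{-1}$ cancels in $\Phi_a(P)=E_\nu[D_a^2]/E_\nu[D_u^2]$, as in the factorization discussed in Section~\ref{subsec:ordinalestimation}, delivering the claimed forms of $\sigma_a^2(P)$ and $\sigma_u^2(P)$.
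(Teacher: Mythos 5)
Your proof is correct and follows essentially the same route as the paper's: the delta method applied to the arm-specific influence functions (with RR reducing to RD via the scalar $\{1-S(t_k)\}^{-1}$, which cancels in the variance ratio), simplification under the sharp null, orthogonality of the martingale increments conditionally on $W^t$ for the adjusted estimator and marginally for the unadjusted one, and diagonal second-moment computations giving $f_j^{kk}(W)/G(t_j,W)$ and $s_j^{kk}/G(t_j)$, with the common factor $\pi_0^{-1}+\pi_1^{-1}=1/(\pi_0\pi_1)$ canceling. The only quibble is a citation slip: the trial-level independencies $T^t\perp C^t\mid W^t$ and $T^t\perp C^t$ do not come from Condition~\ref{cond1prime} (which concerns coarsening of the \emph{external} data) but are derived, as the paper does at the outset of its proof, from the sharp null, randomization, $C^t\perp A\mid W^t$, and the conditional independencies implicit in assuming the influence functions \eqref{eq:SHatIF} and \eqref{eq:STildeIF} are valid.
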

In what follows, we will often write $\phi_a$ for $\Phi_a(P)$.

To estimate $\phi_a$ from the external data $(Y,\Delta,W)$, we estimate $\sigma_u^2$ and $\sigma_a^2$ separately. We observe that $\sigma_u^2$ is a transformation of $S(t)$, and hence we construct a plug-in estimator $\hat s_{j}^{kl}$ of $s_{j}^{kl}$ using covariate-adjusted estimator of $S(t)$ given in \citet{moore2009application} and estimate $\sigma_u^2$ by 
$\hat\sigma_u^2 = \sum_{j=1}^{k}\hat s_{j}^{kk}/G(t_j)$. We estimate $\sigma^2_a$ using one-step estimation based on its EIF. Recall that $C$ is the censoring time in the external data. We define $H(t,w) := P(C\geq t|W=w)$. For notational convenience, we define the following function, which appears multiple times in the EIF of $\sigma^2_a$:
\begin{multline*}
    g_j^{kl}(y,\delta,w) = \left\{\frac{S(t_k,w)}{S(t_j,w)}-\frac{S(t_k,w)}{S(t_{j-1},w)}\right\}\tau_l(y,\delta,w) +\left\{\frac{S(t_l,w)}{S(t_j,w)}
    -\frac{S(t_l,w)}{S(t_{j-1},w)}\right\}\tau_k(y,\delta,w) \\-\frac{S(t_k,w)S(t_l,w)}{S^2(t_j,w)}\tau_j(y,\delta,w) + \frac{S(t_k,w)S(t_l,w)}{S^2(t_{j-1},w)}\tau_{j-1}(y,\delta,w),
\end{multline*}
where
\begin{equation}\label{eq:tau}
    \tau_l(y,\delta,w) = \sum_{u\leq l}-\frac{S(t_l,w)}{S(t_u,w)H(t_u,w)}\left[I\{y=t_u,\delta=1\}-h(t_u,w)I\{y \geq t_u\}\right].
\end{equation}
The efficient influence function of $\sigma_a^2$ relative to the observed data model is 
\begin{equation}\label{EIFcondvarsurv}
    \textnormal{IF}_a(y,\delta,w) = \sum_{j=1}^{k} \frac{1}{G(t_j,w)}\left\{g_j^{kk}(y,\delta,w)
    +f_j^{kk}(w)\right\}-\sigma_a^2. 
\end{equation}
The derivation of this expression is deferred to Appendix~\ref{survivalproof}. Let $\hat S(t,w)$ and $\hat h(t,w)$ be estimators of the conditional survival and hazard functions, respectively. Let $\hat H(t,w)$ be an estimator of the conditional censoring distribution. Define $\hat g_{j}^{kk}, \hat f_{j}^{kk}$ with these estimates. We estimate $\sigma_a^2$ with
\begin{equation*}
    \hat\sigma_a^2 = \frac{1}{n}\sum_{i=1}^{n}\sum_{j=1}^{k}\frac{1}{G(t_j,W_i)}\left\{\hat g_j^{kk}(Y_i,\Delta_i,W_i)+\hat f_j^{kk}(W_i)\right\}.
\end{equation*}
We then estimate $\phi_a$ by $\hat\phi = \hat \sigma_a^2/\hat \sigma_u^2$. The properties of $\hat\phi$ are given in the following theorem.

\begin{theorem}\label{RD}
Suppose that (1) Conditions~\ref{cond2bounded} and \ref{cond1prime} hold; (2) $\hat S(t,w)$, $\hat H(t,w)$, $S(t,w)$, $H(t,w)$ and $G(t,w)$ are all uniformly bounded away from 0, and $\hat S(t,w)$, $\hat h(t,w)$ and $h(t,w)$ are uniformly bounded above; (3) for all $t \in \{t_1,\ldots,t_K\}$, the random functions $\hat H(t,\cdot):\mathcal{W} \rightarrow \mathbb{R}$, $\hat S(t,\cdot):\mathcal{W} \rightarrow \mathbb{R}$ and $\hat h(t,\cdot):\mathcal{W} \rightarrow \mathbb{R}$ are such that $\|\hat H(t,\cdot) - H(t,\cdot)\|_{L^2(P_W)} = o_P(n^{-1/4})$, $\|\hat S(t,\cdot) - S(t,\cdot)\|_{L^2(P_W)} = o_P(n^{-1/4})$, $\|\hat h(t,\cdot) - h(t,\cdot)\|_{L^2(P_W)} = o_P(n^{-1/4})$ and they all belong to a certain fixed $Q$-Donsker class $\mathcal{F}$ of functions with probability tending to one. Then, $\hat\phi$ is an efficient estimator of $\phi_a$. 
\end{theorem}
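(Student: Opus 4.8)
The plan is to establish that $\hat\sigma_a^2$ and $\hat\sigma_u^2$ are each regular, asymptotically linear, and efficient for $\sigma_a^2(P)$ and $\sigma_u^2(P)$, and then to pass to the ratio $\hat\phi=\hat\sigma_a^2/\hat\sigma_u^2$ by a delta-method argument. Since both are asymptotically linear functionals of the same iid sample, the pair $(\hat\sigma_a^2,\hat\sigma_u^2)$ is jointly asymptotically linear with influence function $(\mathrm{IF}_a,\mathrm{IF}_u)$. The map $(a,u)\mapsto a/u$ is smooth away from $u=0$, and the assumption $S(t_k)<1$ together with the boundedness in hypothesis~(2) guarantees $\sigma_u^2(P)>0$, so the delta method yields that $\hat\phi$ is asymptotically linear with influence function $\sigma_u^{-2}\,\mathrm{IF}_a-\sigma_a^2\sigma_u^{-4}\,\mathrm{IF}_u$. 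Efficiency of $\hat\phi$ then follows from the fact that the canonical gradient of a smooth transformation of efficient parameters is obtained by applying the chain rule to their canonical gradients, so that this influence function coincides with the EIF of $\phi_a$ in the observed-data model.

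For $\hat\sigma_u^2$, I would note that $\sigma_u^2(P)=\sum_{j=1}^k s_j^{kk}/G(t_j)$ is a fixed smooth function of the marginal survival values $S(t_1),\dots,S(t_k)$, which are estimated by plugging in the covariate-adjusted estimator of $S$ of \citet{moore2009application}. That estimator is efficient under hypotheses~(2)-(3), so asymptotic linearity and efficiency of $\hat\sigma_u^2$ follow by the delta method applied to the (efficient) influence function of $\{\hat S(t_j)\}_{j\le k}$. This step is essentially routine once efficiency of the underlying survival estimator is invoked.

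The substantive work is in $\hat\sigma_a^2$, which is the one-step estimator built from the EIF in \eqref{EIFcondvarsurv}. Writing $\bar D(\cdot;P):=\mathrm{IF}_a(\cdot)+\sigma_a^2$ for the uncentered EIF, so that $\hat\sigma_a^2=\mathbb{P}_n\bar D(\cdot;\hat P)$, I would use the decomposition
\begin{equation*}
\hat\sigma_a^2-\sigma_a^2 = (\mathbb{P}_n-P)\bar D(\cdot;P) + (\mathbb{P}_n-P)\{\bar D(\cdot;\hat P)-\bar D(\cdot;P)\} + P\{\bar D(\cdot;\hat P)-\bar D(\cdot;P)\}.
\end{equation*}
The first term is the desired influence-function contribution. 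The second (empirical process) term is $o_P(n^{-1/2})$ by the Donsker hypothesis in~(3) together with the $L^2(P)$-consistency of the plugged-in nuisance estimators, which follows from their $o_P(n^{-1/4})$ rates and the boundedness in~(2) (making the relevant maps Lipschitz). The third term is the second-order remainder: because $\bar D$ is a canonical gradient, $P\{\bar D(\cdot;\hat P)-\bar D(\cdot;P)\}$ equals the exact von Mises remainder $R_2(\hat P,P)$, which I expect to be a sum of integrals of products of the nuisance errors $\hat S-S$, $\hat h-h$, and $\hat H-H$.

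The main obstacle is controlling $R_2$. Using the explicit forms of $f_j^{kk}$, $g_j^{kk}$, and $\tau_l$ in \eqref{definef}-\eqref{eq:tau}, I would expand $\bar D(\cdot;\hat P)-\bar D(\cdot;P)$ and verify that, after integrating against $P$, every surviving term carries two nuisance-error factors (a mixed-bias, doubly-robust structure), so that no first-order error remains. Each such term is then $O_P(\|\hat S-S\|_{L^2(P_W)}\|\hat h-h\|_{L^2(P_W)}+\|\hat H-H\|_{L^2(P_W)}\{\|\hat S-S\|_{L^2(P_W)}+\|\hat h-h\|_{L^2(P_W)}\})$ by Cauchy--Schwarz, where the boundedness-away-from-zero conditions in~(2) keep the denominators $G(t_j,W)$, $S(t_j,W)$, and $H(t_j,W)$ under control. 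Combining the $o_P(n^{-1/4})$ rates gives $R_2=o_P(n^{-1/2})$. The tedious bookkeeping confirming the cancellation of all first-order terms is deferred to the appendix alongside the derivation of \eqref{EIFcondvarsurv}; once it is in hand, assembling the asymptotic linearity of $\hat\sigma_a^2$, and then of $\hat\phi$, is immediate.
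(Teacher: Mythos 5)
Your proposal is correct and follows essentially the same route as the paper's proof: the same one-step decomposition for $\hat\sigma_a^2$ (linear term, empirical-process term controlled by the Donsker hypothesis plus $L^2$-consistency via Lemma~19.24 of \citet{van2000asymptotic}, and a von Mises remainder shown to be second order through the mixed-bias cancellation between the $f$- and $g$-parts of the EIF), the same delta-method treatment of $\hat\sigma_u^2$ as a smooth functional of the efficiently estimated $\{S(t_j)\}_{j\le k}$, and the same final delta-method/EIF-agreement argument giving the influence function $(\textnormal{IF}_a-\phi_a\textnormal{IF}_u)/\sigma_u^2$. The bookkeeping you defer — pairing each first-order term from $\hat f_j$ with its counterpart from $\hat g_j$ so only products of nuisance errors survive — is precisely the content of the paper's Step~1, and your description of its structure and of the Cauchy--Schwarz rate argument matches it.
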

The influence function of $\hat\phi$ is given in Appendix~\ref{survivalproof}.

The final treatment effect estimand we consider is the \emph{restricted mean survival time (RMST)}, defined as $\psi = \sum_{j=1}^{k}\{S_1(t_j)-S_0(t_j)\}$. We again consider two plug-in estimators: the unadjusted KM-based estimator $\hat\psi_u = \sum_{j=1}^{k}\{\tilde S_1(t_j)-\tilde S_0(t_j)\}$ and the fully-adjusted estimator $\hat\psi_a = \sum_{j=1}^{k}\{\hat S_1(t_j)-\hat S_0(t_j)\}$.

\begin{lemma}\label{reRMST}
Suppose that Conditions~\ref{cond2bounded} and \ref{cond1prime} hold. Then, the relative efficiency $\phi_a$ takes the following form
\begin{multline*}
    \phi_a = \sigma_a^2(P)/\sigma_u^2(P), \ \ \textnormal{where} \\
    \sigma_a^2(P) = \sum_{j=1}^{k}\sum_{l=1}^{k}\sum_{u=1}^{\min(j,l)}E_{P}[f_{u}^{jl}(W)/G(t_u,W)], \ \ \sigma_u^2(P) = \sum_{j=1}^{k}\sum_{l=1}^{k}\sum_{u=1}^{\min(j,l)}s_{u}^{jl}/G(t_u).
\end{multline*}
\end{lemma}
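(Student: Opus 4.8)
The plan is to reduce Lemma~\ref{reRMST} to the machinery already used for Lemma~\ref{reRD}, exploiting that the RMST estimand $\psi=\sum_{m=1}^{k}\{S_1(t_m)-S_0(t_m)\}$ is a finite \emph{linear} combination of the arm-specific survival probabilities. First I would write down the influence functions $D_a$ and $D_u$ of $\hat\psi_a$ and $\hat\psi_u$: since both estimators are linear in the survival estimators, the delta method gives $D_a=\sum_{m=1}^{k}\{\textnormal{IF}_{\hat S_1(t_m)}-\textnormal{IF}_{\hat S_0(t_m)}\}$ with summands read off from \eqref{eq:SHatIF}, and similarly $D_u$ from \eqref{eq:STildeIF}. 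Under the sharp null I would substitute $S_1(\cdot,w)=S_0(\cdot,w)=S(\cdot,w)$, $h_1=h_0=h$, and (using the standing assumption $C^t\perp A\mid W^t$) $G_1=G_0=G$, so that the projection terms $S_a(t_m,w)-S_a(t_m)$ in \eqref{eq:SHatIF} are identical across arms and cancel in the difference. What survives is an inverse-probability-weighted sum of the discrete martingale increments $M_u(y,\delta,w):=\delta I\{y=t_u\}-I\{y\ge t_u\}h(t_u,w)$, carrying the weight $A/\pi_1-(1-A)/\pi_0$.

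Next I would interchange the order of summation, $\sum_{m=1}^{k}\sum_{u=1}^{m}=\sum_{u=1}^{k}\sum_{m=u}^{k}$, so that each increment $M_u$ is multiplied by $\big(\sum_{m=u}^{k}S(t_m,w)\big)/\{S(t_u,w)G(t_u,w)\}$. Squaring $D_a$ and taking expectations, the crucial simplification is the conditional orthogonality of the increments: given $W=w$, the family $\{M_u\}_u$ is a martingale-difference sequence, so $E[M_uM_{u'}\mid W=w]=0$ for $u\ne u'$, collapsing the double sum in the increment index to its diagonal. I would then evaluate the per-increment second moment, obtaining $E[M_u^2\mid W=w]=G(t_u,w)\{S(t_{u-1},w)-S(t_u,w)\}S(t_u,w)/S(t_{u-1},w)$, using $T^t\perp C^t\mid W^t$ to factor $E[I\{y\ge t_u\}\mid W=w]=S(t_{u-1},w)G(t_u,w)$ and $E[\delta I\{y=t_u\}\mid W=w]=h(t_u,w)S(t_{u-1},w)G(t_u,w)$. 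Expanding $\big(\sum_{m=u}^{k}S(t_m,w)\big)^2=\sum_{j=u}^{k}\sum_{l=u}^{k}S(t_j,w)S(t_l,w)$ and matching against the definition of $f_u^{jl}$ in \eqref{definef} then produces, after reindexing so the joint constraint $u\le j$ and $u\le l$ becomes $u\le\min(j,l)$, exactly the claimed form of $\sigma_a^2(P)$.

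The unadjusted variance follows the identical argument with marginal quantities $S(t),h(t),G(t)$ replacing their conditional analogues. Here I would note that within the intersection model where both estimators are consistent, the null together with $C^t\perp A\mid W^t$ forces $T^t\perp C^t$ marginally, so the marginal increments are again orthogonal with second moment $G(t_u)\{S(t_{u-1})-S(t_u)\}S(t_u)/S(t_{u-1})$, yielding $\sigma_u^2(P)$ with $s_u^{jl}$ in place of $f_u^{jl}$. A final bookkeeping step handles the treatment-probability factor: since $A\perp(T^t,C^t,W^t)$ and $A(1-A)=0$, the weight contributes a common constant $E[(A/\pi_1-(1-A)/\pi_0)^2]=1/(\pi_1\pi_0)$ to both $E[D_a^2]$ and $E[D_u^2]$, which cancels in the ratio, so $\Phi_a$ depends on $P$ (for the fixed $G,\Pi$) only through the stated expressions.

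I expect the main obstacle to be the combined handling of the interchanged summations and the martingale orthogonality: one must confirm that the off-diagonal increments vanish conditionally on $W$ and that, once the square of $\sum_{m=u}^{k}S(t_m,w)$ is expanded, the surviving diagonal terms reassemble into precisely the $f_u^{jl}$ pattern over the range $u\le\min(j,l)$. The remaining pieces --- the conditional second moment of $M_u$ and the cancellation of the $\Pi$-factor --- are routine and parallel the proof of Lemma~\ref{reRD}.
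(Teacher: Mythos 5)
Your proposal is correct and takes essentially the same route as the paper: both expand the RMST estimators' influence functions under the sharp null (via \eqref{eq:SHatIF}, \eqref{eq:STildeIF}, and linearity of the estimand), exploit the orthogonality of the martingale increments together with their second moments, and reindex to arrive at the triple sums involving $f_u^{jl}$ and $s_u^{jl}$, with the common factor $1/(\pi_1\pi_0)$ cancelling in the ratio. The only difference is bookkeeping --- you interchange summation and expand the square of $\sum_{m\ge u}S(t_m,w)$, whereas the paper computes the cross-time covariances $E[D_{u,j}D_{u,l}]$ and sums them over $j$ and $l$; this is the same calculation performed in a different order.
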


We construct $\hat\sigma_u^2$ in a similar way as in the case of the risk difference, namely by plugging in an efficient adjusted estimator of $S(\cdot)$. As for $\sigma_a^2$, its EIF can be derived in a similar fashion as in the case of RD, and we defer the details to Appendix~\ref{survivalproof}. We propose the following estimator 
\begin{equation}
    \hat\sigma_a^2 = \frac{1}{n} \sum_{i=1}^{n}\sum_{j=1}^{k}\sum_{l=1}^{k}\sum_{u=1}^{\min(j,l)}\left\{\hat g_u^{jl}(Y_i,\Delta_i,W_i)+\hat f_u^{jl}(W_i)\right\}.\label{eq:quadrupleSum}
\end{equation}
The relative efficiency is estimated by $\hat\phi = \hat\sigma_a^2/\hat\sigma_u^2$. 

Based on \eqref{eq:quadrupleSum}, it appears that computing the quadruple sum used to define $\hat\sigma_a^2$ will take order $nk^3$ time. As it turns out, these sums can be computed much more efficiently. In Appendix~\ref{survivalproof}, we show that for a given $i$ and given estimates of $\tau$ and $S$, the inner three sums can be computed in $O(k)$ time, resulting in an $O(nk)$ complexity for computing the above quadruple sum.

\begin{theorem}\label{RMST}
Under the same conditions as in Theorem \ref{RD}, $\hat\phi$ is an efficient estimator of $\phi_a$. 
\end{theorem}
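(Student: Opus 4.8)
The plan is to follow the proof of Theorem~\ref{RD} essentially verbatim, exploiting the fact that the RMST variance in Lemma~\ref{reRMST} is built from exactly the same ingredients $f_u^{jl}$, $g_u^{jl}$, and $\tau_l$ as the risk-difference variance, merely summed over the additional indices $j,l$ with $u\le\min(j,l)$. Since $K$ is finite and all the relevant quantities are uniformly bounded by the assumptions inherited from Theorem~\ref{RD}, every bound established for a single summand in the RD proof transfers, up to a constant depending only on $K$, to the finite quadruple sum here. Concretely, I would first argue that $\hat\sigma_u^2$ is an efficient estimator of $\sigma_u^2(P)$: the functional $P\mapsto \sum_{j,l}\sum_{u\le\min(j,l)} s_u^{jl}/G(t_u)$ is a smooth transformation of the marginal survival function $S(\cdot)$ (with $G$ and $\Pi$ known), so plugging in the efficient adjusted estimator of $S$ and applying the functional delta method yields asymptotic linearity with the canonical gradient.

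The substantive step is to show that the one-step estimator $\hat\sigma_a^2$ in \eqref{eq:quadrupleSum} is an efficient estimator of $\sigma_a^2(P)$. Writing $\textnormal{IF}_a$ for the EIF of $\sigma_a^2$ relative to the observed-data model (derived in Appendix~\ref{survivalproof} exactly as the RD expression \eqref{EIFcondvarsurv}, but summed over $j,l,u$), I would use the standard decomposition
\[
    \hat\sigma_a^2-\sigma_a^2 = \mathbb{P}_n\textnormal{IF}_a + (\mathbb{P}_n-P)(\widehat{\textnormal{IF}}_a-\textnormal{IF}_a) + R_n,
\]
where $\widehat{\textnormal{IF}}_a$ is $\textnormal{IF}_a$ with $(S,h,H)$ replaced by $(\hat S,\hat h,\hat H)$. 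The empirical process term is $o_P(n^{-1/2})$ by the Donsker assumption together with the $L^2(P_W)$-consistency of the nuisance estimators, and $R_n$ is a second-order remainder consisting of a finite sum of products of the estimation errors $\hat S-S$, $\hat h-h$, and $\hat H-H$. By the Neyman-orthogonality structure of $\textnormal{IF}_a$, no single-error term survives, so Cauchy--Schwarz bounds each product by $o_P(n^{-1/4})\cdot o_P(n^{-1/4}) = o_P(n^{-1/2})$; the boundedness conditions keep the denominators $G(t_u,\cdot)$, $S(t_u,\cdot)$, and $H(t_u,\cdot)$ away from zero throughout. Hence $\hat\sigma_a^2$ is asymptotically linear with influence function $\textnormal{IF}_a$, and, because the observed-data model is locally nonparametric (with $\Pi,G$ known), $\textnormal{IF}_a$ is the canonical gradient, so the estimator is efficient.

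Finally, I would combine the two pieces through the delta method for the ratio. Since $\sigma_u^2(P)>0$ under Condition~\ref{cond2bounded}, the estimator $\hat\phi=\hat\sigma_a^2/\hat\sigma_u^2$ is asymptotically linear with influence function $\sigma_u^{-2}\,\textnormal{IF}_a - \sigma_a^2\sigma_u^{-4}\,\textnormal{IF}_u$, where $\textnormal{IF}_u$ is the gradient of $\sigma_u^2$. Because both components are efficient and $\phi_a$ is a smooth function of $(\sigma_a^2,\sigma_u^2)$ in a locally nonparametric model, this influence function is the canonical gradient of $\phi_a$, and therefore $\hat\phi$ is efficient.

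The main obstacle is the bookkeeping in the second-order remainder $R_n$: with three free indices and the nested $\tau_l$ terms inside each $g_u^{jl}$, there are many more cross-products than in the RD case, and one must verify term-by-term that each is genuinely second order, i.e.\ that $\textnormal{IF}_a$ is Neyman-orthogonal in all of $S$, $h$, and $H$. I expect this to reduce to the RD calculation applied summand-by-summand, so that the finiteness of $K$ and the uniform boundedness conditions allow the single-summand bounds from Theorem~\ref{RD} to be aggregated into the desired $o_P(n^{-1/2})$ rate without any new analytic input.
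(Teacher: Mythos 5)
Your proposal is correct and matches the paper's own proof, which likewise establishes asymptotic linearity of $\hat\sigma_u^2$ via the delta method applied to the efficient estimator of $S(\cdot)$, treats $\hat\sigma_a^2$ as a one-step estimator based on the EIF from Lemma~\ref{eifsurvvar}, and disposes of the remainder and empirical-process terms by repeating the Theorem~\ref{RD} argument summand-by-summand before applying the delta method to the ratio. The paper's proof is exactly this reduction (it explicitly omits the details "due to their close similarity to earlier arguments"), so your plan, including the observation that finiteness of $K$ and the uniform boundedness conditions let the single-summand bounds aggregate, is the intended argument.
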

Again, the specific form of its influence function is given in Appendix~\ref{survivalproof}.
\begin{remark}
As discussed at the end of Section~\ref{subsec:ordinalestimation}, the influence function of $\hat\phi$ is identically 0 in certain special cases. One such case arises when $T \perp W$ under $P$ and the mapping $G$ does not depend on $W$. In these cases, a two-step procedure can be considered for inference --- see the end of Section~\ref{subsec:ordinalestimation} for a description of such an approach in a similar setting.
\end{remark}

As noted earlier, for it to be interesting to compare the efficiency of the adjusted and unadjusted estimators, it must be the case that both are asymptotically linear. In such settings, we now characterize cases in which the adjusted estimator will be more efficient than will the unadjusted estimator. Moreover, unlike in the uncoarsened data setting, there are also settings where the adjusted estimator may be \textit{less} efficient than the unadjusted estimator. We also characterize these cases.

\begin{theorem}\label{corollaryreRD}
For all three estimands considered:
\begin{enumerate}
    \item If the conditions in Lemma~\ref{reRD} hold and $P$ is such that $T\perp W$, then $\sigma_u^2(P) \leq \sigma_a^2(P)$, that is, the unadjusted estimator is at least as efficient as the adjusted estimator. Moreover, if $\textnormal{var}_P[G(t_j,W)]>0$ for some $j \in \{1,\ldots,k\}$, then the inequality is strict: $\sigma_u^2(P) < \sigma_a^2(P)$.
    \item If the conditions in Lemma~\ref{reRD} hold and $G(\cdot,\cdot)$ is such that $G(t,w_1) = G(t,w_2)$ for all $t \leq t_k$ and $w_1,w_2 \in \mathcal{W}$, 
    then $\sigma_a^2(P) \leq \sigma_u^2(P)$, that is, the adjusted estimator is at least as efficient as the unadjusted estimator. 
\end{enumerate}
\end{theorem}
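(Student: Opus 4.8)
The plan is to prove both parts directly from the closed-form variance expressions in Lemmas~\ref{reRD} and \ref{reRMST}, reducing each claimed inequality to an application of Jensen's inequality. The common starting point is the telescoping identity
\begin{equation*}
f_u^{jl}(w) = S(t_j,w)\,S(t_l,w)\left\{\frac{1}{S(t_u,w)}-\frac{1}{S(t_{u-1},w)}\right\},
\end{equation*}
together with its scalar analogue for $s_u^{jl}$ obtained by replacing every conditional survival value $S(\cdot,w)$ with the marginal $S(\cdot)=E_P[S(\cdot,W)]$. Since $S$ is nonincreasing, every $f_u^{jl}(w)$ and $s_u^{jl}$ is nonnegative. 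These identities recast both variances as sums over the index sets appearing in the two lemmas, which is the form I would exploit below.

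For Part~1, when $T\perp W$ we have $S(t,w)=S(t)$ for all $w$, so $f_u^{jl}(w)=s_u^{jl}$ is constant in $w$. Factoring out these weights gives $\sigma_a^2(P)=\sum s_u^{jl}\,E_P[1/G(t_u,W)]$ and $\sigma_u^2(P)=\sum s_u^{jl}/G(t_u)$ with $G(t_u)=E_P[G(t_u,W)]$. Because $x\mapsto 1/x$ is convex, Jensen yields $E_P[1/G(t_u,W)]\ge 1/G(t_u)$ termwise, and nonnegativity of the weights gives $\sigma_u^2(P)\le\sigma_a^2(P)$. Strictness when $\textnormal{var}_P[G(t_j,W)]>0$ follows from strict convexity of $1/x$, provided the accompanying weight is positive (which holds whenever the hazard at $t_j$ is positive); this step is routine.

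For Part~2, when $G(t,w)=G(t)$ the censoring weights pull out of the expectation. Writing $\mathbf{S}(w)=(S(t_1,w),\dots,S(t_k,w))$ and using the reorganization $\sum_{j,l}\sum_{u\le\min(j,l)}=\sum_u\sum_{j,l\ge u}$ for the RMST triple sum, both variances collapse to evaluations of a single functional,
\begin{equation*}
\Psi(\mathbf{S}) = \sum_{u=1}^{k}\frac{\Lambda_u^2}{G(t_u)}\left\{\frac{1}{S(t_u)}-\frac{1}{S(t_{u-1})}\right\},
\end{equation*}
where $\Lambda_u=S(t_k)$ for RD/RR and $\Lambda_u=\sum_{j\ge u}S(t_j)$ for RMST. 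Since $E_P[S(t_u,W)]=S(t_u)$, the desired bound $\sigma_a^2\le\sigma_u^2$ is exactly $E_P[\Psi(\mathbf{S}(W))]\le\Psi(E_P[\mathbf{S}(W)])$, i.e. multivariate Jensen for a \emph{concave} $\Psi$. Thus all of Part~2 reduces to proving concavity of $\Psi$ on $\{0<S(t_k)\le\cdots\le S(t_1)\le 1\}$.

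To establish concavity I would use summation by parts to collect the coefficient of each $1/S(t_u)$. In the RD/RR case this gives $\Psi=S(t_k)/G(t_k)-S(t_k)^2/G(t_1)+\sum_{u<k}\{1/G(t_u)-1/G(t_{u+1})\}\,S(t_k)^2/S(t_u)$; the first two terms are affine/concave, and in each remaining term the coefficient $1/G(t_u)-1/G(t_{u+1})$ is nonpositive (as $G$ is nonincreasing) while $S(t_k)^2/S(t_u)$ is jointly convex (quadratic-over-linear), so each product is concave and $\Psi$ is a sum of concave functions. The main obstacle is the RMST case: the same summation by parts produces coefficients $\Lambda_u^2/G(t_u)-\Lambda_{u+1}^2/G(t_{u+1})$ whose sign is not manifest, since $\Lambda_u\ge\Lambda_{u+1}$ but $G(t_u)\ge G(t_{u+1})$ pull in opposite directions. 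I would resolve this by writing the coefficient as $\{\Lambda_u^2-\Lambda_{u+1}^2\}/G(t_u)-\Lambda_{u+1}^2\{1/G(t_{u+1})-1/G(t_u)\}$; the second piece, divided by $S(t_u)$, is $-(\textnormal{nonnegative})\times(\textnormal{quadratic-over-linear})$ and hence concave, while the first piece simplifies via $\Lambda_u^2-\Lambda_{u+1}^2=(\Lambda_u-\Lambda_{u+1})(\Lambda_u+\Lambda_{u+1})=S(t_u)(\Lambda_u+\Lambda_{u+1})$, which cancels the $S(t_u)$ denominator and leaves the affine term $(\Lambda_u+\Lambda_{u+1})/G(t_u)$. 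Each summand is therefore concave, so $\Psi$ is concave and Jensen closes both cases.
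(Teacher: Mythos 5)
Your proof is correct and takes essentially the same route as the paper's: Part 1 is the identical termwise Jensen argument on $x\mapsto 1/x$, and Part 2's ``concavity of $\Psi$'' argument is precisely the paper's summation-by-parts followed by Jensen applied to the jointly convex quadratic-over-linear functions, with your difference-of-squares cancellation $(\Lambda_u^2-\Lambda_{u+1}^2)/S(t_u)=\Lambda_u+\Lambda_{u+1}$ reproducing exactly the linear term $\sum_{j,l}S(t_{\max\{j,l\}},W)/G(t_{\min\{j,l\}})$ that the paper keeps unexpanded. If anything, your explicit remark that strictness in Part 1 also requires the weight multiplying $E_P[1/G(t_j,W)]-1/G(t_j)$ to be positive at the relevant $j$ is more careful than the paper's one-line assertion of strict inequality.
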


\section{Proofs of results in the case where the outcome is fully observed}\label{ordinalproof}

\subsection{Supporting lemmas for proofs in Section~\ref{sec:ordinal}}

\begin{lemma}[EIF of mean conditional variance]\label{gradientcondvar}
For a given function $u_P(\cdot)$ that can be written as $u_P(y) = \int h(x,y)dP(x)$ for some function $h$, the canonical gradient of $\sigma_a^2 = E_P[\textnormal{var}(u_P(Y)|W)]$ is given by 
\begin{equation}
    D_a(y,w) = \left\{u_P(y)-f_P(w)\right\}^2 + 2 \int \{u_P(\tilde y)-f_P(\tilde w)\}h(y,\tilde y)dP(\tilde y, \tilde w)-3\sigma_a^2,
\end{equation}
where $f(w) = E_P\left[u_P(Y)|W=w\right]$.
\end{lemma}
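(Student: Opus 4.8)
The plan is to verify the claimed $D_a$ directly from the definition of the canonical gradient. Since $f_P(w) = E_P[u_P(Y)\mid W=w]$, the functional admits the compact representation $\sigma_a^2 = E_P[(u_P(Y) - f_P(W))^2]$. Because $\mathcal{M}$ is locally nonparametric, its tangent space is all of $L_0^2(P)$, so it suffices to exhibit a single mean-zero, square-integrable gradient; it is then automatically the unique canonical gradient. I would fix a quadratic-mean-differentiable submodel $\{P_\epsilon\}$ with score $s\in L_0^2(P)$, set $u_\epsilon(y) := \int h(x,y)\,dP_\epsilon(x)$ and $f_\epsilon(w) := E_{P_\epsilon}[u_\epsilon(Y)\mid W=w]$, and differentiate $\psi(P_\epsilon) = \int (u_\epsilon(y) - f_\epsilon(w))^2\,dP_\epsilon(y,w)$ at $\epsilon=0$ (all interchanges being justified by quadratic mean differentiability together with the bounded support in Condition~\ref{cond2bounded}).

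There are three sources of $\epsilon$-dependence, so the derivative splits into a contribution from the outer measure and a contribution from the integrand. The outer-measure term is $E_P[(u_P(Y) - f_P(W))^2\,s]$, which already identifies the first summand $(u_P(y)-f_P(w))^2$. The integrand term is $2\,E_P[(u_P(Y)-f_P(W))(\dot u(Y) - \dot f(W))]$, where $\dot u(y) = \int h(x,y)\,s(x)\,dP(x)$ and $\dot f$ is the (more delicate) derivative of $f_\epsilon$. The crucial simplification is that $\dot f(W)$ is a function of $W$ alone while $u_P(Y)-f_P(W)$ is conditionally mean zero given $W$ by definition of $f_P$; hence $E_P[(u_P(Y)-f_P(W))\dot f(W)] = 0$ and $\dot f$ never needs to be computed. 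This is precisely the step where the $P$-dependence of $u_P$ matters: if $u_P$ were fixed then $\dot u$ would vanish and one would recover the textbook conditional-variance gradient, whereas here $\dot u\neq 0$ generates the nonstandard cross term.

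It then remains to handle $2\,E_P[(u_P(Y)-f_P(W))\dot u(Y)]$. Substituting the expression for $\dot u$ and applying Fubini's theorem to interchange the score-integral with the outer expectation, this equals $E_P[\tilde D\cdot s]$ with $\tilde D(y,w) = 2\int (u_P(\tilde y)-f_P(\tilde w))\,h(y,\tilde y)\,dP(\tilde y,\tilde w)$, where I use that $x=(\tilde y,\tilde w)$ enters $h$ only through its outcome coordinate, so that the gradient argument $y$ lands in the first slot of $h$. This is exactly the second summand in the statement. Collecting terms, $\frac{d}{d\epsilon}\psi(P_\epsilon)|_{\epsilon=0} = E_P[\{(u_P(Y)-f_P(W))^2 + \tilde D(Y,W)\}\,s]$ for every score $s$, so $(u_P(y)-f_P(w))^2 + \tilde D(y,w)$ is a gradient.

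Finally I would center this gradient into $L_0^2(P)$. The first summand has mean $\sigma_a^2$; for the second, integrating over $(y,w)$ first gives $\int h(y,\tilde y)\,dP(y) = u_P(\tilde y)$, so its mean is $2\,E_P[(u_P(Y)-f_P(W))u_P(Y)] = 2\,E_P[\textnormal{var}_P(u_P(Y)\mid W)] = 2\sigma_a^2$ by the tower property. The total mean is therefore $3\sigma_a^2$, and subtracting this constant (which does not alter $E_P[\,\cdot\,s]$ since $E_P[s]=0$) yields exactly the claimed $D_a$; square-integrability follows from the bounded support under Condition~\ref{cond2bounded}, and since the tangent space equals $L_0^2(P)$ this mean-zero gradient is the canonical gradient. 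The main obstacle is the careful bookkeeping of the three $\epsilon$-dependencies — in particular recognizing that the conditional-mean-zero property of $u_P(Y)-f_P(W)$ annihilates the $\dot f$ contribution so only $\dot u$ survives, and tracking the two arguments of $h$ through the Fubini interchange and the centering computation.
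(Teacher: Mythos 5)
Your proposal is correct and follows essentially the same route as the paper's proof: differentiate $\sigma_a^2(P_\epsilon)$ along a submodel, observe that the contribution from perturbing the conditional mean $f_{P_\epsilon}$ vanishes because $u_P(Y)-f_P(W)$ is conditionally centered given $W$, use Fubini to extract the cross term coming from the $P$-dependence of $u_P$, and center by $3\sigma_a^2$, with local nonparametricity upgrading the gradient to the canonical one. The only (cosmetic) difference is that the paper works with a factorized submodel carrying separate conditional and marginal scores $s_1(y|w)$ and $s_2(w)$, whereas you differentiate along a generic quadratic-mean-differentiable path with a single score $s$; the bookkeeping and all key cancellations are identical.
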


\begin{proof}
We prove this lemma by directly applying the definition of a gradient. We consider the one-dimensional submodel $\{P_\epsilon : |\epsilon|\le 1\}$ with density $$p_\epsilon(y,w) = p(y|w)\{1+\epsilon s_1(y|w)\}p(w)\{1+\epsilon s_2(w)\},$$
where the range of $s_1$ and $s_2$ falls in $[-1,1]$ and these functions satisfy $E_P[s_1(Y|W)|W]=0$ $P$-almost surely and $E_P[s_2(W)]=0$.
Let $f_P(w) = E_P[u_P(Y)|W=w]$ and $f_{P_\epsilon}(w) = E_{P_\epsilon}[u_{P_\epsilon}(Y)|W=w]$. We have that
\begin{align*}
    \sigma_a^2(P_\epsilon) &= \int\left\{u_{P_\epsilon}(y)-f_{P_\epsilon}(w)\right\}^2\{1+\epsilon s_1(y|w)\}\{1+\epsilon s_2(w)\}dP(y|w)dP(w) \nonumber \\
    &= \int\left\{u_{P_\epsilon}(y)-f_{P_\epsilon}(w)\right\}^2dP(y,w) \\
    &\quad+ \int\left\{u_{P_\epsilon}(y)-f_{P_\epsilon}(w)\right\}^2\{\epsilon s_1(y|w)+\epsilon s_2(w) +\epsilon^2 s_1(y|w)s_2(w)\} dP(y,w).
\end{align*}
The second term on the right has the following derivative with respect to $\epsilon$ at $\epsilon=0$ 
\begin{equation*}
    \int\left\{u_{P}(y)-f_{P}(w)\right\}^2\{s_1(y|w)+ s_2(w)\} dP(y,w),
\end{equation*}
and so will contribute $\{u_{P}(Y)-f_{P}(W)\}^2-\sigma_a^2$ to the gradient. We now focus on the first term, which re-writes as
\begin{align*}
    &\int \left[\int h(x,y)\{1+\epsilon s_1(x|w)\}\{1+\epsilon s_2(w)\}dP(x|w)dP(w)-f_{P_\epsilon}(w)\right]^2 dP(y,w) \nonumber \\
    &= \int\left[\left\{\int h(x,y)dP(x,w)-f_{P_\epsilon}(w)\right\}^2+c(\epsilon)^2+2c(\epsilon)\left\{\int h(x,y)dP(x,w)-f_{P_\epsilon}(w)\right\}\right]dP(y,w),
\end{align*}
where $c(\epsilon) = \int h(x,y)\{\epsilon s_1(x|w)+\epsilon s_2(w)+\epsilon^2s_1s_2\}dP(x,w)$. The first term in the above display has derivative $0$ with respect to $\epsilon$ at $\epsilon = 0$, as $f_P$ is the true conditional mean. The second term also has derivative 0 at $\epsilon=0$, as it is quadratic in $\epsilon$. At $\epsilon=0$, the third term has derivative
\begin{align*}
    &2\int\left(\left[\int h(x,y)\{s_1(x|w)+s_2(w)\}dP(x,w)\right]\left[\int h(x,y)dP(x,w)-f_{P}(w)\right]\right) dP(y,w) \nonumber \\
    &= 2\int\int\left(\left\{\int h(x,y)dP(x,w)-f_{P}(w)\right\}\left[ h(x,y)\{s_1(x|\tilde w)+s_2(\tilde w)\}\right]\right)dP(x,\tilde w)dP(y,w) \nonumber \\
    &= 2\int\left[\int\left\{u_P(y)-f_P(w)\right\}h(x,y)dP(y,w)\right]\{s_1(x|\tilde w)+s_2(\tilde w)\}dP(x,\tilde w).
\end{align*}
The inner integral has mean 
\begin{equation*}
    \int \int \left\{u_P(y)-f_P(w)\right\}h(x,y)dP(y,w)dP(x,\tilde w) = \int \{u_P(y)-f_P(w)\}u_P(y) dP(y,w) = \sigma_a^2.
\end{equation*}
Therefore the following is a gradient:
\begin{equation*}
    D_a(y,w) = \left\{u_P(y)-f_P(w)\right\}^2 + 2 \int \{u_P(\tilde y)-f_P(\tilde w)\}h(y,\tilde y)dP(\tilde y, \tilde w)-3\sigma_a^2.
\end{equation*}
Since we are working within a locally nonparametric model, the above is also the canonical gradient.
\end{proof}

\begin{lemma}[EIF of mean conditional covariance]\label{gradientcondcov}
Consider a locally nonparametric model of distributions of $(Y,W)$. For given functions $u(\cdot)$ and $v(\cdot)$, the canonical gradient of $\sigma_{uv} = E_P[\text{cov}_P(u(Y),v(Y)|W)]$ is
\begin{equation}
    D_{cov}(y,w) = \{u(y) - f_u(w)\}\{v(y)-f_v(w)\} - \sigma_{uv},
\end{equation}
where $f_u(w) = E_P[u(Y)|W=w]$ and $f_v(w) = E_P[v(Y)|W=w]$.
\end{lemma}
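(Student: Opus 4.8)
The plan is to mirror the proof of Lemma~\ref{gradientcondvar}, computing the pathwise derivative of $\sigma_{uv}$ along a generic one-dimensional submodel and reading off the gradient. I would take the same submodel $\{P_\epsilon : |\epsilon|\le 1\}$ with density $p_\epsilon(y,w) = p(y|w)\{1+\epsilon s_1(y|w)\}p(w)\{1+\epsilon s_2(w)\}$, where $E_P[s_1(Y|W)|W]=0$ and $E_P[s_2(W)]=0$, so that the score at $\epsilon=0$ is $s(y,w)=s_1(y|w)+s_2(w)$ and ranges over a dense subset of $L^2_0(P)$ as $s_1,s_2$ vary. The key simplification relative to Lemma~\ref{gradientcondvar} is that here $u$ and $v$ are fixed functions that do not depend on $P$; consequently, the analogue of the extra ``plug-in'' term appearing in that lemma will not arise, and the final gradient will be exactly the centered product of residuals.

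The main computation is to write $\sigma_{uv}(P_\epsilon) = \int \{u(y)-f_{u,P_\epsilon}(w)\}\{v(y)-f_{v,P_\epsilon}(w)\}\,dP_\epsilon(y,w)$ and differentiate at $\epsilon=0$ by the product rule, splitting the derivative into three groups of terms: one from differentiating the measure $P_\epsilon$, and one each from differentiating the two conditional means $f_{u,P_\epsilon}$ and $f_{v,P_\epsilon}$. The term from differentiating the measure yields $E_P[\{u(Y)-f_u(W)\}\{v(Y)-f_v(W)\}\,s(Y,W)]$, and since $g(y,w):=\{u(y)-f_u(w)\}\{v(y)-f_v(w)\}$ satisfies $E_P[g]=\sigma_{uv}$ and $E_P[s]=0$, this equals $E_P[(g-\sigma_{uv})s]$; this is precisely the inner product of the claimed gradient $D_{cov}=g-\sigma_{uv}$ with $s$.

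The crux is showing that the two terms arising from differentiating the conditional means vanish. For the $f_{u,P_\epsilon}$ term one obtains, at $\epsilon=0$, an integral proportional to $\int \dot f_{u}(w)\{v(y)-f_v(w)\}\,dP(y,w)$, where $\dot f_u(w)=\frac{d}{d\epsilon}\big|_0 f_{u,P_\epsilon}(w)$; integrating over $y$ first gives $\int\{v(y)-f_v(w)\}\,dP(y|w)=f_v(w)-f_v(w)=0$, so the whole term is zero, and the $f_{v,P_\epsilon}$ term vanishes symmetrically. I expect this orthogonality step to be the only real subtlety: it is the mean-zero-residual property $E_P[v(Y)-f_v(W)\mid W]=0$ (and its $u$-counterpart) that kills the dependence of $f_{u,P_\epsilon}$ and $f_{v,P_\epsilon}$ on the nuisance direction. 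Collecting the surviving term shows $\frac{d}{d\epsilon}\big|_0\sigma_{uv}(P_\epsilon)=E_P[D_{cov}\,s]$ for every score $s$; since $D_{cov}\in L^2_0(P)$ and the model is locally nonparametric with tangent space $L^2_0(P)$, $D_{cov}$ is a gradient lying in the tangent space and is therefore the canonical gradient.
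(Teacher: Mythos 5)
Your proposal is correct and follows essentially the same route as the paper's proof: differentiate $\sigma_{uv}(P_\epsilon)$ along a smooth submodel, observe that the terms from differentiating the conditional means $f_{u,P_\epsilon}$ and $f_{v,P_\epsilon}$ vanish because the residuals $u(Y)-f_u(W)$ and $v(Y)-f_v(W)$ have conditional mean zero given $W$, and conclude that $D_{cov}$ is the canonical gradient since the model is locally nonparametric. The only difference is cosmetic: you parameterize the submodel with separate conditional and marginal scores $s_1,s_2$ (as in the paper's proof of Lemma~\ref{gradientcondvar}), while the paper's proof of this lemma uses a single joint score $h(y,w)$; both score classes span $L^2_0(P)$, so the arguments are equivalent.
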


\begin{proof}
As in the proof of Lemma~\ref{gradientcondvar}, we consider a one-dimensional submodel $\{P_{\epsilon} : \epsilon\}$ with density $p_\epsilon(y,w) = p(y,w)\{1+\epsilon h(y,w)\}$. Let $f_{P,u}(w) = E_P[u(Y)|W=w]$ and $f_{P,v}(w) = E_P[v(Y)|W=w]$. Note that
\begin{align*}
    \sigma_{uv}(P_\epsilon) &= \int \left\{u(y)-f_{P_\epsilon,u}(w)\right\}\left\{v(y)-f_{P_\epsilon,v}(w)\right\}\{1+\epsilon h(y,w)\}dP(y,w).
\end{align*}
Also, because
\begin{equation*}
    \int \left[ \left\{u(y)-f_{P,u}(w)\right\}\frac{d}{d\epsilon}f_{P_\epsilon,v}(w)\Big\rvert_{\epsilon=0} + \left\{v(y)-f_{P,v}(w)\right\}\frac{d}{d\epsilon}f_{P_\epsilon,u}(w)\Big\rvert_{\epsilon=0}\right]dP(y,w) = 0,
\end{equation*}
it holds that
\begin{align*}
    \frac{d}{d\epsilon}\sigma_{uv}(P_\epsilon)\Big\rvert_{\epsilon=0} &= \int \left\{u(y)-f_{P,u}(w)\right\}\left\{v(y)-f_{P,v}(w)\right\}h(y,w)dP(y,w).
\end{align*}
This shows that $D_{cov}$ is a gradient, and, because the model is locally nonparametric, $D_{cov}$ must therefore be the canonical gradient.
\end{proof}

Let $X$ be a generic random variate with distribution $P \in \mathcal{M}$ with support in a bounded set $\mathcal{X} \subset \mathbb{R}^d$. Let $U_\alpha: \mathcal{X} \xrightarrow[]{} \mathbb{R}^m$ be a function indexed by $\alpha \in \mathbb{R}^m$. Suppose that $PU_\alpha = 0$ has a unique solution in $\alpha$, and we denote this solution by $\psi_1 = \psi_1(P)$. In general, we can regard $\psi_1(P)$ as a parameter defined implicitly through the estimating equation. The following lemma establishes the pathwise differentiability of this and a related parameter, under appropriate conditions.

\begin{lemma}[Pathwise differentiability of parameters defined via estimating equations]\label{gradientEE}
Let $\psi_1: \mathcal{M} \xrightarrow[]{} \mathbb{R}^m$ be such that $\psi_1(P)$ is the unique solution in $\alpha$ to the estimating equation $PU_\alpha=0$. Suppose that, for each $x\in\mathcal{X}$, $U_\alpha(x)$ is continuously differentiable in $\alpha$, with derivative $\dot U_\alpha(x) := \frac{\partial}{\partial \tilde\alpha} U_{\tilde\alpha}(x)\rvert_{\tilde\alpha=\alpha}$. Suppose in addition that $U_{\psi_1},\dot U_{\psi_1} \in L^2(P)$ and that $P\dot{U}_{\psi_1}$ is invertible. 
Then, $\psi_1$ is pathwise differentiable and its gradient relative to any locally nonparametric model is given by
\begin{equation*}
    D_1(x) = -\left(P\dot{U}_{\psi_1}\right)^{-1} U_{\psi_1}(x).
\end{equation*}
Moreover, for each $\alpha\in\mathbb{R}^m$, let $g_\alpha: \mathcal{X} \xrightarrow[]{} \mathbb{R}$ be a function, and suppose that $\alpha\mapsto g_\alpha(x)$ is differentiable for all $x \in \mathcal{X}$. For each $P\in\mathcal{M}$, define $\psi_2 := Pg_{\psi_1}$. Then $\psi_2 : \mathcal{M} \rightarrow \mathbb{R}^m$ is pathwise differentiable with gradient
\begin{equation*}
    D_2(x) =  g_{\psi_1}(x) + P\left(\frac{\partial}{\partial \alpha}g_\alpha(x)\rvert_{\alpha =\psi_1}\right)^\top D_1(x)-\psi_2.
\end{equation*}
\end{lemma}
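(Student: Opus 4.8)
The plan is to verify the definition of pathwise differentiability directly, by implicitly differentiating the estimating-equation identity along an arbitrary quadratic-mean-differentiable submodel $\{P_\epsilon\}\in\mathscr{M}(P)$ with score $s$ at $\epsilon=0$, and reading off the gradient from the resulting expression $\frac{\partial}{\partial\epsilon}\psi(P_\epsilon)|_{\epsilon=0}=E_P[D(X)s(X)]$. Since the model is locally nonparametric, any gradient lying in $L^2_0(P)$ is automatically the canonical gradient, so it suffices to exhibit $D_1,D_2\in L^2_0(P)$ satisfying this identity.

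For the first parameter, the starting point is the identity $P_\epsilon U_{\psi_1(P_\epsilon)}=0$, valid for all $\epsilon$ near $0$ because $\psi_1(P_\epsilon)$ is by definition the unique root of $\alpha\mapsto P_\epsilon U_\alpha$. Differentiating this vector identity at $\epsilon=0$ produces two contributions: differentiating the measure gives $P[U_{\psi_1}s]$ (using that $\{P_\epsilon\}$ is QMD with score $s$), while differentiating the index $\psi_1(P_\epsilon)$ through the smooth dependence of $U_\alpha$ on $\alpha$ gives $(P\dot U_{\psi_1})\dot\psi_1$, where $\dot\psi_1:=\frac{\partial}{\partial\epsilon}\psi_1(P_\epsilon)|_{\epsilon=0}$. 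Setting the sum to zero and inverting $P\dot U_{\psi_1}$ yields
\[
\dot\psi_1 = -(P\dot U_{\psi_1})^{-1}P[U_{\psi_1}s] = E_P\!\left[-(P\dot U_{\psi_1})^{-1}U_{\psi_1}(X)\,s(X)\right]=E_P[D_1(X)s(X)].
\]
It remains to check that $D_1\in L^2_0(P)$: square-integrability follows from $U_{\psi_1}\in L^2(P)$, and the mean-zero property from $E_P[D_1]=-(P\dot U_{\psi_1})^{-1}PU_{\psi_1}=0$. This identifies $D_1$ as the canonical gradient.

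I expect the main obstacle to be the two analytic justifications underlying the display above: first, that $\epsilon\mapsto\psi_1(P_\epsilon)$ is actually differentiable at $0$, and second, that differentiation and integration may be interchanged when both the integrand index $\psi_1(P_\epsilon)$ and the measure $P_\epsilon$ vary simultaneously. For the former I would appeal to the implicit function theorem applied to $(\epsilon,\alpha)\mapsto P_\epsilon U_\alpha$: this map vanishes at $(0,\psi_1(P))$, its $\alpha$-derivative there is the invertible matrix $P\dot U_{\psi_1}$, and the assumed continuous differentiability of $U_\alpha$ in $\alpha$ together with QMD of the submodel and the bounded support of $X$ supply the joint regularity needed; uniqueness of the root then forces the implicit solution to coincide with $\psi_1(P_\epsilon)$. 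For the latter, the bounded support of $X$ together with $U_{\psi_1},\dot U_{\psi_1}\in L^2(P)$ provides the domination required to differentiate under the integral sign.

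For the second parameter the same mechanism applies to $\psi_2(P)=Pg_{\psi_1(P)}=\int g_{\psi_1(P)}(x)\,dP(x)$. Differentiating at $\epsilon=0$ again splits into a measure term $P[g_{\psi_1}s]$ and an index term $\big(P\,\tfrac{\partial}{\partial\alpha}g_\alpha|_{\alpha=\psi_1}\big)^{\!\top}\dot\psi_1$, where differentiability of $\alpha\mapsto g_\alpha$ and the bounded support again license the interchange. Substituting $\dot\psi_1=E_P[D_1(X)s(X)]$ from the first part and using $E_P[s]=0$ to absorb the constant $\psi_2$, the derivative rewrites as $E_P[D_2(X)s(X)]$ with $D_2(x)=g_{\psi_1}(x)+\big(P\,\tfrac{\partial}{\partial\alpha}g_\alpha|_{\alpha=\psi_1}\big)^{\!\top}D_1(x)-\psi_2$, matching the claimed form. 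As before $D_2\in L^2_0(P)$ by construction, so it is the canonical gradient and the proof concludes.
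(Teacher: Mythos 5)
Your proposal follows the same skeleton as the paper's proof: differentiate the identity $P_\epsilon U_{\psi_1(P_\epsilon)}=0$ via the implicit function theorem to get $D_1$, then use the chain rule plus $E_P[s(X)]=0$ to get $D_2$, and invoke local nonparametricity to conclude the gradients are canonical. The substantive difference — and where your argument has a genuine gap — is the class of submodels along which you compute the derivative. The paper works only with the explicit perturbations $p_\epsilon(x)=\{1+\epsilon h(x)\}p(x)$ for bounded mean-zero $h$. For these paths the map $(\alpha,\epsilon)\mapsto P_\epsilon U_\alpha = \int U_\alpha(x)\{1+\epsilon h(x)\}\,dP(x)$ is \emph{linear} in $\epsilon$, its $\epsilon$-derivative is exactly $\int U_\alpha h\,dP$ (finite by Cauchy--Schwarz since $U_\alpha\in L^2(P)$ and $h$ is bounded), and the joint regularity needed for the implicit function theorem reduces to the assumed smoothness of $U_\alpha$ in $\alpha$. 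No interchange-of-limits argument is needed at all.

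You instead insist on arbitrary quadratic-mean-differentiable submodels and justify the key step — that $\frac{\partial}{\partial\epsilon}P_\epsilon U_\alpha\big|_{\epsilon=0}=P[U_\alpha s]$ — by ``differentiation under the integral sign'' with domination supplied by bounded support of $X$ and $U_{\psi_1},\dot U_{\psi_1}\in L^2(P)$. This does not work as stated. QMD is an $L^2(\mu)$ expansion of $\sqrt{p_\epsilon}$; it gives no pointwise differentiability of $\epsilon\mapsto p_\epsilon(x)$, so there is no integrand derivative to dominate. The correct route for general QMD paths (expanding $p_\epsilon-p=(\sqrt{p_\epsilon}-\sqrt{p})(\sqrt{p_\epsilon}+\sqrt{p})$) requires $U_{\psi_1}$ bounded, or uniform square-integrability of $U_\alpha$ under $\{P_\epsilon\}$ — and neither follows from your hypotheses: bounded support of $X$ does not bound $U_{\psi_1}(x)$, since no continuity in $x$ is assumed, and $L^2(P)$ alone is not enough. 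The implicit function theorem step has the same problem, since it needs continuity of $(\alpha,\epsilon)\mapsto P_\epsilon \dot U_\alpha$ jointly. The fix is simply to do what the paper does: verify the derivative identity along the bounded-perturbation paths, whose scores already span $L^2_0(P)$, which is all that is needed to identify the (unique) gradient in a locally nonparametric model.
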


\begin{proof}
For $h \in L^2_0(P)$ whose range is contained in $[-1,1]$, consider the one-dimensional submodel $\{P_\epsilon: \epsilon\}$, where each $P_\epsilon$ has density $p_\epsilon(x) =\{1+\epsilon h(x)\}p(x)$. This submodel has score $h$ at $\epsilon = 0$. By definition, we have that $P_\epsilon U_{\psi_1(P_\epsilon)} = \int U_{\psi_1(P_\epsilon)}(x)\{1+\epsilon h(x)\}p(x)dx = 0$. Define a function $f: (\alpha,\epsilon) \mapsto  \int U_\alpha(x)\{1+\epsilon h(x)\}p(x)dx$, which is linear in $\epsilon$. Thus, the continuous differentiability of $U$ as a function of $\alpha$ implies that $f$ is also continuously differentiable. Then the implicit function theorem implies that
\begin{equation*}
    \frac{\partial}{\partial \epsilon}\psi_1(P_\epsilon)\rvert_{\epsilon=0} = -\left\{P\dot{U}_{\psi_1(P)}\right\}^{-1}\int U_{\psi_1(P)}(x)h(x)dP(x) = \langle D_1,h \rangle_P.
\end{equation*}
We note that $D_1 \in L^2_0(P)$. Hence by definition $\psi_1$ is pathwise differentiable with gradient $D_1$, which is also the only gradient in any locally nonparametric model.

Also, noting that $\psi_2(P_\epsilon) = \int g_{\psi_1(P_\epsilon)}(x)\{1+\epsilon h(x)\}dP(x)$, we see that
\begin{align*}
    \frac{\partial}{\partial \epsilon}\psi_2(P_\epsilon)\rvert_{\epsilon=0} &= \int g_{\psi_1}(x)h(x)dP(x) + \left\{\int \frac{\partial}{\partial \alpha}g_\alpha(x)\rvert_{\alpha =\psi_1}dP(x)\right\}^\top \left\{\frac{\partial}{\partial \epsilon}\psi_1(P_\epsilon)\rvert_{\epsilon=0}\right\} \\
    &= \int \left[ g_{\psi_1}(x) + P\left\{\frac{\partial}{\partial \alpha}g_\alpha(x)\rvert_{\alpha =\psi_1}\right\}^\top D_1(x)\right]h(x)dP(x) \\
    &= \langle D_2,h \rangle_P
\end{align*}
Thus, $\psi_2$ is pathwise differentiable and $D_2$ is the gradient in any locally nonparametric model. 
\end{proof}

\subsection{Results in Section~\ref{sec:ordinal}}\label{proofB1}

\begin{proof}[Proof of Lemma \ref{reATE}]
For notational convenience, we define $\mu_a = E[Y^t|A=a]$, for $a \in\{0,1\}$. 

We first establish properties of the working-model-based estimator of $\mu_a$, given by $\hat\mu_a = \hat\alpha_a+\hat\beta_a^\top  \bar{W^t}$. To do this, we note that the fitted coefficients from arm-specific linear regression $(\hat\alpha_a,\hat\beta_a^\top )$ satisfy the following first-order conditions:
\begin{equation*}
    \sum_{i=1}^{n^t}I\{A_i=a\}(Y_i^t-\hat\alpha_a-\hat\beta_a^\top  W_i^t) = 0; \quad \sum_{i=1}^{n}I\{A_i=a\}W_i^t(Y_i^t-\hat\alpha_a-\hat\beta_a^\top W_i^t) = 0.
\end{equation*}
Let $(\alpha_a^*,\beta_a^*)$ be the large-sample limit of $(\hat\alpha_a,\hat\beta_a)$, defined implicitly as the solution to
\begin{equation*}
    E[I\{A=a\}(Y^t-\alpha_a-\beta_a^\top W^t)] = 0; \quad E[I\{A=a\}W^t(Y^t-\alpha_a-\beta_a^\top W^t)] = 0.
\end{equation*}
The efficient influence function of $\mu_a$ when the treatment is randomized is given by 
\begin{equation*}
    D_a^*(y^t,a^t,w^t) = \frac{I\{a^t=a\}}{\pi_a}\{y^t-r_a(w^t)\} + r_a(w^t) - \mu_a,
\end{equation*}
where $\pi_a= \Pi(A=a)$. We claim that $D_a$ defined below is also a gradient of $\mu_a$ in this model.
\begin{equation*}
    D_a(y^t,a^t,w^t) = \frac{I\{a^t=a\}}{\pi_a}\left\{y^t-\alpha_a^*-(\beta_a^*)^\top w^t\right\} + \alpha_a^*+(\beta_a^*)^\top  w^t-\mu_a.
\end{equation*}
To show this, we will show that $D_a^*-D_a$ lies in the orthogonal complement of the tangent space. First, define
\begin{align*}
    L_{0,A}^2(\nu) &= \{s \in L_0^2(v): s(y,a,w) = s(y^\prime,a,w^\prime) \ \forall (y,w),(y^\prime,w^\prime)\};\\
    L_{0,W^t}^2(\nu) &= \{s \in L_0^2(v): s(y,a,w) = s(y^\prime,a^\prime,w) \ \forall (y,a),(y^\prime,a^\prime)\}; \\
    L_{0,Y^t|A,W^t}^2(\nu) &= \{s \in L_0^2(v): E[s(Y^t,A,W^t)|A=a,W^t=w] = 0 \ \forall a,w\}.
\end{align*}
The tangent space decomposes as $L_{0,A}^2(\nu) \bigoplus L_{0,W^t}^2(\nu) \bigoplus L_{0,Y^t|A,W^t}^2(\nu)$. Next,
\begin{equation*}
    D_a(Y^t,A,W^t)-D_a^*(Y^t,A,W^t) = \left[\frac{I\{A=a\}}{\pi_a}-1\right]\{r_a(W^t)-\alpha^*_a-\beta^*_aW^t\}.
\end{equation*}
We note that each individual factor in the above display has mean 0. This, together with the independence between $A$ and $W^t$, implies that $\langle s, D_a-D_a^*\rangle_\nu = 0$ for any $s$ in $L_{0,A}^2(\nu)$, $L_{0,W^t}^2(\nu)$ or $L_{0,Y^t|A,W^t}^2(\nu)$. This implies that the difference is orthogonal to each component of the tangent space, and hence the tangent space itself.

Next, we note that $\hat\mu_a$ re-writes as a one-step estimator based on the gradient $D_a$. Let $\hat\nu$ be a distribution of $(Y^t,A,W^t)$ such that $E_{\hat\nu}[Y^t|A=a,W^t=w] = \hat\alpha_a + \hat\beta_a^\top  w$ and $E_{\hat\nu}[I\{A=a\}|W^t=w] = \hat\pi_a$, the sample proportion of $A=a$. The remainder is given by
\begin{align*}
    \psi(\hat\nu)-\psi(\nu) + P_\nu D_a(\hat\nu) &= P_\nu\left[\frac{I\{A=a\}}{\hat\pi_a}(Y^t-\hat\alpha_a-\hat\beta_a^\top W^t) + \hat\alpha_a+\hat\beta_a^\top W^t - r_a(W^t)\right] \\
    &= \left(\frac{\pi_a}{\hat\pi_a}-1\right)P_\nu\left\{r_a(W^t)-\hat\alpha_a-\hat\beta_a^\top W^t\right\},
\end{align*}
which is $o_P(n^{-1/2})$ as the first term is $O_P(n^{-1/2})$ and the second term is $o_P(1)$. We note that $Y$ and $W$ both have bounded support and $\pi_a \in (0,1)$. Thus, $\|D_a(\hat\nu) - D_a\|_{L^2(\nu)} = o_P(1)$, which follows from the convergence of $\hat\alpha_a$, $\hat\beta_a$ and $\hat\pi_a$ to their population counterparts. Let $\gamma^* := (\pi_a,\alpha_a^*,\beta_a^*,\mu_a)$ be the true parameter value and $\gamma = (\pi,\alpha,\beta,\mu)$ be a generic parameter value, and define a class of functions $\mathcal{F} := \{f_\gamma: (y,\tilde a,w) \mapsto I\{\tilde a=a\}\left(y-\alpha-\beta^\top w\right)/\pi + \alpha+\beta^\top  w-\mu, \gamma \in B(\gamma^*)\}$ where $B(\gamma^*)$ denotes a small neighborhood of $\gamma^*$. We note that $D_a(\hat\nu) \in \mathcal{F}$ with probability tending to 1. The boundedness of $Y^t$, $W^t$ and $\pi_a$ implies that this class of functions is Lipschitz in $\gamma$, and Example 19.7 in \citet{van2000asymptotic} implies that $\mathcal{F}$ is a Donsker class. 

Therefore, $\hat\mu_a$ is asymptotically linear with influence function $D_a$, and so is $\hat\psi_m$ with influence function $D_m = D_1-D_0$. Now focusing on $\nu$ where the sharp null holds, $D_m$ simplifies to 
\begin{equation*}
    D_m(y^t,a^t,w^t) = \left(\frac{a^t}{\pi_1} - \frac{1-a^t}{\pi_0}\right)(y^t-\alpha^*-\beta^*w^t)
\end{equation*}
Due to the independence of $A$ and $(Y^t,W^t)$ under the sharp null, it has variance $\sigma^2_m/(\pi_1\pi_0)$ under the sharp null. 

Now, we derive the asymptotic variances of the unadjusted estimator and the fully adjusted estimator. In doing so, we consider a more general parameter $\psi = E[u(Y^t)|A=1] - E[u(Y^t)|A=0]$ for a given function $u$. The average treatment effect corresponds to the special case of $u$ being the identity function. 

Recall that the unadjusted estimator is given by the difference between arm-specific means
$$\hat\psi_u = \frac{\sum_{i=1}^{n^t}u(Y_i^t)A_i}{\sum_{i=1}^{n^t}A_i}- \frac{\sum_{i=1}^{n^t}u(Y_i^t)(1-A_i)}{\sum_{i=1}^{n^t}(1-A_i)}.$$
Applying the delta method, we see that $\hat\psi_u$ is asymptotically linear with influence function
$$D_u(y^t,a^t,w^t) = \left(\frac{a^t}{\pi_1} - \frac{1-a^t}{\pi_0}\right)\left\{u(y^t)-E[u(Y^t)|A=a^t]\right\}.$$
Under the sharp null, it simplifies to
$$D_u(y^t,a^t,w^t) = \left(\frac{a^t}{\pi_1} - \frac{1-a^t}{\pi_0}\right)\left\{u(y^t)-E[u(Y^t)]\right\},$$
with variance under the sharp null $\textnormal{var}_P(u(Y))/(\pi_1\pi_0)$, due to the independence between $A$ and $(Y^t,W^t)$ under the sharp null.

The fully adjusted estimator we consider is the AIPW estimator given by
$$\hat\psi_a = \frac{1}{n^t}\sum_{i=1}^{n^t}\left[\frac{A_i\{u(Y_i^t) - \hat r_1(W_i^t)\}}{\hat\pi(W_i^t)} -\frac{(1-A_i)\{u(Y^t_i) - \hat r_0(W_i^t)\}}{1-\hat\pi(W_i^t)}  + \hat r_1(W_i^t)-\hat r_0(W_i^t)\right],$$
where $\hat r_1(w)$ and $\hat r_0(w)$ are estimators of $r_1(w) = E[u(Y^t)|A=1,W^t=w]$ and $r_0(w) = E[u(Y^t)|A=0,W^t=w]$. The influence function of this estimator $\hat\psi_a$ is given by
$$D^*(y^t,a^t,w^t) = \frac{\{u(y^t)-r_1(w^t)\}a^t}{\pi_1}+r_1(w^t)-\frac{\{u(y^t)-r_0(w^t)\}(1-a^t)}{\pi_0}-r_0(w^t)-\psi.$$
Under the sharp null, it simplifies to
$$D^*(y^t,a^t,w^t) =\left(\frac{a^t}{\pi_1} - \frac{1-a^t}{\pi_0}\right) \left\{y^t-E[u(Y^t)|W^t=w^t]\right\},$$
with variance under the sharp null $E_P[\textnormal{var}_P(u(Y)|W)]/(\pi_1\pi_0)$, where we again use the independence between $A$ and $(Y^t,W^t)$ under the sharp null.
\end{proof}

\begin{proof}[Proof of Theorem \ref{ATE}] First, we consider the estimation of $\sigma^2_m$. Define a parameter $P\mapsto (\alpha^*(P),\beta^*(P))$, where $(\alpha^*(P),\beta^*(P))$ is the minimizer of $E_P[(Y-\alpha-\beta^\top W)^2]$ in $(\alpha,\beta)$. Define a set of estimating functions $U(\alpha,\beta) = (y-\alpha-\beta^\top w,w(y-\alpha-\beta^\top w))$. The first-order condition implies that $(\alpha^*(P),\beta^*(P))$ is the unique solution to the estimating equation $PU(\alpha,\beta)=0$. This solution can be written as a differentiable transformation of $E_P[Y], E_P[W], E_P[YW]$ and $E_P[WW^\top ]$. Hence, by the chain rule, $(\alpha^*(P),\beta^*(P))$ is pathwise differentiable. 

In the remainder of this proof, we will often write $(\alpha^*(P),\beta^*(P))$ as $(\alpha^*,\beta^*)$. The fitted coefficients $(\hat\alpha,\hat\beta)$ solve the empirical estimating equation $P_nU(\alpha,\beta)=0$, and are asymptotically linear estimators of $(\alpha^*,\beta^*)$ with influence function
\begin{equation*}
    \textnormal{IF}_{ab}(y,w) = \begin{bmatrix} 1 & E[W^\top ] \\ E[W] & E[WW^\top ] \end{bmatrix}^{-1} \begin{bmatrix} y-\alpha^*-w^\top\beta^* \\ w(y-\alpha^*-w^\top\beta^*)\end{bmatrix}.
\end{equation*}
Lemma \ref{gradientEE} implies that this influence function is also the canonical gradient of $(\alpha^*,\beta^*)$ in any locally nonparametric model, and therefore $(\hat\alpha,\hat\beta)$ is also regular \citep[Proposition 2.3.i, ][]{pfanzagl1990estimation}. 
Define the estimating function $U_{ATE}(\alpha,\beta,\sigma^2) := (y-\alpha-\beta^\top w)^2-\sigma^2$. We note that $\sigma_m^2$ is the solution in $\sigma^2$ to the estimating equation $PU_{ATE}(\alpha^*,\beta^*,\sigma^2) = 0$ and $\hat\sigma^2_m$ solves its empirical counterpart $P_nU_{ATE}(\hat\alpha,\hat\beta,\sigma^2)=0$. Hence,
\begin{align*}
    0 &=  P_nU_{ATE}(\hat\alpha,\hat\beta,\hat\sigma_{m}^2)-PU_{ATE}(\alpha^*,\beta^*,\sigma_{m}^2) \\
    &= (P_n-P)U_{ATE}(\alpha^*,\beta^*,\sigma_{m}^2) + P\left\{U_{ATE}(\hat\alpha,\hat\beta,\hat\sigma_{m}^2)-U_{ATE}(\alpha^*,\beta^*,\sigma_{m}^2)\right\} \\
    &\ \ + (P_n-P)\left\{U_{ATE}(\hat\alpha,\hat\beta,\hat\sigma_{m}^2)-U_{ATE}(\alpha^*,\beta^*,\sigma_{m}^2)\right\}.
\end{align*}
The consistency of $(\hat\alpha,\hat\beta)$ implies that $\hat\sigma_m^2$ is also consistent. This together with the bounded support of $W$ implies that $\|U_{ATE}(\hat\alpha,\hat\beta,\hat\sigma_{m}^2)-U_{ATE}(\alpha^*,\beta^*,\sigma_{m}^2)\|_{L^2(P)} = o_P(1)$. Furthermore, we can show that $U_{ATE}$ is a Lipschitz function of $(\alpha,\beta,\sigma^2)$ in a neighborhood of $(\alpha^*,\beta^*,\sigma_{m}^2)$, and thus $U_{ATE}(\hat\alpha,\hat\beta,\hat\sigma_{m}^2)$ belongs to a Donsker class with probability tending to 1 \citep[Example 19.7,][]{van2000asymptotic}. Lemma 19.24 in \citet{van2000asymptotic} implies that the last term in the above display is $o_P(n^{-1/2})$. Applying a Taylor expansion to the second term, we have
\begin{align*}
    \hat\sigma_{m}^2 -\sigma_{m}^2 &= (P_n-P)U_{ATE}(\alpha^*,\beta^*,\sigma_{m}^2) + \left(\frac{\partial}{\partial \alpha}PU_{ATE}\right)|_{(\alpha^*,\beta^*)}(\hat\alpha-\alpha^*)\\
    &\ \ + \left(\frac{\partial}{\partial \beta}PU_{ATE}\right)^\top |_{(\alpha^*,\beta^*)}(\hat\beta-\beta^*) + o_P(n^{-1/2}).
\end{align*}
In particular, we have
\begin{equation*}
    \frac{\partial U_{ATE}}{\partial \alpha}(y,w) = -2(y-\alpha-\beta^\top w), \ \ \frac{\partial U_{ATE}}{\partial \beta}(y,w)= -2w(y-\alpha-\beta^\top w),
\end{equation*}
both of which have mean 0 at $(\alpha^*,\beta^*)$ by the first-order condition of $(\alpha^*,\beta^*)$. This implies that $\hat\sigma^2_m$ is asymptotically linear with influence function $\textnormal{IF}_m(y,w) = (y-\alpha^*-w^\top \beta^*)^2-\sigma^2_m$. Lemma \ref{gradientEE} implies that this is also the canonical gradient of $\sigma_m^2$, and hence $\hat\sigma^2_m$ is also regular.

Next we estimate $\sigma_a^2$. The proposed estimator is a one-step estimator based on the canonical gradient. Applying Lemma \ref{gradientcondvar} with $h(x,y) = y$, we obtain the canonical gradient $\textnormal{IF}_a(y,w) = \{y-r(w)\}^2-\sigma^2_a$ where $r(w) = E_P[Y|W=w]$. Let $\hat P$ be a distribution of $(Y,W)$ such that the conditional mean of $Y$ given $W$ is $\hat r(w)$ and the marginal distribution of $W$ is the empirical distribution of $W$. Then,
\begin{align*}
    \hat\sigma_a^2 - \sigma_a^2 &= \sigma_a^2(\hat P) + P_n \textnormal{IF}_{a}(\hat P)-\sigma_a^2 \\
    &= (P_n-P)\textnormal{IF}_{a}(P) + (P_n-P)\{\textnormal{IF}_{a}(\hat P)-\textnormal{IF}_{a}(P)\} + R(\hat P,P),
\end{align*}
where
\begin{align*}
    R(\hat P,P) &= \sigma^2_a(\hat P)-\sigma^2_a(P)+P\{\textnormal{IF}_{a}(\hat P)\} \\
    &= \int \{\hat r(w)-r(w)\}^2dP(w) = o_P(n^{-1/2}).
\end{align*}
As $\hat r(w)$ is uniformly bounded and belongs to a Donsker class $\mathcal{F}$ with probability tending to 1 and $Y$ has bounded support, Theorem 2.10.6 in \citet{van1996weak} implies that $\textnormal{IF}_a(\hat P) = \{y-\hat r(w)\}^2-\hat\sigma_a^2$ belongs to a Donsker class $\tilde{\mathcal{F}}$ that is also bounded. The difference between $\textnormal{IF}_a(\hat P)$ and $\textnormal{IF}_a(P)$ is given by
\begin{equation*}
    \{y-\hat r(w)\}^2 - \hat\sigma^2_a - \{y-r(w)\}^2+\sigma_a^2 = \{2y-\hat r(w)-r(w)\}\{r(w)-\hat r(w)\}-(\hat\sigma_a^2-\sigma_a^2).
\end{equation*}
Therefore,
\begin{align*}
    \|\textnormal{IF}_a(\hat P) - \textnormal{IF}_a(P)\|_{L^2(P)} &\leq \left[\int\{2y-\hat r(w)-r(w)\}^2\{r(w)-\hat r(w)\}^2 dP\right]^{1/2} + |\hat\sigma_a^2-\sigma_a^2| \\
    & \leq M \|r-\hat r\|_{L^2(P)} + |\hat\sigma_a^2-\sigma_a^2|,
\end{align*}
for some $M$ as the support of $Y$ and $\hat r$ are bounded. The first term in the last line is $o_P(1)$ by the assumption that $\|\hat r-r\|_{L^2(P)} = o_P(n^{-1/4})$. Thus, $ \|\textnormal{IF}_a(\hat P) - \textnormal{IF}_a(P)\|_{L^2(P)} = o_P(1)$ provided that $\hat\sigma_a^2$ is a consistent estimator of $\sigma_a^2$. Suppose for now that this is indeed the case, then Lemma 19.24 in \citet{van2000asymptotic} implies that $(P_n-P)\{\textnormal{IF}_{a}(\hat P)-\textnormal{IF}_{a}(P)\}$ is $o_P(n^{-1/2})$. This shows that $\hat\sigma_a^2$ is regular and asymptotically linear with influence function $\textnormal{IF}_{a}$. 

We now show that $\hat\sigma_a^2$ is indeed a consistent estimator of $\sigma_a^2$. Note that
\begin{align*}
    \hat\sigma_a^2 - \sigma_a^2 &= P_n\left[\{y-\hat r(w)\}^2-\sigma_a^2\right] \\
    &= P_n\left[\{y-r(w)\}^2-\sigma_a^2\right]+P_n\left[\{r(w)-\hat r(w)\}^2+2\{y-r(w)\}\{r(w)-\hat r(w)\}\right] \\
    &\leq P_n\left[\{y-r(w)\}^2-\sigma_a^2\right] + M_1 P_n|r(w)-\hat r(w)|,
\end{align*}
for some constant $M_1$. The first term in the last line is $o_P(1)$ by the law of large number. As for the second term, we rewrite it as
\begin{equation*}
    P_n|r(w)-\hat r(w)| = (P_n-P)|\hat r(w)- r(w)| + P|\hat r(w)- r(w)|.
\end{equation*}
Lemma 19.24 in \citet{van2000asymptotic} implies that $(P_n-P)|\hat r(w)- r(w)| =o_P(1)$ as $|\hat r(w) - r(w)|$ lies in a Donsker class with probability tending to 1 and $\|\hat r - r\|_{L^2(P)} = o_P(1)$. In addition, $P|\hat r(w)- r(w)| \leq \|\hat r-r\|_{L^2(P)} = o_P(1)$. This establishes the consistency of $\hat\sigma_a^2$.

Finally, we estimate $\sigma_u^2$ with the sample variance of $Y$, which is regular and asymptotically linear with influence function $\textnormal{IF}_u(y,w) = \{y-E_P[Y]\}^2-\sigma_u^2$. 

Theorem \ref{DIM} then follows by applying the delta method. Specifically, the influence function of $\hat\phi_a$ is given by $(\textnormal{IF}_a -\phi_a \textnormal{IF}_u)/\sigma^2_u$, and similarly the influence function of $\hat\phi_m$ is $(\textnormal{IF}_m -\phi_m \textnormal{IF}_u)/\sigma^2_u$. Both estimators are efficient as we work in a locally nonparametric model.
\end{proof}

\begin{proof}[Proof of Lemma \ref{ordinalworking}]
First recall that $\theta^*_a(k,w)$ is the best approximation to the true outcome regression $\theta_a(k,w)$ within the proportional odds model. The coefficients $\alpha_a^*$ and $\beta_a^*$ satisfy the following first-order condition
\begin{equation*}
    E\left[\frac{I\{A_i=a\}}{\pi_a}\left[I\{Y_i^t\leq k\} -\theta_{\alpha_a^*,\beta_a^*}(k,W_i^t)\right]\right] = 0,
\end{equation*}
which implies that
\begin{equation*}
   F_a(k) = E\left[\theta_a(k,W^t)\right] = E\left[\theta^*_a(k,W^t)\right].
\end{equation*}
We claim that when the treatment is randomized, the following is a gradient of $F_a(k)$,
\begin{equation*}
    \textnormal{IF}_{F_a(k)}(y^t,a^t,w^t) = \frac{I\{a^t=a\}}{\pi_a}\left[I\{y^t\leq k\}-\theta^*_a(k,w^t)\right]+\theta^*_a(k,w^t)-F_a(k).
\end{equation*}
To show this, we will show that the difference between $\textnormal{IF}_{F_a(k)}$ and the canonical gradient of $F_a(k)$ lies in the orthogonal complement of the tangent space. The canonical gradient is given by 
\begin{equation*}
    D_a^*(y^t,a^t,w^t) = \frac{I\{a^t=a\}}{\pi_a}\left[I\{y^t\leq k\}-\theta_a(k,w^t)\right]+\theta_a(k,w^t)-F_a(k).
\end{equation*}
Hence,
\begin{equation*}
    \textnormal{IF}_{F_a(k)}(Y^t,A,W^t)-D_a^*(Y^t,A,W^t) = \left\{\theta^*_a(k,W^t)-\theta_a(k,W^t)\right\}\left[1-I\{A=a\}/\pi_a\right].
\end{equation*}
As each individual factor has mean 0 and $A \perp W^t$, the difference is orthogonal to each component of the tangent space and hence the tangent space itself.

Finally, $\hat F_a$ re-writes as a one-step estimator based on the gradient $\textnormal{IF}_{F_a(k)}$. In particular, let $\hat\nu$ be a distribution with outcome regression $\theta_{\hat\alpha_a,\hat\beta_a}$ and treatment mechanism $\hat\pi_1$, then
\begin{equation*}
    \hat F_a(k) - F_a(k) 
    = (P_{\hat v}-P_\nu)\textnormal{IF}_{F_a(k)}(\nu) + (P_{\hat\nu}-P_\nu)\{\textnormal{IF}_{F_a(k)}(\hat \nu)-\textnormal{IF}_{F_a(k)}(\nu)\} + R(\hat \nu,\nu),
\end{equation*}
where
\begin{align*}
    R(\hat\nu,\nu) &=  F_a(k)(\hat\nu)-F_a(k)(\nu) + P_\nu \textnormal{IF}_{F_a(k)}(\hat\nu)\\
    &= \frac{\pi_a-\hat\pi_a}{\hat{\pi}_a}P_\nu\left\{\theta_a(k,W^t)-\theta^*_{a}(k,W^t)+\theta^*_{a}(k,W^t)-\theta_{\hat\alpha_a,\hat\beta_a}(k,W^t)\right\} \\
    &= \frac{\pi_a-\hat\pi_a}{\hat{\pi}_a}P_\nu\left\{\theta^*_{a}(k,W^t)-\theta_{\hat\alpha_a,\hat\beta_a}(k,W^t)\right\}.
\end{align*}
First we note that $R(\hat\nu,\nu)$ is $o_P(n^{-1/2})$ as the first term in the last line is $O_P(n^{-1/2})$ and the second term is $o_P(1)$. Also given the bounded support of $W^t$ and the fact that $\pi_1$ and $\pi_0$ are bounded away from 0, we can show that the convergence of $\hat\pi_a$, $\hat\alpha_a$ and $\hat\beta_a$ implies that $\|\textnormal{IF}_{F_a(k)}(\hat \nu)-\textnormal{IF}_{F_a(k)}(\nu)\|_{L_2}(\nu) = o_P(1)$. Example 19.7 in \citet{van2000asymptotic} shows that $\textnormal{IF}_{F_a(k)}(\hat v)$ lies in a Donsker class with probability tending to 1, as $\textnormal{IF}_{F_a(k)}$ is Lipschitz in its indexing parameters in a neighborhood of the true parameter value, again due to the boundedness of $W^t$ and $\pi_a$. Lemma 19.24 in \citet{van2000asymptotic} implies that $(P_{\hat\nu}-P_\nu)\{\textnormal{IF}_{F_a(k)}(\hat \nu)-\textnormal{IF}_{F_a(k)}(\nu)\} = o_P(n^{-1/2})$. 

Thus $\hat F_a(k)$ is asymptotically linear with influence function $\textnormal{IF}_{F_a(k)}$.
\end{proof}

\begin{proof}[Proof of Lemma \ref{reDIM}]
First we consider the variance of the unadjusted and fully adjusted estimators. Recall that, in proving Lemma \ref{reATE}, we considered general functions $u(\cdot)$. Although the outcome is now ordinal, the same arguments as in the proof of Lemma \ref{reATE} applies here, and we can show that $\sigma_u^2 = E_P[u(Y)]$ and $\sigma_a^2 = E_P[\textnormal{var}_P(u(Y)|W)]$.

We now derive the influence function of the adjusted estimator based on the working proportional odds model. For the ease of notation, let $b_k := u(k)-u(k+1)$. Recall that $\hat\psi_{m} = \sum_{k=1}^{K-1}b_k\{\hat F_1(k)-\hat F_0(k)\}$, and thus, by Lemma \ref{ordinalworking}, it is asymptotically linear with influence function
$$D_{m}(y^t,a^t,w^t) = \sum_{k=1}^{K-1}b_k\left\{\textnormal{IF}_{F_1(k)}(y^t,a^t,w^t)-\textnormal{IF}_{F_0(k)}(y^t,a^t,w^t)\right\}-\psi.$$
Under the sharp null, the above display simplifies to 
\begin{equation*}
  D_{m}(y^t,a^t,w^t) = \left(\frac{a^t}{\pi_1}-\frac{1-a^t}{\pi_0}\right)\sum_{k=1}^{K-1}b_k\left[I\{y^t\leq k\}-\theta^*(k,w^t)\right].
\end{equation*}
Under the sharp null, its variance is $E_P\left[\left(\sum_{k=1}^{K-1}b_k\left[I\{Y\leq k\}-\theta^*(k,W)\right]\right)^2\right] / (\pi_1\pi_0)$ due to the independence between $A$ and $(Y^t,W^t)$ under the sharp null.
\end{proof}

\begin{proof}[Proof of Theorem \ref{DIM}]
We first consider estimating $\sigma_m^2$. To start, we show that $(\hat\alpha,\hat\beta)$, the maximizer of the empirical version of \eqref{propoddsEE}, is a RAL estimator of $(\alpha^*,\beta^*)$. The first-order conditions of this maximization imply that $(\hat\alpha,\hat\beta)$ solves a set of estimating equations, as the parameter space is unconstrained. Specifically, define $U(\alpha,\beta) = \left(U_1(\alpha,\beta), \ldots, U_{K-1}(\alpha,\beta), U_K(\alpha,\beta)\right)$ with 
\begin{align*}
    U_k(\alpha,\beta)(y,w) &= I\{y\leq k\}-\frac{\exp(\alpha_k+\beta^\top  w)}{1+\exp(\alpha_k+\beta^\top  w)}, \ k=1,\ldots,K-1,  \\
    U_K(\alpha,\beta)(y,w) &= w\sum_{k=1}^{K-1}\left[I\{y\leq k\}-\frac{\exp(\alpha_k+\beta^\top  w)}{1+\exp(\alpha_k+\beta^\top  w)}\right].
\end{align*}
Then we have that $P_nU_k(\hat\alpha,\hat\beta) = 0$ for all $k=1,\ldots,K$. We can show, through the usual arguments used to study estimating equations \citep[Chapter 5,][]{van2000asymptotic}, that the influence function of $(\hat\alpha,\hat\beta)$ is
\begin{equation*}
    \textnormal{IF}_{ab}(y,w) = -\{P\dot U(\alpha^*,\beta^*)\}^{-1}U(\alpha^*,\beta^*)(y,w).
\end{equation*}
In particular, the derivative matrix $\dot U$ can be partitioned into $\begin{bmatrix}\dot U_{\alpha\alpha} & \dot U_{\alpha\beta}^\top  \\ \dot U_{\alpha\beta} & \dot U_{\beta\beta}\end{bmatrix}$, with
\begin{align*}
    \dot U_{\alpha\alpha}(y,w) &= -\text{Diag}\left[\theta^*(k,w)\{1-\theta^*(k,w)\}\right], \\
    \dot U_{\alpha\beta}(y,w) &= -\left[w\theta^*(1,w)\{1-\theta^*(1,w)\},\cdots,w\theta^*(K-1,w)\{1-\theta^*(K-1,w)\}\right]^\top , \\
    \dot U_{\beta\beta}(y,w) &= -\sum_{k=1}^{K-1}ww^\top  \theta^*(k,w)\{1-\theta^*(k,w)\}.
\end{align*}
Lemma~\ref{gradientEE} implies that $\textnormal{IF}_{ab}$ is the canonical gradient of $(\alpha^*,\beta^*)$ in a locally nonparametric model, and thus $(\hat\alpha,\hat\beta)$ is also regular \citep[Proposition 2.3.i,][]{pfanzagl1990estimation}.

Next, we define the following estimating equation:
\begin{equation*}
    U_{DIM}(\alpha,\beta,\sigma^2)(y,w) = \left(\sum_{k=1}^{K-1}b_k\left[I\{y \leq k\}-\theta_{\alpha,\beta}(k,w)\right]\right)^2 -\sigma^2.
\end{equation*}
By definition, $\sigma^2_{m}$ is the unique solution in $\sigma^2$ to the equation $PU_{DIM}(\alpha^*,\beta^*,\sigma^2) = 0$, and $\hat\sigma^2_{m}$ solves its empirical counterpart $P_nU_{DIM}(\hat\alpha,\hat\beta,\sigma^2)=0$. Hence, we have
\begin{align*}
    0 &=  P_nU_{DIM}(\hat\alpha,\hat\beta,\hat\sigma_{m}^2)-PU_{DIM}(\alpha^*,\beta^*,\sigma_{m}^2) \\
    &= (P_n-P)U_{DIM}(\alpha^*,\beta^*,\sigma_{m}^2) + P\left\{U_{DIM}(\hat\alpha,\hat\beta,\hat\sigma_{m}^2)-U_{DIM}(\alpha^*,\beta^*,\sigma_{m}^2)\right\} \\
    &\ \ + (P_n-P)\left\{U_{DIM}(\hat\alpha,\hat\beta,\hat\sigma_{m}^2)-U_{DIM}(\alpha^*,\beta^*,\sigma_{m}^2)\right\}.
\end{align*}
The consistency of $(\hat\alpha,\hat\beta)$ implies that $\hat\sigma^2_m$ is also consistent. Combining this with the bounded support of $W$, we see that $\|U_{DIM}(\hat\alpha,\hat\beta,\hat\sigma_{m}^2)-U_{DIM}(\alpha^*,\beta^*,\sigma_{m}^2)\|_{L^2(P)} = o_P(1)$. Furthermore, it can be shown that $U_{DIM}$ is a Lipschitz transformation of $(\alpha,\beta,\sigma^2)$ in a neighborhood of $(\alpha^*,\beta^*,\sigma_{m}^2)$, and thus that $U_{DIM}(\hat\alpha,\hat\beta,\hat\sigma_{m}^2)$ belongs to a Donsker class with probability tending to 1 \citep[Example 19.7,][]{van2000asymptotic}. Lemma 19.24 in \citet{van2000asymptotic} implies that the last term in the above display is $o_P(n^{-1/2})$. Thus,
\begin{align*}
    \hat\sigma_{m}^2 -\sigma_{m}^2 &= (P_n-P)U_{DIM}(\alpha^*,\beta^*,\sigma_{m}^2) + \left.\left(\frac{\partial}{\partial \alpha}PU_{DIM}\right)\right|_{(\alpha^*,\beta^*)}(\hat\alpha-\alpha^*)\\
    &\ \ + \left.\left(\frac{\partial}{\partial \beta}PU_{DIM}\right)\right|_{(\alpha^*,\beta^*)}(\hat\beta-\beta^*) + o_P(n^{-1/2}).
\end{align*}
The partial derivatives are given by
\begin{align*}
    \frac{\partial}{\partial \alpha_k}U_{DIM}|_{(\alpha^*,\beta^*)}(y,w) &= -2b_k\left(\sum_{l=1}^{K-1}b_k[I\{y \leq l\}-\theta^*(l,w)]\right)\theta^*(k,w)\{1-\theta^*(k,w)\}, \\
    \frac{\partial}{\partial \beta}U_{DIM}|_{(\alpha^*,\beta^*)}(y,w) &= -2\left(\sum_{k=1}^{K-1}b_k[I\{y \leq k\}-\theta^*(k,w)]\right)\left[\sum_{k=1}^{K-1}b_k w \theta^*(k,w)\{1-\theta^*(k,w)\}\right].
\end{align*}

Combining all the results above, we see that $\hat\sigma^2_m$ is an asymptotically linear estimator of $\sigma_m^2$ with influence function $\textnormal{IF}_m$, where
\begin{align*}
    \textnormal{IF}_m(y,w) &= U_{DIM}(\alpha^*,\beta^*,\sigma_{m}^2)(y,w) + \left\{\frac{\partial PU_{DIM}}{\partial(\alpha,\beta)}|_{(\alpha^*,\beta^*)}\right\}^\top  \textnormal{IF}_{ab}(y,w).
\end{align*}
Lemma~\ref{gradientEE} implies that $\textnormal{IF}_m$ is the canonical gradient of $\sigma^2_m$ in any locally nonparametric model, and thus that $\hat\sigma_m^2$ is regular \citep[Proposition 2.3.i,][]{pfanzagl1990estimation}.

We now consider the estimation of $\sigma_a^2$. The proposed estimator is a one-step estimator based on the canonical gradient, and the proof is very similar to that of Theorem \ref{ATE}. Applying Lemma \ref{gradientcondvar} with $h(x,y) = u(y)$, we obtain the canonical gradient $\textnormal{IF}_a(y,w) = \{u(y)-r(w)\}^2-\sigma^2_a$. 
Let $\hat P$ be a distribution of $(Y,W)$ such that the conditional mean of $u(Y)$ given $W$ is $\hat r(w)$, and that the marginal distribution of $W$ is the empirical distribution of $W$. We have that
\begin{align*}
    \hat\sigma_a^2 - \sigma_a^2 &= \sigma_a^2(\hat P) + P_n \textnormal{IF}_{a}(\hat P)-\sigma_a^2 \\
    &= (P_n-P)\textnormal{IF}_{a}(P) + (P_n-P)\{\textnormal{IF}_{a}(\hat P)-\textnormal{IF}_{a}(P)\} + R(\hat P,P),
\end{align*}
where
\begin{align*}
    R(\hat P,P) &= \sigma^2_a(\hat P)-\sigma^2_a(P)+P\{\textnormal{IF}_{a}(\hat P)\} \\
    &= \int \{\hat r(w)-r(w)\}^2dP(w) = o_P(n^{-1/2}),
\end{align*}
where the latter equality holds by assumption. 
As $\hat r(w)$ is uniformly bounded and belongs to a Donsker class $\mathcal{F}$ with probability tending to 1 and $Y$ has bounded support, Theorem 2.10.6 in \citet{van1996weak} implies that $\textnormal{IF}_a(\hat P) = \{u(y)-\hat r(w)\}^2-\hat\sigma_a^2$ belongs to a transformed Donsker class $\tilde{\mathcal{F}}$ that is also bounded. The difference between $\textnormal{IF}_a(\hat P)$ and $\textnormal{IF}_a(P)$ is given by
\begin{equation*}
    \{u(y)-\hat r(w)\}^2 - \hat\sigma^2_a - \{u(y)-r(w)\}^2+\sigma_a^2 = \{2u(y)-\hat r(w)-r(w)\}\{r(w)-\hat r(w)\}-(\hat\sigma_a^2-\sigma_a^2).
\end{equation*}
Therefore,
\begin{align*}
    \|\textnormal{IF}_a(\hat P) - \textnormal{IF}_a(P)\|_{L^2(P)} &\leq \left[\int\{2u(y)-\hat r(w)-r(w)\}^2\{r(w)-\hat r(w)\}^2 dP\right]^{1/2} + |\hat\sigma_a^2-\sigma_a^2| \\
    & \leq M \|r-\hat r\|_{L^2(P)} + |\hat\sigma_a^2-\sigma_a^2|,
\end{align*}
for some $M$ as the support of $Y$ and $\hat r$ are bounded. The first term in the last line is $o_P(1)$ by the assumption that $\|\hat r-r\|_{L^2(P)} = o_P(n^{-1/4})$. Thus, $ \|\textnormal{IF}_a(\hat P) - \textnormal{IF}_a(P)\|_{L^2(P)} = o_P(1)$ provided that $\hat\sigma_a^2$ is a consistent estimator of $\sigma_a^2$. Suppose for now that this is indeed the case, then Lemma 19.24 in \citet{van2000asymptotic} implies that $(P_n-P)\{\textnormal{IF}_{a}(\hat P)-\textnormal{IF}_{a}(P)\}$ is $o_P(n^{-1/2})$. Hence, if we show that $\hat\sigma_a^2$ is a consistent estimator of $\sigma_a^2$, then we will have shown that $\hat\sigma_a^2$ is regular and asymptotically linear with influence function $\textnormal{IF}_{a}$. 

We now show that $\hat\sigma_a^2$ is indeed a consistent estimator of $\sigma_a^2$. Note that
\begin{align*}
    \hat\sigma_a^2 - \sigma_a^2 &= P_n\left[\{u(Y)-\hat r(W)\}^2-\sigma_a^2\right] \\
    &= P_n\left[\{u(Y)-r(W)\}^2-\sigma_a^2\right]+P_n\left[\{r(W)-\hat r(W)\}^2+2\{u(Y)-r(W)\}\{r(W)-\hat r(W)\}\right] \\
    &\leq P_n\left[\{u(Y)-r(W)\}^2-\sigma_a^2\right] + M_1 P_n|r(W)-\hat r(W)|,
\end{align*}
for some constant $M_1$, where we used the fact that both $Y$ and $W$ have bounded support and $\hat r$ is uniformly bounded. The first term in the last line is $o_P(1)$ by the weak law of large numbers. As for the second term, we see that it is equal to
\begin{equation*}
    P_n|r(W)-\hat r(W)| = (P_n-P)|\hat r(W)- r(W)| + P|\hat r(W)- r(W)|.
\end{equation*}
Lemma 19.24 in \citet{van2000asymptotic} implies that $(P_n-P)|\hat r- r| =o_P(1)$ as $|\hat r - r|$ lies in a Donsker class with probability tending to 1 and $\|\hat r - r\|_{L^2(P)} = o_P(1)$. In addition, $P|\hat r- r| \leq \|\hat r-r\|_{L^2(P)} = o_P(1)$. This establishes the consistency of $\hat\sigma_a^2$.

Finally, we estimate $\sigma_u^2$ with the sample variance of $u(Y)$, which is regular and asymptotically linear with influence function $\textnormal{IF}_u = \{u(y)-E[u(Y)]\}^2-\sigma_u^2$. 

Theorem \ref{DIM} then follows by applying the delta method. Specifically, the influence function of $\hat\phi_a$ is given by $(\textnormal{IF}_a -\phi_a \textnormal{IF}_u)/\sigma^2_u$, and, similarly, the influence function of $\hat\phi_m$ is $(\textnormal{IF}_m -\phi_m \textnormal{IF}_u)/\sigma^2_u$.
\end{proof}

\begin{proof}[Proof of Lemma \ref{reMW}]
Recall that the unadjusted estimator is $$\hat\psi_u = \left\{\sum_{i=1}^{n^t}\sum_{j=1}^{n^t}A_i(1-A_j)h(Y_i^t,Y_j^t)\right\}\bigg/ \left\{(\sum_{i=1}^{n^t}A_i)(n^t-\sum_{j=1}^{n^t}A_j)\right\}.$$
First we introduce some notation. Let $\nu_n$ be the empirical distribution of $X^t$ in the future trial data. Define the functions $\eta_a(\cdot)$, $a\in\{0,1\}$, analogously to $\eta(\cdot)$ but within each treatment arm as $\eta_a(k) := P_a(Y<k) + P_a(Y=k)/2$ for $a\in\{0,1\}$, and define $\hat\psi_{u1} := \sum_{i=1}^{n^t}\sum_{j=1}^{n^t}A_i(1-A_j)h(Y_i^t,Y_j^t)/(n^t)^2$. We note that $\hat\psi_{u1}$ is a V-statistic with symmetric kernel $\tilde h(X_1^t,X_2^t) := \{A_1(1-A_2)h(Y_1^t,Y_2^t)+A_2(1-A_1)h(Y_2^t,Y_1^t)\}/2$. For a generic distribution $Q$ of $X^t$, we define $Q^2\tilde h := \int\int\tilde h(x_i^t,x_j^t) dQ(x_i^t)dQ(x_j^t)$. With this notation, $\hat\psi_{u1}=\nu_n^2\tilde h$. Note that
\begin{align*}
    \hat\psi_{u1} -\pi_1\pi_0\psi &= \nu_n^2\tilde h - \nu^2\tilde h = 2(\nu_n-\nu)(\nu\tilde h)+(\nu_n-\nu)^2 \tilde h.
\end{align*}
To establish the asymptotic linearity of $\hat\psi_{u1}$, we first show that $(\nu_n-\nu)^2 \tilde h$ is $o_P(n^{-1/2})$. To start, define a class of functions $\tilde{\mathcal{H}} := \{x \mapsto \tilde h(x,x_2): x_2 \in \mathcal{X}\}$ where $\mathcal{X}$ is the support of $X^t$. Each function in $\tilde{\mathcal{H}}$ is a weighted sum of 4 binary terms, with weights being either 1 or $1/2$. Each term is indexed by $a_2$ and $y_2^t$, and can be computed with 3 arithmetic operations and 1 comparison. Theorem 8.4 in \citet{anthony2009neural} implies that each binary term belongs to a VC-class with VC dimension at most 64, and Lemma 19.15 in \citet{van2000asymptotic} in turn implies that this class is Donsker. Theorem 2.10.6 in \citet{van1996weak} then implies that $\tilde{\mathcal{H}}$ is a Donsker class (hence also Glivenko-Cantelli). Define $\tilde h_{1n}: x \mapsto \int \tilde h(x_1,x)d(\nu_n-\nu)(x_1)$, then $\nu \tilde h_{1n}^2 \leq \{\sup_{x \in \mathcal{X}}|\tilde h_{1n}(x)|\}^2 = o_P(1)$. Next, note that $(\nu_n-\nu)^2 \tilde h = (\nu_n-\nu) \tilde h_{1n}$. The function $x \mapsto \int \tilde h(x,x_2)d\nu_n (x_2)$ is in the closure of the convex hull of the Donsker class $\tilde{\mathcal{H}}$, and $x \mapsto \int \tilde h(x,x_2)d\nu (x_2)$ is a fixed function. This together with the symmetry of $\tilde h$ implies that $\tilde h_{1n}$ lies in a Donsker class. Lemma 19.24 then implies that $(\nu_n-\nu)^2 \tilde h = (\nu_n-\nu) \tilde h_{1n} = o_P(n^{-1/2})$. 

Next we note that $\nu \tilde h(x^t,\cdot): x^t \mapsto [a\pi_0\eta_0(y^t) + (1-a)\pi_1\{1-\eta_1(y^t)\}]/2$. Combining this with the previous results, we have that  
\begin{equation*}
    \hat\psi_{u1} -\pi_1\pi_0\psi = \frac{1}{n^t}\sum_{i=1}^{n^t}\left[A_i\pi_0\eta_0(Y_i^t) 
    +(1-A_i)\pi_1\{1-\eta_1(Y_i^t)\}-2\pi_1\pi_0\psi\right] + o_P(n^{-1/2}).
\end{equation*}
Applying the delta method, we see that $\hat\psi_u$ is asymptotically linear with influence function
\begin{equation*}
    D_u(y^t,a^t,w^t) = \frac{a^t\{\eta_0(y^t)-\psi\}}{\pi_1}+\frac{(1-a^t)\{1-\eta_1(y^t)-\psi\}}{\pi_0},
\end{equation*}
which, under the sharp null, simplifies to $(\frac{a^t}{\pi_1}-\frac{1-a^t}{\pi_0})\{\eta(y^t)-\frac{1}{2}\}$. The variance of $\eta(Y^t)$ can be calculated exactly, and equals to $(1-\sum_{k=1}^{K}p_k^3)/12$.

Next we look at the fully adjusted estimator. The efficient influence function of $\psi$ was given in, for example, \citet{mao2018causal},
\begin{align*}
    D^*(y^t,a^t,w^t) &= 1-2\psi + \frac{a^t}{\pi_1}\left\{\eta_0(y^t)-E\left[\eta_0(Y^t)|A=1,W^t=w^t\right]\right\}\\
    &-\frac{1-a^t}{\pi_0}\left\{\eta_1(y^t)-E\left[\eta_1(Y^t)|A=0,W^t=w^t\right]\right\} \\
    &+ E\left[\eta_0(Y^t)|A=1,W^t=w^t\right]-E\left[\eta_1(Y^t)|A=0,W^t=w^t\right],
\end{align*}
which, under the sharp null, simplifies to $\left(\frac{a^t}{\pi}-\frac{1-a^t}{\pi_0}\right)\left\{\eta(y^t)-E\left[\eta(Y^t)|W^t=w^t\right]\right\}$ with variance $\sigma_a^2/(\pi_1\pi_0)$, where we use the independence between $A$ and $(Y^t,W^t)$ under the sharp null.

Finally we consider the estimator based on proportional odds model. Let $\hat{P}_a$ be a distribution with CDF $\hat F_a(k)$ for $a=0,1$. Recall that $\hat\psi_{m} = \int\int h(x,y)d\hat{P}_1(x)d\hat{P}_0(y)$. Since $\hat F$'s are asymptotically linear, we have
\begin{align*}
    \hat\psi_{m} - \psi &= \int\int h(x,y)dP_1(x)d(\hat{P}_0-P_0)(y)+\int\int h(x,y) dP_0(y)d(\hat{P}_1-P_1)(x) + o_P(n^{-1/2}) \\
    &= \int\{1-\eta_1(y)\}d(\hat{P}_0-P_0)(y)+\int \eta_0(x)d(\hat{P}_1-P_1)(x) + o_P(n^{-1/2}).
\end{align*}
The first term can be alternatively written as $\sum_{k=1}^{K-1}-b_1(k)\{\hat{F}_0(k)-F_0(k)\}$ where $b_1(k) = \eta_1(k)-\eta_1(k+1)$, similarly for the second term. Then Lemma \ref{ordinalworking} implies that $\hat\psi_{m}$ is asymptotically linear with influence function
\begin{equation*}
    D_{m}(y^t,a^t,w^t)= \sum_{k=1}^{K-1}\left\{-b_1(k)\textnormal{IF}_{F_0(k)}(y^t,a^t,w^t)
    + b_0(k)\textnormal{IF}_{F_1(k)}(y^t,a^t,w^t)\right\},
\end{equation*}
which, under the sharp null, simplifies to
\begin{equation*}
    \left(\frac{a^t}{\pi_1}-\frac{1-a^t}{\pi_0}\right)\sum_{k=1}^{K-1}\{\eta(k)-\eta(k+1)\}\left[I\{y^t\leq k\}-\theta^*(k,w^t)\right].
\end{equation*}
The variance of $D_m$ under the sharp null is $\sigma^2_m/(\pi_1\pi_0)$, where we used the independence between $A$ and $(Y^t,W^t)$ under the sharp null. Lemma \ref{reMW} follows by the definition of relative efficiency.
\end{proof}

\begin{proof}[Proof of Theorem \ref{MW}]
We first study $\hat\sigma^2_{m}$ based on estimating equations. The proof is similar to that of Theorem \ref{DIM} except that we need to estimate the marginal distribution of $Y$ in addition. We use the sample proportion $\hat p_k$, which is asymptotically linear with influence function $\textnormal{IF}_{p_k}(y,w) = I\{y=k\}-p_k$.

Consider the following estimating equation
\begin{equation*}
    U_{MW}(\alpha,\beta,\boldsymbol{\tilde p},\sigma^2)(y,w) = \left(\sum_{k=1}^{K-1}-\frac{1}{2}(\tilde p_k+\tilde p_{k+1})[I\{y\leq k\}-\theta_{\alpha,\beta}(k,w)]\right)^2-\sigma^2.
\end{equation*}
By definition, $\sigma^2_{m}$ is the unique solution to the equation $PU_{MW}(\alpha^*,\beta^*,\boldsymbol{p},\sigma^2)=0$, and $\hat\sigma^2_{m}$ solves its empirical counterpart. Hence, we have
\begin{align*}
    0 &=  P_nU_{MW}(\hat\alpha,\hat\beta,\boldsymbol{\hat p},\hat\sigma_{m}^2)-PU_{MW}(\alpha^*,\beta^*,\boldsymbol{p},\sigma_{m}^2) \\
    &= (P_n-P)U_{MW}(\alpha^*,\beta^*,\boldsymbol{p},\sigma_{m}^2) + P\left\{U_{MW}(\hat\alpha,\hat\beta,\boldsymbol{\hat p},\hat\sigma_{m}^2)-U_{MW}(\alpha^*,\beta^*,\boldsymbol{p},\sigma_{m}^2)\right\} \\
    &\ \ + (P_n-P)\left\{U_{MW}(\hat\alpha,\hat\beta,\boldsymbol{\hat p},\hat\sigma_{m}^2)-U_{MW}(\alpha^*,\beta^*,\boldsymbol{p},\sigma_{m}^2)\right\}.
\end{align*}
Consistency of $(\hat\alpha,\hat\beta)$ and $\boldsymbol{\hat p}$ implies that $\hat\sigma^2_m$ is also consistent. This together with the bounded support of $W$ implies that $\|U_{MW}(\hat\alpha,\hat\beta,\boldsymbol{\hat p},\hat\sigma_{m}^2)-U_{MW}(\alpha^*,\beta^*,\boldsymbol{p},\sigma_{m}^2)\|_{L^2(P)} = o_P(1)$. Furthermore, we can show that $U_{MW}$ is a Lipschitz function of $(\alpha,\beta,\boldsymbol{\tilde p},\sigma^2)$ in a neighborhood of $(\alpha^*,\beta^*,\boldsymbol{p},\sigma_{m}^2)$, and thus $U_{MW}(\hat\alpha,\hat\beta,\boldsymbol{\hat p},\hat\sigma_{m}^2)$ belongs to a Donsker class with probability tending to 1 \citep[Example 19.7,][]{van2000asymptotic}. Lemma 19.24 in \citet{van2000asymptotic} implies that the last term in the above display is $o_P(n^{-1/2})$. 

Applying a Taylor expansion, we have that
\begin{align*}
    \hat\sigma^2_{m} - \sigma^2_{m} &= (P_n-P)U_{MW}(\alpha^*,\beta^*,\boldsymbol{p},\sigma^2_{m})+\left(\frac{\partial}{\partial\alpha}PU_{MW}\right)\rvert_{(\alpha^*,\beta^*,\boldsymbol{p})}(\hat\alpha-\alpha^*) \\
    &+ \left(\frac{\partial}{\partial \beta}PU_{MW}\right)\rvert_{(\alpha^*,\beta^*,\boldsymbol{p})}(\hat\beta-\beta^*) + \left(\frac{\partial}{\partial \boldsymbol{\tilde p}}PU_{MW}\right)\rvert_{(\alpha^*,\beta^*,\boldsymbol{p})}^\top (\boldsymbol{\hat p}-\boldsymbol{p}) + o_P(n^{-1/2}).
\end{align*}
Note that
\begin{align*}
    \frac{\partial}{\partial \alpha_k}U_{MW}|_{(\alpha^*,\beta^*,\boldsymbol{p})}(y,w) &= -\frac{1}{2}\left\{\sum_{l=1}^{K-1}(p_l+p_{l+1})\left[I\{y\leq l\}-\theta^*(l,w)\right]\right\} \\
    &\quad\quad\quad\times(p_k+p_{k+1})\theta^*(k,w)\left\{1-\theta^*(k,w)\right\} \\
    \frac{\partial}{\partial \beta}U_{MW}|_{(\alpha^*,\beta^*,\boldsymbol{p})}(y,w) &= -\frac{1}{2}\left\{\sum_{l=1}^{K-1}(p_l+p_{l+1})\left[I\{y\leq l\}-\theta^*(l,w)\right]\right\} \\
    &\quad\quad\quad\times\left[\sum_{k=1}^{K-1}(p_k+p_{k+1})W\theta^*(k,w)\left\{1-\theta^*(k,w)\right\}\right] \\
    \frac{\partial}{\partial \tilde p_k}U_{MW}|_{(\alpha^*,\beta^*,\boldsymbol{p})}(y,w) &= \frac{1}{2}\left\{\sum_{l=1}^{K-1}(p_l+p_{l+1})\left[I\{y\leq l\}-\theta^*(l,w)\right]\right\} \\
    &\quad\quad\times\left[I\{y\leq k\}-\theta^*(k,w)+I\{Y\leq k-1\}-\theta^*(k-1,w)\right].
\end{align*}
Hence, $\hat\sigma_m^2$ satisfies $\hat\sigma_{m}^2 - \sigma_{m}^2 = (P_n-P)\textnormal{IF}_{m} + o_P(n^{-1/2})$, where
\begin{align*}
    \textnormal{IF}_m(y,w) &= U_{MW}(\alpha^*,\beta^*,\boldsymbol{p},\sigma_{m}^2)(y,w)+\left\{\frac{\partial PU_{MW}}{\partial (\alpha,\beta)}|_{(\alpha^*,\beta^*,\boldsymbol{p})}\right\}^\top  \textnormal{IF}_{ab}(y,w) \nonumber \\
    &\quad+\sum_{k=1}^{K}\frac{\partial PU_{MW}}{\partial \tilde p_k}|_{(\alpha^*,\beta^*,\boldsymbol{p})}\textnormal{IF}_{p_k}(y,w).
\end{align*}
We note that using similar arguments as Lemma \ref{gradientEE}, we can show that $\textnormal{IF}_m$ is the canonical gradient of $\sigma^2_m$, and therefore $\hat\sigma^2_m$ is a regular estimator \citep[Proposition 2.3.i,][]{pfanzagl1990estimation}.

We estimate the unadjusted variance by plugging in $\hat p_k$. By the delta method, $\hat\sigma^2_u$ is asymptotically linear with influence function $\textnormal{IF}_u = -\sum_{k=1}^Kp_k^2 \textnormal{IF}_{p_k}/4$.

Next we establish the asymptotic linearity of $\hat\sigma^2_a$. Applying Lemma \ref{gradientcondvar} with $h(\tilde y,y) = I\{\tilde y < y\} + \frac{1}{2}\{\tilde y=y\}$, we obtain the EIF of $\sigma^2_a$ as follows
\begin{equation*}
    \textnormal{IF}_a(y,w) = \left\{\eta(y)-r(w)\right\}^2 + 2\int \left\{\eta(\tilde y)-r(\tilde w)\right\}h(y,\tilde y)dP(\tilde y,\tilde w) - 3\sigma^2_a.
\end{equation*}
We show through direct linearization that $\hat\sigma^2_a$ is indeed asymptotically linear with influence function $\textnormal{IF}_a$. To start, we note that $\hat\sigma_a^2 = P_n(\hat\eta -\hat r)^2$ and $\sigma_a^2 = P(\eta-r)^2$. Thus,
\begin{align*}
    \hat\sigma^2_a - \sigma^2_a &= P_n(\hat\eta -\hat r)^2 - P(\eta-r)^2 \\
    &= \underbrace{P_n(\hat\eta -\hat r)^2-P_n(\eta -\hat r)^2}_{\text{term 1}} + \underbrace{P_n(\eta - \hat r)^2-P(\eta -r)^2}_{\text{term 2}}.
\end{align*}
We analyze term 2 first. 
\begin{align}
    &P_n(\eta - \hat r)^2-P(\eta -r)^2  \label{eq:etarterm2}\\
    &\quad= (P_n-P)(\eta-r)^2+(P_n-P)\left\{(\eta-\hat r)^2-(\eta-r)^2\right\} + P\left\{(\eta-\hat r)^2-(\eta -r)^2\right\} \nonumber
\end{align}
The first term on the right-hand side is already linear and contributes the first term to the influence function. By the assumption that $\|\hat r-r\|_{L^2(P_W)} = o_P(n^{-1/4})$, the third term is negligible because
\begin{align*}
    \left|P\left\{(\eta-\hat r)^2-(\eta -r)^2\right\}\right| &=  \left|P\left\{2(\eta-r)(r-\hat r)+(r-\hat r)^2\right\}\right| = P(r-\hat r)^2 = o_P(n^{-1/2}).
\end{align*}
We now turn to the second term on the right-hand side of \eqref{eq:etarterm2}. By Theorem 2.10.6 in \citet{van1996weak}, the fact that $\hat r$ belongs to a bounded $P$-Donsker class with probability tending to 1 implies that $(\eta-\hat r)^2-(\eta-r)^2$ also belongs to a $P$-Donsker class with probability tending to 1, as $\eta$ and $r$ are both fixed and bounded functions. Furthermore,
\begin{align*}
    \|(\eta-\hat r)^2-(\eta-r)^2\|_{L^2(P)} &= \|(2\eta-\hat r-r)(\hat r-r)\|_{L^2(P)} \leq M\|\hat r-r\|_{L^2(P)} = o_P(1),
\end{align*}
for some constant $M$. Lemma 19.24 in \citet{van2000asymptotic} implies that the second term is also $o_P(n^{-1/2})$.

We now analyze term 1. Note that
\begin{align*}
    \textnormal{term 1} &= (P_n-P)\left\{(\hat\eta -\hat r)^2-(\eta -\hat r)^2\right\}+P\left\{(\hat\eta -\hat r)^2-(\eta -\hat r)^2\right\} \\
    &= (P_n-P)\left\{(\hat\eta -\hat r)^2-(\eta -\hat r)^2\right\} + P\left\{(\hat\eta+\eta -2\hat r)(\hat\eta -\eta)\right\} \\
    &= P\left\{(2\eta -2r)(\hat\eta -\eta)\right\}+P\left\{(\hat\eta-\eta +2r-2\hat r)(\hat\eta -\eta)\right\}\\
    &\quad +(P_n-P)\left\{(\hat\eta -\hat r)^2-(\eta -\hat r)^2\right\}.
\end{align*}
We note that $\hat\eta$ belongs to a bounded $P$-Donsker class, as the empirical distribution function lies in the closure of the convex hull of a $P$-Donsker class. Theorem 2.10.6 in \citet{van1996weak} again implies that $(\hat\eta -\hat r)^2-(\eta -\hat r)^2$ belongs to a $P$-Donsker class with probability tending to 1. Moreover,
\begin{align*}
    \|(\hat\eta -\hat r)^2-(\eta -\hat r)^2\|_{L^2(P)} &= \|(\hat\eta -\eta)(\hat\eta+\eta -2\hat r)\|_{L^2(P)} \leq M\|\hat\eta -\eta\|_{L^2(P)} = o_P(1).
\end{align*}
Thus the third term is $o_P(n^{-1/2})$. The second term is also $o_P(n^{-1/2})$ by our assumption on the convergence rate of $\hat r$ and the convergence of $\hat \eta$. Finally, the first term is the linear term that contributes to the rest of $\textnormal{IF}_a$. To see this, we write it in the integral form.
\begin{align*}
    &\int 2\{\eta(\tilde y)-r(\tilde w)\}\{\hat\eta(\tilde y)-\eta(\tilde y)\} dP(\tilde y,\tilde w) \\
    &= \int 2\{\eta(\tilde y)-r(\tilde w)\}\left\{\int h(y,\tilde y)d(P_n-P)(y)\right\} dP(\tilde y,\tilde w) \\
    &= 2(P_n-P)\int\{\eta(\tilde y)-r(\tilde w)\}h(\cdot,\tilde y) dP(\tilde y,\tilde w).
\end{align*}
Note that Lemma~\ref{gradientcondvar} implies that $P\int\{\eta(\tilde y)-r(\tilde w)\}h(\cdot,\tilde y) dP(\tilde y,\tilde w) = \sigma_a^2$.

Theorem \ref{MW} now follows by applying the delta method. Specifically, the influence function of $\hat\phi_a$ is given by $(\textnormal{IF}_a -\phi_a \textnormal{IF}_u)/\sigma^2_u$, and similarly the influence function of $\hat\phi_m$ is $(\textnormal{IF}_m -\phi_m \textnormal{IF}_u)/\sigma^2_u$. These estimators are RAL in any locally nonparametric model.
\end{proof}

\begin{proof}[Proof of Lemma \ref{reLOR}]
Recall that the unadjusted estimator is given by
$$\hat\psi_u = \frac{1}{K-1}\sum_{k=1}^{K-1} \left\{\text{logit }\tilde F_1(k)-\text{logit }\tilde F_0(k)\right\}, \ \ \tilde F_a(k) = \frac{\sum_{i=1}^{n^t}I\{Y_i^t \leq k,A_i=a\}}{\sum_{i=1}^{n^t}I\{A_i=a\}}.$$
Applying the delta method shows that its influence function is given by
\begin{equation*}
    D_u(y^t,a^t,w^t) = \frac{1}{K-1}\sum_{k=1}^{K-1}\left(\frac{a^t\left[I\{y^t\leq k\}-F_1(k)\right]}{\pi_1F_1(k)\{1-F_1(k)\}}-\frac{(1-a^t)\left[I\{y^t \leq k\}-F_0(k)\right]}{\pi_0F_0(k)\{1-F_0(k)\}}\right),
\end{equation*}
which, under the sharp null, simplifies to
\begin{equation*}
    \frac{1}{K-1}\sum_{k=1}^{K-1} \left(\frac{a^t}{\pi_1}-\frac{1-a^t}{\pi_0}\right)\left[\frac{I\{y^t\leq k\}-F(k)}{F(k)\{1-F(k)\}}\right].
\end{equation*}
Due to the independence of $A$ and $(Y^t,W^t)$ under the sharp null, the variance of $D_u(Y^t,A,W^t)$ under the sharp null is $\sigma_u^2/(\pi_1\pi_0)$.

The working-model-based adjusted estimator $\hat\psi_{m}$ replaces $\tilde F_a$ with $\hat F_a$. By Lemma \ref{ordinalworking} and the delta method, the influence function of $\hat\psi_{m}$ is
\begin{equation*}
    D_{m}(y^t,a^t,w^t) = \frac{1}{K-1}\sum_{k=1}^{K-1}\left[ \frac{\textnormal{IF}_{F_1(k)}(y^t,a^t,w^t)}{F_1(k)\{1-F_1(k)\}}-\frac{\textnormal{IF}_{F_0(k)}(y^t,a^t,w^t)}{F_0(k)\{1-F_0(k)\}}\right],
\end{equation*}
which under the sharp null becomes
\begin{equation*}
    D_{m}(y^t,a^t,w^t) = \left(\frac{a^t}{\pi_1}-\frac{1-a^t}{\pi_0}\right)\frac{1}{K-1}\sum_{k=1}^{K-1}\frac{I\{y^t\leq k\}-\theta^*(k,w^t)}{F(k)\{1-F(k)\}},
\end{equation*}
The variance of $D_m(Y^t,A,W^t)$ under the sharp null is $\sigma_{m}^2/(\pi_1\pi_0)$.

Finally the efficient influence function can be obtained by projecting $D_u$ onto the tangent space.
\begin{multline*}
    D^*(y^t,a^t,w^t) = \frac{1}{K-1}\sum_{k=1}^{K-1}\Big(\frac{a^t[I\{y^t\leq k\}-\theta_1(k,w^t)]/\pi_1+\theta_1(k,w^t)-F_1(k)}{F_1(k)\{1-F_1(k)\}} \\
    -\frac{(1-a^t)[I\{y^t\leq k\}-\theta_0(k,w^t)]/\pi_0+\theta_0(k,w^t)-F_0(k)}{F_0(k)\{1-F_0(k)\}}\Big).
\end{multline*}
Under the sharp null, it simplifies to
\begin{equation*}
    D^*(y^t,a^t,w^t) = \frac{1}{K-1}\sum_{k=1}^{K-1}\left[\left(\frac{a^t}{\pi}-\frac{1-a^t}{\pi_0}\right)\frac{I\{y^t \leq k\}-\theta(k,w^t)}{F(k)\{1-F(k)\}}\right],
\end{equation*}
The variance of $D^*(Y^t,A,W^t)$ under the sharp null is $\sigma_a^2/(\pi_1\pi_0)$. 
\end{proof}

\begin{proof}[Proof of Theorem \ref{LOR}]
We first consider estimating $\sigma_u^2$. Define the following estimating equation
\begin{equation}
    U_{LOR,u}(\boldsymbol{\check F},\sigma^2)(y,w) = \left[\frac{1}{K-1}\sum_{k=1}^{K-1}\frac{I\{y\leq k\}-\check F(k)}{\check F(k)\{1-\check F(k)\}}\right]^2-\sigma^2.
\end{equation}
Then, $\sigma_u^2$ is the unique solution in $\sigma^2$ to the equation $PU_{LOR,u}(\boldsymbol{F},\sigma^2) = 0$; and $\hat\sigma_u^2$ solves $P_nU_{LOR,u}(\boldsymbol{\tilde F},\sigma^2)=0$, with $\boldsymbol{\tilde F}$ being the CDF of the empirical distribution of $Y$. We can apply similar estimating equation arguments as in the proof of Theorems~\ref{DIM} and \ref{MW} to show that $\hat\sigma_u^2 - \sigma_u^2 =(P_n-P)\textnormal{IF}_u+o_P(n^{-1/2})$, where
\begin{equation*}
    \textnormal{IF}_u(y,w) = U_{LOR,u}(\boldsymbol{F},\sigma_u^2)(y,w)+\sum_{k=1}^{K}\frac{\partial PU_{LOR,u}}{\partial \check F(k)}\Big\rvert_{\boldsymbol{F}}\left[I\{y\leq k\}-F(k)\right].
\end{equation*}
This implies that $\hat\sigma_u^2$ is asymptotically linear with influence function $\textnormal{IF}_u$. In particular,
\begin{equation*}
    \frac{\partial U_{LOR,u}}{\partial \check F(k)}\Big\rvert_{\boldsymbol{F}}(y,w) = -\frac{2}{(K-1)^2}\left[\sum_{l=1}^{K-1}\frac{I\{y\leq l\}-F(l)}{F(l)\{1-F(l)\}}\right]\left[\frac{I\{y\leq k\}-F(k)}{F(k)\{1-F(k)\}}\right]^2.
\end{equation*}

Next we consider the estimation of $\sigma_{m}^2$. Define the following estimating equation
\begin{equation*}
    U_{LOR}(\alpha,\beta,\boldsymbol{\check F},\sigma^2)(y,w) = \left[\frac{1}{K-1}\sum_{k=1}^{K-1}\frac{I\{y\leq k\}-\theta_{\alpha,\beta}(k,w)}{\check F(k)\{1-\check F(k)\}}\right]^2 - \sigma^2.
\end{equation*}
Then, $\sigma_{m}^2$ is the solution in $\sigma^2$ to $PU_{LOR}(\alpha^*,\beta^*,\boldsymbol{F},\sigma^2)=0$, while $\hat\sigma_{m}^2$ is the solution to $P_n U_{LOR}(\hat\alpha,\hat\beta,\boldsymbol{\tilde F},\sigma^2)=0$. We can again apply estimating equation arguments as in the proof of Theorems~\ref{DIM} and \ref{MW} to show that
\begin{equation*}
    \hat\sigma^2_{m} - \sigma^2_{m} = (P_n-P)\textnormal{IF}_m(y,w) + o_P(n^{-1/2}),
\end{equation*}
where
\begin{multline*}
    \textnormal{IF}_m(y,w) = U_{LOR}(\alpha^*,\beta^*,\boldsymbol{F},\sigma_{m}^2)(y,w) + \left\{\frac{\partial PU_{LOR}}{\partial (\alpha,\beta)}\Big\rvert_{(\alpha^*,\beta^*,\boldsymbol{F})}\right\}^\top  \textnormal{IF}_{ab}(y,w) \\
    +\sum_{k=1}^{K}\frac{\partial PU_{LOR}}{\partial \check F(k)}\Big\rvert_{(\alpha^*,\beta^*,\boldsymbol{F})}\left[I\{y\leq k\}- F(k)\right].
\end{multline*}
This implies that $\hat\sigma^2_m$ is asymptotically linear with influence function $\textnormal{IF}_m$. Here,
\begin{align*}
    \frac{\partial U_{LOR}}{\partial \alpha_k}\Big\rvert_{(\alpha^*,\beta^*,\boldsymbol{F})}(y,w) &= -\frac{2}{(K-1)^2}\left[\sum_{l=1}^{K-1}\frac{I\{y\leq l\}-\theta^*(l,w)}{ F(l)\{1- F(l)\}}\right]\frac{\theta^*(k,w)\{1-\theta^*(k,w)\}}{ F(k)\{1- F(k)\}}, \\
    \frac{\partial U_{LOR}}{\partial \beta}\Big\rvert_{(\alpha^*,\beta^*,\boldsymbol{F})}(y,w) &= -\frac{2}{(K-1)^2}\left[\sum_{l=1}^{K-1}\frac{I\{y\leq l\}-\theta^*(l,w)}{ F(l)\{1- F(l)\}}\right]\left[\sum_{l=1}^{K}\frac{w\theta^*(l,w)\{1-\theta^*(l,w)\}}{ F(l)\{1- F(l)\}}\right], \\
    \frac{\partial U_{LOR}}{\partial \check F(k)}\Big\rvert_{(\alpha^*,\beta^*,\boldsymbol{F})}(y,w) &= -\frac{2}{(K-1)^2}\left[\sum_{l=1}^{K-1}\frac{I\{y\leq l\}-\theta^*(l,w)}{ F(l)\{1- F(l)\}}\right]\frac{[I\{y\leq k\}- F(k)]\{1-2 F(k)\}}{ F^2(k)\{1- F(k)\}^2}.
\end{align*}

Finally we consider the estimation of $\sigma_a^2$. We define $\sigma_{kl} := E_P\left[\text{cov}(I\{Y\leq k\},I\{Y\leq l\}|W)\right]$. Then $\sigma^2_a$ can be equivalently written as
\begin{equation*}
    \sigma_a^2(P) = \frac{1}{(K-1)^2}\sum_{k=1}^{K-1}\sum_{l=1}^{K-1}\frac{\sigma_{kl}}{ F(k)F(l)\{1- F(k)\}\{1-F(l)\}}.
\end{equation*}
We estimate $ F(k)$ with $\tilde F(k)$, which is an asymptotically linear estimator. By Lemma \ref{gradientcondcov}, the EIF of $\sigma_{kl}$ is given by
\begin{equation*}
    D_{kl}(y,w) = \left[I\{y\leq k\}-\theta(k,w)\right]\left[I\{y\leq l\}-\theta(l,w)\right] - \sigma_{kl}.
\end{equation*}

We show that the estimator $\hat\sigma_{kl} = n^{-1}\sum_{i=1}^{n}[I\{Y_i\leq k\}-\hat \theta(k,W_i)][I\{Y_i\leq l\}-\hat \theta(l,W_i)]$ is asymptotically linear with influence function $D_{kl}$. For the ease of notation, we use $I_k$ as shorthand for the function $y \mapsto I\{y \leq k\}$, $\theta_k$ for the function $w \mapsto \theta(k,w)$ and $\hat\theta_k$ for the function $w \mapsto \hat\theta(k,w)$, for $k \in \{1,\ldots,K-1\}$. Then we have that $\hat\sigma_{kl} = P_n\{(I_k-\hat\theta_k)(I_l-\hat\theta_l)\}$ and $\sigma_{kl} = P\{(I_k-\theta_k)(I_l-\theta_l)\}$. Therefore,
\begin{align*}
    \hat\sigma_{kl}-\sigma_{kl} &= P_n\left\{(I_k-\hat\theta_k)(I_l-\hat\theta_l)\right\}-P\left\{(I_k-\theta_k)(I_l-\theta_l)\right\} \\
    &= (P_n-P)\left\{(I_k-\theta_k)(I_l-\theta_l)\right\} + P\left\{(I_k-\hat\theta_k)(I_l-\hat\theta_l)-(I_k-\theta_k)(I_l-\theta_l)\right\}\\
    &\ \ + (P_n-P)\left\{(I_k-\hat\theta_k)(I_l-\hat\theta_l)-(I_k-\theta_k)(I_l-\theta_l)\right\}.
\end{align*}
The first term is exactly $(P_n-P)D_{kl}$. For the second term, we note that
\begin{multline*}
    \left|P\left\{(I_l-\theta_l)(\theta_k-\hat\theta_k)+(I_k-\theta_k)(\theta_l-\hat\theta_l)+(\theta_k-\hat\theta_k)(\theta_l-\hat\theta_l)\right\}\right| \\
    =\left|P\left\{(\theta_k-\hat\theta_k)(\theta_l-\hat\theta_l)\right\}\right| \leq \|\hat\theta_k-\theta_k\|_{L^2(P_W)}\|\hat\theta_l-\theta_l\|_{L^2(P_W)} = o_P(n^{-1/2}).
\end{multline*}
Now we turn to the third term. By Theorem 2.10.6 in \citet{van1996weak}, the fact that $\hat\theta_k$ and $\hat\theta_l$ belong to fixed $P$-Donsker classes with probability tending to 1 implies that $(I_k-\hat\theta_k)(I_l-\hat\theta_l)-(I_k-\theta_k)(I_l-\theta_l)$ also belongs to a fixed P-Donsker with probability tending to 1, as the support of $W$ is bounded and $I_k, I_l$, $\theta_k$ and $\theta_l$ are all fixed and bounded functions. In addition, $\|(I_k-\hat\theta_k)(I_l-\hat\theta_l)-(I_k-\theta_k)(I_l-\theta_l)\|_{L^2(P)}=o_P(1)$. Thus, Lemma 19.24 implies that the third term is $o_P(n^{-1/2})$.
Hence $\hat\sigma_{kl}$ has influence function $D_{kl}$. 

Applying the delta method, we see that $\hat\sigma^2_a$ is asymptotically linear with influence function
\begin{multline*}
    \textnormal{IF}_a(y,w) = \frac{1}{(K-1)^2}\sum_{k=1}^{K-1}\sum_{l=1}^{K-1} \Bigg(\frac{D_{kl}(y,w)}{ F(k)F(l)\{1- F(k)\}\{1-F(l)\}} \\
    - \frac{\sigma_{kl}\{1-2 F(k)\}[I\{y\leq k\}- F(k)]}{ F^2(k)\{1- F(k)\}^2F(l)\{1-F(l)\}}- \frac{\sigma_{kl}\{1-2F(l)\}[I\{y\leq l\}-F(l)]}{F^2(l)\{1-F(l)\}^2 F(k)\{1- F(k)\}}\Bigg).
\end{multline*}

Theorem \ref{LOR} then follows by applying the delta method. Specifically, the influence function of $\hat\phi_a$ is given by $(\textnormal{IF}_a -\phi_a \textnormal{IF}_u)/\sigma^2_u$, and similarly the influence function of $\hat\phi_m$ is $(\textnormal{IF}_m -\phi_m \textnormal{IF}_u)/\sigma^2_u$. Regularity of $\hat\sigma_u^2$ and $\hat\sigma_m^2$ can be established using arguments similar to Lemma \ref{gradientEE}. Finally, as we are working within a locally nonparametric model, all of these estimators are efficient.
\end{proof}

\begin{proof}[Proof of Theorem \ref{bootstrap}]
First we introduce some notations. Let $\mathbb{P}_n^*$ denote the empirical distribution of a generic first-layer bootstrap resample from the external data $X$. Let $\tilde X$ denote a generic second-layer sample from $\mathbb{P}_n^*\Pi$, where $\Pi$ is the (known) distribution of the treatment. In what follows we consider a generic estimator such that, for any distribution $\mathbb{Q}$ of $(Y,W)$, the estimator is asymptotically linear with influence function $D_{\mathbb{Q}\Pi}$ in the trial with distribution $\mathbb{Q}\Pi$. We define $\sigma^2(\mathbb{Q}) := {\rm var}_{\mathbb{Q}\Pi}[D_{\mathbb{Q}\Pi}(Y,A,W)] $, and we recall that $\tilde \sigma^2(\mathbb{P}_n^*)$ denotes $N\textnormal{var}_{\mathbb{P}_n^*}[\hat\psi(\tilde X)]$. Because $\Pi$ is fixed, we omit the dependences of $\sigma^2$ and $\tilde \sigma^2$ on this quantity.

The proof below is a modification of the proof of Theorem 23.9 in \citet{van2000asymptotic}. Let $BL_1(\mathbb{R})$ be the set of all functions $h: \mathbb{R} \xrightarrow{} [-1,1]$ that are uniformly Lipschitz. We use subscript $M$ to denote taking expectation conditionally on the external data $X_1,X_2, \ldots$. Let $(\sigma^2)^{\prime}$ be the G{\^a}teaux derivative of the functional $\sigma^2 : \mathbb{Q} \mapsto \sigma^2(\mathbb{Q}) = {\rm var}_{\mathbb{Q}\Pi}[D_{\mathbb{Q}\Pi}(Y,A,W)] $. To start, we note that 

\begin{align*}
    &\sup_{h \in BL_1(\mathbb{R})}\left|E_{M}h\left(\sqrt{n}\left\{\tilde\sigma^2(\mathbb{P}_n^*)-\sigma^2(\mathbb{P}_n)\right\}\right)-Eh\left((\sigma^2)'(\mathbb{G})\right)\right| \nonumber \\
    &\leq \underbrace{\sup_{h \in BL_1(\mathbb{R})}\left|E_{M}h\left(\sqrt{n}\left\{\sigma^2(\mathbb{P}_n^*)-\sigma^2(\mathbb{P}_n)\right\}\right)-Eh\left((\sigma^2)'(\mathbb{G})\right)\right|}_{\text{term 1}} \nonumber \\
    &\quad+ \underbrace{\sup_{h \in BL_1(\mathbb{R})}\left|E_{M}h\left(\sqrt{n}\left\{\tilde\sigma^2(\mathbb{P}_n^*)-\sigma^2(\mathbb{P}_n)\right\}\right)-E_{M}h\left(\sqrt{n}\left\{\sigma^2(\mathbb{P}_n^*)-\sigma^2(\mathbb{P}_n)\right\}\right)\right|}_{\text{term 2}}.
\end{align*}

We study term 2 first. By the Lipschitz property of $h$, term 2 is bounded by
\begin{equation*}
    \sup_{h\in BL_1(\mathbb{R})}E_M\left|\sqrt{n}\left\{\tilde\sigma^2(\mathbb{P}_n^*)-\sigma^2(\mathbb{P}_n^*)\right\}\right| = E_M\left[\sqrt{n}\left|\tilde\sigma^2(\mathbb{P}_n^*)-\sigma^2(\mathbb{P}_n^*)\right|\right].
\end{equation*}
We now show that $E_M\left[\sqrt{n}\left|\tilde\sigma^2(\mathbb{P}_n^*)-\sigma^2(\mathbb{P}_n^*)\right|\right] \overset{P}{\rightarrow} 0$. Recall the asymptotic linear expansion \eqref{ALexpansion}, 
\begin{equation*}
    \sqrt{N}\left\{\hat\psi(\tilde X)-\psi(\mathbb{P}_{n}^*\Pi)\right\}=\frac{1}{\sqrt{N}}\sum_{l=1}^{N}D_{\mathbb{P}_{n}^*\Pi}(\tilde X_{l})+ \sqrt{N}\text{Rem}.
\end{equation*}
It then follows that
\begin{equation*}
    \tilde \sigma^2(\mathbb{P}_n^*) = \sigma^2(\mathbb{P}_n^*) 
    + 2\text{cov}_{\mathbb{P}_n^*}\left(\frac{1}{\sqrt{N}}\sum_{l=1}^{N}D_{\mathbb{P}_{n}^*\Pi}(\tilde X_{l}),\sqrt{N}\text{Rem}\right)+N\textnormal{var}_{\mathbb{P}_n^*}(\text{Rem}).
\end{equation*}
We note that
$N\textnormal{var}_{\mathbb{P}_n^*}(\text{Rem})=N^{1-2\gamma}\textnormal{var}_{\mathbb{P}_n^*}(N^\gamma \text{Rem})$. Therefore, Condition \ref{Eremaindervar} implies that $E\left[N\textnormal{var}_{\mathbb{P}_n^*}(\text{Rem})\right] \leq LN^{1-2\gamma}$ for some constant $L$. In addition, the boundedness of the support of $X$ implies that $\textnormal{var}_{\mathbb{P}_n^*}[D_{\mathbb{P}_{n}^*\Pi}(\tilde X)]$ is also bounded. Thus, by the Cauchy Schwartz inequality and Jensen's inequality,
\begin{align*}
     E&\left[\left|\text{cov}_{\mathbb{P}_n^*}\left(\frac{1}{\sqrt{N}}\sum_{l=1}^{N}D_{\mathbb{P}_{n}^*\Pi}(\tilde X_{l}),\sqrt{N}\text{Rem}\right)\right|\right] \\
    &\leq E\left[\sqrt{\textnormal{var}_{\mathbb{P}_n^*}\left[D_{\mathbb{P}_{n}^*\Pi}(\tilde X)\right]\left\{N^{1-2\gamma}\textnormal{var}_{\mathbb{P}_n^*}(N^\gamma \text{Rem})\right\}}\right] \leq \sqrt{L_1N^{1-2\gamma}},
\end{align*}
for some constant $L_1$. As a result, we have that $E\left[\sqrt{n}\left|\tilde\sigma^2(\mathbb{P}_n^*)-\sigma^2(\mathbb{P}_n^*)\right|\right] \leq \sqrt{L_2nN^{1-2\gamma}}$ for some constant $L_2$. Or equivalently, $E\left[E_M\left[\sqrt{n}\left|\tilde\sigma^2(\mathbb{P}_n^*)-\sigma^2(\mathbb{P}_n^*)\right|\right]\right] \leq \sqrt{L_2nN^{1-2\gamma}}$, where the outer expectation is over $X_1, X_2, \ldots$ Markov's inequality and Condition \ref{Nbiggern} then imply that, for all $\epsilon>0$,
\begin{equation*}
    P(E_M\left[\sqrt{n}\left|\tilde\sigma^2(\mathbb{P}_n^*)-\sigma^2(\mathbb{P}_n^*)\right|\right] > \epsilon) \leq \frac{\sqrt{L_2nN^{1-2\gamma}}}{\epsilon} \rightarrow 0, \quad \text{as} \ n \rightarrow \infty.
\end{equation*}
This shows that $E_M\left[\sqrt{n}\left|\tilde\sigma^2(\mathbb{P}_n^*)-\sigma^2(\mathbb{P}_n^*)\right|\right] \overset{P}{\rightarrow} 0$.

We now turn to term 1, which further decomposes into
\begin{align*}
    & \ \ \underbrace{\sup_{h \in BL_1(\mathbb{R})}\left|E_{M}h\left(\sqrt{n}\left\{\sigma^2(\mathbb{P}_n^*)-\sigma^2(\mathbb{P}_n)\right\}\right)-E_Mh\left((\sigma^2)'(\sqrt{n}(\mathbb{P}_n^*-\mathbb{P}_n))\right)\right|}_{\text{term 1.1}} \\
    &+ \underbrace{\sup_{h \in BL_1(\mathbb{R})}\left|E_Mh\left((\sigma^2)'(\sqrt{n}(\mathbb{P}_n^*-\mathbb{P}_n))\right)-Eh\left((\sigma^2)'(\mathbb{G})\right)\right|}_{\text{term 1.2}}.
\end{align*}
 
Theorem 23.7 and Equation 23.8 in \citet{van2000asymptotic} imply that term 1.2 converges to 0 in probability. For term 1.1, the same argument as was used in the proof of Theorem 23.9 in \citet{van2000asymptotic} shows that this term converges to 0 in probability.

Combining these steps as in the proof of Theorem 23.9 in \citet{van2000asymptotic}, we see that $\sqrt{n}\{\tilde\sigma^2(\mathbb{P}_n^*)-\sigma^2(\mathbb{P}_n)\}$ converges conditionally in distribution to $(\sigma^2)^\prime(\mathbb{G})$, given $X_1,X_2,\ldots$, in probability. 

In particular, we can apply the above argument to the variance of the unadjusted estimator $\sigma_u^2$ and the variance of the working-model-based estimator $\sigma_m^2$ to show that (i) $\sqrt{n}\{\tilde\sigma_u^2(\mathbb{P}_n^*)-\sigma_u^2(\mathbb{P}_n)\}$ converges conditionally in distribution to $(\sigma_u^2)^\prime(\mathbb{G})$, and (ii) $\sqrt{n}\{\tilde\sigma_m^2(\mathbb{P}_n^*)-\sigma_m^2(\mathbb{P}_n)\}$ converges conditionally in distribution to $(\sigma_m^2)^\prime(\mathbb{G})$, both given $X_1,X_2,\ldots$, in probability. The delta method implies that $\sqrt{n}\{\tilde\sigma_m^2(\mathbb{P}_n^*)/\tilde\sigma_u^2(\mathbb{P}_n^*)-\phi_m(\mathbb{P}_n)\}$ converges conditionally in distribution to $(\phi_m)^\prime(\mathbb{G})$, given $X_1,X_2,\ldots$, in probability. Finally, Theorem 3.10 follows by Condition \ref{B2} and Slutsky's theorem.
\end{proof}

\section{Proofs of results for time-to-event outcomes with right censoring}\label{survivalproof}

All three estimands of the treatment effect we consider are transformations of the arm-specific survival functions $S_1(t)$ and $S_0(t)$. Recall that for the unadjusted analysis we plug in $\tilde S_1(t)$ and $\tilde S_0(t)$, the arm-specific KM estimators, and, for the adjusted analysis, we plug in $\hat S_1(t)$ and $\hat S_0(t)$, the efficient adjusted estimators for the arm-specific survival function. 

The influence function of the KM estimator was derived, for example, in \citet{reid1981influence}. In particular, if $T^t\perp C^t|A$, then $\tilde S_a(t_k)$ is an asymptotically linear estimator of $S_a(t_k)$ with influence function
\begin{align*}
    \eta_{a,k}(y^t,\delta^t,a^t,w^t) &=  \sum_{j=1}^{k}-\frac{I\{a^t=a\}S_a(t_k)}{S_a(t_j)G_a(t_j)\pi_a}\left[I\{y^t = t_j,\delta^t=1\}-h_a(t_j)I\{y^t \geq t_j\}\right],
\end{align*}
for $a= 0$ or 1, and $k \in \{1,\ldots,K\}$. Here $h_a(t)$ is the hazard corresponding to $S_a$ at time $t$, $G_a(t,w) := P(C^t\geq t|A=a,W^t=w)$ and $G_a(t) := P(C^t\geq t|A=a)$.

Under the assumption that $T^t \perp C^t |(A,W^t)$ and other regularity conditions, $\hat S_a(t_k)$ is asymptotically linear with influence function given in \citet{moore2009application}
\begin{align*}
    \lambda_{a,k}(y^t,\delta^t,a^t,w^t) &= \sum_{j=1}^{k} -\frac{I\{a^t=a\}S_a(t_k,w^t)}{\pi_aS_a(t_j,w^t)G_a(t_j,w^t)}\left[I\{y^t =t_j,\delta^t=1\}-I\{y^t \geq t_j\}h_a(t_j,w^t)\right] \nonumber \\
    & \ \ \ +S_a(t_k,w^t)-S_a(t_k), 
\end{align*}
for $a= 0$ or 1, and $k \in \{1,\ldots,K\}$.

\subsection{Supporting lemmas for proofs in Section~\ref{proofC2}}
\begin{lemma}\label{eifsurvvar}
(EIF of the variance of the fully-adjusted estimators) For $(k,l)\in\{1,\ldots,K\}^2$ and $j\in \{1,\ldots,\min(k,l)\}$, let $D_{j}^{kl}$ be defined as in \eqref{Djkl}. When we measure the treatment effect with the risk difference $S_0(t_k) - S_1(t_k)$ or the relative risk $\{1-S_1(t_k)\}/\{1-S_0(t_k)\}$, the EIF of $\sigma_a^2$ is $\sum_{j=1}^{k}D_{j}^{kk}$. When we use restricted mean survival time $\sum_{j=1}^{k}\{S_1(t_j)-S_0(t_j)\}$ as the treatment effect estimand, the EIF of $\sigma_a^2$ is $\sum_{j=1}^{k}\sum_{l=1}^{k}\sum_{u=1}^{\min(j,l)}D_{u}^{jl}$. 
\end{lemma}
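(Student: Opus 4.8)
The plan is to derive the canonical gradient of a single building-block functional and then assemble the two claimed expressions by additivity. For indices $(j,k,l)$, define
\[
\theta_j^{kl}(P) := E_P\!\left[f_j^{kl}(W)/G(t_j,W)\right],
\]
viewed as a functional of the observed external-data distribution through the conditional law of $(T,W)$ identified under coarsening-at-random (Condition~\ref{cond1prime}), with $G$ and $\Pi$ held fixed. By Lemma~\ref{reRD}, under the sharp null the adjusted variance for both the RD and RR estimands equals $\sigma_a^2 = \sum_{j=1}^k \theta_j^{kk}$, and by Lemma~\ref{reRMST} the RMST adjusted variance equals $\sigma_a^2 = \sum_{j=1}^k\sum_{l=1}^k\sum_{u=1}^{\min(j,l)}\theta_u^{jl}$. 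Since the canonical-gradient map is linear and these are finite sums, it suffices to show that each $\theta_j^{kl}$ is pathwise differentiable with canonical gradient $D_j^{kl}$ as defined in \eqref{Djkl}; the two displays in the lemma then follow immediately.

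To compute the gradient of $\theta_j^{kl}$, I would first record from \eqref{definef} that $f_j^{kl}(w)$ is a fixed rational function of the conditional survival probabilities $S(t_{j-1},w)$, $S(t_j,w)$, $S(t_k,w)$, and $S(t_l,w)$, while $G(t_j,w)$ is a known multiplier. I would then split the tangent space into its $W$-marginal and its conditional $(Y,\Delta)\mid W$ components. Perturbing the marginal of $W$ along a score $s_W$ contributes $f_j^{kl}(w)/G(t_j,w)-\theta_j^{kl}$, matching the non-$g$ part of $D_j^{kl}$. Perturbing the conditional law acts only through the conditional survival curve, where I would invoke the fact that the map $\tau_u$ in \eqref{eq:tau} is the canonical gradient of $w\mapsto S(t_u,w)$ in the observed censored-data model under coarsening-at-random; the $1/H(t_u,w)$ weighting is exactly the inverse-probability-of-censoring correction needed because $S$ must be recovered from external data censored at rate $H$, in direct analogy with the derivation of \eqref{eq:SHatIF}. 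Applying the chain rule, the conditional contribution is $\frac{1}{G(t_j,w)}$ times $\sum_{u\in\{k,l,j,j-1\}}\bigl(\partial f_j^{kl}/\partial S(t_u,w)\bigr)\,\tau_u$; differentiating the ratio in \eqref{definef} gives coefficients $S_l/S_j-S_l/S_{j-1}$, $S_k/S_j-S_k/S_{j-1}$, $-S_kS_l/S_j^2$, and $S_kS_l/S_{j-1}^2$ (writing $S_m=S(t_m,w)$), which are precisely the four coefficients in the definition of $g_j^{kl}$. Summing the two contributions yields $D_j^{kl}(y,\delta,w)=\frac{1}{G(t_j,w)}\{g_j^{kl}(y,\delta,w)+f_j^{kl}(w)\}-\theta_j^{kl}$, and, being built from the $\tau_u$ that span the observed-data tangent space, this gradient is canonical.

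The main obstacle is the conditional-perturbation step: establishing that $\tau_u$ is the correct inverse-probability-of-censoring gradient of the conditional survival function and then executing the chain rule through the ratio structure of $f_j^{kl}$ so that the partial derivatives collect into the four-term $g_j^{kl}$. I would handle the first part either by citing the martingale-based influence function for the conditional hazard/survival functional or by directly differentiating the product form of $S(t_u,w)$ in the discrete conditional hazards and reweighting by $1/H$, as in \eqref{eq:SHatIF}; the second part is then a bookkeeping check on the partial derivatives of \eqref{definef}, noting that the edge cases $k=l$ (merging the $\tau_k$ and $\tau_l$ terms) and $j-1=0$ (where $S(t_0,w)=1$ is unperturbed, so $\tau_0$ drops out) are routine. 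Finally, I would observe that the constants $-\theta_j^{kl}$ sum to $-\sigma_a^2$, so that $\sum_{j=1}^k D_j^{kk}$ reproduces the centered expression \eqref{EIFcondvarsurv} stated in the main text, giving a consistency check on the RD/RR case.
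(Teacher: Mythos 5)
Your proposal is correct in substance and shares the paper's overall architecture: reduce to the building-block functionals $\theta_j^{kl}(P)=E_P[f_{j}^{kl}(W)/G(t_j,W)]$, use linearity of the gradient map over the finite sums from Lemmas~\ref{reRD} and \ref{reRMST}, and obtain the four coefficients of $g_j^{kl}$ by the chain rule applied to the ratio structure of \eqref{definef} (your partial derivatives $S_l/S_j-S_l/S_{j-1}$, $S_k/S_j-S_k/S_{j-1}$, $-S_kS_l/S_j^2$, $S_kS_l/S_{j-1}^2$ match the paper's $d_k,d_l,d_j,d_{j-1}$ exactly). Where you genuinely diverge is the handling of censoring. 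The paper never works with the observed-data tangent space directly: it derives the efficient influence function in the \emph{full-data} model, perturbing $p(t|w)$ and $p(w)$ separately, which produces a gradient built from $\tau_{full,u}(t,w)=I\{t>t_u\}-S(t_u,w)$, and then invokes Theorem 10.1 of \citet{tsiatis2007semiparametric} to lift this full-data influence function to the observed (censored) data; the IPCW terms $\tau_u$ with the $1/H(t_u,w)$ weights appear only through that lifting step. You instead compute directly in the observed-data model, taking as a known input that $\tau_u$ is the censoring-corrected gradient attached to the conditional survival curve. Both routes work: the paper's buys a clean separation into an elementary full-data calculation plus a citable coarsened-data lifting theorem, while yours avoids the full-data detour at the price of having to justify the building-block gradient yourself, which amounts to re-deriving the same martingale/IPCW structure that the lifting theorem packages.

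Two points in your write-up need repair before this is a proof. First, the phrase ``$\tau_u$ is the canonical gradient of $w\mapsto S(t_u,w)$'' is not well-posed, since $S(t_u,\cdot)$ is a function-valued (conditional) parameter; what you need, and should state, is that for a fixed weight function $c$ the observed-data EIF of $P\mapsto E_P[c(W)S(t_u,W)]$ is $c(w)\{\tau_u(y,\delta,w)+S(t_u,w)\}-E_P[c(W)S(t_u,W)]$, and your chain-rule step should be run through functionals of this form. Second, your justification of canonicity --- that the gradient is ``built from the $\tau_u$ that span the observed-data tangent space'' --- is wrong as stated: the $\tau_u$ do not span the tangent space. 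Moreover, since $\theta_j^{kl}$ has zero derivative along pure censoring perturbations, verifying that your candidate is a gradient at all requires checking that it is orthogonal to all censoring scores; this does hold (the term $f_j^{kl}/G(t_j,\cdot)-\theta_j^{kl}$ is a centered function of $w$ alone and censoring scores have conditional mean zero given $W$, while the $\tau_u$ terms are orthogonal to censoring scores by their martingale structure), but it is exactly the step your tangent-space split omits and the step the paper's appeal to the Tsiatis theorem supplies for free. Canonicity then follows, as in the paper, because the observed-data model under Condition~\ref{cond1prime} is locally nonparametric, so the gradient is unique. With these repairs your argument goes through and reproduces \eqref{EIFcondvarsurv} in the RD/RR case, as you note.
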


\begin{proof}
Recall that we define $f_{j}^{kl}(\cdot)$ as
\begin{equation*}
    f_j^{kl}(w) = S(t_k,w)S(t_l,w)\left\{\frac{1}{S(t_j,w)}-\frac{1}{S(t_{j-1},w)}\right\}.
\end{equation*}
When we wish to emphasize the dependence of $f_j^{kl}$ on $P$ through its survival function, we instead denote this function by $f_{j,P}^{kl}$.
First we define a parameter $\sigma_{j}^{kl}: \mathcal{M} \rightarrow \mathbb{R}$ such that $\sigma_{j}^{kl}(P) := E_P[f_{j,P}^{kl}(W)/G(t_j,W)]$. We consider the efficient influence function of $\sigma_{j}^{kl}$ in the full data model, that is, the model where we observe $(T,W)$ and there is no censoring. Let $p(t,w) = p(t|w)p(w)$ be the density of the joint distribution.

We consider the one-dimensional submodel $\{P_\epsilon : |\epsilon|\le 1\}$ with density $p(t|w)\{1+\epsilon s_1(t|w)\}p(w)\{1+\epsilon s_2(w)\}$, where the range of $s_1$ and $s_2$ falls in $[-1,1]$ and these functions satisfy $E_P[s_1(Y|W)|W]=0$ $P$-almost surely and $E_P[s_2(W)]=0$. Let $S_\epsilon$ be the survival function corresponding to $P_\epsilon$ and define $f_{j,\epsilon}^{kl}$ similarly to $f_{j}^{kl}$ but with $S$ replaced by $S_\epsilon$. We omit the subscript ``$j$" and superscript ``$kl$" below when it is clear from the context that we are focusing on $f_j^{kl}$. 

Note that
\begin{align*}
    \frac{d}{d\epsilon} S_\epsilon(t_k,w) &= \frac{d}{d\epsilon} \int I\{t>t_k\}\{1+\epsilon s_1(t|w)\}dP(t|w) \\
    &= \int [I\{t>t_k\}-S(t,w)] s_1(t|w)dP(t|w) \\
    &= \int [I\{t>t_k\}-S(t,w)] \{s_1(t|w)+s_2(w)\}dP(t|w).
\end{align*}
Because
\begin{equation*}
    \sigma_j^{kl}(P_\epsilon)
    = \int \frac{S_\epsilon(t_k,w)S_\epsilon(t_l,w)}{G(t_j,w)}\left\{\frac{1}{S_\epsilon(t_j,w)}-\frac{1}{S_\epsilon(t_{j-1},w)}\right\}p(w)\{1+\epsilon s_2(w)\}dw,
\end{equation*}
we then see that
\begin{equation*}
    \frac{d}{d\epsilon}\sigma_j^{kl}(P_\epsilon)\big\rvert_{\epsilon = 0} = \int \frac{f(w)}{G(t_j,w)}s_2(w)dP(w) + \int \frac{1}{G(t_j,w)}\frac{d}{d\epsilon}f_\epsilon(w)\big\rvert_{\epsilon=0}dP(w).
\end{equation*}

Let $\tau_{full,k}(t,w) = I\{t>t_k\}-S(t_k,w)$. By definition, the gradient is given by
\begin{multline*}
    D_{full}(t,w) = \Big\{f(w) + d_l(w)\tau_{full,l}(t,w)+d_k(w)\tau_{full,k}(t,w)+d_j(w) \tau_{full,j}(t,w)\\
    +d_{j-1}(w)\tau_{full,(j-1)}(t,w)\Big\}/G(t_j,w) - E[f(w)/G(t_j,w)],
\end{multline*}
where the partial derivatives $d_l$, $d_k$, $d_j$, and $d_{j-1}$ are given by
\begin{align*}
    d_l(w) &= \frac{S(t_k,w)}{S(t_j,w)}-\frac{S(t_k,w)}{S(t_{j-1},w)}, \ \ d_k(w) = \frac{S(t_l,w)}{S(t_j,w)}-\frac{S(t_l,w)}{S(t_{j-1},w)}, \\
    d_j(w) &= -\frac{S(t_k,w)S(t_l,w)}{S^2(t_j,w)}, \ \ d_{j-1}(w) = \frac{S(t_k,w)S(t_l,w)}{S^2(t_{j-1},w)}.
\end{align*}
This is the EIF in the full data model, as we work with a locally nonparametric model.

The observed data unit (with censoring) is $(Y,\Delta,W)$. To find the EIF in the observed data model, we apply Theorem 10.1 of \citet{tsiatis2007semiparametric} to show that the following is an observed data influence function of $\sigma_j^{kl}$: 
\begin{align}\label{Djkl}
    D_{j}^{kl}(y,\delta,w) &= \Big\{f_{j}^{kl}(w) + d_l(w)\tau_{l}(y,\delta,w)+d_k(w)\tau_{k}(y,\delta,w)+d_j(w) \tau_{j}(y,\delta,w) \nonumber \\
    &\ \ \ \ \ +d_{j-1}(w)\tau_{j-1}(y,\delta,w)\Big\}/G(t_j,w) - E[f_{j}^{kl}(w)/G(t_j,w)] \nonumber \\
    &= \left\{g_j^{kl}(y,\delta,w)
    +f_j^{kl}(w)\right\}/G(t_j,w)- E[f_{j}^{kl}(w)/G(t_j,w)].
\end{align}
Moreover, as our observed data model is locally nonparametric, $D_{j}^{kl}$ is the efficient observed data influence function of $\sigma_j^{kl}$. 
Lemma \ref{eifsurvvar} then follows since the variances are linear combinations of $E[f_j^{kl}(W)/G(t_j,W)]$.
\end{proof}

\begin{lemma}[Computation time of $\hat\sigma^2_a$ with RMST]
For given $\hat\tau$ and $\hat S$, the function
\begin{equation*}
    (y,\delta,w)\mapsto \sum_{j=1}^{k}\sum_{l=1}^{k}\sum_{u=1}^{\min(j,l)}\left\{\hat g_u^{jl}(y,\delta,w)+\hat f_u^{jl}(w)\right\}
\end{equation*}
can be computed in $O(k)$ time. 
\end{lemma}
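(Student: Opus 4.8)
The plan is to rewrite each summand $\hat g_u^{jl}+\hat f_u^{jl}$ in a separable form that disentangles the three indices, so that the triple sum collapses to a single sum once a few suffix sums have been precomputed. Throughout I would fix the observation $(y,\delta,w)$ and abbreviate $s_m:=\hat S(t_m,w)$ and $\tau_m:=\hat\tau_m(y,\delta,w)$, with the conventions $s_0=1$ and $\tau_0=0$. First I would substitute the definitions of $g$ and $f$ under the relabeling (subscript $\mapsto u$, superscripts $\mapsto j,l$) and verify the algebraic identity
\begin{equation*}
\hat g_u^{jl}+\hat f_u^{jl}=\delta_u\left(s_j\tau_l+s_l\tau_j+s_js_l\right)-\rho_u\,s_js_l,
\end{equation*}
where $\delta_u:=1/s_u-1/s_{u-1}$ and $\rho_u:=\tau_u/s_u^2-\tau_{u-1}/s_{u-1}^2$ depend on $u$ alone. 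The point is that, once this holds, the $j$- and $l$-dependence enters only through the single-index quantities $s_j,\tau_j$ and $s_l,\tau_l$.

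Next I would interchange the order of summation to bring $u$ outermost, rewriting the constraint $u\le\min(j,l)$ as $\{j\ge u,\ l\ge u\}$, so that the nested sums become $\sum_{u=1}^{k}\sum_{j=u}^{k}\sum_{l=u}^{k}$. For each fixed $u$ the inner double sum over $j,l\ge u$ factorizes: introducing the suffix sums $A_u:=\sum_{m=u}^{k}s_m$ and $B_u:=\sum_{m=u}^{k}\tau_m$, one has $\sum_{j,l\ge u}s_j\tau_l=\sum_{j,l\ge u}s_l\tau_j=A_uB_u$ and $\sum_{j,l\ge u}s_js_l=A_u^2$. Substituting the separable identity then yields the closed form
\begin{equation*}
\sum_{j=1}^{k}\sum_{l=1}^{k}\sum_{u=1}^{\min(j,l)}\left\{\hat g_u^{jl}+\hat f_u^{jl}\right\}=\sum_{u=1}^{k}\left\{\delta_u\left(2A_uB_u+A_u^2\right)-\rho_uA_u^2\right\}.
\end{equation*}

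Finally I would argue the complexity. The suffix sums $A_u,B_u$ can be obtained for all $u\in\{1,\dots,k\}$ by a single backward pass using $A_k=s_k$, $A_u=A_{u+1}+s_u$ (and likewise for $B_u$), which is $O(k)$; each coefficient $\delta_u,\rho_u$ costs $O(1)$ given $s_{u-1},s_u,\tau_{u-1},\tau_u$; and the resulting single sum over $u$ performs $O(1)$ arithmetic per term. Hence the whole expression is evaluated in $O(k)$ time, as claimed, and summing over the $n$ observations recovers the $O(nk)$ bound quoted in the main text.

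I expect the only delicate step to be the algebraic verification of the separable identity: one must track the four cross terms of $\hat g_u^{jl}$ together with the single term of $\hat f_u^{jl}$ and confirm that they regroup exactly into the two $u$-indexed coefficients $\delta_u$ and $\rho_u$, including the $u=1$ boundary where the conventions $s_0=1$ and $\tau_0=0$ are used. Everything downstream is routine reindexing and factorization of double sums.
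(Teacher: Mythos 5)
Your proposal is correct and follows essentially the same route as the paper's proof: both rest on the observation that $\hat g_u^{jl}+\hat f_u^{jl}$ separates in $(j,l)$ with $u$-indexed coefficients, followed by interchanging the summation so $u$ is outermost and factoring the inner double sum into products of precomputed suffix sums $\sum_{m\ge u}\hat S(t_m,w)$ and $\sum_{m\ge u}\hat\tau_m(y,\delta,w)$, yielding $O(k)$ work overall. The only cosmetic difference is that you consolidate all five terms into a single identity with coefficients $\delta_u$ and $\rho_u$ before factorizing, whereas the paper applies the same reindex-and-factor argument to the $f$ term and the two types of $g$ terms separately.
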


\begin{proof}
There are 5 terms inside the sums, and we show that the sum of each term can be computed in $O(k)$ time.

To start, we note that
\begin{align*}
    \sum_{j=1}^{k}\sum_{l=1}^{k}\sum_{u=1}^{\min(j,l)}\hat f_u^{jl}(w) &= \sum_{j=1}^{k}\sum_{l=1}^{k}\sum_{u=1}^{\min(j,l)}\hat S(t_j,w)\hat S(t_l,w)\left\{\frac{1}{\hat S(t_u,w)}-\frac{1}{\hat S(t_{u-1},w)}\right\} \\
    &= \sum_{u=1}^{k}\left[\left\{\frac{1}{\hat S(t_u,w)}-\frac{1}{\hat S(t_{u-1},w)}\right\} \left\{\sum_{j\geq u}\hat S(t_j,w)\right\} \left\{\sum_{l \geq u}\hat S(t_l,w)\right\}\right].
\end{align*}
By taking a cumulative sum, $\{\sum_{l \geq u}\hat S(t_l,w)\}_{u=1}^k$ can be pre-computed in $O(k)$ time. Thus, the above display can be computed in $O(k)$ time as we sum over $u$.

The terms in $g$ are of two types. The first of these is
\begin{align*}
    &\ \ \ \sum_{j=1}^{k}\sum_{l=1}^{k}\sum_{u=1}^{\min(j,l)}\hat S(t_j,w)\left\{\frac{1}{\hat S(t_u,w)}-\frac{1}{\hat S(t_{u-1},w)}\right\}\hat \tau_l(y,\delta,w) \\
    &=  \sum_{u=1}^{k}\left[\left\{\frac{1}{\hat S(t_u,w)}-\frac{1}{\hat S(t_{u-1},w)}\right\}\left\{\sum_{l\geq u}\hat \tau_l(y,\delta,w)\right\}\left\{\sum_{j \geq u}\hat S(t_j,w)\right\}\right].
\end{align*}
By taking a cumulative sum, $(\sum_{l \geq u}\hat\tau_l)_{u=1}^k$ can be pre-computed in $O(k)$ time and summing over $u$ takes another $O(k)$ steps. The second type of term is
\begin{multline*}
    \sum_{j=1}^{k}\sum_{l=1}^{k}\sum_{u=1}^{\min(j,l)}\frac{\hat S(t_j,w) \hat S(t_l,w)}{\hat S^2(t_u,w)}\hat\tau_u(y,\delta,w) = \\
    \sum_{u=1}^{k}\left[\left\{\frac{\hat\tau_u(y,\delta,w)}{\hat S^2(t_u,w)}\right\}\left\{\sum_{l\geq u}\hat S(t_l,w)\right\}\left\{\sum_{j \geq u}\hat S(t_j,w)\right\}\right].
\end{multline*}
Again with the sum over $j$ and $l$ pre-computed for all $u$, the summation over $u$ takes $O(k)$ time.
\end{proof}

\subsection{Results in Appendix~\ref{sec:survival}}\label{proofC2}

\begin{proof}[Proof of Lemma~\ref{reRD}]
First we note that the (conditional) independencies $A \perp W^t$, $T^t\perp A|W^t$, $C^t \perp T^t|A$ and $C^t \perp T^t|(A,W^t)$ together imply that $T^t \perp C^t$ and $T^t\perp C^t|W^t$. This result will be useful when we compute the variances of the adjusted and unadjusted estimators.

We consider the risk difference first. The unadjusted estimator, namely $\hat\psi_u = \tilde S_0(t_k) - \tilde S_1(t_k)$, has influence function $D_{u,k} =\eta_{0,k} - \eta_{1,k}$. Under the sharp null where the treatment has no effect and the assumption that $C^t\perp A|W^t$, the influence function simplifies to 
\begin{align}\label{IFnull}
    D_{u,k}(y^t,\delta^t,a^t,w^t) &= \left(\frac{a^t}{\pi_1} - \frac{1-a^t}{\pi_0}\right)\sum_{j=1}^{k} \frac{S(t_k)}{S(t_j)G(t_j)}I_j(y^t,\delta^t), \ \ \textnormal{where} \\
    I_j(y^t,\delta^t) &= I\{y^t = t_j,\delta^t=1\}-h(t_j)I\{y^t \geq t_j\}. \nonumber
\end{align}
Noting that
\begin{equation*}
    E\left[I_j(Y^t,\Delta^t)I_l(Y^t,\Delta^t)\right]
    = \delta_{jl}P(Y=t_j)G(t_j)\{1-h(t_j)\},  \ \ \textnormal{where} \ \ \delta_{jl} = I\{j = l\},
\end{equation*}
we see that
\begin{align*}
    \text{var}(D_{u,k}) &= \frac{1}{\pi_1\pi_0}\sum_{j=1}^{k}\frac{S^2(t_k)S(t_{j-1})h(t_j)\{1-h(t_j)\}}{S^2(t_j)G(t_j)} \nonumber \\
    &= \frac{1}{\pi_1\pi_0}\sum_{j=1}^{k}\frac{S^2(t_k)\{S(t_{j-1})-S(t_j)\}}{S(t_j)G(t_j)S(t_{j-1})} = \frac{\sigma_u^2}{\pi_1\pi_0}.
\end{align*}
The adjusted estimator, namely $\hat\psi_a = \hat S_0(t_k)-\hat S_1(t_k)$, has influence function $D_{a,k} = \lambda_{0,k}-\lambda_{1,k}$. Under the sharp null and the assumption that $C^t\perp A|W^t$, the influence function becomes 
\begin{multline}\label{Dak}
    D_{a,k}(y^t,\delta^t,a^t,w^t) = \left(\frac{a^t}{\pi_1}-\frac{1-a^t}{\pi_0}\right)\times \\
    \sum_{j=1}^{k} \frac{S(t_k,w^t)}{S(t_j,w^t)G(t_j,w^t)}\left[I\{y^t =t_j,\delta^t =1\}-I\{y^t \geq t_j\}h(t_j,w^t)\right].
\end{multline}
The variance of $D_{a,k}$ can be calculated in a similar way as was done for the unadjusted estimator, except that in taking expectation of the indicators, we condition on $W^t$ first. 

\begin{equation*}
    \text{var}(D_{a,k}) = \frac{1}{\pi_1\pi_0}E_P\left[\sum_{j=1}^{k}\frac{S^2(t_k,W)\{S(t_{j-1},W)-S(t_j,W)\}}{S(t_j,W)S(t_{j-1},W)G(t_j,W)}\right] = \frac{\sigma_a^2}{\pi_1\pi_0}.
\end{equation*}
Hence the relative efficiency is given by $\sigma_a^2/\sigma_u^2$, which depends only on the distribution of survival time and the covariate, for a user-specified mapping $G$.

Next we consider the relative risk. The unadjusted estimator, namely $\hat\psi_u = \frac{1-\tilde S_1(t_k)}{1-\tilde S_0(t_k)}$, has influence function $\frac{1}{1-S_0(t_k)}(-\eta_{1,k}+\psi \eta_{0,k})$, which becomes $D_{u,k}/\{1-S(t_k)\}$ under the sharp null. Similarly, the influence function of the adjusted estimator $\hat\psi_a = \frac{1-\hat S_1(t_k)}{1-\hat S_0(t_k)}$ simplifies to $D_{a,k}/\{1-S(t_k)\}$ under the sharp null and the assumption that $C^t\perp A|W^t$. Hence the relative efficiency is again $\sigma_a^2/\sigma_u^2$.
\end{proof}

\begin{proof}[Proof of Theorem \ref{RD}]
Recall that $\hat S(t_k)$ is the efficient adjusted estimator of the survival probability at time $t_k$ using the external data. Hence, this estimator is asymptotically linear with influence function 
\begin{equation*}
    \textnormal{IF}_k(y,\delta,w) = \tau_k(y,\delta,w) +S(t_k,w)-S(t_k),
\end{equation*}
where $\tau$ is defined in \eqref{eq:tau}. Furthermore, define $\textnormal{IF}_0(y,\delta,w) := 0$. Recall that $$\hat s_{u}^{jl} = \frac{\hat S(t_j)\hat S(t_l)\{\hat S(t_{u-1})-\hat S(t_{u})\}}{\hat S(t_u)\hat S(t_{u-1})}.$$
Applying the delta method, we have that $\hat s_{u}^{jl}$ is an asymptotic linear estimator of $s_{u}^{jl}$ with influence function
\begin{multline*}
    \xi_u^{jl}(y,\delta,w) = \left\{\frac{S(t_j)}{S(t_u)}-\frac{S(t_j)}{S(t_{u-1})}\right\}\textnormal{IF}_l(y,\delta,w) +\left\{\frac{S(t_l)}{S(t_u)}
    -\frac{S(t_l)}{S(t_{u-1})}\right\}\textnormal{IF}_j(y,\delta,w) \\-\frac{S(t_j)S(t_l)}{S^2(t_u)}\textnormal{IF}_u(y,\delta,w) + \frac{S(t_j)S(t_l)}{S^2(t_{u-1})}\textnormal{IF}_{u-1}(y,\delta,w).
\end{multline*}
Also, $\hat\sigma_u^2$ is a linear combination of $\hat s_{j}^{kk}$, and its influence function is given by 
\begin{equation*}
    \textnormal{IF}_u(y,\delta,w) = \sum_{j=1}^{k}\xi_{j}^{kk}(y,\delta,w)/G(t_j).
\end{equation*}

We now consider estimating $\sigma_a^2$. Its efficient influence function is derived in Lemma \ref{eifsurvvar}, and also given in \eqref{EIFcondvarsurv}, which is of the form $\textnormal{IF}_a = \sum_{j=1}^{k}D_j^{kk}$. The proposed $\hat\sigma_a^2$ is a one-step estimator based on the EIF. 

Let $Q$ be the distribution of the observed data unit $(Y,\Delta,W)$ in the external dataset, induced by the joint distribution $P$ of $(T,W)$, the conditional distribution of the censoring time $H$ and the function $\Gamma$. Let $\hat P$ be a joint distribution of $(T,W)$ such that the condtional hazard function is given by $\hat h(t,w)$, the conditional survival function is given by $\hat S(t,w)$ and the distribution of $W$ is given by its empirical distribution. Let $\hat Q$ be the observed data distribution induced by $\hat P$, $\hat H$ and $\Gamma$. Then, we have that $\hat\sigma_a^2 = Q_n\textnormal{IF}_a(\hat Q) + \sigma^2_a(\hat P)$, where $Q_n$ is the empirical distribution of $(Y,\Delta,W)$ in the external dataset. Hence, 
\begin{multline*}
    \hat\sigma_a^2 - \sigma^2_a = (Q_n-Q)\textnormal{IF}_a(Q) + (Q_n-Q)\{\textnormal{IF}_a(\hat Q)-\textnormal{IF}_a(Q)\} + R(\hat Q, Q), \\
    \text{where } R(\hat Q,Q) = \sigma_a^2(\hat P)-\sigma_a^2(P) + Q\{\textnormal{IF}_a(\hat Q)\}.
\end{multline*}

Our first step is to show that the remainder term $R(\hat Q,Q)$ is $o_P(n^{-1/2})$. As the influence function $\textnormal{IF}_a$ and the variance $\sigma_a^2$ itself can both be written as a sum of $k$ terms, it is easy to see that we can write $R(\hat Q,Q)$ as $\sum_{j=1}^{k}R_j(\hat Q,Q)$, where
\begin{align*}
    R_j(\hat Q, Q) &= Q\left[\frac{1}{G(t_j,w)} \left\{\hat g_j(y,\delta,w) +\hat f_j(w) - f_j(w)\right\}\right].
\end{align*}
Here we omit the superscript $``kk"$ in $f$ and $g$. We examine the terms coming from $f$ and $g$ separately. First, we study
\begin{multline*}
    Q\left\{\frac{\hat f_j(w)-f_j(w)}{G(t_j,w)}\right\} 
    \\= Q\left(\frac{1}{G(t_j,w)}\left[\left\{\frac{\hat S^2(t_k,w)}{\hat S(t_j,w)}-\frac{\hat S^2(t_k,w)}{\hat S(t_{j-1},w)}\right\}- \left\{\frac{S^2(t_k,w)}{S(t_j,w)}-\frac{S^2(t_k,w)}{ S(t_{j-1},w)}\right\}\right]\right). 
\end{multline*}
Define
\begin{align*}
    R_{f1} &= Q\left[\frac{1}{G(t_j,w)}\left\{\frac{2S(t_k,w)}{S(t_j,w)}-\frac{2S(t_k,w)}{S(t_{j-1},w)}\right\}\{\hat S(t_k,w)-S(t_k,w)\}\right],\\
    R_{f2} &= Q\left[-\frac{1}{G(t_j,w)} \frac{S^2(t_k,w)}{S^2(t_j,w)}\{\hat S(t_j,w)-S(t_j,w)\}\right], \\
    R_{f3} &= Q\left[\frac{1}{G(t_j,w)}\frac{S^2(t_k,w)}{S^2(t_{j-1},w)}\{\hat S(t_{j-1},w)-S(t_{j-1},w)\}\right].
\end{align*}
Then, we apply a second-order Taylor expansion to the function $(x,y,z) \mapsto x^2(1/y-1/z)$. The relevant second-order derivatives are bounded by some constant $M$ when $\hat S(t,w)$, $S(t,w)$ and $G(t,w)$ are all uniformly bounded away from 0. Then,
\begin{align*}
   Q\left\{\frac{\hat f_j(w)-f_j(w)}{G(t_j,w)}\right\}  &\leq R_{f1}+R_{f2}+R_{f3} +M\Big\{\|\hat S(t_k,\cdot) - S(t_k,\cdot)\|_{L^2(P_W)}^2 \\
    &\ \ \ +\|\hat S(t_j,\cdot) - S(t_j,\cdot)\|_{L^2(P_W)}^2+\|\hat S(t_{j-1},\cdot) - S(t_{j-1},\cdot)\|_{L^2(P_W)}^2 \Big\}\\
    &= R_{f1}+R_{f2}+R_{f3}+o_P(n^{-1/2}),
\end{align*}
since $\|\hat S(t,\cdot) - S(t,\cdot)\|_{L^2(P_W)} = o_P(n^{-1/4})$ for all $t$. 

Next, we study the terms in the remainder resulting from $g$. To do this, we define $\delta_h^l(w) := S(t_{l-1},w)\{h(t_l,w)-\hat h(t_l,w)\}/\hat S(t_l,w)$. Then,
\begin{equation*}
    Q \left\{\hat g_j(y,\delta,w)/G(t_j,w)\right\} = R_{g1} + R_{g2} + R_{g3},
\end{equation*}
where
\begin{align*}
    R_{g1} &= Q\left[\frac{1}{G(t_j,w)}\left\{\frac{2\hat S(t_k,w)}{\hat S(t_j,w)}-\frac{2\hat S(t_k,w)}{\hat S(t_{j-1},w)}\right\}\left\{\sum_{l \leq k}-\frac{\hat S(t_k,w) H(t_l,w)}{\hat  H(t_l,w)}\delta_h^l(w)\right\}\right], \\
    R_{g2} &= Q\left[-\frac{1}{G(t_j,w)}\frac{\hat S^2(t_k,w)}{\hat S^2(t_j,w)}\left\{\sum_{l \leq j}-\frac{\hat S(t_j,w) H(t_l,w)}{\hat  H(t_l,w)}\delta_h^l(w)\right\}\right], \\
    R_{g3} &= Q\left[\frac{1}{G(t_j,w)}\frac{\hat S^2(t_k,w)}{\hat S^2(t_{j-1},w)}\left\{\sum_{l \leq j-1}-\frac{\hat S(t_{j-1},w) H(t_l,w)}{\hat  H(t_l,w)}\delta_h^l(w)\right\}\right].
\end{align*}
Noting that $S(t_k,w) = \prod_{l \leq k}\{1-h(t_l,w)\}$ and $\hat S(t_k,w) = \prod_{l \leq k}\{1-\hat h(t_l,w)\}$, we can write the difference between $\hat S(t_k,w)$ and $S(t_k,w)$ as
$$\hat S(t_k,w)-S(t_k,w) = \sum_{l \leq k}\hat S(t_k,w)\delta_h^l(w).$$
Hence,
\begin{multline*}
    R_{f1}+R_{g1} = Q\Bigg(\frac{1}{G(t_j,w)}\Bigg[\left\{\frac{2S(t_k,w)}{S(t_j,w)}-\frac{2S(t_k,w)}{S(t_{j-1},w)}\right\}\sum_{l \leq k}\frac{\hat  H(t_l,w)- H(t_l,w)}{\hat  H(t_l,w)}\hat S(t_k,w)\delta_h^l(w) \\
    + \left\{\frac{2\hat S(t_k,w)}{\hat S(t_j,w)}-\frac{2\hat S(t_k,w)}{\hat S(t_{j-1},w)}-\frac{2S(t_k,w)}{S(t_j,w)}+\frac{2S(t_k,w)}{S(t_{j-1},w)}\right\}\sum_{l \leq k}-\frac{\hat S(t_k,w) H(t_l,w)}{\hat  H(t_l,w)}\delta_h^l(w)\Bigg]\Bigg)
\end{multline*}
The term in the first line on the right-hand side is $o_P(n^{-1/2})$ because $G$, $S$, $\hat S$ and $\hat H$ are uniformly bounded away from 0; $\hat S$ is uniformly bounded above; and $\|\{\hat h(t,\cdot)-h(t,\cdot)\}\{\hat H(t,\cdot)- H(t,\cdot)\}\|_{L^2(P_W)} = o_P(n^{-1/2})$ for all $t$. 
The term in the second line is also $o_P(n^{-1/2})$. To see this, we apply a first-order Taylor expansion to the first factor, which is very similar to the second-order Taylor expansion we studied earlier and the derivative is again bounded by some constant $M$. In addition, we have that $\|\{\hat S(t,\cdot)-S(t,\cdot)\}\{\hat h(\tilde t,\cdot)-h(\tilde t,\cdot)\}\|_{L^2(P_W)} = o_P(n^{-1/2})$ for all $(t,\tilde t)$. 

Similarly, we can show that $R_{f2}+R_{g2}$ and $R_{f3}+R_{g3}$ are both $o_P(n^{-1/2})$, and so is $R_j(\hat Q,Q)$ and consequently so is $R(\hat Q,Q)$. 

Our second step is to show that $(Q_n-Q)\{\textnormal{IF}_a(\hat Q)-\textnormal{IF}_a(Q)\}$ is $o_P(n^{-1/2})$. To do this, we again use Lemma 19.24 in \citet{van2000asymptotic}. We need to verify the following two conditions: (1) $\|\textnormal{IF}_a(\hat Q)-\textnormal{IF}_a(Q)\|_{L^2(Q)} = o_P(1)$, and (2) $\textnormal{IF}_a(\hat Q)$ lies in a fixed $Q$-Donsker class with probability tending to 1.

We first establish condition (1).
\begin{multline*}
    \textnormal{IF}_a(\hat Q)(y,\delta,w)-\textnormal{IF}_a(Q)(y,\delta,w) = \sigma_a^2(\hat Q)-\sigma_a^2(Q) \\
    + \sum_{j=1}^{k}\frac{1}{G(t_j,w)}\left\{\hat g_j(y,\delta,w)-g_j(y,\delta,w)+\hat f_j(w)-f_j(w)\right\}.
\end{multline*}
Using the triangle inequality, it suffices to show that for each $j$, $\|\{\hat g_j-g_j\}/G(t_j,\cdot)\|_{L^2(Q)}=o_P(1)$ and $\|\{\hat f_j-f_j\}/G(t_j,\cdot)\|_{L^2(Q)}=o_P(1)$, and that $\sigma_a^2(\hat Q)-\sigma_a^2(Q) = o_P(1)$. As $G$ is uniformly bounded away from 0, it suffices to show that $\|\hat g_j-g_j\|_{L^2(Q)}=o_P(1)$ and $\|\hat f_j-f_j\|_{L^2(Q)}=o_P(1)$. 
\begin{equation*}
    \hat f_j(w)-f_j(w)
    = \left\{\frac{\hat S^2(t_k,w)}{\hat S(t_j,w)}-\frac{\hat S^2(t_k,w)}{\hat S(t_{j-1},w)}\right\}- \left\{\frac{S^2(t_k,w)}{S(t_j,w)}-\frac{S^2(t_k,w)}{ S(t_{j-1},w)}\right\}. 
\end{equation*}
Recall that when studying the remainder term, we applied a second-order Taylor expansion to the function $(x,y,z) \mapsto x^2(1/y-1/z)$. Here a first-order Taylor expansion suffices. As $S$ is bounded away from 0, there exists some constant $M$ such that the first derivatives are bounded by $M$. Then,
\begin{multline*}
    \|\hat f_j-f_j\|_{L^2(Q)} \leq M\big\{\|\hat S(t_k,\cdot)-S(t_k,\cdot)\|_{L^2(P_W)} \\
    +\|\hat S(t_j,\cdot)-S(t_j,\cdot)\|_{L^2(P_W)}+\|\hat S(t_{j-1},\cdot)-S(t_{j-1},\cdot)\|_{L^2(P_W)}\big\} = o_P(1).
\end{multline*}
We note that $\hat g_j-g_j$ consists of three terms that are of similar forms. We study one of them in details and similar arguments apply to the other two, and we can then apply the triangle inequality to conclude that $\|\hat g_j-g_j\|_{L^2(Q)} = o_P(1)$. For notational convenience, we define
$$\hat\tau_k(y,\delta,w) = \sum_{l\leq k}-\frac{\hat S(t_k,w)}{\hat S(t_l,w)\hat H(t_l,w)}\left[I\{y=t_l,\delta=1\}-\hat h(t_l,w)I\{y \geq t_l\}\right].$$
We focus on the term in $\hat g_j-g_j$ that is given by
\begin{align*}
   &\ \left\{\frac{2\hat S(t_k,w)}{\hat S(t_j,w)}-\frac{2\hat S(t_k,w)}{\hat S(t_{j-1},w)}\right\}\hat\tau_k(y,\delta,w)-\left\{\frac{2S(t_k,w)}{S(t_j,w)}-\frac{2S(t_k,w)}{S(t_{j-1},w)}\right\}\tau_k(y,\delta,w) \\
   &= \underbrace{\left\{\frac{2\hat S(t_k,w)}{\hat S(t_j,w)}-\frac{2\hat S(t_k,w)}{\hat S(t_{j-1},w)}-\frac{2S(t_k,w)}{S(t_j,w)}+\frac{2S(t_k,w)}{S(t_{j-1},w)}\right\}\hat\tau_k(y,\delta,w)}_\text{term 1} \\
   &\quad+ \underbrace{\left\{\frac{2S(t_k,w)}{S(t_j,w)}-\frac{2S(t_k,w)}{S(t_{j-1},w)}\right\}\left\{\hat\tau_k(y,\delta,w)-\tau_k(y,\delta,w)\right\}}_\text{term 2}.
\end{align*}
The triangle inequality allows us to bound each term separately. Since $\hat S$ and $\hat H$ are uniformly bounded away from 0 and $\hat S$ and $\hat h$ are uniformly bounded above, there exists some constant $M$ such that $|\hat\tau_k| \leq M$. Therefore, in term 1 it suffices to upper bound $\|\frac{2\hat S(t_k,\cdot)}{\hat S(t_j,\cdot)}-\frac{2\hat S(t_k,\cdot)}{\hat S(t_{j-1},\cdot)}-\frac{2S(t_k,\cdot)}{S(t_j,\cdot)}+\frac{2S(t_k,\cdot)}{S(t_{j-1},\cdot)}\|_{L^2(P_W)}$. As in our analysis of $\|\hat f_j - f_j\|_{L^2(P_W)}$, we apply a first-order Taylor expansion, where the first order derivatives are bounded by some constant $M^{\prime}$. Then,
\begin{multline*}
    \left\|\frac{2\hat S(t_k,\cdot)}{\hat S(t_j,\cdot)}-\frac{2\hat S(t_k,\cdot)}{\hat S(t_{j-1},\cdot)}-\frac{2S(t_k,\cdot)}{S(t_j,\cdot)}+\frac{2S(t_k,\cdot)}{S(t_{j-1},\cdot)}\right\|_{L^2(P_W)} \leq M^{\prime}\big\{\|\hat S(t_k,\cdot)-S(t_k,\cdot)\|_{L^2(P_W)} \\
    +\|\hat S(t_j,\cdot)-S(t_j,\cdot)\|_{L^2(P_W)}+\|\hat S(t_{j-1},\cdot)-S(t_{j-1},\cdot)\|_{L^2(P_W)}\big\} = o_P(1).
\end{multline*}
Thus, term 1 is indeed $o_P(1)$. As $S$ is uniformly bounded away from 0, $\frac{2S(t_k,w)}{S(t_j,w)}-\frac{2S(t_k,w)}{S(t_{j-1},w)}$ is bounded above. Therefore, it suffices to show that $\|\hat\tau_k-\tau_k\|_{L^2(Q)} = o_P(1)$. Note that
\begin{align*}
    &\hat\tau_k(y,\delta,w)-\tau_k(y,\delta,w) \\
    &=  \underbrace{\sum_{l\leq k}-\frac{\hat S(t_k,w)I\{y \geq t_l\}}{\hat S(t_l,w)\hat H(t_l,w)}\left\{h(t_l,w)-\hat h(t_l,w)\right\}}_\text{term 2.1} \\
    &\quad+ \underbrace{\sum_{l\leq k}\left\{\frac{S(t_k,w)}{ S(t_l,w) H(t_l,w)}-\frac{\hat S(t_k,w)}{\hat S(t_l,w)\hat H(t_l,w)}\right\}\left[I\{y=t_l,\delta=1\}-h(t_l,w)I\{y \geq t_l\}\right]}_\text{term 2.2}.
\end{align*}
To see that Term 2.1 is $o_P(1)$, note that the first factor is bounded above and $\|\hat h(t,\cdot)-h(t,\cdot)\|_{L^2(P_W)} = o_P(1)$. To see that Term 2.2 is $o_P(1)$, note that the second factor is bounded above and the first factor is $o_P(1)$, which can be shown again by a Taylor expansion. This argument shows one of the three terms in $\hat g_j- g_j$ is $o_P(1)$. A similar argument can be applied to show the other two terms are $o_P(1)$ as well. Finally we show that $\sigma_a^2(\hat Q) -\sigma^2_a(Q) = o_P(1)$. Note that
\begin{align*}
    \sigma_a^2(\hat Q) -\sigma^2_a(Q) &= \sum_{j=1}^{k}\left(Q_n\hat f_j - Qf_j\right) \\
    &= \sum_{j=1}^{k}\left\{(Q_n-Q)f_j + Q(\hat f_j-f_j) + (Q_n-Q)(\hat f_j-f_j)\right\}.
\end{align*}
The term $(Q_n-Q)f_j$ is $o_P(1)$ by the law of large numbers. We have shown that $\|\hat f_j-f_j\|_{L^2(Q)}$ is $o_P(1)$, which provided an upper bound on $Q(\hat f_j-f_j)$. As we will show momentarily, $\hat f_j$ lies in a $Q$-Donsker class with probability tending to 1, so Lemma 19.24 in \citet{van2000asymptotic} implies that $(Q_n-Q)(\hat f_j-f_j) = o_P(1)$. Combining these results, we have $\sigma_a^2(\hat Q) -\sigma^2_a(Q) = o_P(1)$.

Next we establish condition (2), which says that $\textnormal{IF}_a(\hat Q)$ lies in a $Q$-Donsker class with probability tending to 1. By Theorem 2.10.6 in \citet{van1996weak}, it suffices to show that $\hat g_j$ and $\hat f_j$ both lie in $Q$-Donsker classes with probability tending to 1, since $G$ is bounded away from 0. This can be shown again by Theorem 2.10.6 in \citet{van1996weak}, as by assumption $\hat S$, $\hat H$ and $\hat h$ all belong to fixed $Q$-Donsker classes with probability tending to 1 and all the functions involved in $\hat g_j$ and $\hat f_j$ are uniformly bounded away from 0 and also bounded above. 

Conditions (1) and (2) allows us to apply Lemma 19.24 in \citet{van2000asymptotic} to conclude that $(Q_n-Q)\{\textnormal{IF}_a(\hat Q)-\textnormal{IF}_a(Q)\}$ is $o_P(n^{-1/2})$.

Now, combining step 1, which showed that $R(\hat Q,Q)$ is $o_P(n^{-1/2})$, and step 2, which showed that $(Q_n-Q)\{\textnormal{IF}_a(\hat Q)-\textnormal{IF}_a(Q)\}$ is $o_P(n^{-1/2})$, we see that $\hat\sigma_a^2$ is asymptotically linear with influence function $\textnormal{IF}_a$. Theorem \ref{RD} then follows by the delta method. Specifically, the influence function of $\hat\phi$ is given by $(\textnormal{IF}_a - \phi_a \textnormal{IF}_u)/\sigma_u^2$. This estimator is efficient as its influence function agrees with the EIF of $\phi_a$.
\end{proof}

\begin{proof}[Proof of Lemma \ref{reRMST}]
The unadjusted estimator, namely $\hat\psi_u = \sum_{j=1}^{k}\{\tilde S_1(t_j) - \tilde S_0(t_j)\}$, has influence function $\sum_{j=1}^{k}( \eta_{1,j} - \eta_{0,j})$. Under the sharp null, the form of the influence function simplifies to $\sum_{j=1}^{k}D_{u,j}$, where the definition of $D_{u,j}$ is given in \eqref{IFnull}.
Since
\begin{align*}
    E[D_{u,j}D_{u,l}] &= \frac{1}{\pi_1\pi_0}S(t_j)S(t_l)\sum_{u=1}^{\min(j,l)}\frac{S(t_{u-1})-S(t_u)}{S(t_{u-1})S(t_u)G(t_u)},
\end{align*}
we have that
\begin{equation*}
    \text{var}\left(\sum_{j=1}^{k}D_{u,j}\right) = \frac{1}{\pi_1\pi_0}\sum_{j=1}^{k}\sum_{l=1}^{k}\sum_{u=1}^{\min(j,l)}\frac{S(t_j)S(t_l)\{S(t_{u-1})-S(t_u)\}}{S(t_{u-1})S(t_u)G(t_u)}.
\end{equation*}

Now we consider the fully adjusted estimator $\hat\psi_a = \sum_{j=1}^{k}\{\hat S_1(t_j) - \hat S_0(t_j)\}$. Under the sharp null, the influence function of $\hat\psi_a$ is given by $\sum_{j=1}^k D_{a,j}$, where $D_{a,j}$ is defined in \eqref{Dak}. As in the proof of Lemma \ref{reRD}, the variance of $\sum_{j=1}^k D_{a,j}$ can be derived in the same way, except that we condition on $W$ first when taking expectations.
\begin{equation*}
    \text{var}\left(\sum_{j=1}^{k}D_{a,j}\right) = \frac{1}{\pi_1\pi_0}\sum_{j=1}^{k}\sum_{l=1}^{k}\sum_{u=1}^{\min(j,l)}E_P\left[\frac{S(t_j,W)S(t_l,W)\{S(t_{u-1},W)-S(t_u,W)\}}{S(t_{u-1},W)S(t_u,W)G(t_u,W)}\right].
\end{equation*}
\end{proof}

\begin{proof}[Proof of Theorem \ref{RMST}]
$\sigma_u^2$ is a linear combination of $s_{u}^{jl}$. The asymptotic linearity of $\hat s_{u}^{jl}$ implies that $\hat\sigma_u^2$ is asymptotically linear with influence function 
\begin{equation*}
    \textnormal{IF}_u(y,\delta,w) = \sum_{j=1}^{k}\sum_{l=1}^{k}\sum_{u}^{\min(j,l)}\xi_u^{jl}(y,\delta,w)/G(t_u).
\end{equation*}
Moreover, $\hat\sigma_a^2$ is again a one-step estimator based on its EIF given in Lemma \ref{eifsurvvar}. Using the same approach as was used in the proof of Theorem \ref{RD}, it can be shown that the remainder term $R(\hat Q, Q)$ is $o_P(n^{-1/2})$, and $(Q_n-Q)\{\textnormal{IF}_a(\hat Q)-\textnormal{IF}_a(Q)\} = o_P(n^{-1/2})$. Due to their close similarity to earlier arguments, we omit the details here. We can then conclude that $\hat\sigma_a^2$ has influence function $\textnormal{IF}_a = \sum_{j=1}^{k}\sum_{l=1}^{k}\sum_{u=1}^{\min(j,l)}D_{u}^{jl}$, where the definition of $D_{u}^{jl}$ is given in \eqref{Djkl} in Lemma~B.1. 

Applying the delta method, the influence function of $\hat\phi$ is given by $(\textnormal{IF}_a - \phi_a \textnormal{IF}_u)/\sigma_u^2$. This estimator is efficient as its influence function agrees with the EIF of $\phi_a$.
\end{proof}

\begin{proof}[Proof of Theorem~\ref{corollaryreRD}]
We prove the first claim by showing that $E_P\left[f_{u}^{jl}(W)/G(t_u,W)\right] \geq s_u^{jl}/G(t_u)$, for all $(u,j,l)$ such that $\max\{j,l\} \leq k$ and $u\leq\min(j,l)$. To start, we note that $T \perp W$ under $P$ implies that $f_u^{jl}(w) = s_u^{jl}$ for all $w \in \mathcal{W}$. Thus, it suffices to show that $E_P[1/G(t_u,W)] \geq 1/G(t_u)$, which follows from the convexity of the function $a \mapsto 1/a$ for $a>0$ and Jensen's inequality. Strict inequality holds when $\textnormal{var}_P[G(t_u,W)] >0$ for some $u$.

To prove the second claim, we note that $C \perp W$ implies that $G(t_j,w) = G(t_j)$ for all $w \in \mathcal{W}$ and $j \in \{1,\ldots,k\}$. We focus on the case of RD and RR first. Consider a bivariate function $(a,b) \mapsto a^2/b$ for $(a,b) \in (0,1)^2$. The Hessian matrix is given by $\begin{bmatrix} 2/b & -2a/b^2 \\ -2a/b^2 & 2a^2/b^3\end{bmatrix}$ with eigenvalues 0 and $2(a^2+b^2)/b^3$, both of which are non-negative. Therefore, this function is convex for $(a,b) \in (0,1)^2$. We note that
\begin{align*}
    \sigma_a^2 &= E_P\left[\sum_{j=1}^{k}S^2(t_k,W)\left\{\frac{1}{S(t_j,W)}-\frac{1}{S(t_{j-1},W)}\right\}\frac{1}{G(t_j)}\right] \\
    &= E_P\left[\frac{S(t_k,W)}{G(t_k)}-\sum_{j=1}^{k-1}\frac{S^2(t_k,W)}{S(t_j,W)}\left\{\frac{1}{G(t_{j+1})}-\frac{1}{G(t_j)}\right\}-\frac{S^2(t_k,W)}{S(t_0,W)G(t_1)}\right] \\
    &= \frac{S(t_k)}{G(t_k)}-\sum_{j=1}^{k-1}\left\{\frac{1}{G(t_{j+1})}-\frac{1}{G(t_j)}\right\}E_P\left[\frac{S^2(t_k,W)}{S(t_j,W)}\right]-\frac{1}{G(t_1)}E_P\left[\frac{S^2(t_k,W)}{S(t_0,W)}\right] \\
    &\leq \frac{S(t_k)}{G(t_k)}-\sum_{j=1}^{k-1}\left\{\frac{1}{G(t_{j+1})}-\frac{1}{G(t_j)}\right\}\frac{S^2(t_k)}{S(t_j)}-\frac{S^2(t_k)}{S(t_0)G(t_1)} \\
    &= \sum_{j=1}^{k}S^2(t_k)\left\{\frac{1}{S(t_j)}-\frac{1}{S(t_{j-1})}\right\}\frac{1}{G(t_j)} = \sigma_u^2,
\end{align*}
where the inequality follows from Jensen's inequality on the function $(a,b) \mapsto a^2/b$ for $(a,b) \in (0,1)^2$. 

For the case of RMST, we consider a $(q+1)$-variate function that maps $(a_1,\ldots,a_q,b)$ to $(\sum_{i=1}^{q}a_i)^2/b$ for $a_i \in (0,1)$ and $b \in (0,1)$. The only non-zero eigenvalue of its Hessian matrix is $2\{(\sum_{i=1}^{q}a_i)^2+qb^2\}/b^3$, which is positive. Therefore, this function is convex for any $q\geq 1$. In the following argument, we will take $q \in \{1,\ldots,k\}$. 
\begin{align*}
    \sigma_a^2 &= \sum_{j=1}^{k}\sum_{l=1}^{k}\sum_{u=1}^{\min(j,l)}E_{P}[f_{u}^{jl}(W)/G(t_u,W)]\\
    &= E_P\left[\sum_{j=1}^{k}\sum_{l=1}^{k}\sum_{u=1}^{\min\{j,l\}}S(t_j,W)S(t_l,W)\left\{\frac{1}{S(t_u,W)}-\frac{1}{S(t_{u-1},W)}\right\}\frac{1}{G(t_u)}\right] \\
    &= E_P\Bigg[\sum_{j=1}^{k}\sum_{l=1}^{k}\Bigg\{\frac{S(t_{\max\{j,l\}},W)}{G(t_{\min\{j,l\}})}-\sum_{u=1}^{\min\{j,l\}-1}\frac{S(t_j,W)S(t_l,W)}{S(t_u,W)}\left\{\frac{1}{G(t_{u+1})}-\frac{1}{G(t_u)}\right\}\\
    &\hspace{7.5em}-\frac{S(t_j,W)S(t_l,W)}{S(t_0,W)G(t_1)}\Bigg\}\Bigg] \\
    &= \sum_{j=1}^{k}\sum_{l=1}^{k}\frac{S(t_{\max\{j,l\}})}{G(t_{\min\{j,l\}})}-E_P\left[\sum_{u=1}^{k-1}\sum_{j=u+1}^{k}\sum_{l=u+1}^{k}\frac{S(t_j,W)S(t_l,W)}{S(t_u,W)}\left\{\frac{1}{G(t_{u+1})}-\frac{1}{G(t_u)}\right\}\right] \\
    &\ \ \ -E_P\left[\sum_{j=1}^{k}\sum_{l=1}^{k}\frac{S(t_j,W)S(t_l,W)}{S(t_0,W)G(t_1)}\right] \\
    &= \sum_{j=1}^{k}\sum_{l=1}^{k}\frac{S(t_{\max\{j,l\}})}{G(t_{\min\{j,l\}})}-E_P\left[\sum_{u=1}^{k-1}\frac{\left\{\sum_{j=u+1}^{k}S(t_j,W)\right\}^2}{S(t_u,W)}\left\{\frac{1}{G(t_{u+1})}-\frac{1}{G(t_u)}\right\}\right]\\
    &\quad-E_P\left[\frac{\left\{\sum_{j=1}^{k}S(t_j,W)\right\}^2}{S(t_0,W)G(t_1)}\right] \\
    &\leq \sum_{j=1}^{k}\sum_{l=1}^{k}\frac{S(t_{\max\{j,l\}})}{G(t_{\min\{j,l\}})}-\left[\sum_{u=1}^{k-1}\frac{\left\{\sum_{j=u+1}^{k}S(t_j)\right\}^2}{S(t_u)}\left\{\frac{1}{G(t_{u+1})}-\frac{1}{G(t_u)}\right\}\right]-\left[\frac{\left\{\sum_{j=1}^{k}S(t_j)\right\}^2}{S(t_0)G(t_1)}\right] \\
    &= \sum_{j=1}^{k}\sum_{l=1}^{k}\sum_{u=1}^{\min\{j,l\}}S(t_j)S(t_l)\left\{\frac{1}{S(t_u)}-\frac{1}{S(t_{u-1})}\right\}\frac{1}{G(t_u)} = \sigma_u^2,
\end{align*}
where the inequality follows from Jensen's inequality.

\end{proof}

\section{Additional Results from the Numerical Experiments}\label{additionalsim}

\subsection{Additional simulation results}
In this section, we present additional simulation results for sample sizes $n=200$ and $n=500$. The simulation set-up is otherwise the same as described in Section~\ref{sec:experiment}. The results for ordinal outcomes are presented in Table~\ref{addsimordinal}, and the results for survival outcomes are presented in Table~\ref{addsimsurvival}.

\begin{table}
\caption{\label{addsimordinal}Simulation results for ordinal outcome. We consider relative efficiency of fully adjusted and working-model-based estimators for DIM, MW and LOR. In the bootstrap approach, we take $B_1 = 100$ and $B_2 = 500$. Results are based on 1000 replications for analytic approach; 200 for bootstrap.}
    \centering
    \begin{tabular}{rrrrrrrr}
    \hline
    \multicolumn{8}{c}{$n=200$} \\
    \hline
         & truth & method & bias & MSE & $\%$RMSE & coverage & CI width \\
        \hline
        DIM (F) & 0.837 & analytic & 0.008 & 0.002 & 0.060 & 0.972 & 0.194 \\
        DIM (P) & 0.840 & analytic & -0.008 & 0.002 & 0.057 & 0.950 & 0.183 \\
        & & bootstrap & -0.001 & 0.003 & 0.068 & 0.945 & 0.236 \\
        \hline
        MW (F) & 0.842 & analytic & 0.011 & 0.003 & 0.060 & 0.977 & 0.200\\
        MW (P) & 0.845 & analytic & -0.004 & 0.002 & 0.056 & 0.951 & 0.186 \\
        && bootstrap & 0.000 & 0.003 & 0.069 & 0.945 & 0.244 \\
        \hline
        LOR (F) & 0.838 & analytic & 0.013 & 0.003 & 0.062 & 0.973 & 0.197\\
        LOR (P) & 0.842 & analytic & -0.002 & 0.002 & 0.054 & 0.961 & 0.182 \\
        && bootstrap & -0.001 & 0.003 & 0.066 & 0.945 & 0.230 \\
        \hline
    \multicolumn{8}{c}{$n=500$} \\
    \hline
         & truth & method & bias & MSE & $\%$RMSE & coverage & CI width \\
        \hline
        DIM (F) & 0.837 & analytic & 0.003 & 0.001 & 0.037 & 0.946 & 0.119 \\
        DIM (P) & 0.840 & analytic & -0.004 & 0.001 & 0.036 & 0.946 & 0.115 \\
        & & bootstrap & -0.004 & 0.002 & 0.051 & 0.965 & 0.177 \\
        \hline
        MW (F) & 0.842 & analytic & 0.008 & 0.001 & 0.038 & 0.945 & 0.120\\
        MW (P) & 0.845 & analytic & 0.001 & 0.001 & 0.036 & 0.943 & 0.117 \\
        & & bootstrap & -0.004 & 0.002 & 0.053 & 0.945 & 0.184 \\
        \hline
        LOR (F) & 0.838 & analytic & 0.006 & 0.001 & 0.038 & 0.957 & 0.120\\
        LOR (P) & 0.842 & analytic & -0.000 & 0.001 & 0.035 & 0.953 & 0.114 \\
        & & bootstrap & -0.003 & 0.002 & 0.049 & 0.955 & 0.170\\
        \hline
    \end{tabular}
\end{table}

\begin{table}
   
    \caption{\label{addsimsurvival}Simulation results for survival outcomes. We only consider the analytical approach and consider relative efficiency of fully adjusted estimators for RD at time 1, 2 and 3 (the relative efficiency is the same for RR) and RMST at time 3. Sample size is 200 or 500, and results are based on 1000 replications.}
    \centering
    \begin{tabular}{rrrrrrr}
    \hline
    \multicolumn{7}{c}{$n=200$} \\
    \hline
     & truth & bias & MSE & $\%$RMSE & coverage & mean width \\
    \hline
    RD ($t=1$) & 0.903 & 0.003 & 0.002 & 0.045 & 0.948 & 0.165\\
    RD ($t=2$) & 0.847 &  0.004 & 0.003 & 0.061 & 0.933 & 0.196 \\
    RD ($t=3$) & 0.819 & 0.007 & 0.004 & 0.074 & 0.908 & 0.212 \\
    RMST ($t=3$) & 0.820 & 0.001 & 0.003 & 0.061 & 0.936 & 0.202\\
    \hline
    \multicolumn{7}{c}{$n=500$} \\
    \hline
     & truth & bias & MSE & $\%$RMSE & coverage & mean width \\
    \hline
    RD ($t=1$) & 0.903 & 0.001 & 0.001 & 0.028 & 0.953 & 0.103\\
    RD ($t=2$) & 0.847 & 0.002 & 0.001 & 0.039 & 0.949 & 0.126  \\
    RD ($t=3$) & 0.819 & 0.004 & 0.002 & 0.048 & 0.916 & 0.140 \\
    RMST ($t=3$) & 0.820 & -0.001 & 0.001 & 0.043 & 0.930 & 0.128\\
    \hline
    \end{tabular}
\end{table}

\subsection{Application to Covid-19 data: ordinal outcomes}
In this section, we present additional results when applying our proposed methods to the Covid-19 dataset with ordinal outcomes. In particular, we estimate the efficiency gain from using the fully adjusted and working-model-based estimators that adjust for one of the covariates, for estimating three treatment effect estimands: DIM (Table \ref{ordinaloneDIM}), MW (Table \ref{ordinaloneMW}) and LOR (Table \ref{ordinaloneLOR}).

\begin{table}
     \caption{\label{ordinaloneDIM} Relative efficiency (95\% CI) of fully adjusted and working-model-based estimators that adjust for one of the covariates for estimating DIM, in the Covid-19 dataset. ``F" stands for the fully adjusted estimator, and ``W" stands for the working-model-based estimator.}
    \centering
    \begin{tabular}{lrr}
        \hline
         & \multicolumn{2}{c}{DIM} \\
          & F & W \\
        \hline
        age & 0.97 (0.94, 1.00) & 0.97 (0.94, 1.00) \\
        gender & 0.99 (0.97, 1.01) & 0.99 (0.97, 1.01) \\
        race & 0.99 (0.98, 1.01) & 0.99 (0.98, 1.01) \\
        CVD & 0.95 (0.92, 0.99) & 0.95 (0.92, 0.99) \\
        HTN & 1.00 (1.00, 1.00) & 1.00 (1.00, 1.00) \\
        diabetes & 0.99 (0.98, 1.01) & 0.99 (0.98, 1.01) \\
        kidney disease & 0.99 (0.96, 1.01) & 0.99 (0.96, 1.01)\\
        cholesterol meds & 1.00 (1.00, 1.00) & 1.00 (1.00, 1.00) \\
        HTN meds & 0.99 (0.98, 1.01) & 0.99 (0.98, 1.01) \\
        BMI & 1.00 (1.00, 1.00) & 1.00 (1.00, 1.00) \\
        \hline
        \end{tabular}
\end{table}

\begin{table}
     \caption{\label{ordinaloneMW} Relative efficiency (95\% CI) of fully adjusted and working-model-based estimators that adjust for one of the covariates for estimating MW, in the Covid-19 dataset. ``F" stands for the fully adjusted estimator, and ``W" stands for the working-model-based estimator.}
    \centering
    \begin{tabular}{lrr}
        \hline
        & \multicolumn{2}{c}{MW} \\
        & F & W \\
        \hline
        age & 0.98 (0.96, 1.01) & 0.98 (0.96, 1.01) \\
        gender & 0.99 (0.97, 1.01) & 0.99 (0.91, 1.01) \\
        race & 0.99 (0.98, 1.01) & 0.99 (0.98, 1.01) \\
        CVD & 0.95 (0.92, 0.99) & 0.95 (0.92, 0.99)\\
        HTN & 1.00 (0.99, 1.00) & 1.00 (1.00, 1.00) \\
        diabetes & 0.99 (0.98, 1.01) & 0.99 (0.98, 1.01) \\
        kidney disease & 0.99 (0.97, 1.01) & 0.99 (0.97, 1.01)\\
        cholesterol meds & 1.00 (0.99, 1.00) & 1.00 (0.99, 1.00) \\
        HTN meds & 0.99 (0.97, 1.01) & 0.99 (0.97, 1.01) \\
        BMI & 1.00 (1.00, 1.00) & 1.00 (1.00, 1.00) \\
        \hline
        \end{tabular}
\end{table}

\begin{table}
     \caption{\label{ordinaloneLOR} Relative efficiency (95\% CI) of fully adjusted and working-model-based estimators that adjust for one of the covariates for estimating LOR, in the Covid-19 dataset. ``F" stands for the fully adjusted estimator, and ``W" stands for the working-model-based estimator.}
    \centering
    \begin{tabular}{lrr}
        \hline
        & \multicolumn{2}{c}{LOR} \\
          & F & W \\
        \hline
        age & 0.97 (0.95, 0.99) & 0.97 (0.93, 1.00)\\
        gender & 0.99 (0.98, 1.02) & 0.99 (0.98, 1.01)\\
        race & 1.00 (0.98, 1.01) & 1.00 (0.98, 1.01)\\
        CVD  & 0.96 (0.92, 0.99) & 0.96 (0.92, 0.99)\\
        HTN & 1.00 (1.00, 1.00) & 1.00 (1.00, 1.00)\\
        diabetes & 0.99 (0.98, 1.01) & 0.99 (0.98, 1.01)\\
        kidney disease & 0.99 (0.97, 1.01) & 0.99 (0.96, 1.01)\\
        cholesterol meds  & 1.00 (1.00, 1.01) & 1.00 (1.00, 1.00)\\
        HTN meds & 0.99 (0.98, 1.01) & 0.99 (0.98, 1.01)\\
        BMI  & 1.00 (1.00, 1.00) & 1.00 (1.00, 1.00)\\
        \hline
        \end{tabular}
\end{table}
\section{Statistical Inference for the Relative Effciency when $Y$ and $W$ are Independent under $P$}\label{unionset}

\subsection{Hypothesis test based on sample splitting}
We split the external data into two subsamples of size $n/2$, denoted by $D_1$ and $D_2$. Let $\hat\sigma_{a,1}^{2}$ be the proposed estimator of the adjusted variance, calculated from observations in $D_1$ only; and let $\hat\sigma_{u,2}^{2}$ be the proposed estimator for the unadjusted variance using $D_2$. By the asymptotic linearity of these estimators, we have that
\begin{equation*}
    \sqrt{n}(\hat\sigma_{a,1}^2 - \sigma^2_a) \xrightarrow[]{d} N(0,2E[\textnormal{IF}_a^2]), \ \ \sqrt{n}(\hat\sigma_{u,2}^2 - \sigma^2_u) \xrightarrow[]{d} N(0,2E[\textnormal{IF}_u^2]).
\end{equation*}
Moreover, as $\hat\sigma_{a,1}^{2}$ and $\hat\sigma_{u,2}^{2}$ are based on different observations, they are independent. Therefore, delta method implies that
\begin{equation*}
    \sqrt{n}\left(\frac{\hat\sigma_{a,1}^{2}}{\hat\sigma_{u,2}^{2}}-\phi_a\right) \xrightarrow[]{d} N\left(0,\frac{2E[\textnormal{IF}_a^2] + 2\phi_a^2E[\textnormal{IF}_u^2]}{\sigma_u^4}\right).
\end{equation*}
A Wald test can be used to test the hypothesis $H_0: \phi_a = 1$, and in fact a Wald confidence interval can also be constructed directly although it might be wider than the one obtained from the proposed two-step procedure. The same argument applies when we consider the working-model-based variance.

The sample splitting approach was also used in \citet{williamson2020efficient} to test a null hypothesis that lies on the boundary of the parameter space.

\subsection{Asymptotic coverage of the confidence set}
Recall that the confidence set, which we denote as $I_{ts}$, is constructed using a two-step procedure. We first test the null hypothesis $H_0: \phi=1$ using a level $\alpha$ test. If it is rejected, we take the confidence set to be $I_{wald}$, the Wald confidence interval; otherwise the confidence set is taken to be $I_{wald} \cup \{1\}$. We argue that the asymptotic coverage of this confidence set is at least $1-\alpha$.

First, consider the case where $\phi \neq 1$ under $P$. This implies that the influence function of $\hat\phi$ without sample splitting is not identically 0. Hence, asymptotic linearity in the form of \eqref{normality} implies that $P(\phi \in I_{wald}) \to 1-\alpha$. In addition, we have that $P(\phi \in I_{wald}) \leq P(\phi \in I_{ts})$. Next, consider $P$ such that $\phi = 1$. 
\begin{align*}
    P(\phi \in I_{ts}) &= P(\phi \in I_{ts} | \text{ reject } H_0)P(\text{ reject } H_0) \\
    &\ \ \ + P(\phi \in I_{ts} | \text{ do not reject } H_0)P(\text{ do not reject } H_0) \\
    &= P(1 \in I_{wald} |\text{ reject } H_0)P(\text{ reject } H_0) \\
    &\ \ \ + P(1 \in I_{wald} \cup \{1\} | \text{ do not reject } H_0)P(\text{ do not reject } H_0) \\
    & \geq P(\text{ do not reject } H_0) \geq 1-\alpha,
\end{align*}
where the last inequality follows because the test is level $\alpha$ and thus the probability of falsely rejecting the null is at most $\alpha$.
\section{Pseudocode for double bootstrap scheme} \label{app:pseudocode}

\begin{algorithm}
\caption{Double bootstrap procedure}\label{algbootstrap}
\hspace*{\algorithmicindent} \textbf{Input:} external data $X$, treatment effect estimators $\hat\psi_u$ and $\hat\psi_m$ \\
 \hspace*{\algorithmicindent} \textbf{Output:} Estimate of and confidence interval for the relative efficiency 
\begin{algorithmic}[1]
\For {$k=1,2,\ldots,B_2$}
    \State Resample $X_k$ of size $N$ from $X$ with replacement
    \State Randomly assign treatment to form $\tilde X_{k}$
    \State Compute $\hat\psi_{u,k} = \hat\psi_u(\tilde X_k), \hat\psi_{m,k} = \hat\psi_m(\tilde X_k)$
\EndFor
\State Let $\tilde\phi(X) = var(\hat\psi_{m,k})/var(\hat\psi_{u,k})$
\For {$i=1,2,\ldots,B_1$}
    \State Resample $X_i^*$ of size $n$ from $X$ with replacement
	    \For {$j=1,2,\ldots,B_2$}
			\State Resample $X_{ij}^{**}$ of size $N$ from $X_i^*$ with replacement
			\State Randomly assign treatment to form $\tilde X_{ij}$
			\State Compute $\hat\psi_u^{ij} = \hat\psi_u(\tilde X_{ij}), \hat\psi_m^{ij} = \hat\psi_m(\tilde X_{ij})$
		\EndFor
	\State $\tilde\phi(X_i^*) = var(\hat\psi_{m}^{ij})/var(\hat\psi_{u}^{ij})$
\EndFor \\
\Return $\tilde\phi(X)$ and $\tilde\phi(X) \pm z_{1-\alpha/2} \text{ sd}(\tilde\phi(X_i^*))$
\end{algorithmic}
\end{algorithm}


\section*{Acknowledgements}

The authors gratefully acknowledge the support of the NIH through award number DP2-LM013340. The content is solely the responsibility of the authors
and does not necessarily represent the official views of the NIH.

\bibliography{output}

\begin{thebibliography}{}

\bibitem[\protect\citeauthoryear{Anthony and Bartlett}{Anthony and
  Bartlett}{2009}]{anthony2009neural}
Anthony, M. and P.~L. Bartlett (2009).
\newblock {\em Neural network learning: Theoretical foundations}.
\newblock cambridge university press.

\bibitem[\protect\citeauthoryear{Austin, Manca, Zwarenstein, Juurlink, and
  Stanbrook}{Austin et~al.}{2010}]{austin2010substantial}
Austin, P.~C., A.~Manca, M.~Zwarenstein, D.~N. Juurlink, and M.~B. Stanbrook
  (2010).
\newblock A substantial and confusing variation exists in handling of baseline
  covariates in randomized controlled trials: a review of trials published in
  leading medical journals.
\newblock {\em Journal of clinical epidemiology\/}~{\em 63\/}(2), 142--153.

\bibitem[\protect\citeauthoryear{Benkeser, Diaz, Luedtke, Segal, Scharfstein,
  and Rosenblum}{Benkeser et~al.}{2020}]{benkeser2020improving}
Benkeser, D., I.~Diaz, A.~Luedtke, J.~Segal, D.~Scharfstein, and M.~Rosenblum
  (2020).
\newblock Improving precision and power in randomized trials for covid-19
  treatments using covariate adjustment, for binary, ordinal, and time-to-event
  outcomes.
\newblock {\em medRxiv\/}.

\bibitem[\protect\citeauthoryear{Bialek, Boundy, Bowen, Chow, Cohn, Dowling,
  Ellington, et~al.}{Bialek et~al.}{2020}]{cdc2020severe}
Bialek, S., E.~Boundy, V.~Bowen, N.~Chow, A.~Cohn, N.~Dowling, S.~Ellington,
  et~al. (2020).
\newblock Severe outcomes among patients with coronavirus disease 2019
  ({COVID}-19)—{U}nited {S}tates, {F}ebruary 12--{M}arch 16, 2020.
\newblock {\em Morbidity and mortality weekly report\/}~{\em 69\/}(12),
  343--346.

\bibitem[\protect\citeauthoryear{Bickel}{Bickel}{1982}]{bickel1982adaptive}
Bickel, P.~J. (1982).
\newblock On adaptive estimation.
\newblock {\em The Annals of Statistics\/}, 647--671.

\bibitem[\protect\citeauthoryear{Bickel, Klaassen, Bickel, Ritov, Klaassen,
  Wellner, and Ritov}{Bickel et~al.}{1993}]{bickel1993efficient}
Bickel, P.~J., C.~A. Klaassen, P.~J. Bickel, Y.~Ritov, J.~Klaassen, J.~A.
  Wellner, and Y.~Ritov (1993).
\newblock {\em Efficient and adaptive estimation for semiparametric models},
  Volume~4.
\newblock Johns Hopkins University Press Baltimore.

\bibitem[\protect\citeauthoryear{Colantuoni and Rosenblum}{Colantuoni and
  Rosenblum}{2015}]{colantuoni2015leveraging}
Colantuoni, E. and M.~Rosenblum (2015).
\newblock Leveraging prognostic baseline variables to gain precision in
  randomized trials.
\newblock {\em Statistics in medicine\/}~{\em 34\/}(18), 2602--2617.

\bibitem[\protect\citeauthoryear{D{\'\i}az, Colantuoni, Hanley, and
  Rosenblum}{D{\'\i}az et~al.}{2019}]{diaz2019improved}
D{\'\i}az, I., E.~Colantuoni, D.~F. Hanley, and M.~Rosenblum (2019).
\newblock Improved precision in the analysis of randomized trials with survival
  outcomes, without assuming proportional hazards.
\newblock {\em Lifetime data analysis\/}~{\em 25\/}(3), 439--468.

\bibitem[\protect\citeauthoryear{D{\'\i}az, Colantuoni, and
  Rosenblum}{D{\'\i}az et~al.}{2016}]{diaz2016enhanced}
D{\'\i}az, I., E.~Colantuoni, and M.~Rosenblum (2016).
\newblock Enhanced precision in the analysis of randomized trials with ordinal
  outcomes.
\newblock {\em Biometrics\/}~{\em 72\/}(2), 422--431.

\bibitem[\protect\citeauthoryear{FDA}{FDA}{2019}]{FDA2019}
FDA (2019).
\newblock Adjusting for covariates in randomized clinical trials for drugs and
  biologics with continuous outcomes. draft guidance for industry.
  https://www.fda.gov/media/123801/download.

\bibitem[\protect\citeauthoryear{Friedman, Hastie, and Tibshirani}{Friedman
  et~al.}{2010}]{Friedman2010regularization}
Friedman, J., T.~Hastie, and R.~Tibshirani (2010).
\newblock Regularization paths for generalized linear models via coordinate
  descent.
\newblock {\em Journal of Statistical Software\/}~{\em 33\/}(1), 1--22.

\bibitem[\protect\citeauthoryear{Gill, Van Der~Laan, and Robins}{Gill
  et~al.}{1997}]{gill1997coarsening}
Gill, R.~D., M.~J. Van Der~Laan, and J.~M. Robins (1997).
\newblock Coarsening at random: Characterizations, conjectures,
  counter-examples.
\newblock In {\em Proceedings of the First Seattle Symposium in Biostatistics},
  pp.\  255--294. Springer.

\bibitem[\protect\citeauthoryear{Heitjan and Rubin}{Heitjan and
  Rubin}{1991}]{heitjan1991ignorability}
Heitjan, D.~F. and D.~B. Rubin (1991).
\newblock Ignorability and coarse data.
\newblock {\em The annals of statistics\/}, 2244--2253.

\bibitem[\protect\citeauthoryear{Hirose et~al.}{Hirose
  et~al.}{2016}]{hirose2016differentiability}
Hirose, Y. et~al. (2016).
\newblock On differentiability of implicitly defined function in
  semi-parametric profile likelihood estimation.
\newblock {\em Bernoulli\/}~{\em 22\/}(1), 589--614.

\bibitem[\protect\citeauthoryear{Ibragimov and Has’minskii}{Ibragimov and
  Has’minskii}{1981}]{ibragimov1981statistical}
Ibragimov, I. and R.~Has’minskii (1981).
\newblock Statistical estimation: asymptotic theory.

\bibitem[\protect\citeauthoryear{Kahan, Jairath, Dor{\'e}, and Morris}{Kahan
  et~al.}{2014}]{kahan2014risks}
Kahan, B.~C., V.~Jairath, C.~J. Dor{\'e}, and T.~P. Morris (2014).
\newblock The risks and rewards of covariate adjustment in randomized trials:
  an assessment of 12 outcomes from 8 studies.
\newblock {\em Trials\/}~{\em 15\/}(1), 139.

\bibitem[\protect\citeauthoryear{Kaplan and Meier}{Kaplan and
  Meier}{1958}]{kaplan1958nonparametric}
Kaplan, E.~L. and P.~Meier (1958).
\newblock Nonparametric estimation from incomplete observations.
\newblock {\em Journal of the American statistical association\/}~{\em
  53\/}(282), 457--481.

\bibitem[\protect\citeauthoryear{Mao}{Mao}{2018}]{mao2018causal}
Mao, L. (2018).
\newblock On causal estimation using-statistics.
\newblock {\em Biometrika\/}~{\em 105\/}(1), 215--220.

\bibitem[\protect\citeauthoryear{McCullagh}{McCullagh}{1980}]{mccullagh1980regression}
McCullagh, P. (1980).
\newblock Regression models for ordinal data.
\newblock {\em Journal of the Royal Statistical Society: Series B
  (Methodological)\/}~{\em 42\/}(2), 109--127.

\bibitem[\protect\citeauthoryear{Moore and van~der Laan}{Moore and van~der
  Laan}{2009a}]{moore2009application}
Moore, K. and M.~J. van~der Laan (2009a).
\newblock Application of time-to-event methods in the assessment of safety in
  clinical trials.
\newblock {\em Design and Analysis of Clinical Trials with Time-to-Event
  Endpoints. Taylor \& Francis\/}, 455--482.

\bibitem[\protect\citeauthoryear{Moore, Neugebauer, Valappil, and van~der
  Laan}{Moore et~al.}{2011}]{moore2011robust}
Moore, K.~L., R.~Neugebauer, T.~Valappil, and M.~J. van~der Laan (2011).
\newblock Robust extraction of covariate information to improve estimation
  efficiency in randomized trials.
\newblock {\em Statistics in medicine\/}~{\em 30\/}(19), 2389--2408.

\bibitem[\protect\citeauthoryear{Moore and van~der Laan}{Moore and van~der
  Laan}{2009b}]{moore2009increasing}
Moore, K.~L. and M.~J. van~der Laan (2009b).
\newblock Increasing power in randomized trials with right censored outcomes
  through covariate adjustment.
\newblock {\em Journal of biopharmaceutical statistics\/}~{\em 19\/}(6),
  1099--1131.

\bibitem[\protect\citeauthoryear{Pfanzagl}{Pfanzagl}{1990}]{pfanzagl1990estimation}
Pfanzagl, J. (1990).
\newblock Estimation in semiparametric models.
\newblock In {\em Estimation in Semiparametric Models}, pp.\  17--22. Springer.

\bibitem[\protect\citeauthoryear{Pfanzagl and Wefelmeyer}{Pfanzagl and
  Wefelmeyer}{1985}]{pfanzagl1985contributions}
Pfanzagl, J. and W.~Wefelmeyer (1985).
\newblock Contributions to a general asymptotic statistical theory.
\newblock {\em Statistics \& Risk Modeling\/}~{\em 3\/}(3-4), 379--388.

\bibitem[\protect\citeauthoryear{Reid}{Reid}{1981}]{reid1981influence}
Reid, N. (1981).
\newblock Influence functions for censored data.
\newblock {\em The Annals of Statistics\/}, 78--92.

\bibitem[\protect\citeauthoryear{Steingrimsson, Hanley, and
  Rosenblum}{Steingrimsson et~al.}{2017}]{steingrimsson2017improving}
Steingrimsson, J.~A., D.~F. Hanley, and M.~Rosenblum (2017).
\newblock Improving precision by adjusting for prognostic baseline variables in
  randomized trials with binary outcomes, without regression model assumptions.
\newblock {\em Contemporary clinical trials\/}~{\em 54}, 18--24.

\bibitem[\protect\citeauthoryear{Stitelman, De~Gruttola, and van~der
  Laan}{Stitelman et~al.}{2012}]{stitelman2012general}
Stitelman, O.~M., V.~De~Gruttola, and M.~J. van~der Laan (2012).
\newblock A general implementation of tmle for longitudinal data applied to
  causal inference in survival analysis.
\newblock {\em The international journal of biostatistics\/}~{\em 8\/}(1).

\bibitem[\protect\citeauthoryear{Tsiatis}{Tsiatis}{2007}]{tsiatis2007semiparametric}
Tsiatis, A. (2007).
\newblock {\em Semiparametric theory and missing data}.
\newblock Springer Science \& Business Media.

\bibitem[\protect\citeauthoryear{Van~der Laan, Laan, and Robins}{Van~der Laan
  et~al.}{2003}]{van2003unified}
Van~der Laan, M.~J., M.~Laan, and J.~M. Robins (2003).
\newblock {\em Unified methods for censored longitudinal data and causality}.
\newblock Springer Science \& Business Media.

\bibitem[\protect\citeauthoryear{Van Der~Laan and Rubin}{Van Der~Laan and
  Rubin}{2006}]{van2006targeted}
Van Der~Laan, M.~J. and D.~Rubin (2006).
\newblock Targeted maximum likelihood learning.
\newblock {\em The international journal of biostatistics\/}~{\em 2\/}(1).

\bibitem[\protect\citeauthoryear{Van~der Vaart}{Van~der
  Vaart}{2000}]{van2000asymptotic}
Van~der Vaart, A.~W. (2000).
\newblock {\em Asymptotic statistics}, Volume~3.
\newblock Cambridge university press.

\bibitem[\protect\citeauthoryear{Van Der~Vaart and Wellner}{Van Der~Vaart and
  Wellner}{1996}]{van1996weak}
Van Der~Vaart, A.~W. and J.~A. Wellner (1996).
\newblock Weak convergence.
\newblock In {\em Weak convergence and empirical processes}, pp.\  16--28.
  Springer.

\bibitem[\protect\citeauthoryear{Vermeulen, Thas, and Vansteelandt}{Vermeulen
  et~al.}{2015}]{vermeulen2015increasing}
Vermeulen, K., O.~Thas, and S.~Vansteelandt (2015).
\newblock Increasing the power of the mann-whitney test in randomized
  experiments through flexible covariate adjustment.
\newblock {\em Statistics in medicine\/}~{\em 34\/}(6), 1012--1030.

\bibitem[\protect\citeauthoryear{Wang, Ogburn, and Rosenblum}{Wang
  et~al.}{2019}]{wang2019analysis}
Wang, B., E.~L. Ogburn, and M.~Rosenblum (2019).
\newblock Analysis of covariance in randomized trials: More precision and valid
  confidence intervals, without model assumptions.
\newblock {\em Biometrics\/}~{\em 75\/}(4), 1391--1400.

\bibitem[\protect\citeauthoryear{Williamson and Feng}{Williamson and
  Feng}{2020}]{williamson2020efficient}
Williamson, B.~D. and J.~Feng (2020).
\newblock Efficient nonparametric statistical inference on population feature
  importance using shapley values.
\newblock {\em arXiv preprint arXiv:2006.09481\/}.

\end{thebibliography}

\end{document}